\documentclass[12pt]{article}
\UseRawInputEncoding

\usepackage[T1]{fontenc}
\usepackage{tgpagella}
\usepackage{appendix}

\usepackage[
  top=1.0in,
  bottom=1.0in,
  left=1.0in,
  right=1.0in
]{geometry}
\usepackage{amssymb, amsmath, amsfonts, amsthm}
\usepackage{bm}
\usepackage{bbm}
\usepackage[mathscr]{eucal}
\usepackage{changepage}

\usepackage{graphicx}
\usepackage[font={footnotesize, singlespacing}]{caption}\usepackage{subcaption}
\usepackage{float}
\usepackage{tikz}
\usetikzlibrary{arrows.meta, positioning}

\usepackage{booktabs}
\usepackage{threeparttable}

\usepackage{pdflscape}
\usepackage{rotating}
\usepackage{comment}
\usepackage{afterpage}
\usepackage{placeins}
\usepackage[ruled,lined]{algorithm2e}
\usepackage{natbib}
\usepackage[colorlinks=true,linkcolor=blue,citecolor=blue,urlcolor=blue]{hyperref}
\usepackage{xr}
\newtheorem{theorem}{Theorem}
\newtheorem{lemma}{Lemma}

\newtheorem{assumption}{Assumption}[section]
\theoremstyle{definition}

\newtheorem{remark}{Remark}

\newcommand{\bo}{\boldsymbol}

\DeclareMathOperator*{\argmax}{arg\,max}

\newcommand{\Ups}{\boldsymbol\Upsilon}
\newcommand{\R}{\mathbb R}
\newcommand{\op}{\operatorname}

\renewcommand{\footnotesize}{\fontsize{9}{11}\selectfont}

\allowdisplaybreaks

\bibpunct{(}{)}{,}{a}{}{,}

\input{ee.sty}

\usepackage{titlesec}
\titleformat{\section}{\fontsize{13}{15.6}\selectfont\bfseries}{\thesection}{1em}{}
\titleformat{\subsection}{\normalsize\bfseries}{\thesubsection}{1em}{}
\titleformat{\subsubsection}{\normalsize\itshape}{\thesubsubsection}{1em}{}
\titleformat{\paragraph}[runin]{\normalsize\itshape}{}{}{}[.]

\begin{document}
\raggedbottom
\widowpenalty=100
\clubpenalty=100

\thispagestyle{empty}
\begin{center}

\vspace{-0.3in}
{\large A Pairwise Differencing Distribution Regression Approach for Network Models\footnote{Acknowledgements: I would like to thank Frank Kleibergen, 
Art\={u}ras Juodis and Bo Honor\'{e} for their advice and support. 
I also thank Jad Beyhum, Otilia Boldea, 
Pavel \v{C}\'\i\v{z}ek, Geert Dhaene,
Michal Koles\'{a}r, Louise Laage, 
Elena Manresa, Konrad Menzel, Chris Muris, Cavit Pakel, 
Mikkel Plagborg-M\o{}ller, Stephen Redding, 
Sebastian Roelsgaard, Yassine Sbai Sassi, Timo Schenk, 
Sami Stouli, Mark Watson, 
Martin Weidner, Kaspar W\"{u}thrich, Andrei Zeleneev, 
Lina Zhang and seminar participants at Oxford University, 
University of Michigan, New York University, University 
of Bristol, University of Exeter, University of 
Melbourne, Erasmus University Rotterdam, University of 
Groningen, Fordham University, University of Queensland, 
Tilburg University, University of Manchester, Princeton 
University, and participants at various workshops and 
conferences for comments and discussions. Any errors are my own.}}


\vspace{0.1in}

{ Gabriela M. Miyazato Szini$^{\dag}$}\\
{\today}

\end{center}

\renewcommand{\thefootnote}{\dag}
\footnotetext{Tilburg University, e-mail: \href{mailto:g.m.miyazatoszini@tilburguniversity.edu}{g.m.miyazatoszini@tilburguniversity.edu}}
\renewcommand{\thefootnote}{\arabic{footnote}}
\setcounter{footnote}{0}

\begin{center}\textbf{Abstract}\end{center}
\small\noindent 
I develop an estimation and inference framework for 
distribution regression in dyadic network settings with 
two-way fixed effects that vary across thresholds of the outcome. I show that identification of the structural 
parameters is achieved 
through binarization of the outcome at each threshold, 
and estimate the model by conditional maximum likelihood, 
which ``differences out'' the fixed effects and circumvents 
the incidental parameter problem. The estimator remains 
asymptotically unbiased under sparsity, whether 
from the network structure or binarization 
at extreme thresholds. The second 
novelty is to establish the joint asymptotic distribution of the 
estimators across multiple thresholds with different convergence rates, and to develop simultaneous confidence bands and tests for equality of coefficients across thresholds. Monte Carlo simulations confirm small bias, valid inference, and correct simultaneous coverage under sparsity. An application to bilateral trade finds that coefficients vary substantially across the distribution, with equality rejected for key trade barriers.

\vspace{0.1in}

\noindent\textbf{Keywords:} Distribution regression, conditional maximum 
likelihood, dyadic data, network models, sparsity, simultaneous confidence bands.

\noindent\textbf{JEL classification codes:} C14, C23, C24

\clearpage


\section{Introduction}
  
The vast majority of studies, especially for network models, estimate the effects of covariates on the mean of an outcome variable. However, in many settings, interest also lies in how the effects of covariates vary at different locations of the outcome distribution. For instance, in gravity models of international trade, the effects of trade barriers on bilateral exports may differ substantially across quantiles of trade flows. The distribution regression (DR) approach, initially proposed by \citet{foresi1995conditional} and developed further by \citet{chernozhukov2013inference}, addresses this by directly modelling the conditional distribution function through a sequence of binary choice models. At each threshold $y$, defined as a value of the outcome variable $y_{ij}$, the outcome is binarized into an indicator of whether it falls at or below $y$, and a binary response model is estimated. Varying the threshold characterizes how covariate effects differ across the distribution. Because the conditional distribution is approximated pointwise, the approach accommodates outcomes with point masses, such as variables bounded below at zero, without requiring smoothness of the conditional density.

A broad range of economic relationships involve bilateral 
interactions between agents, naturally giving rise to network 
data. Examples include models for international trade flows \citep{helpman2008estimating}, firm-level trade networks \citep{alfaro2023firm,bernard2022origins}, and earnings in employee-employer data \citep{bonhomme2019distributional}. In many of these settings, heterogeneity in the coefficients 
of covariates across different locations of the outcome 
distribution is of direct economic interest. Importantly, many of these networks are sparse, with only a small fraction of potential bilateral links realized. 

Accordingly, I consider a directed network structure through a dyadic model (where the outcomes reflect pairwise interactions among the sampled units, \citeauthor{graham2020dyadic}, \citeyear{graham2020dyadic}) with additively separable two-way fixed effects that capture unobserved heterogeneity of senders and receivers. In many network settings, nodes (agents) differ substantially in their unobserved characteristics, and failing to account for this heterogeneity can bias the estimated effects of covariates. In addition, in network formation models, the fixed effects capture degree heterogeneity, that is, the tendency of some nodes to form more or stronger connections than others. In the distribution regression setting, the fixed effects play an analogous role: capturing node-specific shifts in the conditional 
distribution at each threshold, allowing for systematic differences across senders and receivers in the level of the outcome. Both the structural parameters and the fixed effects are allowed to vary across levels of the outcome, providing 
a flexible characterization of the conditional distribution. Moreover, the fixed effects are treated as unrestricted 
parameters, with no distributional assumption imposed, 
neither on their distribution across nodes nor on their 
relationship with the covariates.

Each threshold of the distribution regression involves a nonlinear binary choice model with two-way fixed effects. Estimating these models by maximum likelihood gives rise to the incidental parameter problem \citep{neyman1948consistent}, since the number of fixed effects grows with the sample size, yielding asymptotically biased 
estimates and invalid inference. This problem is compounded by network sparsity, which occurs through two channels in the distribution regression setting. First, in sparse networks with outcomes bounded below at, for instance, 
zero, a large fraction of observations sit at the bound, so that the first threshold at which the binarized outcome varies 
already corresponds to an extreme quantile, compressing 
the estimable range into a sparse regime. At thresholds just above the bound, most binarized 
outcomes equal one, leaving many units without 
informative variation, rendering their fixed effects 
unidentified. Second, even 
in denser networks or with unbounded outcomes, 
binarization at extreme thresholds generates sparsity, 
since few observations fall in the tails.

To address these challenges, I develop a distribution regression framework based on the conditional maximum likelihood estimator (CMLE) of \citet{charbonneau2017multiple} and \citet{jochmans2018semiparametric}, originally proposed for network formation models. At each threshold, the approach relies on conditioning the likelihood function on a specific set of conditions for quadruples of nodes of the network, so that under a logistic specification the fixed effects are ``differenced out'' from the likelihood. Therefore, it yields estimates that are free of the incidental parameter problem and remain valid under sparsity. 

The contributions of this paper are threefold. First, I 
show that the structural parameters are identified in the 
flexible specification where 
both the coefficients and the fixed effects vary across 
thresholds, and that identification remains valid under 
network sparsity. The key mechanism is the binarization 
of the outcome at each threshold, which serves not only 
as an estimation tool but as an identification strategy. 
The pointwise asymptotic properties at each threshold 
follow from \citet{jochmans2018semiparametric}, and Monte 
Carlo simulations confirm low bias and valid pointwise 
inference across different degrees of network sparsity, 
including at extreme thresholds. 

Second, I develop methods for inference across thresholds. The object of interest in distribution regression is the entire profile of coefficients, and the empirical question is not only whether the effect of a covariate differs across the distribution, but where. Pointwise confidence intervals, plotted threshold by threshold, under-cover when interpreted jointly, while a joint equality test such as a Wald test delivers a single rejection decision without by itself indicating which parts of the distribution drive it. To address this, I derive the joint asymptotic distribution of the estimator across a finite number of thresholds, accommodating both the cross-threshold dependence (since the binary indicators at different thresholds are constructed from the same underlying outcome) and the varying convergence rates induced by the different degrees of sparsity at different locations of the distribution, without restricting these rates relative to one another. The main result is stated as a Gaussian approximation whose correlation matrix varies with the 
sample size, which requires the dependence across thresholds to be non-degenerate but it does not impose it to have a fixed limit. Using this result, I construct simultaneous confidence bands via the sup-$t$ approach of \citet{montiel2019simultaneous}, which cover the entire coefficient path with prescribed probability and thereby indicate where in the distribution the effects differ; inverting the bands yields a test of equality of the coefficients across thresholds. The Wald test is also asymptotically valid and complements the sup-$t$ test, as the two have power against different alternatives. In finite samples, however, the sup-$t$ bands deliver correct simultaneous coverage and the equality test maintains size close to the nominal level across designs, while the Wald test shows size distortions for some covariates as the number of thresholds grows.

Third, I apply the method to gravity models of international trade, testing whether the effects of trade barriers vary 
across the distribution of bilateral trade flows, a question with 
direct implications for welfare analysis in the international trade 
literature \citep{arkolakis2012new, melitz2015new, 
bergstrand2025quantile}. The setting is a natural one for 
distribution regression, since the outcome is bounded below at zero, 
approximately 55\% of country pairs record no bilateral trade, and 
the distribution among those with positive trade is heavily 
right-skewed. Moreover, recent evidence suggests that the effects of 
trade determinants vary substantially across the conditional 
distribution of trade flows \citep{baltagi2016estimationmain, 
bergstrand2025tails}, motivating methods that go beyond conditional 
mean estimation. The estimated coefficients indeed vary substantially 
across the distribution: the coefficient on distance increases from 
approximately $1.05$ at the median to $1.78$ at the 99th percentile, 
indicating that distance is a stronger barrier for the largest 
bilateral trade relationships, and the sup-$t$ test rejects the null 
of constant coefficients for distance, common legal system, and 
border, providing formal evidence of heterogeneity in the structural 
parameters across the distribution.

In general, conditional maximum likelihood methods eliminate the fixed effects by conditioning the likelihood on specific sets or sufficient statistics, avoiding their estimation entirely. In the network setting, \citet{charbonneau2017multiple} introduces the approach for directed networks, \citet{jochmans2018semiparametric} establishes its asymptotic theory under sparsity, and \citet{graham2017econometric} develops a related approach for undirected networks. Recent extensions include triadic network formation with dyad-level fixed effects \citep{muris2025triadic}, unified frameworks for static and dynamic binary choice in panels 
and networks \citep{dano2025binary}, estimation for ordered outcomes in networks \citep{muris2025dyadic}, and two-way duration models \citep{roelsgaard2025essays}. Relatedly, \citet{bonhomme2024functional} derive moment 
restrictions that eliminate fixed effects through functional 
differencing.\footnote{Approaches that relax the logistic specification or 
the additive structure of the fixed effects have also been 
developed for network formation models. 
\citet{candelaria2020semiparametric}, 
\citet{toth2017semiparametric}, and 
\citet{gao2020nonparametric} study identification of the 
structural parameters without a known parametric form for the 
disturbance term, while \citet{zeleneev2026identification} 
allows for nonparametric structure in the unobserved 
heterogeneity. These methods are developed for single binary choice models, 
primarily in undirected networks, and impose alternative 
assumptions such as special regressors, continuously 
distributed covariates, or compact support of the unobserved 
heterogeneity.} 

The present paper extends the framework from a single binary choice model to distribution regression, studying how structural parameters vary across the outcome distribution. The binarization at each threshold, which underpins 
identification in this paper, is related to the strategy 
used in the fixed-effects ordered logit literature to 
identify a common coefficient under the proportional odds 
assumption \citep{das1999panel, baetschmann2015consistent, 
muris2017estimation}. More broadly, the identification argument of this paper extends to generalized ordered choice 
models with fixed effects, including dyadic and network structures, where coefficients vary 
across categories; to my knowledge no estimator is currently available for that setting.

The setting considered in this paper builds on 
\citet{chernozhukov2020network}, who, to my knowledge, are the 
first to propose distribution regression for network models with 
two-way fixed effects. The key difference lies in the estimation 
method employed at each threshold. They address the incidental 
parameter problem at each threshold through analytical bias 
corrections \citep{fernandez2016individual}. Several related approaches have 
been proposed for single binary choice models in networks, including 
analytical corrections for probit specifications 
\citep{dzemski2019empirical} and jackknife bias 
corrections \citep{hughes2026jackknife}. These methods 
require dense network asymptotics, meaning that the link 
probabilities must be bounded away from zero and one.\footnote{\citet{yan2019statistical} allow fixed 
effects to grow at rate $\log n$, permitting some degree of 
sparsity, but expected degrees must still grow nearly proportionally 
to $n$.} In the distribution regression setting, this translates to requiring that the probability of the outcome falling below 
a given threshold is bounded away from zero and one at each 
threshold, a condition that is necessarily violated in the 
tails of the distribution, where the binarized outcome has 
little or no variation. In applications where, for instance, the outcome is bounded below at zero, and there is a high prevalence of zeros, the threshold at which the 
binarized outcome first exhibits variation already corresponds 
to an extreme quantile, so that this condition fails 
broadly rather than only in the tails.

The conditional maximum likelihood approach does not require this assumption. The fixed effects are eliminated from the likelihood entirely, and the estimator remains consistent under sequences where the link probabilities can approach zero or one. The Monte Carlo simulations confirm this contrast in finite samples. An advantage of the bias correction approach, however, is that it delivers estimates of the fixed 
effects, enabling the construction of counterfactual distributions 
and average partial effects, objects that the conditional 
likelihood approach cannot recover, since the fixed effects are 
eliminated by conditioning. The two approaches are therefore complementary, 
with the choice depending on whether the application requires 
inference on structural parameters under sparsity or estimation of fixed effects for constructing 
counterfactual distributions and average effects, which 
requires a dense setting.

\vspace{0.25cm}
\noindent \textbf{Plan of the paper.} Section \ref{model_estimation} outlines the model and presents the estimation method; Section \ref{asymptotics} shows the pointwise asymptotic properties of the proposed estimator and discusses the sparsity conditions and the estimable quantile (threshold) range; Section \ref{joint_distribution} derives the joint distribution of the estimators across a finite number of thresholds, and constructs the sup-$t$ confidence bands and equality tests; Section \ref{simulations} presents the different settings for the Monte Carlo simulations and the obtained results; Section \ref{application} applies the method to gravity 
models of international trade; and Section \ref{conclusion} concludes.

\section{Model and Estimation}\label{model_estimation} 
\subsection{Distribution Regression Model for Networks}\label{subsection_model}

This section introduces the distribution regression model for a directed network structure formed through bilateral ties of units. To accommodate the network framework, a general dyadic setting is considered. More specifically, the conditional distribution function is parametrized as a function of dyad-specific characteristics and fixed effects for each unit in the observed pair of nodes.

The model follows \citet{chernozhukov2020network}, who 
introduced the distribution regression framework with two-way 
fixed effects for network data. Let $\{(y_{ij}, \boldsymbol{x}_{ij}) : (i,j) \in \mathcal{D} \}$ be the observed dataset, where $y_{ij}$ is a scalar outcome variable that can be discrete, continuous or mixed for a dyad $(i,j)$, and $\boldsymbol{x}_{ij}$ is a vector of dyad-specific covariates, such as measures of distance or similarity between units. Let $\mathcal{Y}$ be a region of interest in the support 
of the outcome and $\mathcal{X} \subseteq \mathbb{R}^p$ 
the support of the covariates. The asymptotic theory is 
developed for a finite collection of thresholds in 
$\mathcal{Y}$; when 
the outcome is discrete with finite support, $\mathcal{Y}$ 
can be taken as a subset of the support, while for 
continuous outcomes, $\mathcal{Y}$ is a finite grid of 
values in the support. The set of nodes\footnote{Thoughout the paper, I use units, nodes, or individuals interchangeably.} in the network is given by $\mathcal{N} = \{1,2, \dots, n\}$, and the set of observed 
dyads is $\mathcal{D} = \{(i,j) : i \neq j,\; i,j \in 
\mathcal{N}\}$, with $|\mathcal{D}| = n(n-1)$.\footnote{We consider that all the nodes are senders and receivers, but the method in this paper also allows for cases where the nodes that are senders differs from the nodes that are receivers, i.e., $i = 1, \dots, I$ and $j = 1, \dots J$, with $I \neq J$; and also for self-links to be formed.}
\enlargethispage{\baselineskip}

The unobserved heterogeneity of units $i$ and $j$ is captured by vectors $\boldsymbol{\nu}_i$ and $\boldsymbol{\omega}_j$ of unspecified dimension, whose relationship with the covariates 
$\boldsymbol{x}_{ij}$ is left unrestricted. I assume that the conditional distribution of $y_{ij}$ given $(\boldsymbol{x}_{ij}, \boldsymbol{\nu}_i, \boldsymbol{\omega}_j)$ is given by:
\begin{align} \label{eq:model}
    F_{y_{i j}}\left(y \mid \boldsymbol{x}_{i j}, \boldsymbol{\nu}_i, \boldsymbol{\omega}_j\right)=\Lambda \left(\boldsymbol{x}_{i j}^{\prime} \boldsymbol{\theta}_0(y)+\alpha\left(\boldsymbol{\nu}_{i}, y\right)+\gamma\left(\boldsymbol{\omega}_{j}, y\right)\right), \quad y \in \mathcal{Y}, \quad(i, j) \in \mathcal{D},
\end{align}
{\looseness=-1 \noindent where $\Lambda(\cdot)$ is a known link function, and $\boldsymbol{\theta}_0(y)$ is an unknown parameter vector of interest varying with $y$. $\alpha\left(\boldsymbol{\nu}_{i}, y\right)$ and $\gamma\left(\boldsymbol{\omega}_{j}, y\right)$ are unspecified measurable functions that can be seen as the unobserved individual fixed effects at a given level of $y$. This model is naturally semiparametric, not only because the parameters are allowed to vary with the output levels but also because it does not restrict how the individual unobserved effects correlate with the covariates. For notational simplicity, and without loss of generality, I denote $\boldsymbol{\theta}_0(y) = \boldsymbol{\theta}_{y,0}$, $\alpha_{i,y} = \alpha(\boldsymbol{\nu}_{i}, y)$ and $\gamma_{j,y} = \gamma(\boldsymbol{\omega}_{j}, y)$. When referring to a finite collection of thresholds 
$\{y_1, \ldots, y_K\}$, the parameter vector at the 
$k$-th threshold is denoted 
$\boldsymbol{\theta}_{y_k}$, with $\theta_{y_k,d}$ 
its $d$-th element.\par}

Throughout this paper, $\Lambda(\cdot)$ is the logistic 
distribution. While the distribution regression framework 
accommodates general link functions 
\citep{foresi1995conditional, chernozhukov2013inference}, 
the logistic specification is standard in fixed-effects 
settings \citep{charbonneau2017multiple, 
jochmans2018semiparametric, chernozhukov2020network}. This specification is essential for the conditional likelihood approach in Section \ref{subsection_estimation_method}, since it is the only link under which conditioning on sufficient statistics yields a likelihood free of nuisance parameters.

A key feature of this model is the two-way fixed effects, 
which account for part of the dependence across 
dyads. For instance, the outcomes for dyads $(i,j)$ and $(i,k)$ can be correlated through the shared sender effect 
$\alpha_{i,y}$ and possible correlations in covariates 
sharing index $i$. Allowing sender and receiver effects 
to differ, together with $y_{ij}$ and $\boldsymbol{x}_{ij}$ not 
necessarily equal to $y_{ji}$ and 
$\boldsymbol{x}_{ji}$, accommodates directed 
networks. However, notice that the model outlined in this section and the estimator proposed in the following section can be easily modified to accommodate undirected and bipartite networks. The additive separability of the fixed effects, 
which is standard in the network formation literature 
\citep{charbonneau2017multiple, jochmans2018semiparametric, 
graham2017econometric}, is what enables the conditional 
likelihood approach developed in the next section.
\enlargethispage{\baselineskip}

Finally, the conditional  distribution $F_{y_{ij}} (y \mid \boldsymbol{x}_{ij}, \boldsymbol{\nu}_i, \boldsymbol{\omega}_j)$ can be written as:
\begin{align} \label{eq:binary}
    F_{y_{ij}}\left(y \mid \boldsymbol{x}_{ij}, \boldsymbol{\nu}_i, \boldsymbol{\omega}_j \right) &= \mathbb{E} [{1} \{y_{ij} \leq y \} \mid \boldsymbol{x}_{ij}, \boldsymbol{\nu}_i, \boldsymbol{\omega}_j] \nonumber \\
    &= \text{Pr}[\tilde{y}_{ij,y} = 1 \mid \boldsymbol{x}_{ij}, \boldsymbol{\nu}_i, \boldsymbol{\omega}_j] \nonumber \\
    &=\Lambda \left(\boldsymbol{x}_{i j}^{\prime} \boldsymbol{\theta}_{y,0}+\alpha_{i,y}+\gamma_{j,y}\right),
\end{align}
\noindent where $\tilde{y}_{ij,y} = 1\{y_{ij} \leq y \}$ is the binary indicator that the outcome falls at or below threshold $y$. The parameters $\boldsymbol{\theta}_{y,0}$ can therefore be estimated for each $y \in \mathcal{Y}$ as a sequence of binary (logistic) regressions with two-way fixed effects.\footnote{Distribution regression is preferred over 
quantile regression (QR) in this setting because the 
linear-in-parameters QR may poorly approximate the 
conditional distribution when the outcome does not have a 
smooth conditional density. In contrast, the DR 
approximates the conditional distribution pointwise at 
each threshold in the support of the outcome, making it 
well-suited for outcomes with point masses without 
requiring strong assumptions on how such masses are 
generated \citep{chernozhukov2020network}. Moreover, to 
my knowledge, no QR methods for two-way fixed effects in 
dyadic network settings are currently available.} At each threshold, the maximum likelihood estimator is subject to the incidental parameter problem. Moreover, identification of individual fixed effects requires sufficient variation in the binary outcomes at each threshold, which may fail under sparsity. Both concerns motivate the conditional maximum likelihood approach developed in the next section, which eliminates the fixed effects from the likelihood by conditioning, and therefore does not require their identification or estimation. 
\subsection{Conditional Maximum Likelihood 
Estimation}\label{subsection_estimation_method}

The main challenge in the estimation of the sequence of binary regressions given by Equation \eqref{eq:binary} is that, even for a single binary regression, the incidental parameter problem \citep{neyman1948consistent} arises from estimating 
models with two-way fixed effects by maximum likelihood. To circumvent this problem, I propose to estimate the parameters of the model $\boldsymbol{\theta}(y)$ for each threshold (for a given level $y$) independently, using the conditional maximum-likelihood method for network formation models suggested by \cite{charbonneau2017multiple} (for directed networks) and concurrently by \cite{graham2017econometric} (for undirected networks). This is applicable here because each binarized 
threshold yields a binary choice model with two-way fixed 
effects, identical in structure to a network formation 
model. The approach extends the conditional maximum likelihood method for logistic panel data models with a single fixed effect \citep{rasch1960studies, chamberlain2010binary}\footnote{Also refer to \cite{arellano2001panel} for a survey.} to models with two-way fixed effects in dyadic structures, avoiding any 
distributional assumption on the fixed effects.

The method relies on the existence of a set of conditions for quadruples of nodes in the observed network such that the fixed effects drop out of the conditional likelihood under a logistic link. The logistic specification is 
essential: it is the only link under which the 
conditional likelihood is entirely free of nuisance 
parameters and takes a known closed 
form.\footnote{In standard panel settings, 
\citet{chamberlain2010binary} shows that for $T=2$, 
parametric-rate estimation is only possible under 
logistic errors. For $T \geq 3$, 
\citet{davezies2023fixed} show that identification 
beyond logistic is possible through conditional moment 
restrictions estimated by GMM, though these do not yield 
a closed-form conditional likelihood.}

From the conditional distribution $F_{y_{ij}}$ and the constructed binary variables $\tilde{y}_{ij,y}$, it follows that
$$ \tilde{y}_{ij,y} = 1\{\boldsymbol{x}_{ij}'\boldsymbol{\theta}_{y,0} + {\alpha}_{i,y} + {\gamma}_{j,y} + \varepsilon_{ij,y} \geq 0\}, \quad (i,j) \in \mathcal{D} $$

\noindent where $\varepsilon_{ij,y} \sim \text{i.i.d.}\ 
\text{Logistic}(0,1)$ across dyads for each threshold 
$y$, with $\varepsilon_{ij,y} \perp 
\{\boldsymbol{x}_{ij}, \alpha_{i,y}, \gamma_{j,y}\}$. 
Under the logistic assumption, the conditional 
probability takes the explicit form:
\begin{align} \label{eq:main_charbonneau}
    \mathbb{E} [1\{y_{ij} \leq y\} \mid \boldsymbol{x}_{ij}, {\alpha}_{i,y}, {\gamma}_{j,y}] &= \text{Pr}[\tilde{y}_{ij,y} = 1 \mid \boldsymbol{x}_{ij}, {\alpha}_{i,y}, {\gamma}_{j,y}] \nonumber\\
    &= \frac{\text{exp}(\boldsymbol{x}_{ij}'\boldsymbol{\theta}_{y,0} + {\alpha}_{i,y} + {\gamma}_{j,y})}{1+\text{exp}(\boldsymbol{x}_{ij}'\boldsymbol{\theta}_{y,0} + {\alpha}_{i,y} + {\gamma}_{j,y})}
\end{align}

The logistic specification ensures the existence of 
sufficient statistics for the fixed effects, which is formalized in the Lemma below.

\begin{lemma}\label{lemma_sufficient}
    Under the model specification given by Equation \eqref{eq:main_charbonneau}, the sums across each dimension of the pseudo panel, $\sum_{j=1}^n \tilde{y}_{ij,y}$ and $\sum_{i=1}^n \tilde{y}_{ij,y}$, are sufficient statistics for ${\alpha}_{i,y}$ and ${\gamma}_{j,y}$.
\end{lemma}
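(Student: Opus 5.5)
The plan is to write the joint likelihood of the binarized network $\{\tilde y_{ij,y}:(i,j)\in\mathcal D\}$ at a fixed threshold $y$, conditional on the covariates and the fixed effects, and put it in exponential-family form. Then I will apply the Fisher--Neyman factorization theorem.

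Conditional on $\{\boldsymbol x_{ij}\}$, $\{\alpha_{i,y}\}$ and $\{\gamma_{j,y}\}$, the errors $\varepsilon_{ij,y}$ are i.i.d.\ logistic across dyads and independent of the conditioning variables. Hence the $\tilde y_{ij,y}$ are conditionally independent Bernoulli variables with success probability given by \eqref{eq:main_charbonneau}. Write $\eta_{ij}=\boldsymbol x_{ij}'\boldsymbol\theta_{y,0}+\alpha_{i,y}+\gamma_{j,y}$. Each factor can be written as $\Lambda(\eta_{ij})^{\tilde y_{ij,y}}(1-\Lambda(\eta_{ij}))^{1-\tilde y_{ij,y}}=\exp(\tilde y_{ij,y}\eta_{ij})/(1+\exp(\eta_{ij}))$. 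Taking the product over $(i,j)\in\mathcal D$ gives
\begin{align*}
\Pr\bigl[\{\tilde y_{ij,y}\}\mid \cdot\bigr]
=\frac{\exp\Bigl(\sum_{(i,j)\in\mathcal D}\tilde y_{ij,y}\boldsymbol x_{ij}'\boldsymbol\theta_{y,0}+\sum_{i}\alpha_{i,y}\sum_{j\neq i}\tilde y_{ij,y}+\sum_{j}\gamma_{j,y}\sum_{i\neq j}\tilde y_{ij,y}\Bigr)}{\prod_{(i,j)\in\mathcal D}\bigl(1+\exp(\eta_{ij})\bigr)}.
\end{align*}
The key step is simply regrouping the double sum $\sum_{(i,j)}\tilde y_{ij,y}(\alpha_{i,y}+\gamma_{j,y})$ by sender and by receiver. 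This shows that the data enter through the fixed effects only via the out-degrees $d^{out}_{i}=\sum_j\tilde y_{ij,y}$ and the in-degrees $d^{in}_j=\sum_i\tilde y_{ij,y}$.

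Next I apply the factorization theorem with $\boldsymbol\theta_{y,0}$ treated as known. The likelihood factors as
\[
h(\tilde{\boldsymbol y})\,g\bigl(d^{out},d^{in};\alpha,\gamma\bigr),
\qquad
h(\tilde{\boldsymbol y})=\exp\Bigl(\sum\tilde y_{ij,y}\boldsymbol x_{ij}'\boldsymbol\theta_{y,0}\Bigr),
\]
where $g$ collects the exponential of the fixed-effect terms divided by the normalizing product; that product depends only on parameters and covariates. This shows that $(d^{out}_i)_i$ and $(d^{in}_j)_j$ are jointly sufficient for $(\alpha_{i,y},\gamma_{j,y})$. Equivalently, the conditional distribution of $\{\tilde y_{ij,y}\}$ given these degree sequences does not depend on the fixed effects. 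This is the form used later, when the likelihood is conditioned on quadruples of nodes.

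The main point requiring care is interpretive rather than computational. Sufficiency must be understood \emph{jointly}: the row sums together with the column sums are sufficient for $(\alpha,\gamma)$, holding $\boldsymbol\theta_{y,0}$ and the covariates fixed, because each statistic is only sufficient for its own block of parameters once the other block is also conditioned on. One also has to note the normalization: only $\alpha_{i,y}+\gamma_{j,y}$ is identified, which is reflected in the identity $\sum_i d^{out}_i=\sum_j d^{in}_j$. Finally, the sums should run over $j\neq i$, consistent with $\mathcal D$.
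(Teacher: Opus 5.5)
Your proof is correct and follows essentially the same route as the paper. Both use conditional independence across dyads to write the joint Bernoulli likelihood in exponential-family form, and both regroup $\sum_{(i,j)\in\mathcal D}\tilde y_{ij,y}(\alpha_{i,y}+\gamma_{j,y})$ into sender (row) and receiver (column) sums.

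The only difference is the final step. You invoke the factorization theorem, whereas the paper computes the conditional law given the margins directly, as a ratio over the set $\mathbb{Q}$ of binary arrays with the same row and column sums. That direct computation also exhibits the explicit but intractable conditional likelihood that motivates the quadruple-based estimator.
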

\noindent \textit{Proof.} See \ref{appendix_sufficient}.

This result is well-known for the standard panel case with one fixed effect, and \cite{graham2017econometric} establishes sufficiency of the degree sequence for undirected networks with a single set of node effects. Lemma \ref{lemma_sufficient} extends it to directed networks with two-way fixed effects, where both the row and column sums serve as sufficient statistics.

Even though one could construct a conditional maximum likelihood estimator based on the sufficient statistics, the maximization is intractable. \cite{charbonneau2017multiple} provides a tractable alternative by showing that it is possible to eliminate the two-way fixed effects by conditioning the above probability on the set of events $\{\tilde{y}_{ij,y} + \tilde{y}_{ik,y} = 1, \tilde{y}_{lj,y} + \tilde{y}_{lk,y} = 1,  \tilde{y}_{ij,y} + \tilde{y}_{lk,y} = 1\}$ for different indices of senders and receivers $\{i,l;j,k\}$ (quadruples of nodes), such that:
\begin{align}
    \text{Pr}[\tilde{y}_{ij,y} &= 1 \mid \boldsymbol{x}_{ij}, {\alpha}_{i,y}, {\gamma}_{j,y}, \tilde{y}_{ij,y} + \tilde{y}_{ik,y} = 1, \tilde{y}_{lj,y} + \tilde{y}_{lk,y} = 1,  \tilde{y}_{ij,y} + \tilde{y}_{lk,y} = 1] \nonumber \\ &= \frac{\text{exp}(((\boldsymbol{x}_{ij} - \boldsymbol{x}_{ik}) - (\boldsymbol{x}_{lj} - \boldsymbol{x}_{lk})) '\boldsymbol{\theta}_{y,0})}{1+\text{exp}(((\boldsymbol{x}_{ij} - \boldsymbol{x}_{ik}) - (\boldsymbol{x}_{lj} - \boldsymbol{x}_{lk})) '\boldsymbol{\theta}_{y,0})},
\end{align}
\noindent which no longer depends on the fixed effects. This result follows from applying the classical conditional logit argument for static panel data models with a single fixed effect \citep{rasch1960studies} sequentially, first to eliminate the sender fixed effects, then the receiver fixed effects.

Summing over all quadruples that satisfy the conditioning 
events, the CMLE maximizes:
\begin{align} \label{eq:main}
    \sum_{i=1}^n \sum_{j=1, j \neq i}^n \sum_{l,k \in \mathbb{Z}_{ij,y}} \text{log} \left( \frac{\text{exp}(((\boldsymbol{x}_{ij} - \boldsymbol{x}_{ik}) - (\boldsymbol{x}_{lj} - \boldsymbol{x}_{lk})) '\boldsymbol{\theta}_y)}{1 + \text{exp}(((\boldsymbol{x}_{ij} - \boldsymbol{x}_{ik}) - (\boldsymbol{x}_{lj} - \boldsymbol{x}_{lk})) '\boldsymbol{\theta}_y)}\right) ,
\end{align}
where $\mathbb{Z}_{ij,y}$ is the set of nodes $k$ and 
$l$ satisfying the conditioning events for pair $(i,j)$. 
In practice, this reduces to a standard logit estimation 
on pairwise-differenced outcomes and covariates, as 
described in the next section.

\begin{figure}[h]
\centering
\captionsetup{skip=0pt}
\captionsetup[subfigure]{font=footnotesize, skip=0pt}
\begin{subfigure}[b]{0.40\textwidth}
\centering
\begin{tikzpicture}[scale=1.4]
\node[circle,draw,minimum size=0.5cm] (i) at (0,1.5) {\small i};
\node[circle,draw,minimum size=0.5cm] (j) at (1.5,1.5) {\small j};
\node[circle,draw,minimum size=0.5cm] (k) at (0,0) {\small k};
\node[circle,draw,minimum size=0.5cm] (l) at (1.5,0) {\small l};
\node[above=0.1cm of i, font=\tiny] {sender};
\node[above=0.1cm of j, font=\tiny] {receiver};
\node[below=0.1cm of k, font=\tiny] {receiver};
\node[below=0.1cm of l, font=\tiny] {sender};
\draw[->,thick,blue,line width=1.2pt] (i) -- (j);
\draw[->,thick,blue,line width=1.2pt] (l) -- (k);
\draw[->,dashed,red,line width=1.2pt] (i) -- (k);
\draw[->,dashed,red,line width=1.2pt] (l) -- (j);
\end{tikzpicture}
\caption{$z_\sigma = 1$}
\end{subfigure}
\hfill
\begin{subfigure}[b]{0.40\textwidth}
\centering
\begin{tikzpicture}[scale=1.4]
\node[circle,draw,minimum size=0.5cm] (i) at (0,1.5) {\small i};
\node[circle,draw,minimum size=0.5cm] (j) at (1.5,1.5) {\small j};
\node[circle,draw,minimum size=0.5cm] (k) at (0,0) {\small k};
\node[circle,draw,minimum size=0.5cm] (l) at (1.5,0) {\small l};
\node[above=0.1cm of i, font=\tiny] {sender};
\node[above=0.1cm of j, font=\tiny] {receiver};
\node[below=0.1cm of k, font=\tiny] {receiver};
\node[below=0.1cm of l, font=\tiny] {sender};
\draw[->,dashed,red,line width=1.2pt] (i) -- (j);
\draw[->,dashed,red,line width=1.2pt] (l) -- (k);
\draw[->,thick,blue,line width=1.2pt] (i) -- (k);
\draw[->,thick,blue,line width=1.2pt] (l) -- (j);
\end{tikzpicture}
\caption{$z_\sigma = -1$}
\end{subfigure}

\caption{Informative quadruple configurations with 
\{i,l\} as senders and \{j,k\} as receivers. Blue solid 
arrows: links present ($\tilde{y}=1$). Red dashed arrows: 
links absent ($\tilde{y}=0$). The two senders connect to 
opposite receivers, yielding $z_\sigma \in \{-1, 1\}$. 
Values: (a) $\tilde{y}_{ij,y}=1, \tilde{y}_{ik,y}=0, 
\tilde{y}_{lj,y}=0, \tilde{y}_{lk,y}=1$; (b) 
$\tilde{y}_{ij,y}=0, \tilde{y}_{ik,y}=1, 
\tilde{y}_{lj,y}=1, \tilde{y}_{lk,y}=0$.}
\label{figure1:correct}
\end{figure}

\begin{figure}[h]
\centering
\captionsetup{skip=0pt}
\captionsetup[subfigure]{font=footnotesize, skip=0pt}
\begin{subfigure}[b]{0.30\textwidth}
\centering
\begin{tikzpicture}[scale=1.4]
\node[circle,draw,minimum size=0.5cm] (i) at (0,1.5) {\small i};
\node[circle,draw,minimum size=0.5cm] (j) at (1.5,1.5) {\small j};
\node[circle,draw,minimum size=0.5cm] (k) at (0,0) {\small k};
\node[circle,draw,minimum size=0.5cm] (l) at (1.5,0) {\small l};
\node[above=0.1cm of i, font=\tiny] {sender};
\node[above=0.1cm of j, font=\tiny] {receiver};
\node[below=0.1cm of k, font=\tiny] {receiver};
\node[below=0.1cm of l, font=\tiny] {sender};
\draw[->,dashed,red,line width=1.2pt] (i) -- (j);
\draw[->,dashed,red,line width=1.2pt] (l) -- (k);
\draw[->,dashed,red,line width=1.2pt] (i) -- (k);
\draw[->,dashed,red,line width=1.2pt] (l) -- (j);
\end{tikzpicture}
\caption{All absent ($z_\sigma = 0$)}
\end{subfigure}
\hfill
\begin{subfigure}[b]{0.30\textwidth}
\centering
\begin{tikzpicture}[scale=1.4]
\node[circle,draw,minimum size=0.5cm] (i) at (0,1.5) {\small i};
\node[circle,draw,minimum size=0.5cm] (j) at (1.5,1.5) {\small j};
\node[circle,draw,minimum size=0.5cm] (k) at (0,0) {\small k};
\node[circle,draw,minimum size=0.5cm] (l) at (1.5,0) {\small l};
\node[above=0.1cm of i, font=\tiny] {sender};
\node[above=0.1cm of j, font=\tiny] {receiver};
\node[below=0.1cm of k, font=\tiny] {receiver};
\node[below=0.1cm of l, font=\tiny] {sender};
\draw[->,thick,red,line width=1.2pt] (i) -- (j);
\draw[->,thick,red,line width=1.2pt] (l) -- (k);
\draw[->,thick,red,line width=1.2pt] (i) -- (k);
\draw[->,thick,red,line width=1.2pt] (l) -- (j);
\end{tikzpicture}
\caption{All present ($z_\sigma = 0$)}
\end{subfigure}
\hfill
\begin{subfigure}[b]{0.30\textwidth}
\centering
\begin{tikzpicture}[scale=1.4]
\node[circle,draw,minimum size=0.5cm] (i) at (0,1.5) {\small i};
\node[circle,draw,minimum size=0.5cm] (j) at (1.5,1.5) {\small j};
\node[circle,draw,minimum size=0.5cm] (k) at (0,0) {\small k};
\node[circle,draw,minimum size=0.5cm] (l) at (1.5,0) {\small l};
\node[above=0.1cm of i, font=\tiny] {sender};
\node[above=0.1cm of j, font=\tiny] {receiver};
\node[below=0.1cm of k, font=\tiny] {receiver};
\node[below=0.1cm of l, font=\tiny] {sender};
\draw[->,thick,blue,line width=1.2pt] (i) -- (j);
\draw[->,thick,blue,line width=1.2pt] (l) -- (j);
\draw[->,dashed,red,line width=1.2pt] (i) -- (k);
\draw[->,dashed,red,line width=1.2pt] (l) -- (k);
\end{tikzpicture}
\caption{Same direction ($z_\sigma = 0$)}
\end{subfigure}

\caption{Examples of non-informative quadruple configurations with 
\{i,l\} as senders and \{j,k\} as receivers. In (a), all 
links are absent; in (b), all links are present; in (c), 
both senders connect to the same receiver. In all three 
cases $z_\sigma = 0$, so the quadruple provides no 
information for estimation.}
\label{Figure3:exampletetrads}
\end{figure}
As in the static panel logit considered in \cite{rasch1960studies} and \cite{chamberlain2010binary}, only 
quadruples whose outcomes exhibit variation (the 
analogues of movers in the panel data literature) 
contribute to the likelihood: a quadruple $\{i,l;j,k\}$ is informative only if each 
node's outcomes vary across its two potential links. To illustrate this argument, Figure~\ref{figure1:correct} shows the two informative 
configurations for a fixed sender-receiver assignment 
$\{i,l;j,k\}$ ($i$ and $l$ are fixed to be senders; and $j$ and $k$ are fixed to be receivers). Note that there is variation in the outcomes for each node, i.e., each node has exactly one link 
present and one absent, yielding $z_\sigma \in \{-1, 1\}$. In a directed network, the sender-receiver roles can 
also be reversed; Figure~\ref{figure2:correct} in 
\ref{identification} shows all twelve informative 
configurations across all possible assignments. Figure~\ref{Figure3:exampletetrads} illustrates 
three non-informative configurations, maintaining the same 
sender-receiver assignment. In Subfigure (a), all links are 
absent; in Subfigure (b), all links are present; in both 
cases, no node's outcomes vary. In Subfigure (c), both 
senders connect to the same receiver, so that while each 
sender's outcomes vary, the receivers' do not, and the 
pairwise difference yields $z_\sigma = 0$. None of these 
quadruples contribute to the likelihood. For further 
intuition on the identification of the common parameters, I 
refer to \ref{identification}.

\begin{remark}[Conditioning across thresholds]
If the fixed effects were constant across different levels of the outcome $y$ (thresholds), 
one could also condition on events combining information across thresholds, such as $\tilde{y}_{ij,y_1} + \tilde{y}_{ij,y_2} = 1$. However, this approach is not pursued in this paper. Allowing the fixed effects to vary flexibly across thresholds, as done in the current framework, provides a more general specification that avoids possible misspecification.
\end{remark}

A property of the model central to the asymptotic theory of this estimator is that the binary indicators are conditionally independent across dyads:
$$
\Pr(\tilde{y}_{ij,y}, \tilde{y}_{kl,y} \mid 
\{\boldsymbol{x}_{ij}\}_{n,n}, \{\alpha_{i,y}, \gamma_{j,y}\}_n) = 
\Pr(\tilde{y}_{ij,y} \mid \boldsymbol{x}_{ij}, \alpha_{i,y}, 
\gamma_{j,y}) \cdot \Pr(\tilde{y}_{kl,y} \mid \boldsymbol{x}_{kl}, 
\alpha_{k,y}, \gamma_{l,y})
$$
for all $(i,j) \neq (k,l)$, where $\{\boldsymbol{x}_{ij}\}_{n,n}$ denotes the entire set of covariates, and $\{\alpha_{i,y}, \gamma_{j,y}\}_{n}$ denotes the full set of fixed effects. The conditional independence follows from the errors $\varepsilon_{ij,y}$ being independent across dyads for each threshold $y$, even when 
dyads share a node (with $\varepsilon_{ij,y}$ and 
$\varepsilon_{ji,y}$ treated as distinct and independent).\footnote{One drawback is that, in a network formation model context, transitivity across the probabilities is not taken into account by this model. It rules out interdependent link preferences, where individuals' preferences over a link may vary with the presence or absence of links elsewhere in the network. However, \cite{dzemski2019empirical} shows that this dyadic structure can still reproduce the transitivity patterns observed in some datasets.} The model thus belongs to the class of conditionally independent dyad (CID) models \citep{fafchamps2007formation, 
graham2020dyadic}. 

While the conditional independence holds across dyads within each threshold, the model allows for correlation of $\varepsilon_{ij,y}$ and $\varepsilon_{ij,y'}$ across thresholds $y \neq y'$ for the same dyad, since estimation is performed 
separately at each threshold. This 
cross-threshold dependence becomes relevant when 
deriving the joint distribution in Section 
\ref{joint_distribution}.

\section{Pointwise Asymptotics}\label{asymptotics}

The pointwise asymptotic properties of the estimator $\boldsymbol{\theta}_{n,y}$ for a single threshold value $y$ of the conditional distribution follow from results in \cite{jochmans2018semiparametric} for the estimator of \cite{charbonneau2017multiple}. The proofs in \ref{appendix_asymptotics} adapt the same broad structure, with modifications that facilitate the extension to the joint asymptotic distribution across thresholds in Section \ref{joint_distribution}. Subsection \ref{subsection_sparsity} discusses their implications for the estimable range of thresholds in the distribution regression setting. 

Throughout this section, the sequence of individual 
effects $\{\alpha_{i,y}, \gamma_{j,y}\}_n$ are treated 
as fixed, since the analysis conditions on them. The 
asymptotic framework considers $n \to \infty$, so that 
both dimensions of the network grow at the same rate. For an ordered quadruple of distinct nodes $\{i,l;j,k\}$ from $\mathcal{N}$, define:
$$ z_y(\sigma\{i,l;j,k\}) = \frac{(\tilde{y}_{ij,y} - \tilde{y}_{ik,y}) - (\tilde{y}_{lj,y} - \tilde{y}_{lk,y})}{2} $$
$$ \boldsymbol{r}(\sigma\{i,l;j,k\}) = (\boldsymbol{x}_{ij} - \boldsymbol{x}_{ik}) - (\boldsymbol{x}_{lj} - \boldsymbol{x}_{lk}) ,$$
\noindent where $\sigma(\cdot)$ maps an ordered quadruple to the index set $\mathcal{N}_{m_n} = \{1,2,\dots, m_n \}$, with $m_n$ denoting the number of distinct ordered quadruples from $\mathcal{N}$, i.e, $m_n = n (n-1) (n-2) (n-3)$.\footnote{\citet{jochmans2018semiparametric} exploits the permutation invariance of the score contributions in senders $(i,l)$ and receivers $(j,k)$, thus, considering combinations of senders and receivers and defining $m_n = n (n-1) (n-2) (n-3)/4$. I depart from this and consider ordered quadruples of nodes throughout. Since the score contributions are invariant under these permutations, the two formulations yield identical estimators. This choice is made for consistency with the projections of the scores defined later in this Section and with the proofs, which, in this paper, follow the ordered structure of indices.} Hereafter, I use the shortcut notation $z_{\sigma,y}$ and $\boldsymbol{r}_\sigma$.

The transformed variable $z_{\sigma,y}$ takes values in 
$\{-1,-1/2,0,1/2,1\}$, with $z_{\sigma,y} \in \{-1,1\}$ 
corresponding to the conditioning set 
$\{\tilde{y}_{ij,y} + \tilde{y}_{ik,y} = 1,\; 
\tilde{y}_{lj,y} + \tilde{y}_{lk,y} = 1,\; 
\tilde{y}_{ij,y} + \tilde{y}_{lk,y} = 1\}$ from 
Section~\ref{subsection_estimation_method}. Collecting $\boldsymbol{x} = (\boldsymbol{x}_{ij}, 
\boldsymbol{x}_{ik}, \boldsymbol{x}_{lj}, 
\boldsymbol{x}_{lk})$, the conditioning argument from the previous section yields:

\begin{lemma}\label{lemma:sufficiency}(Sufficiency) $$ \operatorname{Pr}[z_{\sigma,y} = 1 \mid \boldsymbol{x}, z_{\sigma,y} \in \{-1,1\}] = \frac{\exp(\boldsymbol{r}_\sigma' \boldsymbol{\theta}_{y,0})}{1 + \exp(\boldsymbol{r}_\sigma' \boldsymbol{\theta}_{y,0})}$$
\end{lemma}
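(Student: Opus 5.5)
The plan is a direct computation using the conditional independence of the binary indicators across dyads, together with the logistic form in Equation \eqref{eq:main_charbonneau}; the fixed effects are held fixed, i.e.\ they appear in the conditioning set implicitly. Write $\pi_{ab} = \Pr[\tilde{y}_{ab,y}=1 \mid \boldsymbol{x}_{ab},\alpha_{a,y},\gamma_{b,y}]$ and $\eta_{ab} = \boldsymbol{x}_{ab}'\boldsymbol{\theta}_{y,0}+\alpha_{a,y}+\gamma_{b,y}$. Then $\pi_{ab}/(1-\pi_{ab}) = \exp(\eta_{ab})$.

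First I identify the two events. From the definition, $z_{\sigma,y}=1$ holds if and only if $(\tilde{y}_{ij,y},\tilde{y}_{ik,y},\tilde{y}_{lj,y},\tilde{y}_{lk,y})=(1,0,0,1)$. Likewise, $z_{\sigma,y}=-1$ holds if and only if this vector equals $(0,1,1,0)$. Both follow because the numerator $(\tilde y_{ij}-\tilde y_{ik})-(\tilde y_{lj}-\tilde y_{lk})$ equals $\pm 2$ only when each bracket is $\pm1$ with opposite signs. These are exactly the configurations in Figure~\ref{figure1:correct}.

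By conditional independence across the four distinct dyads $(i,j),(i,k),(l,j),(l,k)$, each probability factorizes into a product of $\pi$'s and $(1-\pi)$'s. The common factor $\prod_{ab}(1-\pi_{ab})$ cancels in the ratio, leaving
\[
\frac{\Pr[z_{\sigma,y}=1\mid\cdot]}{\Pr[z_{\sigma,y}=-1\mid\cdot]}=\exp\big(\eta_{ij}+\eta_{lk}-\eta_{ik}-\eta_{lj}\big).
\]
Expanding the exponent, the sender effects cancel ($\alpha_{i,y}-\alpha_{i,y}$ and $\alpha_{l,y}-\alpha_{l,y}$), and so do the receiver effects ($\gamma_{j,y}-\gamma_{j,y}$ and $\gamma_{k,y}-\gamma_{k,y}$). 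What remains is $\boldsymbol{r}_\sigma'\boldsymbol{\theta}_{y,0}$.

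Finally,
\[
\Pr[z_{\sigma,y}=1\mid z_{\sigma,y}\in\{-1,1\},\cdot]=\frac{P_1}{P_1+P_{-1}}=\frac{e^{\boldsymbol{r}_\sigma'\boldsymbol{\theta}_{y,0}}}{1+e^{\boldsymbol{r}_\sigma'\boldsymbol{\theta}_{y,0}}}.
\]
This expression does not depend on the fixed effects. Therefore the same formula holds after integrating them out given $\boldsymbol{x}$, which yields the statement as written, conditioning only on $\boldsymbol{x}$.

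The only delicate step is this last one: passing from conditioning on $(\boldsymbol{x},\alpha,\gamma)$ to conditioning on $\boldsymbol{x}$ alone. I handle it by the law of iterated expectations, which applies because the conditional probability is a constant function of the fixed effects. Distinctness of the nodes in the quadruple is what guarantees that the four dyads are distinct, so independence applies.
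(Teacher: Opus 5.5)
Your proof is correct and is essentially the paper's argument made explicit. The paper simply invokes the exponential-family structure of the logit (equivalently, the row/column-sum conditioning of Lemma~\ref{lemma_sufficient} restricted to the $2\times 2$ subnetwork), which leaves exactly the two configurations $(1,0,0,1)$ and $(0,1,1,0)$ whose odds ratio is $\exp(\boldsymbol{r}_\sigma'\boldsymbol{\theta}_{y,0})$, as you compute; your final iterated-expectations step is valid, though not needed, since the paper conditions on the fixed effects throughout.
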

\textit{Proof.} Follows from the conditioning argument in 
Subsection~\ref{subsection_estimation_method}. In particular, it follows immediately from the exponential 
family structure of the logistic specification in 
Equation~\eqref{eq:main_charbonneau}.

Lemma \ref{lemma:sufficiency} implies that, after conditioning, the fixed effects are eliminated from the likelihood, circumventing the incidental parameter problem, and each 
informative quadruple contributes a standard logistic 
term. Summing over all quadruples in $\mathcal{N}_{m_n}$, for which $z_{\sigma,y} \in \{-1,1\}$, the estimator is defined as
$$ {\boldsymbol{\theta}}_{n,y} = \argmax_{\boldsymbol{\theta}_y \in \Theta} L_n (\boldsymbol{\theta}_y),$$
\noindent where $\Theta$ is the parameter space searched over, and the conditional log-likelihood takes the form
$$ L_n (\boldsymbol{\theta}_y) = \sum_{\sigma \in \mathcal{N}_{m_n}} 1 \{z_{\sigma,y} = 1 \} \text{log} \Lambda(\boldsymbol{r}_\sigma' \boldsymbol{\theta}_y) + 1 \{z_{\sigma,y} = -1 \} \text{log} (1 -\Lambda(\boldsymbol{r}_\sigma' \boldsymbol{\theta}_y)).$$
The number of quadruples with $z_{\sigma,y} \in \{-1,1\}$ is denoted by $m_{n,y}^* = \sum_{\sigma \in \mathcal{N}_{m_n}} 1 \{z_{\sigma,y} \in \{-1,1\}\}$. 

The following standard assumptions are needed to establish consistency of the estimator: 

\begin{assumption}(Sampling) The n nodes in $\mathcal{N}$ are sampled independently.
    \label{assumption1} 
\end{assumption}
\begin{assumption}(Parameter space) $\boldsymbol{\theta}_{y,0}$ is interior to $\Theta$, a compact subset of $\mathbb{R}^{dim (\boldsymbol{\theta}_y)}$.
    \label{assumption2} 
\end{assumption}
\begin{assumption}(Moments) For all $(i,j) \in \mathcal{D}$, $\mathbb{E} (||\boldsymbol{x}_{ij}||^2) < C_1$, where $C_1$ is a finite constant.
    \label{assumption3} 
\end{assumption}
Define the expected fraction of quadruples that contribute to the log-likelihood at each threshold as:
$$ p_{n,y} = \frac{\mathbb{E}(m^*_{n,y})}{m_n} = \frac{\sum_{\sigma \in \mathcal{N}_{m_n}} \Pr \{z_{\sigma,y} \in \{-1,1\}\}}{m_n}. $$
\begin{assumption} (Identification) $n p_{n,y} \xrightarrow[]{} \infty$ as $n \xrightarrow[]{} \infty$ and the matrix 
$$ \lim_{n \xrightarrow[]{} \infty} (m_n p_{n,y})^{-1} \sum_{\sigma \in \mathcal{N}_{m_n}} \mathbb{E}(-\boldsymbol{r}_\sigma\boldsymbol{r}_\sigma' f(\boldsymbol{r}_\sigma'\boldsymbol{\theta}_{y,0}) 1\{z_{\sigma,y}\in \{-1,1\}\}), $$ where $f$ is the logistic density function, has maximal rank. 
\label{assumption4} 
\end{assumption}

Assumption \ref{assumption1} is a standard sampling scheme for network data, requiring independent sampling of nodes. Importantly, it does not impose independence across dyads, since covariates for dyads sharing a node may be correlated, which, together with the same unrestricted fixed effects appearing across different pairs sharing a node, generates the network dependence. This accommodates for settings where covariates take the form $x_{ij} = g(x_i, x_j)$ \citep{graham2017econometric}, where $g(\cdot)$ is a measurable function, but it is more general than that. Moreover, the assumption does not require that the nodes are identically distributed.

Assumptions \ref{assumption2}, \ref{assumption3} and the rank condition in Assumption \ref{assumption4} are standard for establishing consistency in non-linear models \citep{newey1994chapter}. The matrix in Assumption \ref{assumption4} is the limit of the normalized expected Hessian, so that the rank condition is equivalent to requiring the limiting Hessian to be negative definite. The key condition in Assumption \ref{assumption4} is that $np_{n,y} \to \infty$, where $p_{n,y}$ measures the expected fraction of informative quadruples at the fixed threshold $y$. In the network formation setting, for which this estimator was originally proposed, this allows $p_{n,y}$ to shrink as $n$ grows, so that the probability of link formation can approach zero or one. In the DR context, this condition allows the probabilities $\Pr(y_{ij} \leq y)$ to approach zero or one, accommodating 
sparsity both from the underlying network structure and from 
binarization at extreme thresholds. Since $p_{n,y}$ is the 
average of $\Pr(z_{\sigma,y} \in \{-1,1\})$ over all 
quadruples $\sigma$, it depends on the fixed effects of each 
quadruple, so that sequences of fixed effects growing without 
bound are allowed as long as $p_{n,y}$ does not shrink faster 
than $n^{-1}$. The condition $np_{n,y} \to \infty$ thus 
requires that the expected number of informative quadruples 
continues to grow with $n$, even as the fraction of 
informative quadruples may vanish. The formal 
characterization of these two sources of sparsity and their 
implications for the estimable range of thresholds are 
developed in Subsection~\ref{subsection_sparsity}.

The following theorem establishes pointwise consistency at 
each threshold.

\begin{theorem} \label{theorem1} (Consistency) Let Assumptions \ref{assumption1}-\ref{assumption4} hold. Then ${\boldsymbol{\theta}}_{n,y} \overset{p}{\to} \boldsymbol{\theta}_{y,0}$ as $n \xrightarrow[]{} \infty$ for each fixed $y \in \mathcal{Y}$.
\end{theorem}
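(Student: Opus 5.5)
The argument follows the standard extremum-estimator route \citep{newey1994chapter}, applied to the normalized objective $Q_n(\boldsymbol{\theta}_y) = (m_n p_{n,y})^{-1} L_n(\boldsymbol{\theta}_y)$. First, $L_n$ is concave in $\boldsymbol{\theta}_y$, since each informative quadruple contributes a logistic log-likelihood term in the linear index $\boldsymbol{r}_\sigma'\boldsymbol{\theta}_y$. Second, define the population counterpart $\bar Q_n(\boldsymbol{\theta}_y) = (m_n p_{n,y})^{-1}\mathbb{E}[L_n(\boldsymbol{\theta}_y)]$, where the expectation is conditional on the fixed effects. By Lemma \ref{lemma:sufficiency}, conditional on $\boldsymbol{x}$ and on $z_{\sigma,y}\in\{-1,1\}$, each term is a correctly specified logit log-likelihood. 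The usual information inequality (Kullback--Leibler argument) therefore shows that $\bar Q_n$ is maximized at $\boldsymbol{\theta}_{y,0}$. Its Hessian at $\boldsymbol{\theta}_{y,0}$ is exactly the normalized matrix in Assumption \ref{assumption4}, which is negative definite in the limit. Combined with concavity and compactness of $\Theta$ (Assumption \ref{assumption2}), this makes $\boldsymbol{\theta}_{y,0}$ a well-separated maximizer, uniformly in large $n$.

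Because of concavity, it suffices to show pointwise convergence $Q_n(\boldsymbol{\theta}_y)-\bar Q_n(\boldsymbol{\theta}_y)\overset{p}{\to}0$ for each $\boldsymbol{\theta}_y$. The convexity lemma (Newey--McFadden Theorem 2.7, or Andersen--Gill) then upgrades this to uniform convergence on compacts and yields consistency. For the pointwise step I would use Chebyshev's inequality. The variance of $L_n(\boldsymbol{\theta}_y)$ is a double sum over pairs of quadruples $\sigma,\sigma'$. By Assumption \ref{assumption1} and conditional independence across dyads, pairs of quadruples that share no node are uncorrelated, so only overlapping pairs contribute. Write $\ell_\sigma$ for the contribution of quadruple $\sigma$. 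It satisfies $|\ell_\sigma|\le 1\{z_{\sigma,y}\in\{-1,1\}\}\,(\log 2 + \|\boldsymbol{r}_\sigma\|\,\|\boldsymbol{\theta}_y\|)$, and $\mathbb{E}\|\boldsymbol{r}_\sigma\|^2$ is bounded by Assumption \ref{assumption3}.

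Pairs sharing at least one node number $O(n^7)$. For each such pair I would bound the covariance by the second moment weighted by the informativeness indicator. This gives a contribution of order $n^7 p_{n,y}$, up to how the indicators of two overlapping quadruples interact. Dividing by $(m_n p_{n,y})^2 \asymp n^8 p_{n,y}^2$ gives $O\big((n p_{n,y})^{-1}\big)\to 0$ by Assumption \ref{assumption4}.

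The main obstacle is making this variance bound rigorous: it requires $\mathbb{E}[\|\boldsymbol{r}_\sigma\|^2 1\{z_{\sigma,y}\in\{-1,1\}\}]$ to be of order $p_{n,y}$ uniformly across quadruples, and the overlap terms to scale correctly with $p_{n,y}$. I would first try to split the variance via a Hoeffding-type projection onto single nodes. This isolates the dominant term, of order $(np_{n,y})^{-1}$, from higher-order terms that vanish faster, following \citet{jochmans2018semiparametric}. Any residual covariate moments I would control through Assumption \ref{assumption3} together with the boundedness of the logistic terms given the indicator.
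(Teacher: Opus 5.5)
Your proposal is correct and follows the paper's argument: concavity plus Newey--McFadden Theorem 2.7 reduce consistency to pointwise convergence, which comes from Chebyshev's inequality, using that only the $O(n^7)$ node-sharing pairs of quadruples contribute, so the normalized variance is $O((np_{n,y})^{-1})$. Normalizing by the deterministic $m_np_{n,y}$ rather than by $m_n^*$, as the paper does, also spares you its separate step showing $m_n^*/\mathbb{E}(m_n^*)\overset{p}{\to}1$.

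The step you flag as the main obstacle needs no Hoeffding projection. Write $p_\sigma=\Pr(z_{\sigma,y}\in\{-1,1\})$. Cauchy--Schwarz gives $|\operatorname{Cov}(\ell_\sigma,\ell_{\sigma'})|\le B\sqrt{p_\sigma p_{\sigma'}}$, whatever the joint behaviour of the two indicators. With $\sqrt{p_\sigma p_{\sigma'}}\le (p_\sigma+p_{\sigma'})/2$, the sum over node-sharing pairs is $O(n^3 m_n p_{n,y})$. The only real input is that $\mathbb{E}[\|\boldsymbol{r}_\sigma\|^2\mid z_{\sigma,y}\in\{-1,1\}]$ is bounded uniformly in $\sigma$. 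The paper takes this as given rather than deriving it from Assumption~\ref{assumption3}; it holds, for example, with bounded covariates.
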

\noindent \textit{Proof.} See \ref{appendix_asymptotics}.

Conventional logit standard errors are not valid for the estimated ${\boldsymbol{\theta}}_{n,y}$. The score vector sums over quadruples of nodes, and quadruples sharing common nodes induce dependence across summands (such that each node participates in $O(n^3)$ quadruples), so the information matrix equality does not hold, and a sandwich-type variance estimator is needed. To derive the asymptotic distribution of the estimator, the moment requirements must be 
strengthened: 

\begin{assumption} \label{assumption5} (Moments) For all $(i,j) \in \mathcal{D}$, $\mathbb{E} (||\boldsymbol{x}_{ij}||^6) < C_2$, where $C_2$ is a finite constant. 
\end{assumption}
Each summand of the score vector takes the form
$$ \boldsymbol{s}_y(\sigma, \boldsymbol{\theta}_y) = \boldsymbol{r}_\sigma \{ 1 \{z_{\sigma,y} = 1 \} (1-\Lambda(\boldsymbol{r}_\sigma' \boldsymbol{\theta}_y)) - 1 \{z_{\sigma,y} = -1 \} \Lambda(\boldsymbol{r}_\sigma' \boldsymbol{\theta}_y)  \},$$
and the score vector is
$$ \boldsymbol{S}_{n,y} (\boldsymbol{\theta}_y) = \sum_{i}^n \sum_{j \neq i} \sum_{\substack{l \neq i,j}} \sum_{\substack{k \neq  i,j,l}} \boldsymbol{s}_y(\sigma\{i,l;j,k\}; \boldsymbol{\theta}_y).$$ 
The asymptotic distribution of the estimator is characterized by the result that $$\boldsymbol{\Upsilon}_{n,y} (\boldsymbol{\theta}_{y,0})^{-1/2} \boldsymbol{S}_{n,y}(\boldsymbol{\theta}_{y,0}) \overset{d}{\to} N(\boldsymbol{0},\boldsymbol{I}),$$
\noindent where $\boldsymbol{\Upsilon}_{n,y}(\boldsymbol{\theta}_y)$ 
is defined as follows
\begin{align}
    \boldsymbol{\Upsilon}_{n,y}({\boldsymbol{\theta}}_y) = \sum_{i} \sum_{j \neq i} \sum_{i' \neq i,j} \sum_{j' \neq i,j,i'} \sum_{i'' \neq i,j,i'} \sum_{j'' \neq i,j,j',i''} 16 \times \left[ \boldsymbol{s}_y(\sigma\{i,i';j,j'\}; {\boldsymbol{\theta}}_y) \boldsymbol{s}_y(\sigma\{i,i'';j,j''\}; {\boldsymbol{\theta}}_y)'\right]. 
\end{align}
Its expectation at 
$\boldsymbol{\theta}_{y,0}$ gives the leading term of the 
variance of the score vector, comprising the $O(n^6)$ terms 
corresponding to pairs of quadruples that share exactly one 
dyad.\footnote{The factor $16$ arises from fixing the shared 
dyad $(i,j)$ to be the first sender-receiver pair in both 
quadruples: there are four positions in each quadruple where 
$(i,j)$ can appear, giving $4 \times 4 = 16$ ordered pairs 
that share $(i,j)$ in the same position. 
\citet[Supplement, p.~28]{jochmans2018semiparametric} obtains 
the same result by multiplying each score contribution by 
$4$.}
Theorem \ref{theorem2} also requires that the dyad-clustered score outer-product matrix be 
nondegenerate at the scale of its leading term, ruling out first-order degeneracy of the score in 
any parameter direction.

\begin{assumption} \label{assumption_scorevar} (Score-covariance nondegeneracy) For each fixed $y\in\mathcal Y$, there exists a constant $c_y>0$ such that
\[
\Pr\!\left\{
\lambda_{\min}\!\left[(n^6p_{n,y})^{-1}
\Ups_{n,y}(\boldsymbol\theta_{y,0})\right]\ge c_y
\right\}\longrightarrow1.
\]
\end{assumption}

An analogous condition is required for the corresponding result in 
\citet{jochmans2018semiparametric}, and \citet{muris2025dyadic} impose an analogous eigenvalue 
condition on their dyad-clustered score outer-product matrix. Defining the Hessian
$$ \boldsymbol{H}_{n,y}(\boldsymbol{\theta}_y) = - \sum_{\sigma \in \mathcal{N}_{m_n}} \boldsymbol{r}_\sigma\boldsymbol{r}_\sigma' f(\boldsymbol{r}_\sigma'\boldsymbol{\theta}_y) 1\{z_{\sigma,y} \in \{-1,1\}\}$$
and the sandwich variance estimator
$$ \boldsymbol{\Omega}_{n,y}(\boldsymbol{\theta}_{n,y}) = \boldsymbol{H}_{n,y}({\boldsymbol{\theta}}_{n,y})^{-1} \boldsymbol{\Upsilon}_{n,y} ({\boldsymbol{\theta}}_{n,y})\boldsymbol{H}_{n,y}({\boldsymbol{\theta}}_{n,y})^{-1},$$
the following result holds, which establishes that pointwise (for each threshold $y$), the estimator converges to the true parameter value, and the sandwich estimator for the asymptotic variance delivers valid inference.

\begin{theorem} (Asymptotic distribution) Let Assumptions \ref{assumption1}-\ref{assumption_scorevar} hold. Then 
$$|| {\boldsymbol{\theta}}_{n,y} - \boldsymbol{\theta}_{y,0} || = O_p (1/ \sqrt{n(n-1)p_{n,y}})$$ and
$$ \boldsymbol{\Omega}_{n,y}(\boldsymbol{\theta}_{n,y})^{-1/2} ({\boldsymbol{\theta}}_{n,y} - \boldsymbol{\theta}_{y,0}) \overset{d}{\to} N({0},\boldsymbol{I}) $$ as $n \xrightarrow[]{} \infty$, for each fixed $y \in \mathcal{Y}$.
\label{theorem2}
\end{theorem}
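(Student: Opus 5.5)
The argument is the standard M-estimation route: a Hoeffding-type projection of the score, a central limit theorem for the projected term, a uniform law for the Hessian, and a mean-value expansion. The proof follows the structure of \citet{jochmans2018semiparametric}, but works with ordered quadruples.

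\textbf{Step 1 (score decomposition).} Fix $y$ and write $\boldsymbol S_{n,y}\equiv \boldsymbol S_{n,y}(\boldsymbol\theta_{y,0})$. By Lemma~\ref{lemma:sufficiency}, each summand $\boldsymbol s_y(\sigma,\boldsymbol\theta_{y,0})$ has conditional mean zero given the covariates and the fixed effects. We then decompose $\boldsymbol S_{n,y}$ by conditioning on individual dyads, using the conditional independence across dyads:
\[
\boldsymbol S_{n,y}=\sum_{(i,j)\in\mathcal D}\boldsymbol\psi_{ij,y}+\boldsymbol R_{n,y},
\qquad
\boldsymbol\psi_{ij,y}=4\sum_{i'}\sum_{j'}\mathbb E\bigl[\boldsymbol s_y(\sigma\{i,i';j,j'\})\mid \tilde y_{ij,y},\boldsymbol x,\alpha,\gamma\bigr].
\]
The projections $\boldsymbol\psi_{ij,y}$ are conditionally independent across dyads with mean zero. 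Their summed conditional second moments have expectation of order $n^6p_{n,y}$, and this matches $\mathbb E\,\Ups_{n,y}$: the two sums differ only through pairs of quadruples sharing two or more dyads, which are of smaller order.

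\textbf{Step 2 (remainder).} The remainder $\boldsymbol R_{n,y}$ collects the degenerate components, namely pairs of quadruples that share no dyad, or share nodes but no dyad. Under conditional independence, their covariance contributions vanish unless the quadruples overlap in at least two dyads. Counting such configurations gives $\mathbb E\|\boldsymbol R_{n,y}\|^2=O(n^5p_{n,y})$. For this count we need that a quadruple's informative probability is $p_{n,y}$-scaled even conditionally on a shared dyad. Assumption~\ref{assumption5} bounds the moments of $\boldsymbol r_\sigma$. Consequently
\[
(n^6p_{n,y})^{-1/2}\boldsymbol R_{n,y}=O_p\bigl((np_{n,y})^{-1/2}\bigr)=o_p(1)
\]
by Assumption~\ref{assumption4}. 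This counting, and the fact that sparsity enters the bound only through $np_{n,y}\to\infty$, is the step I expect to be the main obstacle. The delicate point is showing that the overlapping-dyad terms inherit a factor $p_{n,y}$ rather than being $O(1)$.

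\textbf{Step 3 (CLT and variance estimator).} Conditional on the covariates and fixed effects, $\sum_{ij}\boldsymbol\psi_{ij,y}$ is a sum of independent terms. We apply a Lyapunov CLT, checking the fourth moment condition with Assumption~\ref{assumption5}; the fourth moment ratio is $O((n^2p_{n,y})^{-1})$. Assumption~\ref{assumption_scorevar} keeps the normalizing matrix nondegenerate. Next we show $(n^6p_{n,y})^{-1}\bigl(\Ups_{n,y}(\boldsymbol\theta)-\mathbb E\,\Ups_{n,y}\bigr)\to0$ uniformly in a neighbourhood of $\boldsymbol\theta_{y,0}$. This uses a variance bound analogous to Step 2, together with Lipschitz continuity of $\Lambda$ and the consistency in Theorem~\ref{theorem1}. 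These give $\Ups_{n,y}^{-1/2}\boldsymbol S_{n,y}\overset d\to N(\boldsymbol 0,\boldsymbol I)$.

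\textbf{Step 4 (Hessian and expansion).} The normalized Hessian $(n^4p_{n,y})^{-1}\boldsymbol H_{n,y}(\boldsymbol\theta)$ converges uniformly near $\boldsymbol\theta_{y,0}$ to the full-rank limit in Assumption~\ref{assumption4}. Its variance is $O((n^2p_{n,y})^{-1})$ by the same dyad-sharing count, and $f$ is Lipschitz. A mean-value expansion of the first-order condition then gives
\[
\boldsymbol\theta_{n,y}-\boldsymbol\theta_{y,0}=-\boldsymbol H_{n,y}^{-1}\boldsymbol S_{n,y}(1+o_p(1)).
\]
The left side is of order $n^{-4}p_{n,y}^{-1}\cdot n^{3}p_{n,y}^{1/2}=(n^2p_{n,y})^{-1/2}$, which yields the stated rate. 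Combining this with Step 3 and consistency of $\boldsymbol H_{n,y}(\boldsymbol\theta_{n,y})$ and $\Ups_{n,y}(\boldsymbol\theta_{n,y})$ gives the normality of the sandwich-standardized statistic $\boldsymbol\Omega_{n,y}(\boldsymbol\theta_{n,y})^{-1/2}(\boldsymbol\theta_{n,y}-\boldsymbol\theta_{y,0})$, by Slutsky's theorem.
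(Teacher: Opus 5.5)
Your skeleton is the paper's. Your $\boldsymbol\psi_{ij,y}$ is exactly $4\boldsymbol v_{ij}$, so $\sum_{(i,j)}\boldsymbol\psi_{ij,y}=4\boldsymbol V_n$. Bounding $\mathbb E\|\boldsymbol R_{n,y}\|^2$ directly is the same computation as the paper's comparison of $\mathbb E[\boldsymbol V_n\boldsymbol V_n']$, $\mathbb E[\tilde{\boldsymbol S}_n\boldsymbol V_n']$ and $\mathbb E[\tilde{\boldsymbol S}_n\tilde{\boldsymbol S}_n']$.

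The step you single out as the main obstacle is not one. Configurations sharing two or more dyads number $O(n^5)$ and each contributes $O(1)$, so already $(n^6p_{n,y})^{-1}O(n^5)=O((np_{n,y})^{-1})=o(1)$. This is how the paper argues; no factor $p_{n,y}$ needs to be extracted. Your Lyapunov ratio $O((n^2p_{n,y})^{-1})$ is sharper than the paper's $O_p((np_{n,y})^{-2})$. It follows from $\mathbb E[\Pr(z_\sigma\in\{-1,1\}\mid\tilde y_{ij},\mathcal F_n)^4\mid\mathcal F_n]\le\Pr(z_\sigma\in\{-1,1\}\mid\mathcal F_n)$, and either bound suffices.

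The real difference is the last step. You compare $\Ups_{n,y}$ and $\boldsymbol H_{n,y}$ with their unconditional means, so the normalizing rotation is asymptotically deterministic. Slutsky then applies even without a limit for $(n^6p_{n,y})^{-1}\mathbb E\Ups_{n,y}$, since a deterministic orthogonal rotation preserves $N(\boldsymbol 0,\boldsymbol I)$.

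The paper deliberately avoids this. It links $\Ups_{n,y}(\boldsymbol\theta_{y,0})$ only to the conditional covariance $16\Ups_X$ (the bridge \eqref{bridge}, proved by dyad-overlap counting given $\mathcal F_n$). It then treats the random rotation conditionally on $\mathcal F_n$, where $\boldsymbol L_{X,n}$ is fixed, and remarks that a deterministic rotation would require the conditional score covariance to concentrate around its unconditional mean. On your route, that concentration is therefore the step you must prove.

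``A variance bound analogous to Step 2'' does not deliver it. Step 2 counts dyad overlaps, which only controls $\Ups_{n,y}-\mathbb E[\Ups_{n,y}\mid\mathcal F_n]$. Unconditionally, six-tuples are dependent whenever they share a node, through the covariates. You therefore need the $O(n^{11})$ node-sharing count the paper uses for $B^{(1)}_{n,kl}$ in the proof of Theorem~\ref{theorem3}. Under the same moment conventions this gives a normalized variance of order $(np_{n,y})^{-1}$. By the law of total variance it also makes the conditional variance of $\sum\boldsymbol\psi_{ij,y}$ concentrate. That is the link between your conditional CLT and $\Ups_{n,y}$ which Step 1, matching only expectations, leaves implicit.

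The same correction applies to the Hessian in Step 4. Its unconditional variance is driven by node-sharing pairs and is $O((np_{n,y})^{-1})$ after normalization. The $O((n^2p_{n,y})^{-1})$ you state is the conditional variance, which does not connect to the deterministic limit in Assumption~\ref{assumption4}.

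With these repairs your route is a valid and somewhat shorter alternative. The paper's conditional route never needs the score covariance to concentrate unconditionally, at the cost of the more delicate conditional-rotation argument.
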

\noindent \textit{Proof.} See \ref{appendix_derivation}.

The proof follows a four-step structure, with steps akin to those used when establishing the limit distribution of a U-statistic \citep{graham2017econometric}: (i) a projection of the score vector is proposed\footnote{The proposed projection resembles a H\'{a}jek projection, but it is not formally one, since the kernel of this projection is not symmetric, and we do not only condition on observable and unobservable attributes of a specific dyad $(i,j)$, but also on node attributes.} and its asymptotic equivalence to the score evaluated at the true parameter $\boldsymbol{\theta}_{y,0}$ is established; (ii) the limit distribution of the projection is derived via a conditional CLT; (iii) the uniform convergence of the Hessian is established; and (iv) the results are combined via a mean-value expansion. 

The main departure from \citet{jochmans2018semiparametric} is in step (i): while their approach explicitly computes the projection of the scores, I leverage the conditional independence structure, following \citet{graham2017econometric}, to show that the scores and projections are asymptotically equivalent. This approach extends naturally to the cross-threshold setting of Section \ref{joint_distribution}, avoiding the need to compute explicit cross-threshold joint probabilities. Moreover, I derive that the score variance is $O(n^6p_{n,y})$, which determines the convergence rate of the estimator and is used to establish the joint distribution across thresholds. 

\begin{remark}[Computation]
Evaluating the objective function, score, and Hessian requires 
summation over all informative quadruples, which can be 
computationally costly for large networks. Appendix~2.A of 
\citet{roelsgaard2025essays} describes implementation strategies for 
a related conditional likelihood estimator, more specifically, a two-way duration model. The proposed implementation includes storing the covariates in contiguous memory, pre-computing the covariate differences and storing only the relevant ones (only the ones referring to informative quadruples) in a vector, speeding up access, and reusing intermediate calculations within 
each Newton step, all of which apply directly to the present setting. Moreover, since 
estimation at each threshold is performed independently, the 
$K$ threshold-level optimizations can be run in parallel. 
\end{remark}

\subsection{Sparsity Conditions and Estimable Threshold Range}\label{subsection_sparsity}

The identification condition 
$np_{n,y} \to \infty$ accommodates sparsity from two distinct 
sources. This subsection characterizes these sources formally, 
translates the conditions of 
\citet{jochmans2018semiparametric} to the DR setting, and 
proposes a parameterization of fixed effects across thresholds 
that captures the variation in sparsity across the 
distribution.

\begin{figure}[htbp]
  \centering
  \captionsetup[subfigure]{font=footnotesize, skip=0pt}
  \captionsetup{skip=0pt}
  \begin{subfigure}[t]{0.49\textwidth}
    \centering
    \includegraphics[width=\textwidth]{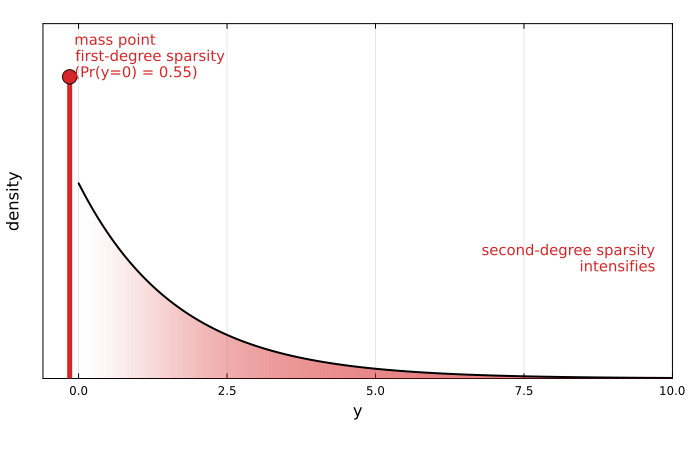}
    \caption{Density, mild first-degree sparsity}
    \label{fig:sparsity_density_mild}
  \end{subfigure}
  \hfill
  \begin{subfigure}[t]{0.49\textwidth}
    \centering
    \includegraphics[width=\textwidth]{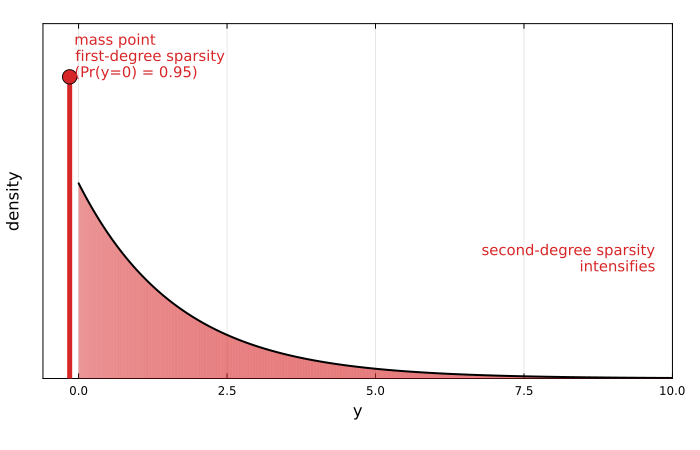}
    \caption{Density, strong first-degree sparsity}
    \label{fig:sparsity_density_strong}
  \end{subfigure}

  \vspace{0.5em}

  \begin{subfigure}[t]{0.49\textwidth}
    \centering
    \includegraphics[width=\textwidth]{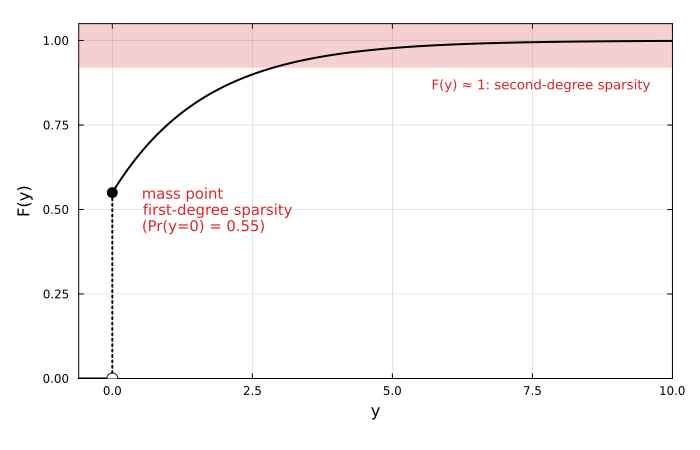}
    \caption{CDF, mild first-degree sparsity}
    \label{fig:sparsity_cdf_mild}
  \end{subfigure}
  \hfill
  \begin{subfigure}[t]{0.49\textwidth}
    \centering
    \includegraphics[width=\textwidth]{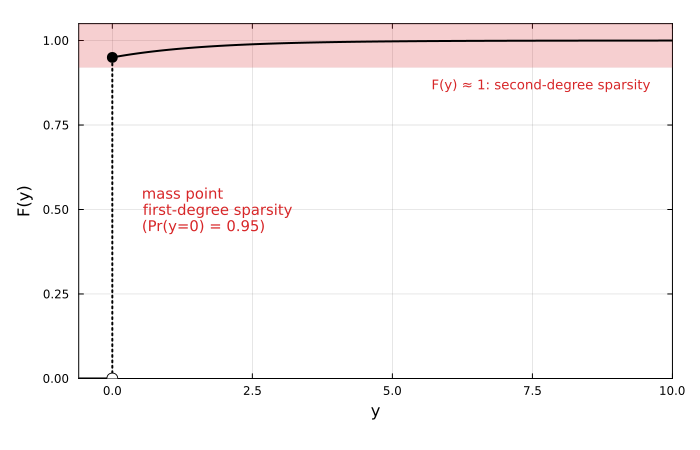}
    \caption{CDF, strong first-degree sparsity}
    \label{fig:sparsity_cdf_strong}
  \end{subfigure}

\caption{Two types of sparsity in distribution regression, for an outcome
bounded below at zero, under mild (left) and strong (right) concentration
at the bound. The mass point at zero is the source of first-degree
sparsity. Second-degree
sparsity, shown by the shading intensity in the density panels and the band
at $F(y) \approx 1$ in the CDF panels, arises where the binary indicators
$\tilde{y}_{ij,y}$ vary little across dyads. The mild and strong columns show
how a larger mass point places $F(y)$ near one over a wider range of thresholds.}
  \label{fig:two_types_sparsity}
\end{figure}

\paragraph{Two Types of Sparsity in Distribution Regression} The distribution
regression framework for network data involves two distinct but related
notions of sparsity, illustrated in Figure~\ref{fig:two_types_sparsity}.

\textit{First-degree sparsity} arises when the outcome variable is bounded
and a positive fraction of observations sit at the bound, creating a mass
point at the boundary of the support. Estimation begins
at the bound, the lowest threshold at which the binary indicators
$\tilde{y}_{ij,y} = \mathbf{1}\{y_{ij} \leq y\}$ vary across dyads. When the
mass point is large, as in firm-level trade networks, patent citation
networks, or venture capital flows, where 90--99\% of dyads take the
boundary value, this first threshold already places most indicators at one,
so the bound itself lies in a sparse regime and the estimable range is
compressed to a narrow interval. 

\textit{Second-degree sparsity} refers to
the proportion of zeros or ones in the binary indicators at each threshold,
which determines whether sufficient variation exists for identification. It
arises at thresholds in either tail of the conditional distribution, as
$\Pr(y_{ij} \leq y)$ approaches zero or one, including in outcomes with unbounded support, which exhibit no first-degree sparsity.\footnote{Analogously, sparsity
can be induced through few links or few non-links in a network formation
setting.} The two are related: when first-degree sparsity is extreme, it
places $\Pr(y_{ij} \leq y)$ near one across most thresholds, inducing
second-degree sparsity at nearly all thresholds. Whether sparsity arises from a mass point or from the 
tails of the conditional distribution, bias correction 
methods require $\Pr(y_{ij} \leq y)$ bounded away from 
zero and one, a condition violated under both regimes. 
The approach in this paper accommodates both through 
Assumption~\ref{assumption4}: $p_{n,y} \to 0$ is 
allowed provided $np_{n,y} \to \infty$.

\paragraph{Identification condition and the estimable quantile range} The
identification condition $np_{n,y} \to \infty$ can be translated into
restrictions on (i) the average probability of falling below threshold $y$,
$q_{n,y} = \sum_{i=1}^n \sum_{j \neq i} \Pr\{y_{ij} \leq y\} / n(n-1)$,
providing a direct characterization of the estimable threshold range; and
(ii) on the growth rate of the fixed effects. Following the argument in
\citet{jochmans2018semiparametric}, the exponential tails of the logistic
distribution imply $q_{n,y} \sim \sqrt{p_{n,y}}$ in the left tail and
$1 - q_{n,y} \sim \sqrt{p_{n,y}}$ in the right tail. Combined with the
condition $np_{n,y} \to \infty$, this yields rate restrictions on the
estimable thresholds:
$$
\sqrt{n}\, q_{n,y} \to \infty \text{ in the left tail,} 
\qquad \sqrt{n}\,(1-q_{n,y}) \to \infty \text{ in the 
right tail.}
$$
Accessing very extreme quantiles therefore requires
correspondingly larger samples, regardless of the distribution of node
heterogeneity. \ref{appendix_sparsity_rates} gives the full derivation.

\paragraph{Fixed effect parameterization across thresholds} To understand
the growth rates of fixed effects that satisfy Assumption~\ref{assumption4}
and their connection to sparsity across thresholds, I adopt the
parameterization
$$
\alpha_{i,y} = a_{n,i} \cdot g_y(n), \qquad
\gamma_{j,y} = b_{n,j} \cdot g_y(n),
$$
where $a_{n,i}, b_{n,j} \in [0,1]$ are bounded sequences in $n$ capturing
node-level heterogeneity, and $g_y(n)$ is a real-valued sequence that scales
all fixed effects proportionately, governing how sparsity varies with the
threshold $y$. Specifically, $g_y(n) < 0$ at low thresholds (sparsity through many zeros),
$g_y(n) \approx 0$ at intermediate thresholds (dense network), and
$g_y(n) > 0$ at high thresholds (sparsity through many ones). The larger
$|g_y(n)|$, the more the binary indicators concentrate near zero or one.
This parameterization describes the fixed effects at each threshold, not a
data generating process for the outcome $y_{ij}$ itself. When the outcome
is bounded below, for instance, at zero, the parameterization is defined only
for thresholds from the bound onward, where sparsity varies across
thresholds as governed by $g_y(n)$.

The sequences $a_{n,i}$ and $b_{n,j}$ capture each node's sensitivity to the changes in sparsity generated by $g_y(n)$: nodes with high values
have fixed effects that change substantially across thresholds, causing
sharp transitions from mostly zeros to mostly ones; nodes with low values
have more stable fixed effects, maintaining more stable probabilities
across thresholds. This parameterization extends the framework of
\citet{jochmans2018semiparametric} by allowing $g_y(n)$ to vary across
thresholds, capturing how sparsity changes across the distribution. This
cross-threshold structure underlies the joint asymptotic analysis of
Section~\ref{joint_distribution} and the specific parameterizations
explored in the Supplemental \ref{sim_single_threshold}.

The choice of $a_{n,i}$ and $b_{n,j}$ also affects how far into the
sparse region estimation remains feasible. The largest admissible
$|g_y(n)|$ depends on the heterogeneity pattern, and so, through the
induced link probability, does the set of estimable thresholds. Holding
the mean of $a_{n,i}$ and $b_{n,j}$ fixed, more dispersed sequences
admit larger $|g_y(n)|$. The
identification condition admits, however, a single characterization in
terms of the link probability that holds whatever the heterogeneity
pattern: a threshold is estimable when $\sqrt{n}\, q_{n,y} \to \infty$ 
in the left tail and $\sqrt{n}\,(1 - q_{n,y}) \to \infty$ 
in the right tail. The
heterogeneity pattern enters this criterion only through the value of
$q_{n,y}$ it induces at each threshold; the criterion itself is
unchanged. \ref{appendix_sparsity_heterogeneity} develops this
distinction.

\section{Joint Inference Across Thresholds}\label{joint_distribution}

This section extends the pointwise results of 
Section~\ref{asymptotics} to the joint distribution of 
the estimators across multiple thresholds, enabling 
simultaneous confidence bands and formal tests of 
coefficient equality. Subsection~\ref{joint} establishes 
the joint asymptotic distribution. 
Subsections~\ref{bands} and \ref{testing} develop the 
inference tools.

\subsection{Joint Asymptotic Distribution Across Thresholds}\label{joint}

Since the estimators at different thresholds are computed from the same underlying outcome variable $y_{ij}$, they are not independent: informative quadruples at different thresholds are correlated since they potentially share the same dyadic outcomes, and the sequences of fixed effects and idiosyncratic errors are allowed to be correlated across thresholds. Moreover, the convergence rates differ across thresholds through the threshold-specific parameters $p_{n,y_k}$, reflecting the expected fraction of informative quadruples at each threshold and the different degrees of sparsity at different thresholds. The joint distribution established below accounts for both the cross-threshold dependence and the different convergence rates. 

Let $\mathbf{y} = (y_1, ..., y_K)'$ denote the vector of $K$ ordered thresholds with $y_1 < y_2 < ... < y_K$, and each $y_k \in \mathcal{Y}$. For each threshold $y_k$, the parameter vector $\boldsymbol{\theta}_{y_k} \in \mathbb{R}^p$ represents the $p$-dimensional coefficient vector at that threshold. Define the stacked parameter vector $\boldsymbol{\theta}_{\mathbf{y}} = (\boldsymbol{\theta}_{y_1}', ..., \boldsymbol{\theta}_{y_K}')' \in \mathbb{R}^{Kp}$. Stacking the threshold-specific scores gives
$$\boldsymbol{S}_{n,\mathbf{y}}(\boldsymbol{\theta}_{\mathbf{y},0}) = 
\big(\boldsymbol{S}_{n,1}(\boldsymbol{\theta}_{y_{1,0}})', ..., 
\boldsymbol{S}_{n,K}(\boldsymbol{\theta}_{y_{K,0}})'\big)'.$$
Define the $(Kp) \times (Kp)$ dyad-clustered outer-product matrix 
$\boldsymbol{\Upsilon}_{n, \mathbf{y}}(\boldsymbol{\theta}_{\mathbf{y}})$ with blocks 
$\left[\boldsymbol{\Upsilon}_{n, \mathbf{y}}\right]_{kl} = 
\boldsymbol{\Upsilon}_{n, kl}\left(\boldsymbol{\theta}_{y_k}, 
\boldsymbol{\theta}_{y_l}\right)$, where
\begin{adjustwidth}{-\oddsidemargin}{-\oddsidemargin}
\centering
$$\boldsymbol{\Upsilon}_{n, k l}
(\boldsymbol{\theta}_{y_k}, \boldsymbol{\theta}_{y_l})=\sum_{i=1}^n \sum_{j \neq i} \sum_{i^{\prime} \neq i, j} \sum_{j^{\prime} \neq i, j, i^{\prime}} \sum_{i^{\prime \prime} \neq i, j, i^{\prime}} \sum_{j^{\prime \prime} \neq i, j, j^{\prime}, i^{\prime \prime}} 16 \times \left[\boldsymbol{s}_k\left(\sigma\left\{i, i^{\prime} ; j, j^{\prime}\right\}; \boldsymbol{\theta}_{y_{k}}\right) \boldsymbol{s}_l\left(\sigma\left\{i, i^{\prime \prime} ; j, j^{\prime \prime}\right\}; \boldsymbol{\theta}_{y_l}\right)^{\prime}\right]$$
\end{adjustwidth}
whose expectation at $\boldsymbol{\theta}_{\mathbf{y},0}$ gives the leading term of the score 
covariance for thresholds $y_k$ and $y_l$. The diagonal blocks coincide with the pointwise matrices $\boldsymbol{\Upsilon}_{n, y_k}(\boldsymbol{\theta}_{y_k})$, and the off-diagonal blocks capture the cross-threshold dependence.

The joint limit distribution is driven by that of the stacked score vector, and the first two 
steps of Theorem \ref{theorem2} carry over. In step (i), the stacked projections are shown to be 
asymptotically equivalent to the stacked scores by comparing their leading covariance terms, 
which is where the conditional independence structure replaces the explicit cross-threshold joint 
probabilities of the binary indicators, as anticipated in Section \ref{asymptotics}. In step 
(ii), conditioning on the joint information set containing the covariates and all 
threshold-specific fixed effects preserves cross-dyad independence, and a conditional central 
limit theorem applied to the projections delivers joint asymptotic normality of 
$\boldsymbol{S}_{n,\mathbf{y}}(\boldsymbol{\theta}_{n,\mathbf{y},0})$ after normalization by its 
own covariance.

Normalizing by the covariance itself absorbs the threshold-specific rates automatically, but it 
requires that the covariance stays invertible as $n$ grows, and this cannot be stated without a reference scale: $\boldsymbol{\Upsilon}_{n,\mathbf{y}}$ 
grows without bound, at rates that differ across blocks. Define the score-rate matrix
$$\boldsymbol{D}_{n,\mathbf{y}} := \operatorname{diag}\left((n^6p_{n,y_1})^{1/2}\boldsymbol{I}_p, 
\ldots, (n^6p_{n,y_K})^{1/2}\boldsymbol{I}_p\right),$$
so that the $(k,l)$ block of 
$\boldsymbol{D}_{n,\mathbf{y}}^{-1}\boldsymbol{\Upsilon}_{n,\mathbf{y}}
\boldsymbol{D}_{n,\mathbf{y}}^{-1}$ is 
$\boldsymbol{\Upsilon}_{n,kl}/(n^6\sqrt{p_{n,y_k}p_{n,y_l}})$, which is $O_p(1)$ for every 
pair of thresholds. 

A scalar normalization must be matched to one threshold or another: chosen for the sparsest, the 
densest block diverges; chosen for the densest, the sparsest block vanishes and the bound below 
fails for reasons of rate alone. Dividing each block by its own scale instead leaves only the 
shape of the covariance, so that the condition below is a restriction on how the score behaves 
across directions and not on how the $p_{n,y_k}$ compare across thresholds. The same blockwise scaling is used throughout the proof, so that no condition is required on the 
relative convergence rates across thresholds.

\begin{assumption} (Joint score-covariance nondegeneracy) \label{assumption_jointscorevar}
For every fixed finite collection $\mathbf{y} = (y_1, \ldots, y_K)'$ of distinct thresholds, 
there exists a constant $c_{\mathbf{y}} > 0$ such that
$$\Pr\left\{\lambda_{\min}\left[\boldsymbol{D}_{n,\mathbf{y}}^{-1}
\boldsymbol{\Upsilon}_{n,\mathbf{y}}(\boldsymbol{\theta}_{\mathbf{y},0})
\boldsymbol{D}_{n,\mathbf{y}}^{-1}\right] \geq c_{\mathbf{y}}\right\} \longrightarrow 1.$$
\end{assumption}

Assumption \ref{assumption_jointscorevar} requires that no nonzero linear combination of the 
threshold-specific scores become asymptotically degenerate after this normalization. For $K = 1$ it reduces to Assumption \ref{assumption_scorevar}. Assumption \ref{assumption_jointscorevar} fails, for instance, when two thresholds are close 
enough that the corresponding binary indicators nearly coincide, which suggests that the 
intervals between selected thresholds should contain a non-negligible fraction of observations.

The $K p \times K p$ joint Hessian inverse 
$\boldsymbol{H}_{n, \mathbf{y}}^{-1}\left(\boldsymbol{\theta}_{\mathbf{y}}\right) = 
\text{diag} \left(\boldsymbol{H}_{n, 1}\left(\boldsymbol{\theta}_{y_1}\right)^{-1}, \ldots, 
\boldsymbol{H}_{n, K}\left(\boldsymbol{\theta}_{y_K}\right)^{-1}\right)$ is block-diagonal since each threshold is estimated separately. The sandwich variance estimator is
$$\boldsymbol{\Omega}_{n, \mathbf{y}}\left(\boldsymbol{\theta}_{n,\mathbf{y}}\right)=\boldsymbol{H}_{n, \mathbf{y}}^{-1}\left(\boldsymbol{\theta}_{n,\mathbf{y}}\right) \boldsymbol{\Upsilon}_{n, \mathbf{y}}\left(\boldsymbol{\theta}_{n,\mathbf{y}}\right) \boldsymbol{H}_{n, \mathbf{y}}^{-1}\left(\boldsymbol{\theta}_{n,\mathbf{y}}\right),$$
where each block is given by $
\left[\boldsymbol{\Omega}_{n, \mathbf{y}}\left(\boldsymbol{\theta}_{n,\mathbf{y}}\right)\right]_{k l}=\boldsymbol{H}_{n, k}\left(\boldsymbol{\theta}_{n,y_k}\right)^{-1} \boldsymbol{\Upsilon}_{n, k l}\left(\boldsymbol{\theta}_{n,y_k}, \boldsymbol{\theta}_{n,y_l}\right) \boldsymbol{H}_{n, l}\left(\boldsymbol{\theta}_{n,y_l}\right)^{-1}
$. 

For any $\boldsymbol\theta$, define
\[
\boldsymbol\sigma_{n,\mathbf y}(\boldsymbol\theta)
:=\op{diag}\!\left(
\sqrt{[\boldsymbol\Omega_{n,\mathbf y}(\boldsymbol\theta)]_{11}},
\ldots,
\sqrt{[\boldsymbol\Omega_{n,\mathbf y}(\boldsymbol\theta)]_{Kp,Kp}}
\right),
\]
\[
\boldsymbol P_{n,\mathbf y}(\boldsymbol\theta)
:=\boldsymbol\sigma_{n,\mathbf y}(\boldsymbol\theta)^{-1}
\boldsymbol\Omega_{n,\mathbf y}(\boldsymbol\theta)
\boldsymbol\sigma_{n,\mathbf y}(\boldsymbol\theta)^{-1},
\]
and
\[
\boldsymbol T_{n,\mathbf y}
:=\boldsymbol\sigma_{n,\mathbf y}
(\boldsymbol\theta_{n,\mathbf y})^{-1}
(\boldsymbol\theta_{n,\mathbf y}-\boldsymbol\theta_{\mathbf y,0}).
\]
Thus $\boldsymbol P_{n,\mathbf y}(\boldsymbol\theta_{n,\mathbf y})$ is the feasible plug-in version of the sample correlation matrix. Under Assumption \ref{assumption_jointscorevar}, the diagonal entries of $\boldsymbol{\sigma}_{n,\mathbf{y}}(\boldsymbol{\theta}_{n,\mathbf{y}})$ are positive and $\boldsymbol{P}_{n,\mathbf{y}}(\boldsymbol{\theta}_{n,\mathbf{y}})$ is nondegenerate with probability approaching one (see \ref{joint_distribution_appendix}).

For probability measures $\mu, \nu$ on $\mathbb{R}^{Kp}$, let
$$d_{\mathrm{BL}}(\mu,\nu) := \sup_{f}\left|\int f \, d\mu - \int f \, d\nu\right|,$$
where the supremum is over all $f: \mathbb{R}^{Kp} \to \mathbb{R}$ with $\|f\|_\infty \leq 1$ and 
$|f(\boldsymbol{u}) - f(\boldsymbol{v})| \leq \|\boldsymbol{u} - \boldsymbol{v}\|$ for all 
$\boldsymbol{u}, \boldsymbol{v}$. This is the bounded--Lipschitz distance, which metrizes weak 
convergence and remains meaningful when the approximating law varies with $n$. Write 
$\mathcal{F}_{n,\mathbf{y}}$ for the covariates together with the fixed effects at all $K$ 
thresholds, and $\mathcal{L}(\boldsymbol{X} \mid \mathcal{F}_{n,\mathbf{y}})$ for the conditional 
law of $\boldsymbol{X}$ given these.

\begin{theorem} (Joint asymptotic distribution) \label{theorem3} Let Assumptions 
\ref{assumption1}--\ref{assumption_scorevar} and \ref{assumption_jointscorevar} hold, and fix a 
finite collection $\mathbf y=(y_1,\ldots,y_K)'$ of distinct thresholds in $\mathcal Y$. Then, as 
$n\to\infty$:
\begin{enumerate}
\item[(i)] The coordinatewise studentized vector $\boldsymbol T_{n,\mathbf y}$ satisfies
\[
d_{\mathrm{BL}}\!\left(
\mathcal{L}\!\left(\boldsymbol T_{n,\mathbf y} \mid \mathcal{F}_{n,\mathbf{y}}\right),\;
N\!\left(\boldsymbol 0, \boldsymbol P_{n,\mathbf y}(\boldsymbol\theta_{n,\mathbf y})\right)
\right) \overset p\longrightarrow 0.
\]
\item[(ii)] For every fixed nonzero $\boldsymbol a\in\R^{Kp}$,
\[
\frac{\boldsymbol a'
(\boldsymbol\theta_{n,\mathbf y}-\boldsymbol\theta_{\mathbf y,0})}
{\sqrt{\boldsymbol a'
\boldsymbol\Omega_{n,\mathbf y}(\boldsymbol\theta_{n,\mathbf y})
\boldsymbol a}}
\overset d\longrightarrow N(0,1).
\]
\item[(iii)] For every fixed $q\times Kp$ matrix $\boldsymbol R$ of full row rank,
\[
\begin{aligned}
&\big[\boldsymbol R(\boldsymbol\theta_{n,\mathbf y}-\boldsymbol\theta_{\mathbf y,0})\big]'
\big[\boldsymbol R\boldsymbol\Omega_{n,\mathbf y}
(\boldsymbol\theta_{n,\mathbf y})\boldsymbol R'\big]^{-1}\\
&\hspace{3.5cm}\times
\big[\boldsymbol R(\boldsymbol\theta_{n,\mathbf y}-\boldsymbol\theta_{\mathbf y,0})\big]
\overset d\longrightarrow\chi_q^2.
\end{aligned}
\]
\end{enumerate}
\end{theorem}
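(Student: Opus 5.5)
The plan is to reduce everything to a joint Gaussian approximation for the blockwise-normalized stacked score, and then push it through the block-diagonal Hessian using only rate-free, blockwise arguments. First, for each $k$, I take from the proof of Theorem~\ref{theorem2} the mean-value expansion
\[
\boldsymbol\theta_{n,y_k}-\boldsymbol\theta_{y_k,0}=-\boldsymbol H_{n,k}(\bar{\boldsymbol\theta}_{y_k})^{-1}\boldsymbol S_{n,k}(\boldsymbol\theta_{y_k,0}),
\]
together with the uniform Hessian convergence $(n^4p_{n,y_k})^{-1}\boldsymbol H_{n,k}(\bar{\boldsymbol\theta}_{y_k})=\bar{\boldsymbol H}_k+o_p(1)$. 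Here $\bar{\boldsymbol H}_k$ is negative definite by Assumption~\ref{assumption4}. Stacking these gives, with $\boldsymbol G_n:=\operatorname{diag}(n^4p_{n,y_k}\boldsymbol I_p)$,
\[
\boldsymbol\theta_{n,\mathbf y}-\boldsymbol\theta_{\mathbf y,0}=-\boldsymbol G_n^{-1}(\bar{\boldsymbol H}+o_p(1))^{-1}\boldsymbol D_{n,\mathbf y}\,\boldsymbol\xi_n,
\qquad \boldsymbol\xi_n:=\boldsymbol D_{n,\mathbf y}^{-1}\boldsymbol S_{n,\mathbf y}(\boldsymbol\theta_{\mathbf y,0}).
\]
Because $\boldsymbol G_n$ and $\boldsymbol D_{n,\mathbf y}$ are block-scalar, they commute with the block-diagonal $\bar{\boldsymbol H}^{-1}$. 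So the blockwise rate $(n(n-1)p_{n,y_k})^{-1/2}$ factors out cleanly, and no comparison between different $p_{n,y_k}$ is ever needed.

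The central step is a Cramér--Wold argument for $\boldsymbol\xi_n$, and I expect it to be the main obstacle. I first reuse step (i) of Theorem~\ref{theorem2} threshold by threshold. Each $\boldsymbol S_{n,k}$ equals its dyad-level projection $\sum_{(i,j)}\boldsymbol\psi_{ij,k}$ plus a remainder of order $o_p((n^6p_{n,y_k})^{1/2})$, so blockwise scaling kills every remainder simultaneously. Conditionally on $\mathcal F_{n,\mathbf y}$, the vectors $\boldsymbol\psi_{ij}=(\boldsymbol\psi_{ij,1}',\ldots,\boldsymbol\psi_{ij,K}')'$ are independent across dyads, though not across thresholds within a dyad, which is harmless. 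For fixed unit $\boldsymbol a$, I apply a Lyapunov CLT to $\sum_{ij}\boldsymbol a'\boldsymbol D_{n,\mathbf y}^{-1}\boldsymbol\psi_{ij}$. Its conditional variance is $\boldsymbol a'\boldsymbol D^{-1}_{n,\mathbf y}\mathbb E[\boldsymbol\Upsilon_{n,\mathbf y}\mid\mathcal F]\boldsymbol D^{-1}_{n,\mathbf y}\boldsymbol a$. I must show that $\boldsymbol D^{-1}(\boldsymbol\Upsilon_{n,\mathbf y}-\mathbb E[\boldsymbol\Upsilon_{n,\mathbf y}\mid\mathcal F])\boldsymbol D^{-1}=o_p(1)$, an off-diagonal extension of the diagonal-block argument via Cauchy--Schwarz on the $(k,l)$ blocks. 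Then Assumption~\ref{assumption_jointscorevar} bounds the variance below by $c_{\mathbf y}$. The Lyapunov ratio uses the sixth moments of Assumption~\ref{assumption5} and the fact that each $\boldsymbol\psi_{ij,k}$ has size $O(n^2 p_{n,y_k})$ relative to variance $n^6p_{n,y_k}$ with $n^2$ summands. Since the limiting covariance need not converge, I do not state a fixed limit. Instead I prove the BL-distance version: $d_{\mathrm{BL}}(\mathcal L(\boldsymbol V_n^{-1/2}\boldsymbol\xi_n\mid\mathcal F),N(0,\boldsymbol I))\to_p0$, with $\boldsymbol V_n$ the normalized covariance, whose eigenvalues are bounded above and below with probability approaching one. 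A subsequence argument plus compactness of the set of such $\boldsymbol V_n$ lets me pass along subsequences where $\boldsymbol V_n$ converges and apply the ordinary multivariate CLT there.

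For the final transfer I write $\boldsymbol T_{n,\mathbf y}=\boldsymbol A_n\boldsymbol\xi_n$, where $\boldsymbol A_n=-\boldsymbol\sigma_{n,\mathbf y}^{-1}\boldsymbol G_n^{-1}(\bar{\boldsymbol H}+o_p(1))^{-1}\boldsymbol D_{n,\mathbf y}$. I show consistency of the plug-in quantities in normalized form: $\boldsymbol D^{-1}\boldsymbol\Upsilon_{n,\mathbf y}(\boldsymbol\theta_{n,\mathbf y})\boldsymbol D^{-1}-\boldsymbol V_n=o_p(1)$, by blockwise Lipschitz bounds in $\boldsymbol\theta$ plus consistency from Theorem~\ref{theorem1}, together with the Hessian convergence. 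It follows that $\boldsymbol A_n$ is, up to $o_p(1)$, a bounded matrix $\tilde{\boldsymbol A}_n$ satisfying $\tilde{\boldsymbol A}_n\boldsymbol V_n\tilde{\boldsymbol A}_n'=\boldsymbol P_{n,\mathbf y}(\boldsymbol\theta_{n,\mathbf y})+o_p(1)$. The diagonal of $\boldsymbol\sigma$ is bounded away from zero after scaling by Assumption~\ref{assumption_jointscorevar} and invertibility of $\bar{\boldsymbol H}$.

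Part (i) then follows because $d_{\mathrm{BL}}$ is stable under bounded linear maps and $o_p(1)$ perturbations, and because $N(0,\boldsymbol\Sigma)$ is BL-continuous in $\boldsymbol\Sigma$. For part (ii), I write $\boldsymbol a'(\boldsymbol\theta_{n,\mathbf y}-\boldsymbol\theta_{\mathbf y,0})/\sqrt{\cdot}=\boldsymbol b_n'\boldsymbol T_{n,\mathbf y}$ with $\boldsymbol b_n=\boldsymbol\sigma\boldsymbol a/\|\cdots\|$ normalized, so that $\boldsymbol b_n'\boldsymbol P\boldsymbol b_n=1$. Then (i) and a subsequence argument give $N(0,1)$ unconditionally, via dominated convergence over $\mathcal F$. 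Part (iii) is analogous: $\boldsymbol P$ has eigenvalues bounded below with probability approaching one, so $[\boldsymbol R\boldsymbol\Omega\boldsymbol R']^{-1/2}\boldsymbol R(\cdot)$ is a bounded transformation of $\boldsymbol T$ with identity covariance, and the continuous mapping theorem yields $\chi^2_q$.
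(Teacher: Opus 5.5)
Your argument is correct in substance, modulo the step in (ii)--(iii) discussed below. It shares the paper's skeleton: the stacked mean-value expansion, the dyad-level projection, a conditional Lyapunov CLT with Cram\'er--Wold given $\mathcal F_{n,\mathbf y}$, blockwise scaling by $\boldsymbol D_{n,\mathbf y}$, the joint bridge that transfers Assumption~\ref{assumption_jointscorevar}, and blockwise equicontinuity of $\boldsymbol\Upsilon_{n,kl}$ for the plug-in. It departs from the paper at two points, both to its advantage.

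First, you control the score--projection remainder one threshold at a time. From the single-threshold proof you read off $\boldsymbol S_{n,k}-\sum_{(i,j)}\boldsymbol\psi_{ij,k}=o_p((n^6p_{n,y_k})^{1/2})$, which gives block-scaled negligibility immediately. Your standardization $\boldsymbol V_n^{-1/2}\boldsymbol D_{n,\mathbf y}^{-1}$, with $\boldsymbol V_n=16\,\boldsymbol D_{n,\mathbf y}^{-1}\boldsymbol\Upsilon_{X,\mathbf y}\boldsymbol D_{n,\mathbf y}^{-1}$, is exactly the paper's $\mathcal C_{\boldsymbol D_{n,\mathbf y}}(16\,\boldsymbol\Upsilon_{X,\mathbf y})$. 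The paper instead proves a joint equivalence under the population normalization $\boldsymbol\Upsilon_{\mathbf y}^{-1/2}$. It then has to bound $\mathcal C_{\boldsymbol D_{n,\mathbf y}}(\boldsymbol\Upsilon_{X,\mathbf y})\boldsymbol\Upsilon_{\mathbf y}^{1/2}$ to avoid a factor $\max_kp_{n,k}/\min_kp_{n,k}$. That difficulty never arises for you, because the remainder is intrinsically a per-threshold object.

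Second, you replace the realized Hessian by its deterministic limit $\bar{\boldsymbol H}$ from Assumption~\ref{assumption4}, which is trivially $\mathcal F_{n,\mathbf y}$-measurable. The paper works with $\boldsymbol H_{X,\mathbf y}=\mathbb E[\boldsymbol H_{n,\mathbf y}(\boldsymbol\theta_{\mathbf y,0})\mid\mathcal F_{n,\mathbf y}]$ plus a law-of-total-variance bound. Under the stated assumptions that is extra work; the proof of Theorem~\ref{theorem2} itself uses the deterministic limit. Your bound $\|[\boldsymbol R\boldsymbol\Omega\boldsymbol R']^{-1/2}\boldsymbol R\boldsymbol\sigma\|^2\le\lambda_{\min}(\boldsymbol P)^{-1}$ for (iii) is likewise more direct than the paper's route through $\boldsymbol\Omega_{X,\mathbf y}$ and $\boldsymbol A_n\to\boldsymbol I_q$.

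Three steps need repair or correction.

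First, (ii) and (iii) do not follow from (i) as you claim. The weights $\boldsymbol b_n=\boldsymbol\sigma\boldsymbol a/(\boldsymbol a'\boldsymbol\Omega\boldsymbol a)^{1/2}$ and the matrix $[\boldsymbol R\boldsymbol\Omega\boldsymbol R']^{-1/2}\boldsymbol R\boldsymbol\sigma$ are functions of $\boldsymbol\theta_{n,\mathbf y}$, so they are not fixed given $\mathcal F_{n,\mathbf y}$. Statement (i) only controls the conditional law of $\boldsymbol T_{n,\mathbf y}$. If the studentizing matrix were correlated with $\boldsymbol T_{n,\mathbf y}$, then $\boldsymbol b_n'\boldsymbol T_{n,\mathbf y}$ could be non-normal even if $\boldsymbol T_{n,\mathbf y}$ were exactly $N(\boldsymbol 0,\boldsymbol P)$. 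The fix is already in your construction:
\begin{itemize}
\item replace these weights by their $\mathcal F_{n,\mathbf y}$-measurable counterparts built from $\bar{\boldsymbol H}$ and $\boldsymbol V_n$, which differ from them by $o_p(1)$ in your normalization;
\item use $\boldsymbol T_{n,\mathbf y}=O_p(1)$ and apply the conditional approximation;
\item only then integrate over $\mathcal F_{n,\mathbf y}$.
\end{itemize}
This is the role $\boldsymbol\sigma_{X,\mathbf y}$ and $\boldsymbol\Omega_{X,\mathbf y}$ play in the paper.

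Second, the claim that $\boldsymbol\psi_{ij,k}$ has size $O(n^2p_{n,y_k})$ is wrong. In either sparse tail, conditioning on the rare outcome at $(i,j)$ leaves only one further rare outcome needed for informativeness. The summand can therefore be of order $n^2\sqrt{p_{n,y_k}}$ on that event. You do not need the sharper size. The paper's envelope $\|\boldsymbol\psi_{ij,k}\|\lesssim\sum_{\sigma\in\mathcal N_{m_n,ij}}\|\boldsymbol r_\sigma\|$ gives a fourth-moment sum of order $O_p(n^{10})$, hence a Lyapunov ratio of $O_p((n\min_kp_{n,k})^{-2})$.

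Third, the subsequence--compactness device is unnecessary, and with $\boldsymbol V_n$ random it is ill-posed as written. Take $\boldsymbol V_n$ to be the exact conditional covariance of $\boldsymbol D_{n,\mathbf y}^{-1}\sum_{(i,j)}\boldsymbol\psi_{ij}$. Then each direction $\boldsymbol a'\boldsymbol V_n^{-1/2}\boldsymbol D_{n,\mathbf y}^{-1}\sum_{(i,j)}\boldsymbol\psi_{ij}$ has conditional variance exactly $\|\boldsymbol a\|^2$. Conditional Lyapunov plus Cram\'er--Wold then deliver the bounded--Lipschitz statement without any convergence of $\boldsymbol V_n$.
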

\noindent \textit{Proof.} See \ref{joint_distribution_appendix}.

The block structure of the sandwich covariance is what accommodates the different rates: the 
Hessian blocks satisfy $\boldsymbol{H}_{n,k}(\boldsymbol{\theta}_{n,y_k}) = O_p(n^4 p_{n,y_k})$ 
and the score covariance blocks 
$\boldsymbol{\Upsilon}_{n,kl}(\boldsymbol{\theta}_{n,y_k}, \boldsymbol{\theta}_{n,y_l}) = 
O_p(n^6 \sqrt{p_{n,y_k} p_{n,y_l}})$, so that
$$\left[\boldsymbol{\Omega}_{n,\mathbf{y}}(\boldsymbol{\theta}_{n,\mathbf{y}})\right]_{kl} = 
O_p\left(\frac{1}{n^2\sqrt{p_{n,y_k}\, p_{n,y_l}}}\right).$$
The diagonal blocks recover the threshold-specific rates $(n(n-1)p_{n,y_k})^{-1/2}$ of Theorem 
\ref{theorem2}, and the off-diagonal blocks scale with the geometric mean of the two 
informativeness parameters, so that the implied cross-threshold correlations are unaffected by 
the relative magnitudes of $p_{n,y_k}$ and $p_{n,y_l}$. Dividing each coefficient by its own 
standard error, as in $\boldsymbol{T}_{n,\mathbf{y}}$, normalizes these rates away. What it does 
not remove is the dependence across thresholds, which stays in the correlation matrix 
$\boldsymbol{P}_{n,\mathbf{y}}(\boldsymbol{\theta}_{n,\mathbf{y}})$, and no limit is imposed on 
that matrix. Doing so would be a condition on the joint informativeness structure across 
thresholds, which determines the off-diagonal blocks of 
$\boldsymbol{P}_{n,\mathbf{y}}(\boldsymbol{\theta}_{n,\mathbf{y}})$ and concerns which quadruples 
are informative at two thresholds at once: a restriction on the joint design rather than on each 
threshold separately. 

Part (i) of Theorem \ref{theorem3} is therefore stated as a Gaussian approximation, with a 
correlation matrix that varies with $n$, rather than as convergence to a fixed law. Normalizing 
the whole vector instead would give a fixed limit without any such restriction, but it would 
replace the individual coefficients by linear combinations of them across thresholds, which is 
not the form the bands of Section \ref{bands} require; parts (ii) and (iii) take that route, and 
their limits are correspondingly fixed.

\subsection{Sup-$t$ Joint Confidence Bands}\label{bands}

Simultaneous confidence bands across thresholds can be constructed via the sup-$t$ method \citep{montiel2019simultaneous}. Since typically the question of interest is on inference on each separate covariate (whether the effect of covariate $d$ varies across the distribution of the outcome), the bands are constructed per covariate.\footnote{Joint bands across all covariates and thresholds can be constructed analogously by replacing $\boldsymbol{\Omega}_{n,d}$ with $\boldsymbol{\Omega}_{n,\mathbf{y}}(\boldsymbol{\theta}_{n,\mathbf{y}})$ in Algorithm \ref{alg:supt_bands}; the sup-$t$ critical value adjusts automatically to the dimension of the chosen index set.} Let $\boldsymbol{\Omega}_{n,d}(\boldsymbol{\theta}_{n,\mathbf{y}})$ denote the $K \times K$ 
submatrix of $\boldsymbol{\Omega}_{n,\mathbf{y}}(\boldsymbol{\theta}_{n,\mathbf{y}})$ 
corresponding to the $d$-th coefficient across all $K$ thresholds, and let 
$\sigma_{n,y_k,d} = \sqrt{[\boldsymbol{\Omega}_{n,d}]_{kk}}$, so that the 
$\sigma_{n,y_k,d}$ are the corresponding diagonal entries of 
$\boldsymbol{\sigma}_{n,\mathbf{y}}(\boldsymbol{\theta}_{n,\mathbf{y}})$ and the correlation 
matrix implied by $\boldsymbol{\Omega}_{n,d}$ is the corresponding submatrix of 
$\boldsymbol{P}_{n,\mathbf{y}}(\boldsymbol{\theta}_{n,\mathbf{y}})$, denoted by $\boldsymbol{P}_{n,d}(\boldsymbol{\theta}_{n,\mathbf{y}})$.

The sup-$t$ confidence band is the set
$${C}_{n,d} = \left\{ \boldsymbol{\theta}_d \in \mathbb{R}^K : 
\max_{k=1,\ldots,K} 
\frac{|{\theta}_{n,y_k,d} - \theta_{y_k,d}|}
{{\sigma}_{n,y_k,d}} \leq {c}_{n,d,1-\alpha} \right\}$$
where ${c}_{n,d,1-\alpha}$ is the empirical $(1-\alpha)$ quantile 
of $\max_{k=1,\ldots,K} |{Z}_{n,y_k,d}| / {\sigma}_{n,y_k,d}$ 
with ${\boldsymbol{Z}}_{n,d} \sim N(0, {\boldsymbol{\Omega}}_{n,d})$. This is equivalent to the Cartesian product of intervals $C_{n,y_k,d} = [\theta_{n,y_k,d} \pm 
c_{n,d,1-\alpha} \sigma_{n,y_k,d}]$. Unlike pointwise confidence intervals, which cover each $\theta_{y_k,d}$ individually at level 
$(1-\alpha)$, the joint confidence bands cover the entire parameter vector $\boldsymbol{\theta}_d$ simultaneously at level $(1-\alpha)$. This allows for valid confidence statements that compare across thresholds, accounting for the cross-threshold dependence.

The critical value ${c}_{n,d,1-\alpha}$ 
generally exceeds the pointwise critical value $z_{1-\alpha/2}$. 
However, as shown in \citet{montiel2019simultaneous}, the sup-$t$ band achieves the smallest critical value among the one-parameter class of simultaneous confidence bands with coverage at least $(1 - \alpha)$, which includes Bonferroni and Wald projection bands as special cases. The gains are particularly sizeable when estimates are highly correlated, as is the case for nearby thresholds in the DR setting, and the sup-$t$ critical value increases slowly with the number of thresholds $K$. Moreover, unlike confidence ellipsoids, the rectangular structure of the bands is easily visualized and permits visual hypothesis testing, regardless of the dimension of the parameter vector.  

The critical value ${c}_{n,d,1-\alpha}$ is obtained by simulation, as described in Algorithm \ref{alg:supt_bands}.
\begin{algorithm}[h]
   \setlength{\interspacetitleruled}{0pt}
  \setlength{\interspacealgoruled}{2pt}
  \setlength{\algomargin}{1em}
  \footnotesize
  \linespread{1.2}\selectfont
\caption{sup-$t$ Confidence Bands for Covariate $d$}
\label{alg:supt_bands}
\KwIn{Estimates $\theta_{n,y_k,d}$ for $k=1,\ldots,K$; 
  covariance matrix $\boldsymbol{\Omega}_{n,d}(\boldsymbol{\theta}_{n,\mathbf{y}})$; 
  significance level $\alpha$; 
  number of draws $B$}
\KwOut{Simultaneous confidence band}
\For{$b = 1, \ldots, B$}{
  Draw ${\boldsymbol{Z}}^{(b)}_{n,d} \sim N(0, {\boldsymbol{\Omega}}_{n,d})$\;
  Compute $T^{(b)}_{n,d} = \max_{k=1,\ldots,K} 
    |{Z}^{(b)}_{n,y_k,d}| / {\sigma}_{n,y_k,d}$\;
}
Set ${c}_{n,d,1-\alpha} = (1-\alpha)$ empirical quantile of 
  $\{T^{(1)}_{n,d}, \ldots, T^{(B)}_{n,d}\}$\;
\For{$k = 1, \ldots, K$}{
  ${C}_{n,y_k,d} = [\theta_{n,y_k,d} - 
    {c}_{n,d,1-\alpha} {\sigma}_{n,y_k,d}, \; 
    \theta_{n,y_k,d} + 
    {c}_{n,d,1-\alpha} {\sigma}_{n,y_k,d}]$\;
}
\end{algorithm}
The sup-$t$ band requires (i) a joint Gaussian approximation for the coordinatewise studentized 
estimates, with (ii) the correlation structure of the approximating law consistently estimated in the 
relative sense that its difference from the correlation matrix conditional on the covariates and fixed effects at all thresholds vanishes, and (iii)
strictly positive marginal variances. Theorem \ref{theorem3}(i) delivers the first two directly, 
since the approximating law is built from 
$\boldsymbol{P}_{n,\mathbf{y}}(\boldsymbol{\theta}_{n,\mathbf{y}})$, which the proof shows to 
satisfy this, and the third holds because the feasible covariance matrix is positive definite with probability approaching one under 
Assumption \ref{assumption_jointscorevar}; see \ref{joint_distribution_appendix}. Algorithm 
\ref{alg:supt_bands} is a direct application of Algorithm 1 in 
\citet{montiel2019simultaneous}. The draws enter only through the standardized maxima 
$\max_k |{Z}^{(b)}_{n,y_k,d}|/{\sigma}_{n,y_k,d}$, and since the ${\sigma}_{n,y_k,d}$ are the 
square roots of the diagonal entries of $\boldsymbol{\Omega}_{n,d}$, drawing from 
$\boldsymbol{\Omega}_{n,d}$ and dividing by ${\sigma}_{n,y_k,d}$ is equivalent to drawing from the 
implied correlation matrix $\boldsymbol{P}_{n,d}(\boldsymbol{\theta}_{n,\mathbf{y}})$ directly; the 
former is retained to match Algorithm 1 of \citet{montiel2019simultaneous}. 

Because the draws are standardized, each coordinate has unit variance and the different 
convergence rates across thresholds are absorbed without restricting how they compare. The standardization also makes the anti-concentration bound for the maximum of Gaussian 
coordinates \citep{chernozhukov2015comparison} hold with a constant that does not depend on the 
correlation matrix, which is what converts the bounded--Lipschitz approximation of Theorem 
\ref{theorem3}(i) into a statement about the probability of the rectangle defining ${C}_{n,d}$. 
This step is not needed in \citet{montiel2019simultaneous}, where the correlation matrix has a 
fixed limit; here it is what allows the correlation matrix to vary with $n$. The resulting coverage statement holds 
conditionally and hence unconditionally, since coverage probabilities are bounded.

\subsection{Joint Testing}\label{testing}

While simultaneous confidence bands provide visual evidence on whether coefficients vary across thresholds, formal tests of the null hypothesis of coefficient equality are of interest. For a given covariate $d$, the null hypothesis is 
$$H_{0,d}: \theta_{y_1,d} = \theta_{y_2,d} = \cdots 
= \theta_{y_K,d},$$
which can be expressed as $\boldsymbol{R} {\boldsymbol{\theta}}_d = 
{0}$, where $\boldsymbol{R}$ is any $(K-1) \times K$ 
matrix of full row rank satisfying $\boldsymbol{R} \mathbf{1}_K = 
\boldsymbol{0}$. The choice of $\boldsymbol{R}$ does not affect the 
null hypothesis, but determines the parameterization 
of the deviations from equality. Common choices include adjacent differences 
($\boldsymbol{R}{\boldsymbol{\theta}}_d = (\theta_{y_2,d} - 
\theta_{y_1,d}, \ldots, \theta_{y_K,d} - 
\theta_{y_{K-1},d})'$) and differences from a 
reference threshold ($\boldsymbol{R}{\boldsymbol{\theta}}_d = 
(\theta_{y_2,d} - \theta_{y_1,d}, \ldots, 
\theta_{y_K,d} - \theta_{y_1,d})'$). I focus on two possible tests: the standard Wald test and a sup-$t$ test obtained by band inversion. The Wald test is 
invariant to the choice of $\boldsymbol{R}$, while 
the sup-$t$ test depends on the specific contrasts 
used, which may affect power but not validity.

The sup-$t$ equality test is obtained by inverting a sup-$t$ confidence band for the contrast 
vector $\boldsymbol{R}{\boldsymbol{\theta}}_d$ \citep{montiel2019simultaneous}. Let 
$\boldsymbol{R}$ select the $d$th coefficient across thresholds and form the contrasts. The 
argument of Theorem \ref{theorem3}(iii), applied to this $\boldsymbol{R}$, gives a joint Gaussian 
approximation for $\boldsymbol{R}({\boldsymbol{\theta}}_{n,d} - \boldsymbol{\theta}_{d,0})$ with 
covariance $\boldsymbol{R}\boldsymbol{\Omega}_{n,d}(\boldsymbol{\theta}_{n,\mathbf{y}})
\boldsymbol{R}'$. Studentizing each contrast by its own standard error (the square root of the 
corresponding diagonal entry of that matrix), then puts the vector on the scale used by the 
sup-$t$ construction, with correlation structure implied by the same matrix. Under $H_{0,d}$ the 
centering vanishes, since $\boldsymbol{R}\boldsymbol{\theta}_{d,0} = \boldsymbol{0}$, so the same 
approximation applies to $\boldsymbol{R}{\boldsymbol{\theta}}_{n,d}$ itself, and the test 
statistic is
$$T_{n,d}^{\text{eq}} = \max_{k=1,\ldots,K-1} 
\frac{|(\boldsymbol{R}{\boldsymbol{\theta}}_{n,d})_k|}
{{\sigma}_{n,\delta,k}}$$
where ${\sigma}_{n,\delta,k} = 
\sqrt{[\boldsymbol{R}{\boldsymbol{\Omega}}_{n,d}
\boldsymbol{R}']_{kk}}$, and $H_{0,d}$ is rejected 
if $T_{n,d}^{\text{eq}} > {c}^{\text{eq}}_{n,d,1-\alpha}$. The critical value ${c}^{\text{eq}}_{n,d,1-\alpha}$ 
is computed as in Algorithm \ref{alg:supt_bands}, 
replacing ${\boldsymbol{\Omega}}_{n,d}$ with 
$\boldsymbol{R}{\boldsymbol{\Omega}}_{n,d}
\boldsymbol{R}'$. Since the contrast vector is studentized by the diagonal of 
$\boldsymbol{R}\boldsymbol{\Omega}_{n,d}(\boldsymbol{\theta}_{n,\mathbf{y}})\boldsymbol{R}'$, its 
approximating law again has unit diagonal, and the argument of Section \ref{bands} applies 
unchanged.

As a complement, the standard Wald test for $H_{0,d}$ is valid by Theorem \ref{theorem3}(iii), 
with $\boldsymbol{R}$ there replaced by the $(K-1) \times Kp$ matrix mapping 
$\boldsymbol{\theta}_{\mathbf{y}}$ to the $K-1$ contrasts of the $d$th coefficient across 
thresholds. The test does not depend on which contrast matrix is used, since any two full-rank matrices with 
the same null space differ by a nonsingular transformation, which cancels in the quadratic form. The Wald test rejects when $(\boldsymbol{R}{\boldsymbol{\theta}}_{n,d})' 
(\boldsymbol{R}{\boldsymbol{\Omega}}_{n,d}\boldsymbol{R}')^{-1}(\boldsymbol{R}
{\boldsymbol{\theta}}_{n,d})$, which under $H_{0,d}$ coincides with the centered quadratic form of 
Theorem \ref{theorem3}(iii), exceeds the $(1-\alpha)$-quantile of the $\chi^2_{(K-1)}$ 
distribution. The two tests have complementary power properties: the sup-$t$ test has higher power against sparse alternatives (when a small number of thresholds have different coefficients), while the Wald test has higher power against diffuse alternatives (where all coefficients differ slightly across thresholds). In 
particular, neither test uniformly dominates the 
other in terms of local power \citep{montiel2019simultaneous}. However, the sup-$t$ framework allows immediate 
identification of which thresholds drive the 
rejection, either through the simultaneous bands for 
$\boldsymbol{\theta}_d$ or through bands for the 
contrasts $\boldsymbol{R}\boldsymbol{\theta}_d$, 
whereas the Wald test provides only a joint rejection 
decision.\footnote{The joint distribution established in 
Theorem \ref{theorem3} also permits testing more 
refined hypotheses, such as the equality of adjacent 
coefficients ($H_k: \theta_{y_k,d} = 
\theta_{y_{k+1},d}$) or monotonicity restrictions 
($\theta_{y_1,d} \leq \cdots \leq 
\theta_{y_K,d}$). I focus on the global test 
and joint confidence bands, which suffice to 
characterize heterogeneity across the distribution.}

\begin{remark}[Extension to a continuum of thresholds]
\label{rem:continuum}
The asymptotic results in this paper are established for 
finitely many thresholds. The main payoff of extending them 
to a continuum $y \in \mathcal{Y}$ would be to obtain uniform confidence 
bands for counterfactual distributions, and by inversion, for quantile 
functions and effects \citep{chernozhukov2013inference, 
chernozhukov2020network}. In the sparse network setting, where consistent estimation of the fixed effects is not possible, the construction of these objects is unavailable. Additionally, average effects that aggregate over 
the fixed effects are at best set-identified under sparsity. What remains possible is inference on the structural parameter process $y \mapsto \boldsymbol{\theta}_0(y)$ as a functional object, but the sup-$t$ test developed in 
Section~\ref{joint_distribution} already provides 
simultaneous inference across multiple thresholds, which suffices for the 
question of whether covariate effects vary across the 
distribution. Moreover, existing uniform results for dyadic and network data do not directly apply to the setting in this paper.\footnote{The empirical process theory for exchangeable arrays of \citet{davezies2021empirical} does not accommodate triangular array structures with the varying rates of convergence across thresholds. The functional central limit theorem in \citet{chernozhukov2020network} is developed for bias-corrected fixed effects estimators in dense networks where the rate of convergence is uniform across thresholds, and does not apply to the conditional maximum likelihood estimator, whose asymptotic structure (involving  projections over conditionally independent quadruples with threshold-dependent rates) differs fundamentally from theirs. Among existing approaches, the 
strong approximation methods of 
\citet{cattaneo2024uniform} for dyadic kernel density 
estimation share key structural features with the present 
setting: the dyad-level contributions are conditionally 
independent given node-specific attributes (and in the case of this paper, dyad-specific attributes as well), and the rate of convergence is not uniform across evaluation points (due to degeneracy of the H\'{a}jek projection in their setting, and due to varying network sparsity across thresholds in the present one). Adapting 
their approach to the present setting requires establishing 
that the linearization of the CMLE holds uniformly over 
the continuum of thresholds (in particular, uniform 
convergence of the Hessian over $\mathcal{Y}$ and uniform negligibility of 
the higher-order remainder term) which is not required in 
\citet{cattaneo2024uniform} because their estimator is a sample average that can be directly expressed as a sum over dyads, whereas the CMLE is defined implicitly as the solution to a score equation. One should also verify that the strong 
approximation error vanishes uniformly over $\mathcal{Y}$ 
when the sparsity parameter $p_{n,y}$ varies continuously 
across thresholds.} Developing this extension is left for 
future research.
\end{remark}

\section{Monte Carlo Simulations}\label{simulations}

This section presents Monte Carlo simulation studies evaluating the finite-sample performance of the CMLE, relative to the maximum likelihood estimator (MLE) and the analytical bias correction method (BC, \citet{chernozhukov2020network}). I consider two sets of simulations: (i) for the full distribution regression setting calibrated to the international trade application; and (ii) for joint inference across thresholds, including sup-$t$ confidence bands and equality tests. Appendix ... additionally provides a Monte Carlo simulation for a single threshold (a network formation model) with varying degrees of sparsity, extending the DGP of \citet{jochmans2018semiparametric} with additional node heterogeneity specifications and right-tail sparsity (the left-tail case is studied in their paper). The results confirm that the CMLE maintains smaller bias than BC under extreme sparsity, consistent with the theoretical predictions of Section~\ref{asymptotics}.

\subsection{Monte Carlo Simulations for the Distribution Regression}\label{monte_carlo_DR}

Using the bilateral trade dataset \citep{helpman2008estimating,jochmans2018semiparametric,chernozhukov2020network} described in Section \ref{application}, which is standard in the gravity model literature, a logit model with two-way fixed effects is estimated by MLE at each threshold $y_k$, and the estimates are set as the true parameters in the simulation. Following the notation of Section~\ref{model_estimation}, 
the parameter vector at the $k$-th threshold is denoted 
$\boldsymbol{\theta}_{y_k}$. When thresholds correspond 
to empirical quantiles, $\tau$ denotes the quantile level 
associated with threshold $y$, so that 
$\theta_{y,d} = \theta_d(\tau)$ for the $d$-th covariate. At each threshold, the true DGP is
$$\Pr(\tilde{y}_{ij,k} = 1 \mid \boldsymbol{x}_{ij}, 
\alpha_{i,y_k}^{\text{MLE}}, \gamma_{j,y_k}^{\text{MLE}}) = 
\Lambda\big(\boldsymbol{x}_{ij}'
\boldsymbol{\theta}_{y_k}^{\text{MLE}} 
+ \alpha_{i,y_k}^{\text{MLE}} 
+ \gamma_{j,y_k}^{\text{MLE}}\big), \quad 
k = 1, \dots K, \quad (i,j) \in \mathcal{D},$$
where $\boldsymbol{x}_{i j}$ are the values of the covariates for the observational unit $(i, j)$ in the trade data set, $\tilde{y}_{ij,k} = 1 (y_{ij} \leq y_k)$, and the thresholds $y_1 < \dots < y_K$ correspond to the 
empirical quantile levels $\tau_1 < \dots < \tau_K$ of the trade outcome at 
$\tau_k \in \{0.545, 0.550, \ldots, 0.990\}$. 
The true parameter vector includes all coefficients and fixed effects:
$$\boldsymbol{\beta}_k^{\text{MLE}}=\left(
\boldsymbol{\theta}_{y_k}^{\text{MLE}}, 
\alpha_{1,y_k}^{\text{MLE}}, \ldots, 
\alpha_{n,y_k}^{\text{MLE}}, 
\gamma_{2,y_k}^{\text{MLE}}, \ldots, 
\gamma_{n,y_k}^{\text{MLE}}\right).$$
The simulated data are generated by drawing a single vector of $n(n-1)$ 
logistic errors $\varepsilon_{ij} \sim \text{Logistic}(0,1)$, shared 
across all thresholds, so that the binary indicators at different thresholds are generated from 
common shocks, as they are in real distribution regression data. The parameters remain 
threshold-specific: $\boldsymbol{\theta}_{y_k}^{\text{MLE}}$, $\alpha_{i,y_k}^{\text{MLE}}$ and 
$\gamma_{j,y_k}^{\text{MLE}}$ are estimated separately at each threshold, so the design imposes 
no restriction on how they vary across the distribution, which is what DR is intended to 
accommodate. This cross-threshold dependence is accounted for by the joint inference procedures 
in Section~\ref{joint_distribution}. The collection of binary variables at the different thresholds is generated as
$$\tilde{y}_{ij,k}^{sim} = {1}\big(\boldsymbol{W}_{ij}'\boldsymbol{\beta}_k^{\text{MLE}} + \varepsilon_{ij} \geq 0\big), \quad k = 1, \ldots, K, \quad(i, j) \in \mathcal{D},$$
where $\boldsymbol{W}_{ij}$ includes covariates and dummy variables for the dimensions $i$ and $j$ (fixed effects indicators).\footnote{This design choice differs from the Tobit-based DGP of \citet{chernozhukov2020network}. In their DGP, a single latent variable, together with a single sequence of fixed effects (estimated by Tobit on the underlying continuous outcome), generates the binarized outcomes at all thresholds. Since the fixed effects do not vary with $y$, 
link probabilities become increasingly homogeneous 
across nodes at extreme thresholds. The threshold-specific design adopted here allows the fixed effects to reflect the data at each threshold separately, capturing the cross-threshold 
heterogeneity patterns observed in the trade data.}

The dataset contains information on $n=157$ countries, and the covariates included are log of distance, common legal system, contiguous border, common language, and common religion.\footnote{Three additional covariates (colonial ties, currency union, and regional free trade agreement) are available in the dataset but excluded from the analysis because each takes the value one for fewer than 2\% of country pairs (Table~\ref{tab:descriptive} in the Supplemental Appendix), resulting in insufficient variation in the pairwise-differenced covariates at many thresholds.} Results are based on 500 simulations.

\paragraph{Results}

The average probability $\Pr(\tilde{y}_{ij,k} = 1)$ at each 
threshold increases approximately linearly from around 0.55 
at the 54.5th percentile to nearly 1.0 at the 99th 
percentile, recovering the empirical quantile structure of 
the trade data on which the DGP is calibrated 
(Figure~\ref{fig:sparsity} in the Supplemental \ref{simulation_appendix}).

\begin{figure}
\centering
\captionsetup[subfigure]{font=footnotesize, skip=0pt}
\captionsetup{skip=0pt}
\begin{subfigure}[b]{0.45\textwidth}
    \centering
    \includegraphics[width=\textwidth]{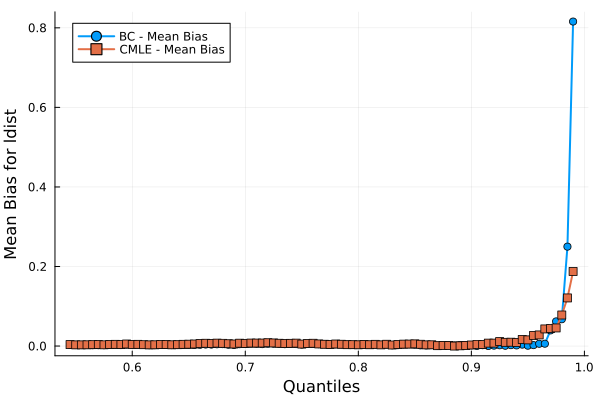}
    \caption{Mean bias: Log distance}
    \label{fig:sim_bias_ldist}
\end{subfigure}
\hfill
\begin{subfigure}[b]{0.45\textwidth}
    \centering
    \includegraphics[width=\textwidth]{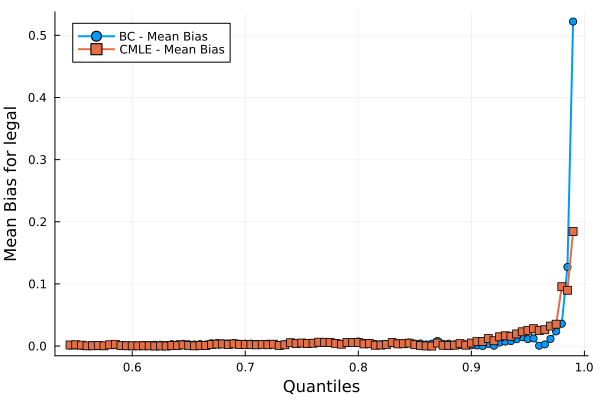}
    \caption{Mean bias: Legal system}
    \label{fig:sim_bias_legal}
\end{subfigure}

\vspace{0.15cm}

\begin{subfigure}[b]{0.45\textwidth}
    \centering
    \includegraphics[width=\textwidth]{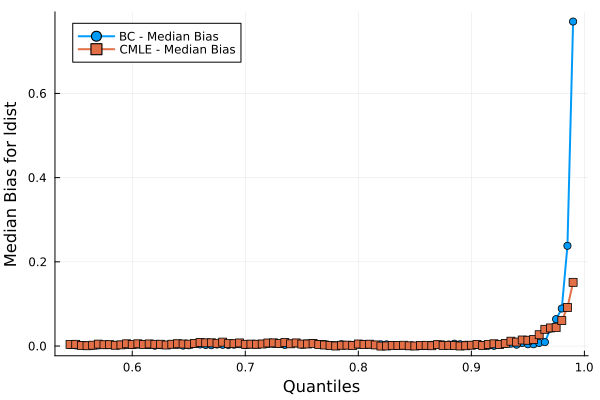}
    \caption{Median bias: Log distance}
    \label{fig:sim_medbias_ldist}
\end{subfigure}
\hfill
\begin{subfigure}[b]{0.45\textwidth}
    \centering
    \includegraphics[width=\textwidth]{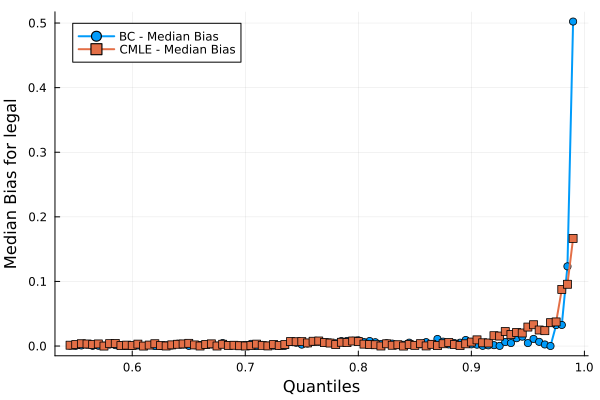}
    \caption{Median bias: Legal system}
    \label{fig:sim_medbias_legal}
\end{subfigure}
\caption{Mean bias (top) and median bias (bottom) of the 
bias-corrected estimator (BC) and the conditional maximum 
likelihood estimator (CMLE) across quantiles of the trade 
distribution, based on 500 replications using the empirically 
calibrated DGP. Thresholds correspond to empirical quantiles 
in [0.545, 0.990] at intervals of 0.005.}
\label{fig:sim_bias}
\label{fig:sim_medbias}
\end{figure}

Figure~\ref{fig:sim_medbias} 
plots the mean and median bias of BC and CMLE across 
quantiles for log distance and common legal system; results 
for contiguous border, common language, and common religion 
are reported in Figures \ref{fig:sim_bias_appendix} and \ref{fig:sim_medbias_appendix} in the Supplemental Appendix. For most of the quantile range, both estimators show negligible bias. In a narrow window before the more extreme quantiles, the BC estimator shows a slightly smaller bias for most covariates. This pattern reflects the fact that the BC uses all observations to estimate the bias correction, while the CMLE uses only the informative quadruples, which constitute a smaller effective sample at moderate sparsity levels. At the most extreme thresholds, in particular beyond the 98th percentile, the BC estimator shows sharply increasing bias for most covariates, while the CMLE's bias remains substantially smaller. On median bias, which is robust to outlier replications, CMLE has smaller bias than BC at extreme thresholds across all covariates. This pattern is consistent with the single-threshold 
simulations in Tables~\ref{tab:est_jochmans_right_uniform_full} and \ref{tab:est_jochmans_right_beta_full} in the Supplemental \ref{sim_single_threshold}.

\begin{figure}
\centering
\captionsetup[subfigure]{font=footnotesize, skip=0pt}
\captionsetup{skip=0pt}
\begin{subfigure}[b]{0.45\textwidth}
    \centering
    \includegraphics[width=\textwidth]{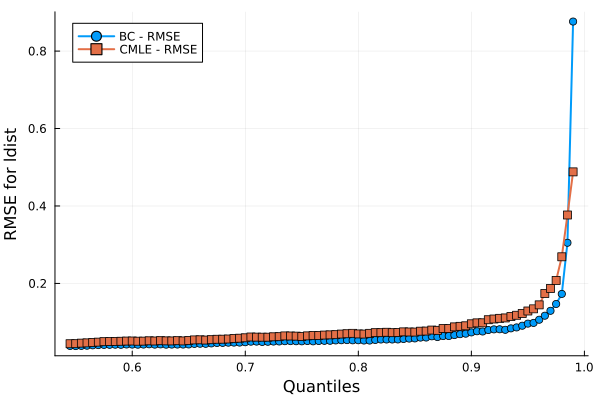}
    \caption{RMSE: Log distance}
    \label{fig:sim_rmse_ldist}
\end{subfigure}
\hfill
\begin{subfigure}[b]{0.45\textwidth}
    \centering
    \includegraphics[width=\textwidth]{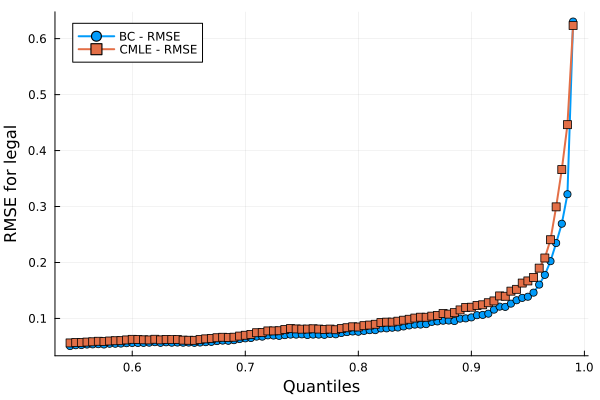}
    \caption{RMSE: Legal system}
    \label{fig:sim_rmse_legal}
\end{subfigure}

\vspace{0.15cm}

\begin{subfigure}[b]{0.45\textwidth}
    \centering
    \includegraphics[width=\textwidth]{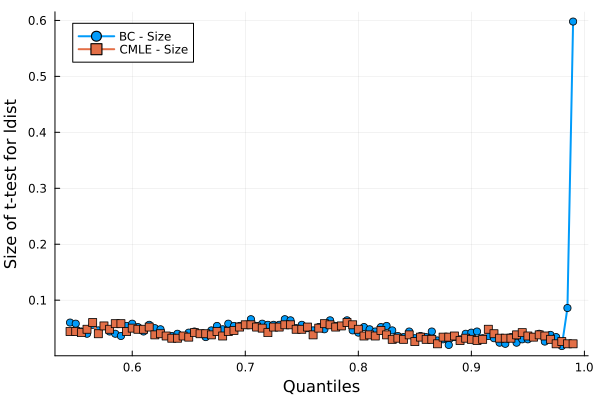}
    \caption{Size: Log distance}
    \label{fig:sim_size_ldist}
\end{subfigure}
\hfill
\begin{subfigure}[b]{0.45\textwidth}
    \centering
    \includegraphics[width=\textwidth]{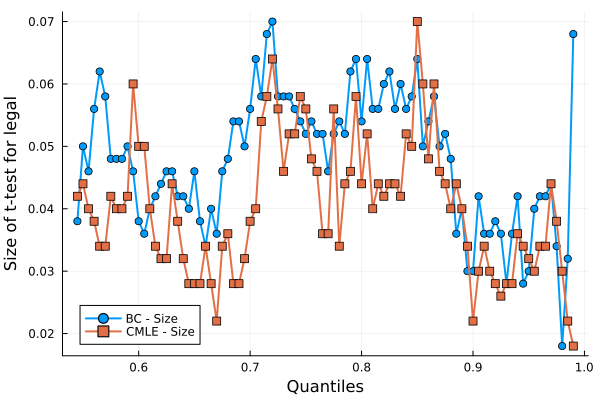}
    \caption{Size: Legal system}
    \label{fig:sim_size_legal}
\end{subfigure}
\caption{RMSE (top) and rejection frequency of the two-sided 
$t$-test at the 5\% nominal level (bottom) for the 
bias-corrected estimator (BC) and the conditional maximum 
likelihood estimator (CMLE) across quantiles of the trade 
distribution, based on 500 replications using the empirically 
calibrated DGP. Thresholds correspond to empirical quantiles 
in [0.545, 0.990] at intervals of 0.005.}
\label{fig:sim_rmse}
\label{fig:sim_size}
\end{figure}

Figure~\ref{fig:sim_rmse} reports the RMSE (top) and 
rejection frequencies (bottom) across quantiles. Through 
most of the quantile range, the RMSE of both estimators is 
nearly identical. In the narrow range before the extremes, the CMLE's RMSE increases somewhat more than BC's, 
reflecting the higher variance from relying on only the 
informative quadruples. The RMSE comparison at the extreme 
quantiles depends on which effect dominates. For log 
distance, where BC's bias is large, the CMLE has 
substantially smaller RMSE. For common legal system, where 
BC's bias is more moderate, the two estimators have 
comparable RMSE. For some covariates for which results are in the Appendix (notably 
contiguous border), the CMLE's variance from the smaller 
effective sample outweighs its bias advantage, giving BC 
smaller RMSE. This bias-variance trade-off is structural: 
the CMLE uses only observations that identify $\boldsymbol{\theta}_{y_k}$, 
which gives smaller bias at the cost of higher variance.

Turning to size control, for log distance, both estimators 
maintain adequate rejection frequencies through most of the 
quantile range. At the extreme quantiles, the BC rejection 
rate rises sharply, reaching over 0.50 at the 99th 
percentile, as the bias dominates the test statistic. The 
CMLE remains close to the nominal level throughout. For 
common legal system, both estimators fluctuate around the 
nominal level, with both becoming conservative at some of 
the highest quantiles. Across covariates, BC displays over-rejection at the extremes 
when the bias dominates (log distance, contiguous border) and 
under-rejection when the standard error estimate becomes 
inflated (common language at the extreme). The CMLE generally maintains better size control, 
though it too becomes conservative for some covariates at 
the most extreme quantiles.

\subsection{Monte Carlo Simulations for Joint Confidence Bands and Testing}\label{monte_carlo_simultaneous}

The simultaneous confidence bands and equality tests across 
thresholds developed in Section~\ref{joint_distribution} are 
evaluated using the empirically calibrated DGP described in 
the previous Subsection, with parameters varied across 
specifications to test specific null and alternative 
hypotheses.

The simulations vary along three dimensions: the number of thresholds $K \in \{5, 15, 50\}$, the location of the thresholds in the distribution (capturing different sequences of convergence rates of threshold-specific estimators), and the values of the parameters. For the location of the thresholds, three configurations are considered: (i) \textit{equally spaced}, with $K$ thresholds spread across the full range of available quantiles, from $\tau = 0.545$ to $\tau = 0.95$ (with intervals of 0.005), capturing variation across different sparsity levels; (ii) \textit{first $K$}, which considers the $K$ lowest thresholds, where the network is relatively dense ($\Pr(\tilde{y}_{ij,k} = 1)$ is moderate); and (iii) \textit{last $K$}, which considers the $K$ highest thresholds, where the network is sparse ($\Pr(\tilde{y}_{ij,k} = 1)$ is very high).

The different designs for the values of the parameters are 
distinguished by the specification of the structural 
parameter vector $\boldsymbol{\theta}_{y_k}$ across thresholds; 
in both designs, the fixed effects $\alpha_{i,y_k}, 
\gamma_{j,y_k}$ are as in the previous Subsection and remain 
threshold-specific. Under \textit{constant 
parameters}, the parameter vector is held constant 
across thresholds at $\boldsymbol{\theta}_{y_k} = \boldsymbol{\theta}_{y_1}^{\text{MLE}}$ for all $k$, so that the null hypothesis $H_{0,d}: \theta_{y_1,d} = \cdots = \theta_{y_K,d}$ holds and the empirical rejection rates should be close to the 
nominal $5\%$ level. Under \textit{varying parameters}, 
the coefficients are set to the empirical estimates 
$\boldsymbol{\theta}_{y_k} = \boldsymbol{\theta}_{y_k}^{\text{MLE}}$, which 
vary across thresholds, so that $H_{0,d}$ is false.

For each simulation and configuration, I compute the CMLE at each of 
the $K$ thresholds, estimate the joint covariance matrix 
$\boldsymbol{\Omega}_{n,\mathbf{y}}(\boldsymbol{\theta}_{n,\mathbf{y}})$, and obtain sup-$t$ critical values from 100000 draws from the estimated 
Gaussian distribution. The equality tests use deviations from the first 
threshold as the contrast specification, i.e., 
$\delta_{y_k,d} = \theta_{y_k,d} - \theta_{y_1,d}$ for $k = 2, \ldots, K$ and all covariates $d$. Results are based on $500$ simulations.

\paragraph{Results}

Table \ref{tab:main_coverage_095} reports the empirical coverage of the sup-$t$ joint confidence bands and the empirical simultaneous coverage of the pointwise confidence intervals (the fraction of simulations where all $K$ pointwise intervals simultaneously contain the true values) for the configuration considering equally spaced thresholds, for constant and varying parameters, with the highest threshold considered at 
$\tau_{\max} = 0.95$. Supplemental \ref{simulation_appendix} provides the simulation results for the remaining designs and specifications considered above and extends the threshold range 
to $\tau_{\max} = 0.99$, assessing whether the sup-$t$ bands maintain correct coverage 
when thresholds reach into the sparse tail where fewer 
informative quadruples are available.

\begin{table}[htbp]
\centering
\caption{Coverage Comparison: Pointwise vs Sup-$t$ Bands ($\tau_{\max} = 0.95$, Equally Spaced)}
\label{tab:main_coverage_095}
\begin{threeparttable}
\footnotesize
\begin{tabular}{ll cc cc cc cc cc}
\toprule
& & \multicolumn{2}{c}{Distance} & \multicolumn{2}{c}{Legal} & \multicolumn{2}{c}{Border} & \multicolumn{2}{c}{Language} & \multicolumn{2}{c}{Religion} \\
\cmidrule(lr){3-4} \cmidrule(lr){5-6} \cmidrule(lr){7-8} \cmidrule(lr){9-10} \cmidrule(lr){11-12}
$K$ & DGP & PW & Sup-$t$ & PW & Sup-$t$ & PW & Sup-$t$ & PW & Sup-$t$ & PW & Sup-$t$ \\
\midrule
5 & Varying & 83.60 & 97.40 & 81.80 & 96.80 & 85.40 & 96.00 & 81.20 & 96.60 & 83.20 & 95.20 \\
5 & Constant & 81.40 & 95.20 & 83.60 & 97.40 & 82.00 & 94.80 & 82.80 & 94.40 & 85.60 & 97.00 \\
\addlinespace
15 & Varying & 69.00 & 96.60 & 69.00 & 96.80 & 73.20 & 95.40 & 66.80 & 95.60 & 69.60 & 96.20 \\
15 & Constant & 68.00 & 94.80 & 71.40 & 96.20 & 71.80 & 95.80 & 68.00 & 95.20 & 70.80 & 97.60 \\
\addlinespace
50 & Varying & 56.20 & 96.60 & 56.20 & 97.00 & 61.60 & 96.00 & 54.40 & 95.40 & 56.20 & 96.00 \\
50 & Constant & 56.40 & 96.20 & 58.60 & 97.40 & 57.80 & 95.00 & 55.00 & 95.20 & 59.80 & 97.20 \\
\bottomrule
\end{tabular}
\begin{tablenotes}
\smallskip\footnotesize
\item \textit{Notes:} Empirical coverage rates (in \%) of 95\% simultaneous confidence bands. ``PW'' (Pointwise) uses $z_{0.975} = 1.96$ at each threshold; ``Sup-$t$'' uses simulated critical values. Target coverage is 95\%. ``Constant'' = coefficients identical across thresholds ($H_0$ true); ``Varying'' = coefficients follow empirical trade data pattern ($H_0$ false). Based on 500 Monte Carlo simulations.
\end{tablenotes}
\end{threeparttable}
\end{table}

Table \ref{tab:main_coverage_095} shows that while the results for the pointwise confidence intervals show under-coverage that worsens with $K$, the sup-$t$ bands maintain coverage at or near the $95\%$ 
level. The Supplemental \ref{app_monte_carlo_simultaneous} shows that this 
pattern holds across $K$, covariates, $\tau_{\max}$, and DGP 
design, with sup-$t$ coverage remaining near the nominal level 
throughout. This pattern shows the need for simultaneous 
inference when constructing coverage statements across multiple 
thresholds: pointwise intervals constructed at the marginal 
$95\%$ level under-cover when interpreted simultaneously, 
while the sup-$t$ critical value uses the joint distribution 
to deliver correct simultaneous coverage. For pointwise 
coverage, extending to $\tau_{\max} = 0.99$ further reduces 
simultaneous coverage substantially.

\begin{table}[htbp]
\centering
\caption{Size and Power by Covariate: Sup-$t$ vs Wald ($\tau_{\max} = 0.95$, Equally Spaced)}
\label{tab:main_by_covariate_095}
\begin{threeparttable}
\footnotesize
\begin{tabular}{l cc cc cc cc cc}
\toprule
& \multicolumn{2}{c}{Distance} & \multicolumn{2}{c}{Legal} & \multicolumn{2}{c}{Border} & \multicolumn{2}{c}{Language} & \multicolumn{2}{c}{Religion} \\
\cmidrule(lr){2-3} \cmidrule(lr){4-5} \cmidrule(lr){6-7} \cmidrule(lr){8-9} \cmidrule(lr){10-11}
$K$ & Sup-$t$ & Wald & Sup-$t$ & Wald & Sup-$t$ & Wald & Sup-$t$ & Wald & Sup-$t$ & Wald \\
\midrule
\multicolumn{11}{l}{\textit{Panel A: Constant Coefficients (Size, target: 5\%)}} \\
\addlinespace
5 & 4.60 & 3.80 & 2.80 & 2.60 & 3.60 & 3.00 & 4.20 & 4.20 & 3.60 & 3.80 \\
15 & 3.20 & 3.20 & 3.60 & 1.60 & 4.00 & 5.00 & 3.60 & 3.40 & 3.20 & 3.80 \\
50 & 2.80 & 2.60 & 2.80 & 1.60 & 5.80 & 13.80 & 3.60 & 2.20 & 2.40 & 2.40 \\
\midrule
\multicolumn{11}{l}{\textit{Panel B: Varying Coefficients (Power)}} \\
\addlinespace
5 & 100.00 & 100.00 & 100.00 & 99.80 & 99.40 & 99.00 & 39.80 & 40.00 & 52.80 & 60.40 \\
15 & 100.00 & 100.00 & 100.00 & 100.00 & 98.80 & 97.80 & 58.80 & 53.40 & 77.80 & 91.20 \\
50 & 100.00 & 99.60 & 100.00 & 100.00 & 99.00 & 98.20 & 54.40 & 79.00 & 79.20 & 92.40 \\
\addlinespace
\bottomrule
\end{tabular}
\begin{tablenotes}
\smallskip\footnotesize
\item \textit{Notes:} Rejection rates (in \%) for testing $H_0: \theta_d(\tau_1) = \cdots = \theta_d(\tau_K)$ at the 5\% level. Sup-$t$ = Sup-$t$ test; Wald = Wald test. ``Constant'' = coefficients identical across thresholds ($H_0$ true); ``Varying'' = coefficients follow empirical trade data pattern ($H_0$ false). Based on 500 Monte Carlo simulations.
\end{tablenotes}
\end{threeparttable}
\end{table}

Table~\ref{tab:main_by_covariate_095} reports size and power 
of the sup-$t$ and Wald equality tests for the equally 
spaced configuration at $\tau_{\max} = 0.95$. Under the null, the sup-$t$ test stays at or near the nominal 
$5\%$ level across all values of $K$, with rejection rates 
between $2.4\%$ and $5.8\%$, indicating size control without 
substantial over-rejection but some conservativeness in 
particular cases. The Wald 
test, by contrast, shows size distortion that varies by 
covariate and worsens with $K$: for Border at $K = 50$, it 
over-rejects at 13.8\%, while for Legal it under-rejects at 
1.6\%. The sup-$t$ test exhibits high power for Distance, 
Legal, and Border, consistent with the large heterogeneity in the coefficients for these covariates, shown in 
Figure~\ref{fig:coef_paths_main} in the Supplemental \ref{app_monte_carlo_simultaneous}. Language and Religion show 
more moderate power, reflecting smaller heterogeneity in their coefficients across thresholds. The Wald test shows higher power than the sup-$t$ test for Language and Religion 
at large $K$, but at the cost of substantial size distortion 
for Border at the same $K$; the sup-$t$ test provides more 
uniform performance across covariates. The Supplemental \ref{app_monte_carlo_simultaneous} 
shows that these patterns hold at $\tau_{\max} = 0.99$.

\section{Application to gravity models of international trade}\label{application}

There are two important features that models for bilateral international trade should take into account. First, the outcome of interest (the volume of bilateral trade) is bounded below at zero, with a mass at zero: in the dataset used in this application, approximately 55\% of country pairs have zero bilateral trade in a given year. Second, consistent estimation typically requires controlling for country-specific terms that capture unobservable barriers each country faces with all its trading partners. Such terms are known in the international trade literature as multilateral resistance terms \citep{anderson2003gravity}, and are typically treated as two-way (exporter and importer) fixed effects.

Several approaches to handling these features focus on the conditional mean of trade. The Poisson pseudo-maximum likelihood (PPML) estimator \citep{silva2006log} retains zero observations and delivers consistent estimates of the gravity coefficients in two-way fixed effects gravity specifications \citep{fernandez2016individual}. Another approach is the Heckman-type sample selection model of \cite{helpman2008estimating}, which models the decision to trade and the volume of trade jointly via a two-stage procedure with a first-stage selection equation and a second-stage equation for log positive trade.\footnote{\cite{helpman2008estimating} estimate the first-stage selection equation by standard probit, which is subject to the incidental parameter problem in two-way fixed effects settings. Bias-corrected estimators for binary outcome models with two-way fixed effects are available \citep{fernandez2016individual}, and in the dataset used here (where approximately 45\% of country pairs have positive trade), where there is no first-degree sparsity, they could be applied. In settings with sparser first-stage outcomes, the fixed effects cannot be consistently estimated even with bias correction, making the Heckman-type selection correction infeasible. \cite{sakamoto2024dyadic} develops a semiparametric sample selection estimator for dyadic data extending 
\cite{kyriazidou1997estimation}, but this approach requires panel data with time-varying covariates, which is not available in this application since key gravity determinants (such as distance) are time-invariant.} However, both methods estimate effects on the conditional mean, and neither characterizes how covariates might affect different parts of the trade distribution differently. 

There are theoretical reasons to expect that the effects of trade determinants are not constant across the distribution of bilateral trade flows \citep{novy2013international, bas2017micro, carrere2020gravity}. Several econometric methods have been used to study this heterogeneity. Quantile regression on log positive trade \citep{baltagi2016estimationmain} documents that gravity-covariate effects vary across quantiles but drops zero-trade observations entirely. More recently, \cite{bergstrand2025quantile} use censored quantile regression that retains zeros. However, they address the incidental parameter problem from the high-dimensional fixed effects by imposing a parametric restriction on the unobserved heterogeneity via a Chamberlain-Mundlak correlated random effects parameterization \citep{abrevaya2008effects}. Expectile regression provides an alternative distributional approach \citep{bergstrand2025tails} that extends Poisson pseudo-maximum likelihood to recover expectile-specific coefficients across the conditional distribution. This approach handles zeros and documents heterogeneous effects, but does not inherit the robustness of Poisson-based estimation to the incidental parameter problem, with no bias correction currently available.\footnote{Their Monte Carlo simulations show coverage 
degradation at extreme expectiles, particularly with shorter time 
dimensions. The authors leave the development of bias corrections 
for future work.} 

The distribution regression framework developed in this paper is used to study how the effects of standard gravity covariates vary across the conditional distribution of bilateral trade. At each threshold $y_k$, the binary indicator $\tilde{y}_{ij,k} = \mathbf{1}\{y_{ij} \leq y_k\}$ is well-defined whether or not country pair $(i,j)$ has positive trade, and the estimator is consistent without imposing a parametric specification on the unobserved heterogeneity. The joint inference framework developed in Section~\ref{joint_distribution} allows formal tests of whether covariate effects are constant across the full threshold range, rather than pairwise comparisons at selected points.

The dataset, used previously by \cite{helpman2008estimating}, \cite{jochmans2018semiparametric}, and \cite{chernozhukov2020network}, contains information on bilateral trade flows and covariates for 157 countries in 1986, where $i$ and $j$ index each country as an exporter and an importer, respectively. Descriptive statistics are reported in Table~\ref{tab:descriptive} in the Supplemental Appendix. The outcome $y_{ij}$ is the volume of trade in thousands of constant 2000 U.S. dollars from country $i$ to country $j$.\footnote{The bilateral trade flows data are from Feenstra's ``World Trade Flows, 1970--1992,'' transformed to constant 2000 U.S. dollars using the U.S. CPI by \cite{helpman2008estimating}.} The covariates $\boldsymbol{x}_{ij}$ include the following bilateral determinants of trade flows: the logarithm of distance between capitals, and binary indicators for shared legal system, contiguous border, common language, and common religion.

\begin{figure}[htbp]
\centering
\captionsetup[subfigure]{font=footnotesize, skip=0pt}
\captionsetup{skip=0pt}
\begin{subfigure}[b]{0.48\textwidth}
    \centering
    \includegraphics[width=\textwidth]{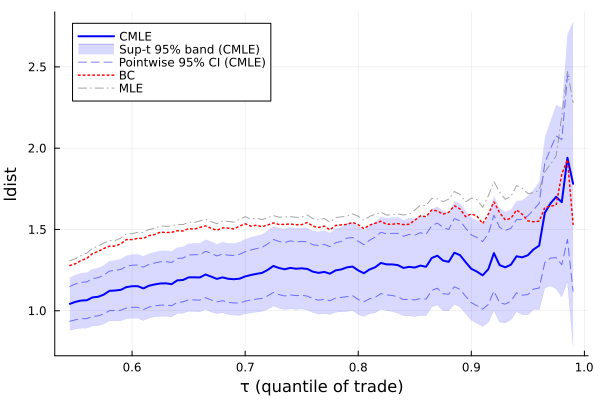}
    \caption{Log distance}
    \label{fig:band_ldist}
\end{subfigure}
\hfill
\begin{subfigure}[b]{0.48\textwidth}
    \centering
    \includegraphics[width=\textwidth]{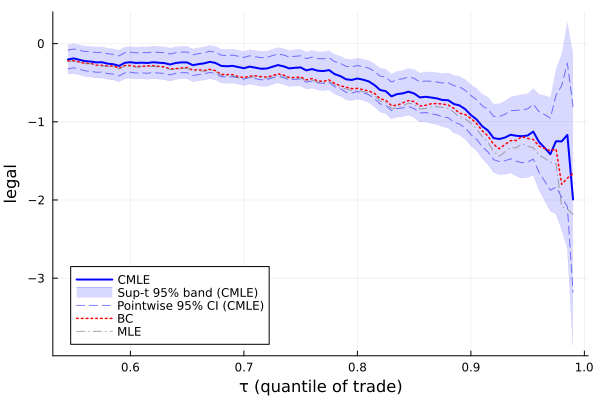}
    \caption{Legal system}
    \label{fig:band_legal}
\end{subfigure}
\caption{Distribution regression estimates of the effect of log distance and 
legal system on bilateral trade. Solid blue: CMLE; dotted red: BC; dash-dotted 
gray: MLE. Shaded region and dashed lines show the 95\% simultaneous sup-$t$ 
confidence band and pointwise confidence intervals for the CMLE, respectively. 
Thresholds correspond to empirical quantiles in $[0.545, 0.990]$ at intervals 
of $0.005$.}
\label{fig:bands_main}
\end{figure}

\begin{figure}[htbp]
\centering
\captionsetup[subfigure]{font=footnotesize, skip=0pt}
\captionsetup{skip=0pt}
\begin{subfigure}[b]{0.48\textwidth}
    \centering
    \includegraphics[width=\textwidth]{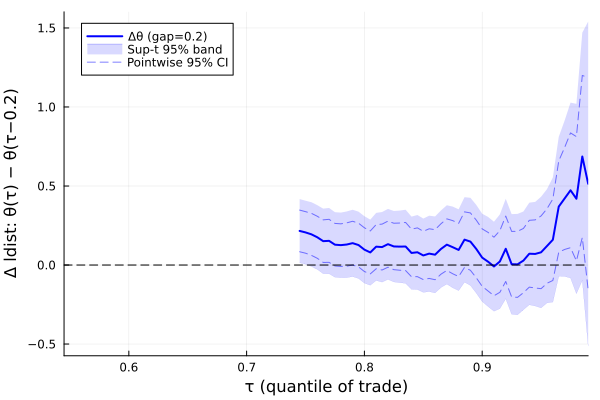}
    \caption{Log distance}
    \label{fig:diff_ldist}
\end{subfigure}
\hfill
\begin{subfigure}[b]{0.48\textwidth}
    \centering
    \includegraphics[width=\textwidth]{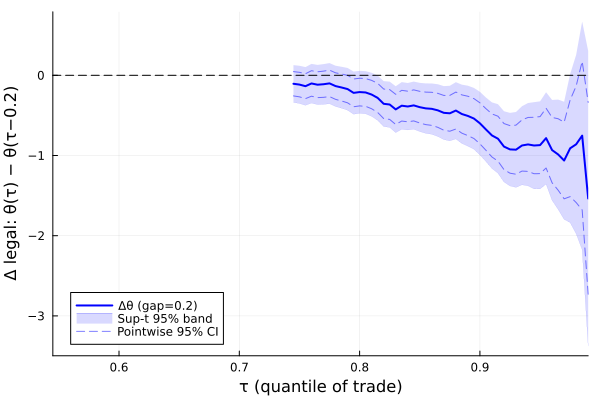}
    \caption{Legal system}
    \label{fig:diff_legal}
\end{subfigure}
\caption{Differences $\theta_{n,d}(\tau) - \theta_{n,d}(\tau - 0.20)$ from the CMLE 
estimator for log distance and legal system. Shaded region and dashed lines show 
the 95\% simultaneous sup-$t$ confidence band and pointwise confidence intervals 
for the differences, respectively. The horizontal dashed line marks zero.}
\label{fig:diffs_main}
\end{figure}

\begin{table}[htbp]\label{table_application}
\centering
\caption{Joint tests of coefficient equality across thresholds}
\label{tab:equality_tests}
    \footnotesize
\begin{tabular}{l cc c cc c cc}
\toprule
 & \multicolumn{2}{c}{$\tau_{\max} = 0.95$} & & \multicolumn{2}{c}{$\tau_{\max} = 0.97$} & & \multicolumn{2}{c}{$\tau_{\max} = 0.99$} \\
\cmidrule{2-3} \cmidrule{5-6} \cmidrule{8-9}
Covariate & Sup-$t$ & Wald & & Sup-$t$ & Wald & & Sup-$t$ & Wald \\
\midrule
Log distance & $3.83{}^{**}$ & $82.7$ & & $3.83{}^{**}$ & $88.8$ & & $3.83{}^{**}$ & $98.2$ \\
Legal system & $6.75{}^{**}$ & $141.3{}^{**}$ & & $6.75{}^{**}$ & $150.4{}^{**}$ & & $6.75{}^{**}$ & $156.5{}^{**}$ \\
Border & $3.95{}^{**}$ & $91.2$ & & $3.95{}^{**}$ & $95.2$ & & $3.95{}^{**}$ & $104.9$ \\
Language & $2.91$ & $79.1$ & & $2.91$ & $80.3$ & & $2.91$ & $81.1$ \\
Religion & $2.68$ & $90.7$ & & $2.68$ & $95.6$ & & $2.68$ & $96.5$ \\
\midrule
\textit{K} & \multicolumn{2}{c}{82} & & \multicolumn{2}{c}{86} & & \multicolumn{2}{c}{90} \\
\bottomrule
\end{tabular}
\begin{minipage}{\textwidth}
\smallskip\footnotesize
\textit{Notes:} The sup-$t$ statistic tests $H_0\colon \theta_{y_1,d} = \cdots = \theta_{y_K,d}$ using the
supremum of standardized deviations from the first threshold, with critical values
obtained from $100{,}000$ draws of the estimated Gaussian distribution implied by the joint
asymptotic distribution (Theorem~\ref{theorem3}). The Wald statistic uses
adjacent differences $R = R_{\text{adj}}$ and is distributed as $\chi^2_{K-1}$ under $H_0$.
${}^{**}$~Rejection at the $5\%$ level. $\tau_{\max}$ denotes the largest quantile included in the test.
\end{minipage}
\end{table}

Figure~\ref{fig:bands_main} displays the DR coefficient estimates 
for log distance and legal system. A positive DR 
coefficient indicates a negative effect on trade volume: a 
covariate that increases the probability of trade falling below 
a threshold $y$ reduces the likelihood of large bilateral flows. The CMLE coefficient $\theta_d(y)$ has a direct interpretation as a log-odds ratio for $\Pr(Y_{ij} \leq y \mid \boldsymbol{x}_{ij}, \nu_i, \omega_j)$, such that variation across thresholds characterizes how this log-odds effect varies across the distribution.

The CMLE estimates show substantial heterogeneity in the 
structural parameters across the distribution. For log 
distance, the coefficient increases from approximately $1.05$ 
at the 55th percentile to around $1.26$ at the 90th percentile, and $1.78$ at the 99th percentile. The increasing coefficient indicates that distance is a stronger barrier at the upper 
end of the trade distribution.
For legal system, the coefficient becomes increasingly negative 
across the distribution, suggesting that shared legal origins 
facilitate trade more strongly among the largest bilateral 
relationships.
Table~\ref{tab:equality_tests} confirms that the null of 
constant coefficients is rejected for both covariates.

Turning to the comparison across estimators, for log distance, the CMLE estimates are systematically lower than both the BC and uncorrected MLE across most of the distribution, with a persistent gap of approximately $0.3$ in the area where the network is denser (until approximately the 95th percentile). Part of this gap may reflect higher-order incidental parameter bias in the BC estimator, or differences in pseudo-true parameters under misspecification of the logistic link \citep{white1982maximum, hughes2026jackknife}. In the upper tail of log distance, the uncorrected MLE diverges from the other two, while both BC and CMLE become noisier. 

The pattern across the other four covariates is qualitatively different: BC and CMLE track each other closely in the interior of the distribution, then diverge meaningfully in the upper tail, where the estimates also become considerably noisier. For legal system, the divergence emerges from approximately the 93rd percentile; for border, language, and religion (Supplemental \ref{application_appendix}), it emerges earlier, from approximately the 85th-90th percentile, with larger gaps at extreme thresholds. This pattern across four of five covariates is consistent with the BC bias correction relying on the assumption of a dense network, which fails at extreme thresholds. As expected, the joint sup-$t$ confidence bands are wider than the pointwise intervals, and both widen toward the extremes as the effective sample size for estimation decreases.

Table~\ref{tab:equality_tests} reports the sup-$t$ and Wald 
test statistics for $H_0\colon \theta_{y_1,d} = \cdots = \theta_{y_K,d}$, evaluated at three choices of the maximum considered
quantile: $\tau_{\max} \in \{0.95, 0.97, 0.99\}$. These vary how far the analysis extends into the sparse tail. The sup-$t$ 
test rejects at 5\% level for log distance, legal system, and border for all specifications, and does not reject the null for language and religion. The Wald test does not reject for any covariate apart from legal system. With $K - 1$ ranging from 81 to 89 restrictions, localized departures from equality contribute relatively little to the joint Wald statistic, while the sup-$t$ test is driven by the largest standardized deviation across thresholds. The rejection decisions are stable across $\tau_{\max}$: for sup-$t$, the supremum is attained at or below $\tau = 0.95$ for all covariates, so no rejection is overturned despite the larger critical value at higher $\tau_{\max}$; for the Wald test, the test statistic itself changes with $\tau_{\max}$ but the rejection decisions are unaffected.

To characterize where in the distribution the variation in the distribution regression coefficients occurs, Figure~\ref{fig:diffs_main} displays simultaneous 95\% confidence bands for the differences $\theta_{n,d}(\tau) - \theta_{n,d}(\tau - 0.20)$ for log distance and legal system. These bands use the feasible joint covariance of the CMLE and the Gaussian approximation of Theorem~\ref{theorem3}(i), applied to the contrasts. Since the bands have simultaneous coverage at the 95\% level, any threshold $\tau$ where the band excludes zero refers to a rejection of $H_0\colon \theta_d(\tau) = \theta_d(\tau - 0.20)$, with family-wise error rate controlled across all thresholds jointly. For log distance, the differences are mostly positive but the band excludes zero only at the two lowest threshold pairs, indicating that few 20-percentile intervals show a significant change. Combined with the rejection in Table~\ref{tab:equality_tests}, this indicates that thresholds far apart in the distribution differ significantly, but neighboring thresholds do not: the effect of distance varies gradually across the distribution. For legal system, by contrast, the band excludes zero across much of the upper range, consistent with shared legal origins becoming increasingly important for the largest bilateral relationships.

Supplemental \ref{application_appendix} presents additional results. The coefficient paths with sup-$t$ bands and pointwise confidence intervals are shown for the remaining covariates. Pointwise comparisons between the CMLE and BC for all covariates show that CMLE intervals are wider throughout, reflecting the efficiency cost of conditioning that yields the CMLE's robustness in sparse settings. Sup-$t$ bands evaluated at $\tau_{\max} = 0.97$ yield similar conclusions, and the simultaneous bands for the coefficient differences are reported for all covariates at both $\tau_{\max} = 0.97$ and $0.99$.

The CMLE does not deliver estimates of the fixed effects, which limits the recovery of the conditional distribution and the computation of counterfactual distributions or related functionals that aggregate over the fixed effects. Moreover, under sparsity, average (marginal) effects are at best set-identified, analogous to short panel data settings.\footnote{Methods for bounding partially identified average effects have been developed for panel data models with fixed effects \citep{honore2006bounds, chernozhukov2013average, davezies2025identification, botosaru2024adversarial, dobronyi2024identification, pakel2023bounds, aguirregabiria2024identification}; extending these methods to network models (under a dyadic structure and sparsity) remains an open question.} Nevertheless, beyond the log-odds interpretation of individual coefficients, ratios of coefficients at a given threshold correspond to ratios of partial derivatives of the conditional quantile function $Q(u \mid \boldsymbol{x}_{ij}, \boldsymbol{\nu}_i, \boldsymbol{\omega}_j)$, where $u \in (0,1)$ denotes the quantile index \citep{chernozhukov2020network}:
\begin{align}
    \left.\frac{\theta_{\ell,y}}{\theta_{k,y}}\right|_{y=Q(u \mid \boldsymbol{x}_{ij}, \boldsymbol{\nu}_i, \boldsymbol{\omega}_j)} = \frac{\partial_{x_{ij}^{\ell}} Q(u \mid \boldsymbol{x}_{ij}, \boldsymbol{\nu}_i, \boldsymbol{\omega}_j)}{\partial_{x_{ij}^k} Q(u \mid \boldsymbol{x}_{ij}, \boldsymbol{\nu}_i, \boldsymbol{\omega}_j)}.
\end{align}
This relationship applies only at thresholds above the zero mass, where the conditional distribution is continuous. The ratios measure the relative importance of different covariates at each point in the distribution, even when absolute marginal effects are not point-identified. For instance, the ratio $\theta_{n,\text{ldist},y} / \theta_{n,\text{legal},y}$ indicates how much larger the marginal effect of distance is relative to that of legal system on the conditional quantile of trade at threshold $y$.

The findings in this section provide evidence of heterogeneity in covariate effects across the conditional distribution of trade. The directional patterns differ from earlier evidence based on quantile regression \citep{baltagi2016estimationmain, carrere2020gravity} and expectile regression \citep{bergstrand2025tails}, which find covariate effects typically larger at the lower part of the conditional distribution of trade flows. The bias-corrected distribution regression estimates of \cite{chernozhukov2020network} on the same data, by contrast, deliver coefficient paths qualitatively similar to those obtained with the approach proposed in this paper. The contrast across methods reflects that DR, QR, and expectile regression target different distributional objects, although differences in dataset and time period across studies may also contribute. 
\section{Conclusion}\label{conclusion}

I develop a framework for distribution regression in network settings with two-way fixed effects that are treated as incidental parameters. The conditional maximum likelihood method of \citet{charbonneau2017multiple} and \citet{jochmans2018semiparametric}, originally proposed for network formation models to address the incidental parameter problem, is extended to a distribution regression setting. The estimator is applied at multiple thresholds of the outcome distribution (after a binarization of the outcome variable). The proposed method provides asymptotically unbiased pointwise estimates, in particular in sparse settings and regions of the distribution (extreme quantiles), filling a gap in the literature. I establish joint asymptotic results across multiple thresholds with heterogeneous convergence rates, and construct simultaneous sup-$t$ confidence bands and equality tests. Monte Carlo simulations confirm that the proposed estimator has reduced bias and valid inference in sparse settings, in particular at the extreme quantiles of the distribution, where methods based on analytical bias corrections are not well suited. The simulations also 
confirm that the sup-$t$ bands provide correct simultaneous coverage across all configurations considered, while the Wald test exhibits size distortions that worsen with the 
number of considered thresholds. An empirical application to international trade documents heterogeneity in the structural parameters of the gravity model across the distribution of trade flows and identifies where it is most pronounced.

The method proposed in this paper complements the bias-corrected distribution regression of \citet{chernozhukov2020network}. The two approaches target different objects: in dense settings, the bias-corrected DR recovers counterfactual distributions and functionals that aggregate over the fixed effects, while the DR with CMLE delivers asymptotically unbiased estimates of the structural parameter $\theta_0(y)$ in sparse settings (both in terms of first and second degree) where the fixed effects cannot be consistently estimated. Developing informative bounds on partially identified average effects in this setting, which are typically only set-identified under sparsity, is a natural direction for future research.

While the application in this paper is to international trade, the framework applies broadly to dyadic network settings with two-way fixed effects. Examples include bilateral migration \citep{anderson2011gravity, beine2016practitioners, grogger2011income}, foreign direct investment, bilateral patent flows, and bilateral trade in services. In these settings, researchers have been interested in the relative importance of different covariates; for instance, in migration, the relative importance of geographic, cultural, and policy-related barriers \citep{ortega2013effect, grogger2011income}; in trade, the tariff equivalents of various frictions. The ratio interpretation introduced in Section~\ref{application} provides a formal tool for such questions. More specifically, in bilateral migration, the ratio of the coefficient on a visa waiver indicator to the coefficient on log distance can be interpreted as the geographic distance offset by a visa waiver between origin and destination countries. Moreover, the framework can also be adapted to bipartite settings, such as worker-firm matching with wage outcomes. Notably, these applications typically feature sparse 
networks with substantial mass at zero, precisely the 
setting where the framework developed in this paper 
provides valid inference.

\bibliography{library}


\appendix
\renewcommand{\thesection}{Appendix~\Alph{section}}
\section{Proof of Lemma \ref{lemma_sufficient}}
\label{appendix_sufficient}

\noindent \textit{Proof.} Denote by 
$\tilde{\boldsymbol{y}}_y$ the vector of all binary 
indicators 
$(\tilde{y}_{12,y}, \tilde{y}_{13,y}, \ldots, 
\tilde{y}_{n,n-1,y})$ at threshold $y$; by 
$\boldsymbol{r}_y = (r_{1,y}, \ldots, r_{n,y})$ the 
vector of row sums where 
$r_{i,y} = \sum_{j=1}^n \tilde{y}_{ij,y}$; and by 
$\boldsymbol{c}_y = (c_{1,y}, \ldots, c_{n,y})$ the 
vector of column sums where 
$c_{j,y} = \sum_{i=1}^n \tilde{y}_{ij,y}$.

Sufficiency holds if 
$\Pr(\tilde{\boldsymbol{y}}_y \mid \boldsymbol{r}_y, 
\boldsymbol{c}_y, \{\alpha_{i,y}\}_n, \{\gamma_{j,y}\}_n, 
\{\boldsymbol{x}_{ij}\}_{n,n}; \boldsymbol{\theta}_y)$ 
does not depend on the fixed effects 
$\{\alpha_{i,y}\}_n$ and $\{\gamma_{j,y}\}_n$. It follows that:
$$ \Pr(\tilde{\boldsymbol{y}}_y \mid \boldsymbol{r}_y, 
\boldsymbol{c}_y, \{\alpha_{i,y}\}_n, \{\gamma_{j,y}\}_n, 
\{\boldsymbol{x}_{ij}\}_{n,n}; \boldsymbol{\theta}_y) = 
\frac{\Pr\left(\tilde{\boldsymbol{y}}_y \mid 
\{\alpha_{i,y}\}_n, \{\gamma_{j,y}\}_n, 
\{\boldsymbol{x}_{ij}\}_{n,n}; 
\boldsymbol{\theta}_y\right)}{\Pr\left(\boldsymbol{r}_y, 
\boldsymbol{c}_y \mid \{\alpha_{i,y}\}_n, 
\{\gamma_{j,y}\}_n, \{\boldsymbol{x}_{ij}\}_{n,n}; 
\boldsymbol{\theta}_y\right)}, $$

\noindent where 
$\Pr\left(\boldsymbol{r}_y, \boldsymbol{c}_y \mid 
\{\alpha_{i,y}\}_n, \{\gamma_{j,y}\}_n, 
\{\boldsymbol{x}_{ij}\}_{n,n}; 
\boldsymbol{\theta}_y\right) = \sum_{\bar{\tilde{\boldsymbol{y}}}_y \in \mathbb{Q}} 
\Pr(\tilde{\boldsymbol{y}}_y = 
\bar{\tilde{\boldsymbol{y}}}_y \mid \{\alpha_{i,y}\}_n, 
\{\gamma_{j,y}\}_n, \{\boldsymbol{x}_{ij}\}_{n,n}; 
\boldsymbol{\theta}_y)$, and $\mathbb{Q}$ is the set of 
all binary matrices with the same row sums 
$\boldsymbol{r}_y$ and column sums $\boldsymbol{c}_y$.

From the model specification in Equation 
\eqref{eq:main_charbonneau}, the probability of a single 
binary indicator is:
$$
\Pr(\tilde{y}_{ij,y} \mid \boldsymbol{x}_{ij}, 
\alpha_{i,y}, \gamma_{j,y}; \boldsymbol{\theta}_y) = 
\frac{\exp(\boldsymbol{x}_{ij}'\boldsymbol{\theta}_y + 
\alpha_{i,y} + \gamma_{j,y})^{\tilde{y}_{ij,y}}}{1 + 
\exp(\boldsymbol{x}_{ij}'\boldsymbol{\theta}_y + 
\alpha_{i,y} + \gamma_{j,y})}. $$

By conditional independence across dyads, the joint 
probability of all indicators at threshold $y$ is:
\begin{adjustwidth}{-0.25in}{-0.25in}
\begin{align}
\Pr\left(\tilde{\boldsymbol{y}}_y \mid 
\{\alpha_{i,y}\}_n, \{\gamma_{j,y}\}_n, 
\{\boldsymbol{x}_{ij}\}_{n,n}; 
\boldsymbol{\theta}_y\right) &= \prod_{(i,j) \in 
\mathcal{D}} \Pr\left(\tilde{y}_{ij,y} \mid 
\boldsymbol{x}_{ij}, \alpha_{i,y}, \gamma_{j,y}; 
\boldsymbol{\theta}_y\right) \nonumber \\
&= \frac{\exp\left(\sum_{(i,j) \in \mathcal{D}} 
\tilde{y}_{ij,y}\, \boldsymbol{x}_{ij}' 
\boldsymbol{\theta}_y + \sum_{(i,j) \in \mathcal{D}} 
\tilde{y}_{ij,y}(\alpha_{i,y} + 
\gamma_{j,y})\right)}{\prod_{(i,j) \in \mathcal{D}} 
\left[1 + \exp(\boldsymbol{x}_{ij}' 
\boldsymbol{\theta}_y + \alpha_{i,y} + 
\gamma_{j,y})\right]} \nonumber \\
&= \frac{\exp\left(\sum_{(i,j) \in \mathcal{D}} 
\tilde{y}_{ij,y}\, \boldsymbol{x}_{ij}' 
\boldsymbol{\theta}_y\right) \exp\left(\sum_{i=1}^n 
\alpha_{i,y}\, r_{i,y}\right) \exp\left(\sum_{j=1}^n 
\gamma_{j,y}\, c_{j,y}\right)}{\prod_{(i,j) \in 
\mathcal{D}} \left[1 + \exp(\boldsymbol{x}_{ij}' 
\boldsymbol{\theta}_y + \alpha_{i,y} + 
\gamma_{j,y})\right]} \nonumber
\end{align}
\end{adjustwidth}

\noindent where the last equality uses 
$\sum_{(i,j) \in \mathcal{D}} \tilde{y}_{ij,y}\, 
\alpha_{i,y} = \sum_{i=1}^n \alpha_{i,y} \sum_{j \neq i} 
\tilde{y}_{ij,y} = \sum_{i=1}^n \alpha_{i,y}\, r_{i,y}$, 
and analogously for $\gamma_{j,y}$. Taking the ratio:
\begin{align}
&\frac{\Pr\left(\tilde{\boldsymbol{y}}_y \mid 
\{\alpha_{i,y}\}_n, \{\gamma_{j,y}\}_n, 
\{\boldsymbol{x}_{ij}\}_{n,n}; 
\boldsymbol{\theta}_y\right)}{\sum_{\bar{\tilde{\boldsymbol{y}}}_y \in \mathbb{Q}} 
\Pr\left(\bar{\tilde{\boldsymbol{y}}}_y \mid 
\{\alpha_{i,y}\}_n, \{\gamma_{j,y}\}_n, 
\{\boldsymbol{x}_{ij}\}_{n,n}; 
\boldsymbol{\theta}_y\right)} = 
\frac{\exp\left(\sum_{(i,j) \in \mathcal{D}} 
\tilde{y}_{ij,y}\, \boldsymbol{x}_{ij}' 
\boldsymbol{\theta}_y\right)}{\sum_{\bar{\tilde{\boldsymbol{y}}}_y \in \mathbb{Q}} 
\exp\left(\sum_{(i,j) \in \mathcal{D}} 
\bar{\tilde{y}}_{ij,y}\, \boldsymbol{x}_{ij}' 
\boldsymbol{\theta}_y\right)} \nonumber
\end{align}

\noindent since for any 
$\bar{\tilde{\boldsymbol{y}}}_y \in \mathbb{Q}$, by 
construction, $\sum_{j \neq i} \bar{\tilde{y}}_{ij,y} = 
r_{i,y}$ and $\sum_{i \neq j} \bar{\tilde{y}}_{ij,y} = 
c_{j,y}$, so the terms involving $\alpha_{i,y}$ and 
$\gamma_{j,y}$ cancel between numerator and denominator. 
The resulting expression depends only on 
$\boldsymbol{\theta}_y$ and the covariates, confirming that 
$(\boldsymbol{r}_y, \boldsymbol{c}_y)$ are sufficient 
statistics for $(\{\alpha_{i,y}\}_n, \{\gamma_{j,y}\}_n)$.

\section{Identification}\label{identification}

The intuition for the identification of the common parameters is analogous to that of \cite{graham2017econometric} for the undirected case. The heterogeneity parameters (fixed effects) account for the in-degree and out-degree distributions of the network (the number of ones for a given node when the node is a sender or a receiver). Therefore, the precise location of the ones (or links) is driven by the variation provided by the covariates and the common parameters ($\boldsymbol{x}_{ij}'\boldsymbol{\theta}_y$). Thus, conditioning on the set $\{\tilde{y}_{ij,y} + \tilde{y}_{ik,y} = 1, \tilde{y}_{lj,y} + \tilde{y}_{lk,y} = 1,  \tilde{y}_{ij,y} + \tilde{y}_{lk,y} = 1\}$ provides the ground for an estimator that is based on the relative probability of different types of subgraphs configurations with identical degree sequences, giving the necessary variation to identify the common parameters. 

\begin{figure}
\centering
\scriptsize
\captionsetup[subfigure]{font=footnotesize, skip=0pt}

\begin{subfigure}[b]{0.22\textwidth}
\centering
\begin{tikzpicture}[scale=1.4]
\node[circle,draw,minimum size=0.5cm] (i) at (0,1.5) {\small i};
\node[circle,draw,minimum size=0.5cm] (j) at (1.5,1.5) {\small j};
\node[circle,draw,minimum size=0.5cm] (k) at (0,0) {\small k};
\node[circle,draw,minimum size=0.5cm] (l) at (1.5,0) {\small l};
\draw[->,thick,blue,line width=1.2pt] (i) -- (j);
\draw[->,thick,blue,line width=1.2pt] (l) -- (k);
\draw[->,dashed,red,line width=1.2pt] (i) -- (k);
\draw[->,dashed,red,line width=1.2pt] (l) -- (j);
\end{tikzpicture}
\caption*{C1: $z_\sigma = 1$}
\end{subfigure}
\begin{subfigure}[b]{0.22\textwidth}
\centering
\begin{tikzpicture}[scale=1.4]
\node[circle,draw,minimum size=0.5cm] (i) at (0,1.5) {\small i};
\node[circle,draw,minimum size=0.5cm] (j) at (1.5,1.5) {\small j};
\node[circle,draw,minimum size=0.5cm] (k) at (0,0) {\small k};
\node[circle,draw,minimum size=0.5cm] (l) at (1.5,0) {\small l};
\draw[->,dashed,red,line width=1.2pt] (i) -- (j);
\draw[->,dashed,red,line width=1.2pt] (l) -- (k);
\draw[->,thick,blue,line width=1.2pt] (i) -- (k);
\draw[->,thick,blue,line width=1.2pt] (l) -- (j);
\end{tikzpicture}
\caption*{C1: $z_\sigma = -1$}
\end{subfigure}
\begin{subfigure}[b]{0.22\textwidth}
\centering
\begin{tikzpicture}[scale=1.4]
\node[circle,draw,minimum size=0.5cm] (i) at (0,1.5) {\small i};
\node[circle,draw,minimum size=0.5cm] (j) at (1.5,1.5) {\small j};
\node[circle,draw,minimum size=0.5cm] (k) at (0,0) {\small k};
\node[circle,draw,minimum size=0.5cm] (l) at (1.5,0) {\small l};
\draw[->,thick,blue,line width=1.2pt] (i) -- (k);
\draw[->,thick,blue,line width=1.2pt] (j) -- (l);
\draw[->,dashed,red,line width=1.2pt] (i) -- (l);
\draw[->,dashed,red,line width=1.2pt] (j) -- (k);
\end{tikzpicture}
\caption*{C2: $z_\sigma = 1$}
\end{subfigure}
\begin{subfigure}[b]{0.22\textwidth}
\centering
\begin{tikzpicture}[scale=1.4]
\node[circle,draw,minimum size=0.5cm] (i) at (0,1.5) {\small i};
\node[circle,draw,minimum size=0.5cm] (j) at (1.5,1.5) {\small j};
\node[circle,draw,minimum size=0.5cm] (k) at (0,0) {\small k};
\node[circle,draw,minimum size=0.5cm] (l) at (1.5,0) {\small l};
\draw[->,dashed,red,line width=1.2pt] (i) -- (k);
\draw[->,dashed,red,line width=1.2pt] (j) -- (l);
\draw[->,thick,blue,line width=1.2pt] (i) -- (l);
\draw[->,thick,blue,line width=1.2pt] (j) -- (k);
\end{tikzpicture}
\caption*{C2: $z_\sigma = -1$}
\end{subfigure}

\vspace{0.4cm}

\begin{subfigure}[b]{0.22\textwidth}
\centering
\begin{tikzpicture}[scale=1.4]
\node[circle,draw,minimum size=0.5cm] (i) at (0,1.5) {\small i};
\node[circle,draw,minimum size=0.5cm] (j) at (1.5,1.5) {\small j};
\node[circle,draw,minimum size=0.5cm] (k) at (0,0) {\small k};
\node[circle,draw,minimum size=0.5cm] (l) at (1.5,0) {\small l};
\draw[->,thick,blue,line width=1.2pt] (i) -- (j);
\draw[->,thick,blue,line width=1.2pt] (k) -- (l);
\draw[->,dashed,red,line width=1.2pt] (i) -- (l);
\draw[->,dashed,red,line width=1.2pt] (k) -- (j);
\end{tikzpicture}
\caption*{C3: $z_\sigma = 1$}
\end{subfigure}
\begin{subfigure}[b]{0.22\textwidth}
\centering
\begin{tikzpicture}[scale=1.4]
\node[circle,draw,minimum size=0.5cm] (i) at (0,1.5) {\small i};
\node[circle,draw,minimum size=0.5cm] (j) at (1.5,1.5) {\small j};
\node[circle,draw,minimum size=0.5cm] (k) at (0,0) {\small k};
\node[circle,draw,minimum size=0.5cm] (l) at (1.5,0) {\small l};
\draw[->,dashed,red,line width=1.2pt] (i) -- (j);
\draw[->,dashed,red,line width=1.2pt] (k) -- (l);
\draw[->,thick,blue,line width=1.2pt] (i) -- (l);
\draw[->,thick,blue,line width=1.2pt] (k) -- (j);
\end{tikzpicture}
\caption*{C3: $z_\sigma = -1$}
\end{subfigure}
\begin{subfigure}[b]{0.22\textwidth}
\centering
\begin{tikzpicture}[scale=1.4]
\node[circle,draw,minimum size=0.5cm] (i) at (0,1.5) {\small i};
\node[circle,draw,minimum size=0.5cm] (j) at (1.5,1.5) {\small j};
\node[circle,draw,minimum size=0.5cm] (k) at (0,0) {\small k};
\node[circle,draw,minimum size=0.5cm] (l) at (1.5,0) {\small l};
\draw[->,thick,blue,line width=1.2pt] (j) -- (i);
\draw[->,thick,blue,line width=1.2pt] (k) -- (l);
\draw[->,dashed,red,line width=1.2pt] (j) -- (l);
\draw[->,dashed,red,line width=1.2pt] (k) -- (i);
\end{tikzpicture}
\caption*{C4: $z_\sigma = 1$}
\end{subfigure}
\begin{subfigure}[b]{0.22\textwidth}
\centering
\begin{tikzpicture}[scale=1.4]
\node[circle,draw,minimum size=0.5cm] (i) at (0,1.5) {\small i};
\node[circle,draw,minimum size=0.5cm] (j) at (1.5,1.5) {\small j};
\node[circle,draw,minimum size=0.5cm] (k) at (0,0) {\small k};
\node[circle,draw,minimum size=0.5cm] (l) at (1.5,0) {\small l};
\draw[->,dashed,red,line width=1.2pt] (j) -- (i);
\draw[->,dashed,red,line width=1.2pt] (k) -- (l);
\draw[->,thick,blue,line width=1.2pt] (j) -- (l);
\draw[->,thick,blue,line width=1.2pt] (k) -- (i);
\end{tikzpicture}
\caption*{C4: $z_\sigma = -1$}
\end{subfigure}

\vspace{0.4cm}

\begin{subfigure}[b]{0.22\textwidth}
\centering
\begin{tikzpicture}[scale=1.4]
\node[circle,draw,minimum size=0.5cm] (i) at (0,1.5) {\small i};
\node[circle,draw,minimum size=0.5cm] (j) at (1.5,1.5) {\small j};
\node[circle,draw,minimum size=0.5cm] (k) at (0,0) {\small k};
\node[circle,draw,minimum size=0.5cm] (l) at (1.5,0) {\small l};
\draw[->,thick,blue,line width=1.2pt] (j) -- (i);
\draw[->,thick,blue,line width=1.2pt] (l) -- (k);
\draw[->,dashed,red,line width=1.2pt] (j) -- (k);
\draw[->,dashed,red,line width=1.2pt] (l) -- (i);
\end{tikzpicture}
\caption*{C5: $z_\sigma = 1$}
\end{subfigure}
\begin{subfigure}[b]{0.22\textwidth}
\centering
\begin{tikzpicture}[scale=1.4]
\node[circle,draw,minimum size=0.5cm] (i) at (0,1.5) {\small i};
\node[circle,draw,minimum size=0.5cm] (j) at (1.5,1.5) {\small j};
\node[circle,draw,minimum size=0.5cm] (k) at (0,0) {\small k};
\node[circle,draw,minimum size=0.5cm] (l) at (1.5,0) {\small l};
\draw[->,dashed,red,line width=1.2pt] (j) -- (i);
\draw[->,dashed,red,line width=1.2pt] (l) -- (k);
\draw[->,thick,blue,line width=1.2pt] (j) -- (k);
\draw[->,thick,blue,line width=1.2pt] (l) -- (i);
\end{tikzpicture}
\caption*{C5: $z_\sigma = -1$}
\end{subfigure}
\begin{subfigure}[b]{0.22\textwidth}
\centering
\begin{tikzpicture}[scale=1.4]
\node[circle,draw,minimum size=0.5cm] (i) at (0,1.5) {\small i};
\node[circle,draw,minimum size=0.5cm] (j) at (1.5,1.5) {\small j};
\node[circle,draw,minimum size=0.5cm] (k) at (0,0) {\small k};
\node[circle,draw,minimum size=0.5cm] (l) at (1.5,0) {\small l};
\draw[->,thick,blue,line width=1.2pt] (k) -- (i);
\draw[->,thick,blue,line width=1.2pt] (l) -- (j);
\draw[->,dashed,red,line width=1.2pt] (k) -- (j);
\draw[->,dashed,red,line width=1.2pt] (l) -- (i);
\end{tikzpicture}
\caption*{C6: $z_\sigma = 1$}
\end{subfigure}
\begin{subfigure}[b]{0.22\textwidth}
\centering
\begin{tikzpicture}[scale=1.4]
\node[circle,draw,minimum size=0.5cm] (i) at (0,1.5) {\small i};
\node[circle,draw,minimum size=0.5cm] (j) at (1.5,1.5) {\small j};
\node[circle,draw,minimum size=0.5cm] (k) at (0,0) {\small k};
\node[circle,draw,minimum size=0.5cm] (l) at (1.5,0) {\small l};
\draw[->,dashed,red,line width=1.2pt] (k) -- (i);
\draw[->,dashed,red,line width=1.2pt] (l) -- (j);
\draw[->,thick,blue,line width=1.2pt] (k) -- (j);
\draw[->,thick,blue,line width=1.2pt] (l) -- (i);
\end{tikzpicture}
\caption*{C6: $z_\sigma = -1$}
\end{subfigure}

\caption{All 12 informative quadruple configurations for 
nodes \{i,j,k,l\}. Blue solid arrows: links present 
($\tilde{y}=1$). Red dashed: links absent ($\tilde{y}=0$). 
Each pair corresponds to a sender-receiver assignment: 
C1: \{i,l\} send to \{j,k\}; C2: \{i,j\} send to \{k,l\}; 
C3: \{i,k\} send to \{j,l\}; C4: \{j,k\} send to \{i,l\}; 
C5: \{j,l\} send to \{i,k\}; C6: \{k,l\} send to \{i,j\}. 
Within each assignment, the two senders connect to opposite 
receivers, yielding $z_\sigma \in \{-1, 1\}$.}
\label{figure2:correct}
\end{figure}

For instance, assuming that only the links represented by 
the solid blue lines are present in Figure 
\ref{figure2:correct}, the different configurations provide 
the same contribution of the unobserved heterogeneity to 
the likelihood, such that the conditional frequency to 
which each is observed depends only on the variation given 
by the covariates associated with each. In other words, in 
conditioning on the degree sequences of quadruples (since 
they are the same in all subgraphs, given that only the 
links represented by the solid blue lines are present), the 
only variation is the location of the links. This intuition 
aligns with Lemma~\ref{lemma_sufficient}: the sums across 
each dimension are sufficient statistics for the fixed 
effects. At the same time, the conditioning events guarantee 
that for a node there is variation in the outcomes such that 
the common parameters can be identified. This feature cannot 
be seen in Figure~\ref{Figure3:exampletetrads}, where none 
of the configurations are informative: in (a) all links are 
absent, in (b) all links are present, and in (c) both 
senders connect to the same receiver. In all three cases 
$z_\sigma = 0$, so that no variation remains after 
conditioning.
\section{Proofs of Section \ref{asymptotics}}\label{appendix_asymptotics} \label{appendix_derivation}

In the following, for the sake of simplification of notation, I omit the subscript $y$ that denotes the threshold of the outcome variable. The proofs in this appendix follow the broad structure of \cite{jochmans2018semiparametric} for the estimator of \cite{charbonneau2017multiple}, with more detailed derivations (particularly of the variance order) and modifications that facilitate the extension to the joint asymptotic distribution across thresholds in Section \ref{joint_distribution}. References to \cite{jochmans2018semiparametric} below highlight specific points where my derivation departs from or extends theirs.

\subsection{Proof of Theorem \ref{theorem1}}

Consider the limit of the objective function: 
$$ \lim_{n \xrightarrow[]{} \infty} (m_n p_n)^{-1}  \mathbb{E}(L_n (\boldsymbol{\theta})), $$
where $L_n(\bo \theta)=\sum_{\sigma \in \mathcal{N}_{m_n}} 1\left\{z_\sigma=1\right\}
\log F\left(\bo r_\sigma^{\prime} \bo \theta\right)+1\left\{z_\sigma=-1\right\} \log \left(1-F\left(\bo r_\sigma^{\prime} \bo \theta\right)\right)$. From Assumption 4, $\bo \theta_0$ is the unique global maximizer of the limit quantity above. This follows due the rank condition, which ensures that the Hessian at $\bo \theta_0$ is negative definite, combined with the concavity of the function. Using Theorem 2.7 in \citet{newey1994chapter}, since the function is concave and has a unique maximizer, the consistency of the estimator, $\bo \theta_n \xrightarrow{p} \bo \theta_0$, will follow from pointwise convergence in probability of the normalized objective function, $(m_n^*)^{-1} L_n(\bo \theta)$, to $(m_n p_n)^{-1} \mathbb{E}(L_n(\bo \theta))$. 

To show that this is the case, start by considering the log-likelihood function:
$$
L_n(\bo \theta)=\sum_{\sigma \in \mathcal{N}_{m_n}} \ell_\sigma(\bo \theta),
$$
where $\ell_\sigma(\bo \theta)$ denotes the log-likelihood contribution of quadruple $\sigma$, which from the above, is given by:
$$\ell_\sigma(\bo \theta)=
 1\left\{z_\sigma=1\right\}
\log F\left(\bo r_\sigma^{\prime} \bo \theta\right)+1\left\{z_\sigma=-1\right\} \log \left(1-F\left(\bo r_\sigma^{\prime} \bo \theta\right)\right)$$
Remembering that $m_n^* = \sum_{\sigma \in \mathcal{N}_{m_n}} 1 \{z_\sigma \in\{-1,1\}\}$, $m_n$ is the total number of quadruples, and $
p_n=\frac{E\left(m_n^*\right)}{m_n}$, then:
$$
\frac{L_n(\bo \theta)}{m_n^*}-\frac{\mathbb{E}\left(L_n(\bo \theta)\right)}{\mathbb{E}\left(m_n^*\right)}=\frac{\sum_{\sigma \in \mathcal{N}_m} \ell_\sigma(\bo \theta)-\mathbb{E}\left(\ell_\sigma(\bo \theta)\right)}{\mathbb{E}\left(m_n^*\right)}+\frac{\sum_{\sigma \in \mathcal{N}_m} \ell_\sigma(\bo \theta)}{\mathbb{E}\left(m_n^*\right)}\left(\frac{\mathbb{E}\left(m_n^*\right)}{m_n^*}-1\right)
$$
The proof proceeds with showing that each of the right-hand side terms converges to zero in probability.\\
\\
\textbf{First term.} Consider the different cases, for $z_\sigma = 1$:
$$ \operatorname{Pr} (z_\sigma = 1 \mid z_\sigma \in\{-1,1\}, \bo x_\sigma, \left\{\alpha_i, \gamma_i\right\}_n) = \frac{1}{1 + \exp(-\bo r_\sigma'\bo \theta)}$$
$$ \ell_\sigma (\bo \theta) = - \log (1 + \exp (-\bo r_\sigma'\bo \theta))$$
And, for $z_\sigma = -1$:
$$ \operatorname{Pr} (z_\sigma = -1 \mid z_\sigma \in\{-1,1\}, \bo x_\sigma, \left\{\alpha_i, \gamma_i\right\}_n) = \frac{\exp(-\bo r_\sigma'\bo \theta)}{1 + \exp(-\bo r_\sigma'\bo \theta)}$$
$$ \ell_\sigma (\bo \theta) = \log (\exp (-\bo r_\sigma'\bo \theta)) - \log (1 + \exp (-\bo r_\sigma'\bo \theta)) = - \bo r_\sigma'\bo \theta - \log (1 + \exp (-\bo r_\sigma'\bo \theta))$$
I proceed with finding bounds for each case, such that, for $z_\sigma = 1$, $|\ell_\sigma (\bo \theta)| \leq |\bo r_\sigma'\bo \theta| + \log(2)$ and, for $z_\sigma = -1$, $|\ell_\sigma (\bo \theta)| \leq 2|\bo r_\sigma'\bo \theta| + \log(2)$ by the triangle inequality. Combining both cases, by an application of Cauchy-Schwarz inequality, it follows that: $|\ell_\sigma (\bo \theta)| \leq 2||\bo r_\sigma||\hspace{0.05cm} ||\theta|| + \log(2)$. Since $\mathbb{E}(||\bo r_\sigma||^2)$ is finite, following Assumption \ref{assumption3}, and $\Theta$ is compact, the variance of $\ell_\sigma$ exists and is uniformly bounded in $\sigma$, since: $ \mathbb{E}(\ell_\sigma^2(\bo \theta)) \leq (\log(2))^2 + 4 \log(2) \mathbb{E}(||\bo r_\sigma||) \hspace{0.05cm} ||\bo \theta|| + 4 \mathbb{E}(||\bo r_\sigma||^2) \hspace{0.05cm} ||\bo \theta||^2$. \\
\\
Therefore, by Chebyshev's inequality, it holds that, for any $\varepsilon > 0$, 
$$
\operatorname{Pr}\left(\left|\frac{\sum_{\sigma \in \mathcal{N}_{m_n}} \ell_\sigma(\bo \theta)-\mathbb{E}\left(\ell_\sigma(\bo \theta)\right)}{\mathbb{E}\left(m_n^*\right)}\right|>\varepsilon\right) \leq \frac{1}{\varepsilon^2} \frac{\mathbb{E}\left(\left|\sum_{\sigma \in \mathcal{N}_{m_n}} \ell_\sigma(\bo \theta)-\mathbb{E}\left(\ell_\sigma(\bo \theta)\right)\right|^2\right)}{\mathbb{E}\left(m_n^*\right)^2},
$$
for each $\bo \theta \in \Theta$. The next step is to focus on the term $\mathbb{E}\left(\left|\sum_{\sigma \in \mathcal{N}_{m_n}} \ell_\sigma(\bo \theta)-E\left(\ell_\sigma(\bo \theta)\right)\right|^2\right)$, by expanding the squared sum to analyse the variance and covariances structures, where the uniform bound on $|\ell_\sigma (\bo\theta)|$ becomes crucial. Now, it follows that $\mathbbm{E} \left( \left \rvert  \sum_{\sigma \in \mathcal{N}_{m_n}} \ell_\sigma(\boldsymbol{\theta}) - \mathbbm{E}(\ell_\sigma(\boldsymbol{\theta}))  \right \rvert^2 \right)$ expands to $\sum_{\sigma \in \mathcal{N}_{m_n}} \sum_{\sigma' \in \mathcal{N}_{m_n}} \operatorname{Cov} (\ell_\sigma(\bo \theta), \ell_{\sigma'}(\bo \theta))$, that is:
\begin{adjustwidth}{-0.25in}{-0.25in}
$$ \mathbbm{E} \left( \left \rvert  \sum_{\sigma \in \mathcal{N}_{m_n}} \ell_\sigma(\boldsymbol{\theta}) - \mathbbm{E}(\ell_\sigma(\boldsymbol{\theta}))  \right \rvert^2 \right) = \mathbbm{E} \left( \left(  \sum_{\sigma \in m_n} \ell_\sigma(\boldsymbol{\theta}) - \mathbbm{E}(\ell_\sigma(\boldsymbol{\theta}))  \right) \left(  \sum_{\sigma' \in m_n} \ell_{\sigma'}(\boldsymbol{\theta}) - \mathbbm{E}(\ell_{\sigma'}(\boldsymbol{\theta}))  \right) \right). $$
\end{adjustwidth}
Note that a pair of quadruples $\sigma = \sigma\{i,l;j,k\}$ and $\sigma' = \sigma \{i',l';j',k'\}$ only contributes to the covariance if they share at least one node in common, i.e., a pair of quadruples with only distinct nodes are independent by Assumption \ref{assumption1}. 

More specifically, the product above contains $O(n^8)$ terms. Consider $\sigma$ and $\sigma'$ that share indices in common, and therefore, having dependence and contributing to the covariance. If they share the index $i = i'$, there are $O(n^2)$ choices for $l$ and $l'$ to be distinct, and $O(n^4)$ choices of $j,j',k,k'$ to be distinct. Since there are $n$ different combinations for $i=i'$, the number of quadruples that share at least one node in common is of the order $O(n^7)$ (moreover, the quadruples that share two or more nodes in common are of smaller order than the ones sharing one node in common). Put differently, given the $n$ choices for the common node $i = i'$, there are $O(n^3)$ different ways of completing the first quadruple, and $O(n^3)$ different ways of completing the second quadruple, yielding $O(n^7)$ pairs in total. This also implies that for a fixed quadruple $\sigma = \sigma\{i,l;j,k\}$ there are $O(n^3)$ other quadruples sharing at least one index. 

By indicating the quadruples that share common indices with $1 \{\sigma \cap \sigma' \neq \emptyset \}$, and an application of the Cauchy-Schwarz inequality:
\begin{align}
    \mathbbm{E} &\left( \left(  \sum_{\sigma \in \mathcal{N}_{m_n}} \ell_\sigma(\boldsymbol{\theta}) - \mathbbm{E}(\ell_\sigma(\boldsymbol{\theta}))  \right) \left(  \sum_{\sigma' \in \mathcal{N}_{m_n}} \ell_{\sigma'}(\boldsymbol{\theta}) - \mathbbm{E}(\ell_{\sigma'}(\boldsymbol{\theta}))  \right) \right) \nonumber \\
    &= \sum_{\sigma \in \mathcal{N}_{m_n}} \sum_{\sigma' \in \mathcal{N}_{m_n}}\mathbbm{E} \left( \left( \ell_\sigma(\boldsymbol{\theta}) - \mathbbm{E}(\ell_\sigma(\boldsymbol{\theta}))  \right) \left( \ell_{\sigma'}(\boldsymbol{\theta}) - \mathbbm{E}(\ell_{\sigma'}(\boldsymbol{\theta}))  \right) \right) \nonumber \\
    &= \sum_{\sigma \in \mathcal{N}_{m_n}} \sum_{\sigma' \in \mathcal{N}_{m_n}} 1 \{\sigma \cap \sigma' \neq \emptyset \} \mathbbm{E} \left( \left( \ell_\sigma(\boldsymbol{\theta}) - \mathbbm{E}(\ell_\sigma(\boldsymbol{\theta}))  \right) \left( \ell_{\sigma'}(\boldsymbol{\theta}) - \mathbbm{E}(\ell_{\sigma'}(\boldsymbol{\theta}))  \right) \right) \nonumber \\
    &= \sum_{\sigma \in \mathcal{N}_{m_n}} \sum_{\sigma^{\prime} \in \mathcal{N}_{m_n}} 1\left\{\sigma \cap \sigma^{\prime} \neq \emptyset\right\} \operatorname{Cov}\left(\ell_\sigma(\bo \theta), \ell_{\sigma^{\prime}}(\bo \theta)\right) \nonumber \\
    &\leq \sum_{\sigma \in \mathcal{N}_{m_n}} \sum_{\sigma^{\prime} \in \mathcal{N}_{m_n}} 1\left\{\sigma \cap \sigma^{\prime} \neq \emptyset\right\}\left|\operatorname{Cov}\left(\ell_\sigma(\bo \theta), \ell_{\sigma^{\prime}}(\bo \theta)\right)\right|  \nonumber \\
    &\leq \sum_{\sigma \in \mathcal{N}_{m_n}} \sum_{\sigma' \in \mathcal{N}_{m_n}} 1 \{\sigma \cap \sigma' \neq \emptyset \} \sqrt{\text{Var} (\ell_\sigma(\boldsymbol{\theta}))} \sqrt{\text{Var} (\ell_{\sigma'}(\boldsymbol{\theta}))} \nonumber \\
    &\leq \sum_{\sigma \in \mathcal{N}_{m_n}} \sqrt{\text{Var} (\ell_\sigma(\boldsymbol{\theta}))}\sum_{\sigma' \in \mathcal{N}_{m_n}} 1 \{\sigma \cap \sigma' \neq \emptyset \} \sqrt{\text{Var} (\ell_{\sigma'}(\boldsymbol{\theta}))} \nonumber \\
    &= \sum_{\sigma \in \mathcal{N}_{m_n}} O(\sqrt{p_\sigma}) \sum_{\sigma' \in \mathcal{N}_{m_n}} 1 \{\sigma \cap \sigma' \neq \emptyset \} O(\sqrt{p_{\sigma'}})  \nonumber \\
    &= O(n^3 m_n p_n) = O(n^7 p_n),
\end{align}

\noindent The second to last equality follows from the fact that, when considering the variance term, $\text{Var} (\ell_\sigma(\boldsymbol{\theta})) = \mathbb{E}(\ell_\sigma^2(\bo \theta)) - \mathbb{E}(\ell_\sigma(\bo \theta))^2$, $\ell_\sigma = 0$, when a quadruple is not informative, and hence $\mathbb{E}(\ell_\sigma^2(\bo \theta)) = \operatorname{Pr} (z_\sigma \in \{-1,1\}) \mathbb{E} (\ell_\sigma^2(\bo \theta) \mid z_\sigma \in \{-1,1\})$, and similarly for $\mathbb{E}(\ell_\sigma(\bo \theta))$. Since the conditional expectations are uniformly bounded, by considering a finite constant $B$, $\text{Var} (\ell_\sigma(\boldsymbol{\theta})) \leq p_\sigma B$, where $p_\sigma = \operatorname{Pr} (z_\sigma \in \{-1,1\})$ (conditional on fixed effects). Finally, remembering that  $\sum_{\sigma \in \mathcal{N}_{m_n}} p_\sigma = \mathbb{E}(m_n^*) = m_n p_n$, the final equality follows by considering the two sums separately.

First, focusing on the term $\sum_{\sigma \in \mathcal{N}_{m_n}} \sqrt{\operatorname{Var}\left(\ell_\sigma\left(\boldsymbol{\theta}\right)\right)}=\sum_{\sigma \in \mathcal{N}_{m_n}} O\left(\sqrt{p_\sigma}\right)$. By Jensen's inequality (since square root is concave) $\frac{1}{m_n} \sum_\sigma \sqrt{p_\sigma} \leq \sqrt{\frac{1}{m_n} \sum_\sigma p_\sigma}=\sqrt{p_n}$, and therefore $\sum_{\sigma \in \mathcal{N}_{m n}} \sqrt{p_\sigma} \leq m_n \sqrt{p_n}=O\left(n^4 \sqrt{p_n}\right)$ .

For the term $\sum_{\sigma^{\prime} \in \mathcal{N}_{m n}} 1 \left\{\sigma \cap \sigma^{\prime} \neq \emptyset\right\} \sqrt{\operatorname{Var}\left(\ell_{\sigma^{\prime}}\left(\boldsymbol{\theta}\right)\right)}$, note that for any fixed $\sigma$, there are $O\left(n^3\right)$ quadruples $\sigma^{\prime}$ sharing a node with $\sigma$. Then, similarly, applying Jensen's inequality to this subset: $\sum_{\sigma^{\prime} \in \mathcal{N}_{m n}} 1\left\{\sigma \cap \sigma^{\prime} \neq \emptyset\right\} \sqrt{p_{\sigma^{\prime}}}=O\left(n^3 \sqrt{p_n}\right)$. Which leads to the desired result.

Another way to see this result is that, the variance decomposes as:
$$
\mathbb{E}\left(\left|\sum_{\sigma \in \mathcal{N}_{m_n}} \ell_\sigma(\bo \theta)-\mathbb{E}\left(\ell_\sigma(\bo \theta)\right)\right|^2\right)=\sum_{\sigma \in \mathcal{N}_{m_n}} \operatorname{Var}\left(\ell_\sigma(\bo \theta)\right)+\sum_{\sigma \neq \sigma^{\prime}} \operatorname{Cov}\left(\ell_\sigma(\bo \theta), \ell_{\sigma^{\prime}}(\bo \theta)\right)
$$
The first term (variances) contributes $O\left(m_n p_n\right)$ since $\sum_\sigma p_\sigma=m_n p_n$.
The second term (covariances) is non-zero only for dependent quadruples. With $O\left(n^7\right)$ pairs sharing nodes and typical covariance of order $O\left(p_n\right)$, this contributes $O\left(n^7 p_n\right)=O\left(n^3 m_n p_n \right)$.

Moreover, as $\mathbbm{E}(m^*_n) = m_n p_n$ and $m_n = O(n^4)$, it follows that:
$$ \frac{\mathbbm{E} \left( \left \rvert  \sum_{\sigma \in \mathcal{N}_{m_n}} \ell_\sigma(\boldsymbol{\theta}) - \mathbbm{E}(\ell_\sigma(\boldsymbol{\theta}))  \right \rvert^2 \right)}{\mathbbm{E}(m^*_n)^2} = O\left( \frac{1}{n p_n}\right), $$
which converges to zero by Assumption \ref{assumption4}, as $n \xrightarrow[]{} \infty$. Therefore:
$$ \lim_{n \xrightarrow[]{} \infty} \text{Pr} \left( \left \lvert \frac{\sum_{\sigma \in \mathcal{N}_{m_n}} \ell_\sigma(\boldsymbol{\theta}) - \mathbbm{E}(\ell_\sigma(\boldsymbol{\theta}))}{\mathbbm{E}(m^*_n)} \right \lvert > \varepsilon \right) \leq \frac{1}{\varepsilon^2} \frac{\mathbbm{E} \left( \left \rvert  \sum_{\sigma \in \mathcal{N}_{m_n}} \ell_\sigma(\boldsymbol{\theta}) - \mathbbm{E}(\ell_\sigma(\boldsymbol{\theta}))  \right \rvert^2 \right)}{\mathbbm{E}(m^*_n)^2} = 0 $$
for any $\varepsilon > 0$ and all $\boldsymbol{\theta} \in \Theta$.\\
\\
\textbf{Second term.} For the second term, the summands are bounded uniformly in $\sigma$ and do not depend on $\boldsymbol{\theta}$. Following the same arguments as before, it is easy to verify that $ \left( \frac{m^*_n}{m_n} - p_n \right) \xrightarrow[]{p} 0$, and therefore, $\frac{m^*_n}{\mathbbm{E}(M^*_n)} \xrightarrow[]{p} 1$. The first step to verify this statement is to apply the Chebyshev's inequality:
$$ \operatorname{Pr} \left( \left \lvert \frac{m_n^*}{m_n} - p_n \right \lvert > \varepsilon \right) \leq \frac{\operatorname{Var}(m_n^*/m_n)}{\varepsilon^2} $$
which holds for any $\varepsilon > 0$. It is important to reiterate here that $p_n$ is defined as: $p_n = \frac{\mathbb{E}(m_n^*)}{m_n} = \frac{1}{m_n} \sum_{\sigma \in \mathcal{N}_{m_n}} \operatorname{Pr} (z_\sigma \in \{-1,1\})$, since $m_n^* = \sum_{\sigma \in \mathcal{N}_{m_n}} 1 \{z_\sigma \in \{-1,1\}\}$. Moreover, the variance term in the numerator of the right-hand side term is defined as:
$$ \operatorname{Var} \left( \frac{m_n^*}{m_n}\right) = \frac{1}{m_n^2} \operatorname{Var} (m_n^*) $$ 
where
$$ \operatorname{Var} (m_n^*) = \sum_{\sigma \in \mathcal{N}_{m_n}} \operatorname{Var} ( 1 \{z_\sigma \in \{-1,1\}\}) + \sum_{\sigma \neq \sigma'} \operatorname{Cov} (1 \{z_\sigma \in \{-1,1\}\}, 1 \{z_\sigma' \in \{-1,1\}\})$$
Note that the variance terms are uniformly bounded, since $\operatorname{Var} ( 1 \{z_\sigma \in \{-1,1\}\}) = p_\sigma (1-p_\sigma)$, and similarly for the covariance terms. Similarly to before, quadruples that are independent have contribution zero to the covariance, and there are $O(n^7)$ dependent quadruples. Therefore, $\operatorname{Var} (m_n^*) = O(m_n p_n) + O(n^3 m_n p_n) = O(n^3 m_n p_n)$. Thus,
$$ \operatorname{Var} \left( \frac{m_n^*}{m_n}\right) = \frac{O(p_n)}{n} \xrightarrow{}0$$
due to Assumption \ref{assumption4}. Such that:
$$ \lim_{n \xrightarrow[]{} \infty} \operatorname{Pr} \left( \left \lvert \frac{m_n^*}{m_n} - p_n \right \lvert > \varepsilon \right) \leq \frac{\operatorname{Var}(m_n^*/m_n)}{\varepsilon^2} $$
for any $\varepsilon >0$, which completes the proof. The same variance bound gives $\operatorname{Var}(m_n^*/(m_n p_n)) = O(1)/(n p_n) \xrightarrow{} 0$, and since $\mathbb{E}(m_n^*/(m_n p_n)) = 1$, it also follows that $m_n^*/(m_n p_n) \xrightarrow{p} 1$, which is used in the proof of Theorem \ref{theorem2}.

\subsection{Proof of Theorem 2}

The proof proceeds in 4 steps:

\noindent \textbf{Step 1:} Show that the score vector, evaluated at the true parameter $\bo \theta_0$, is asymptotically equivalent to its projection.

\noindent \textbf{Step 2:} Obtain the limit distribution of the projection and express it under the empirical normalization by $\boldsymbol{\Upsilon}_n(\bo \theta_0)$.

\noindent \textbf{Step 3:} Prove that the Hessian of the conditional likelihood, normalized by $m_n^*$ converges to a well behaved limit uniformly on $\Theta$.

\noindent \textbf{Step 4:} Collect results and combine them with a mean-value expansion of the first-order condition around the true value, to obtain the limit distribution of the estimator $\bo \theta_n$.

Remembering that the score vector given by:
\begin{align} \label{score_rew}
    \boldsymbol{S}_n(\boldsymbol{\theta}) = \sum_{i}^n \sum_{j \neq i} \sum_{i' \neq i, j} \sum_{j' \neq i,j,i'} \underbrace{\frac{\partial}{\partial \boldsymbol{\theta}} \ell_\sigma(\boldsymbol{\theta})}_{\boldsymbol{s}_\sigma},
\end{align}
where, as before,
\begin{align} \label{summand_score}
    \bo s(\sigma\{i,i';j,j'\}; \boldsymbol{\theta}) = \boldsymbol{r}_\sigma \{ 1 \{ z_\sigma = 1 \} (1 - \Lambda(\boldsymbol{r}_\sigma' \boldsymbol{\theta})) - 1 \{ z_\sigma = -1 \} \Lambda(\boldsymbol{r}_\sigma' \boldsymbol{\theta}) \}
\end{align}
This sum counts all $m_n=n(n-1)(n-2)(n-3)$ ordered quadruples without imposing permutation invariance of sender-receiver roles, which differs from \citet{jochmans2018semiparametric}.\\
\\
\noindent \textbf{Step 1.} \\
By defining the information set $\mathcal{F}_n = \{\{\boldsymbol{x}_{ij}\}_{n,n}, \{\alpha_{i}\}_n, \{\gamma_{j}\}_n\}$, the projection of the score, evaluated at the true parameter, is given by:
\begin{align} \label{hajek}
    \boldsymbol{V}_n (\boldsymbol{\theta}_{0}) &=\sum_{i}^n \sum_{j \neq i} \sum_{i' \neq i, j} \sum_{j' \neq i,j,i'} \mathbbm{E}(\boldsymbol{s}(\sigma \{i,i';j,j'\}; \boldsymbol{\theta}_{0}) \mid \tilde{y}_{ij}, \mathcal{F}_n) \nonumber \\
    &= \sum_{i} \sum_{j\neq i} \boldsymbol{v}_{ij}(\boldsymbol{\theta}_{0})
\end{align}
where $\boldsymbol{v}_{ij}(\boldsymbol{\theta}_{0}) = \sum_{i' \neq i,j} \sum_{j' \neq i,j,i'} \mathbbm{E}(\boldsymbol{s}(\sigma\{i,i';j,j'\}; \boldsymbol{\theta}_{0}) \mid \tilde{y}_{ij}, \mathcal{F}_n)$.
Note that while this projection resembles a H\'{a}jek projection, as in  \citep{serfling2009approximation}, it is not formally one, since (i) the kernel is not symmetric in the arguments, and (ii) the conditioning terms are more involved - since they are not only related to the dyad $i,j$, but also to node-specific characteristis ($\alpha_i$, $\gamma_j$).

While \citet{jochmans2018semiparametric} derives asymptotic normality by explicitly computing the projection, I follow an approach inspired by \citet{graham2017econometric}, establishing asymptotic equivalence through tools from U-statistic theory without detailed projection calculations, leveraging on the conditional independence structure of link formation.
 
Before moving to the next steps of the proof, I present some intermediate results that will become relevant.\\
\\
\noindent \textit{Intermediate result 1.} \\
Note that $\mathbbm{E}(\boldsymbol{s}(\sigma;\boldsymbol{\theta}_{0})) = 0$, since, from sufficiency, and by defining $\mathcal{F}_\sigma$ to be the collection of covariates and fixed effects for the nodes in the quadruple $\sigma$, it follows that:
$$ \operatorname{Pr}(z_\sigma = 1 \mid \mathcal{F}_\sigma) = \Lambda(\boldsymbol{r}_\sigma'\boldsymbol{\theta}_{0}) Pr(z_\sigma \{-1,1\} \mid \mathcal{F}_\sigma) $$
$$ \operatorname{Pr}(z_\sigma = -1 \mid \mathcal{F}_\sigma) = (1-\Lambda(\boldsymbol{r}_\sigma'\boldsymbol{\theta}_{0})) Pr(z_\sigma \{-1,1\} \mid \mathcal{F}_\sigma) $$
Then,
\begin{align}
    \mathbbm{E}[\boldsymbol{s}(\sigma,\boldsymbol{\theta}_{0})] &= \mathbbm{E}\left[ \mathbbm{E} [ 1 \{z_\sigma = 1\} (1-\Lambda(\boldsymbol{r}_\sigma'\boldsymbol{\theta}_{0}))\boldsymbol{r}_\sigma' -  1 \{z_\sigma = -1\}  \Lambda(\boldsymbol{r}_\sigma'\boldsymbol{\theta}_{0}) \boldsymbol{r}_\sigma' \mid \mathcal{F}_\sigma  ]\right] \nonumber \\
    &= \mathbbm{E}\left[ \mathbbm{E} [ 1 \{z_\sigma = 1\} \mid \mathcal{F}_\sigma] (1-\Lambda(\boldsymbol{r}_\sigma'\boldsymbol{\theta}_{0}))\boldsymbol{r}_\sigma' - \mathbbm{E} [ 1 \{z_\sigma = -1\} \mid \mathcal{F}_\sigma ] \Lambda(\boldsymbol{r}_\sigma'\boldsymbol{\theta}_{0}) \boldsymbol{r}_\sigma'\right] \nonumber \\
    &= 0.
\end{align}
\noindent \textit{Intermediate result 2.} \\
From the first intermediate result, it also follows that:
\begin{align}
    \mathbbm{E} [\boldsymbol{v}_{ij}(\boldsymbol{\theta}_{0})] &= \sum_{i' \neq i,j} \sum_{j' \neq i,j,i'} \mathbbm{E} \left[ \mathbbm{E}[\boldsymbol{s}(\sigma\{i,i';j,j'\}, \boldsymbol{\theta}_{0}) \mid \tilde{y}_{ij}, \mathcal{F}_n] \right] \nonumber \\
    &= \sum_{i' \neq i,j} \sum_{j' \neq i,j,i'} \mathbbm{E} [\boldsymbol{s}(\sigma\{i,i';j,j'\}, \boldsymbol{\theta}_{0})] =0
\end{align}
Importantly, from this result, it also follows that $\mathbbm{E} [\boldsymbol{V}_n (\boldsymbol{\theta}_{0})] = 0$. \\
\\
\noindent \textit{Intermediate result 3.} \\
First, note that, similarly to \citet{graham2017econometric} link decisions are conditionally independent, that is, considering nodes $i$, $j$, $i'$, and $j'$, conditional on these agents observed and unobserved characteristics, the events that \textit{i and j are connected}, \textit{i and j' are connected}, \textit{i' and j are connected}, and \textit{i' and j' are connected} are independent of each other. Then, it follows that:
\begin{align}
    \mathbbm{E} [\boldsymbol{v}_{ij}(\boldsymbol{\theta}_{0}) \boldsymbol{v}_{i'j'}(\boldsymbol{\theta}_{0})] &= \mathbbm{E} \left[ \mathbbm{E} [\boldsymbol{v}_{ij}(\boldsymbol{\theta}_{0}) \boldsymbol{v}_{i'j'}(\boldsymbol{\theta}_{0}) \mid \mathcal{F}_n] \right] \nonumber \\
    &= \mathbbm{E} \left[ \mathbbm{E} [\boldsymbol{v}_{ij}(\boldsymbol{\theta}_{0}) \mid \mathcal{F}_n] \mathbbm{E} [\boldsymbol{v}_{i'j'}(\boldsymbol{\theta}_{0}) \mid \mathcal{F}_n] \right]  = 0
\end{align}
unless $i=i'$ and $j=j'$. \\
\\
\noindent \textit{Intermediate result 4.} \\
Also from the conditional independence of link formation, $$\mathbbm{E} [\boldsymbol{s}_\sigma(\boldsymbol{\theta}_{0}) \boldsymbol{s}_{\sigma'}(\boldsymbol{\theta}_{0})'] = \mathbbm{E} \left[ \mathbbm{E} [\boldsymbol{s}_\sigma(\boldsymbol{\theta}_{0}) \boldsymbol{s}_{\sigma'}(\boldsymbol{\theta}_{0})' \mid \mathcal{F}_n] \right]  = 0, $$ unless $\sigma$ and $\sigma'$ share at least one dyad in common.

To see this, note that after conditioning on $\mathcal{F}_n$, each score contribution $\bo s_\sigma\left(\theta_0\right)$ depends only on the four dyadic outcomes $\left(\tilde{y}_{i_1 j_1}, \tilde{y}_{i_1 j_2}, \tilde{y}_{i_2 j_1}, \tilde{y}_{i_2 j_2}\right)$ through $z_\sigma$. Since the errors $\epsilon_{i j}$ are independent across dyads, these dyadic outcomes are conditionally independent given $\mathcal{F}_n$. Therefore, $\boldsymbol{s}_\sigma$ and $\boldsymbol{s}_{\sigma^{\prime}}$ are conditionally independent whenever they involve entirely distinct sets of dyads. Therefore, the diagonal structure comes from the conditional independence argument.\\
\\
At this point, it is important to discuss a distinction from the proof strategy of Theorem \ref{theorem1}. In the proof of Theorem \ref{theorem1}, when analyzing the convergence of the objective function, I did not condition on covariates and fixed effects. There, dependencies arose whenever quadruples shared nodes because: (i) the covariate vectors $\bo x_{i j}$ can be dependent across dyads sharing nodes; (ii) common fixed effects also generate dependency; (iii) the expectation $\mathbb{E}\left[\ell_\sigma(\bo \theta)\right]$ averages over both covariate and error distributions; (iv) the likelihood contributions $\ell_\sigma$ have non-zero expectations (unlike the score contributions which have zero expectation).

This distinction highlights that: (i) when conditioning on $\mathcal{F}_n$, independence requires distinct dyads (Intermediate Results 3 and 4); (ii) without conditioning on covariates  Independence requires no shared nodes (as in Theorem \ref{theorem1}). The conditioning structure fundamentally changes the dependence patterns, which is crucial for understanding the different convergence rates and variance calculations throughout the proofs.

Now, I proceed to show the asymptotic equivalence between the normalized score vector and the normalized projection. That is, to show that $\boldsymbol{\Upsilon}^{-1/2} \boldsymbol{V}_n (\boldsymbol{\theta}_{0})$ and $\boldsymbol{\Upsilon}^{-1/2} \boldsymbol{S}_n (\boldsymbol{\theta}_{0})$ are asymptotically equivalent, where $\boldsymbol{\Upsilon}$ is the covariance matrix of the projection. The covariance matrix of the projection is given by, given the Intermediate results 1, 2 and 3:
\begin{adjustwidth}{-0.25in}{-0.25in}
\begin{align}
    \boldsymbol{\Upsilon} &= \mathbbm{E} [\boldsymbol{V}_n(\boldsymbol{\theta}_{0}) \boldsymbol{V}_n(\boldsymbol{\theta}_{0})'] \nonumber \\&= \sum_{i} \sum_{j \neq i} \mathbbm{E} [\boldsymbol{v}_{ij}(\boldsymbol{\theta}_{0}) \boldsymbol{v}_{ij}(\boldsymbol{\theta}_{0})] \nonumber \\
    &= \sum_{i} \sum_{j \neq i} \mathbbm{E} \left[ \sum_{i' \neq i,j} \sum_{j' \neq i,j,i'} \mathbbm{E} [\boldsymbol{s}(\sigma\{i,i';j,j'\}, \boldsymbol{\theta}_{0}) \mid \tilde{y}_{ij}, \mathcal{F}_n] \sum_{i'' \neq i,j} \sum_{j'' \neq i,j,i''} \mathbbm{E} [\boldsymbol{s}(\sigma\{i,i'';j,j''\}, \boldsymbol{\theta}_{0}) \mid \tilde{y}_{ij}, \mathcal{F}_n]' \right] \nonumber \\
    &= \sum_{i} \sum_{j \neq i} \sum_{i' \neq i,j} \sum_{j' \neq i,j,i'} \sum_{i'' \neq i,j} \sum_{j'' \neq i,j,i''} \mathbbm{E} \left[ \mathbbm{E} [\boldsymbol{s}(\sigma\{i,i';j,j'\}, \boldsymbol{\theta}_{0}) \mid \tilde{y}_{ij}, \mathcal{F}_n] \mathbbm{E} [\boldsymbol{s}(\sigma\{i,i'';j,j''\}, \boldsymbol{\theta}_{0}) \mid \tilde{y}_{ij}, \mathcal{F}_n]'\right] 
\end{align}
\end{adjustwidth}
The first equality follows since $\mathbbm{E} [\boldsymbol{V}_n(\boldsymbol{\theta}_{0})] = 0$. The second equality follows since the cross terms vanish unless $i = i'$ and $j = j'$. Therefore, only same-dyad terms contribute to the variance.
The matrix $\boldsymbol{\Upsilon}$ is positive semidefinite by construction, being a sum of outer products. Its positive definiteness at the $n^6 p_n$ scale is established at the beginning of Step 2, as a consequence of Assumption \ref{assumption_scorevar} and the score--projection covariance equivalence derived there. On that basis, the normalizations by $\boldsymbol{\Upsilon}^{-1/2}$ used below are well defined for sufficiently large $n$.

Furthermore, analogously to the Intermediate result 3, we know that the terms of the projections, say $\mathbbm{E} [\boldsymbol{s}(\sigma\{i,i';j,j'\}, \boldsymbol{\theta}_{0}) \mid \tilde{y}_{ij}, \mathcal{F}_n]$ and  $\mathbbm{E} [\boldsymbol{s}(\sigma\{i,i'';j,j''\}, \boldsymbol{\theta}_{0}) \mid \tilde{y}_{ij}, \mathcal{F}_n]$ are conditionally uncorrelated, unless they share one dyad in common. In the expression above there are $O(n^6)$ terms with only one dyad in common. This follows from the fact that one can first choose the shared dyad $(i,j)$ in $O(n^2)$ different ways, then complete the first quadruple by choosing $(i',j')$ in $O(n^2)$ ways, and analogously complete the second quadruple, leading to the term $O(n^6)$. Configurations in which the two quadruples share more than one dyad have at most five free node indices and therefore number $O(n^5)$. By Cauchy--Schwarz and the uniform conditional moment bounds implied by Assumption \ref{assumption3}, the aggregate contribution of these configurations is $O(n^5)$. Since the leading term is of order $n^6 p_n$, as established in the rate calculation at the end of this proof, these configurations are negligible:
$$(n^6 p_n)^{-1} O(n^5) = O\left((n p_n)^{-1}\right) = o(1)$$
by Assumption \ref{assumption4}. Therefore, the leading term of $\mathbbm{E} [\boldsymbol{V}_n(\boldsymbol{\theta}_{0}) \boldsymbol{V}_n(\boldsymbol{\theta}_{0})']$ is comprised of correlations between $\mathbbm{E} [\boldsymbol{s}(\sigma\{i,i';j,j'\}, \boldsymbol{\theta}_{0}) \mid \tilde{y}_{ij}, \mathcal{F}_n]$ and $\mathbbm{E} [\boldsymbol{s}(\sigma\{i,i'';j,j''\}, \boldsymbol{\theta}_{0}) \mid \tilde{y}_{ij}, \mathcal{F}_n]$ for which quadruples $\{i,i';j,j'\}$, $\{i,i'';j,j''\}$ share exactly one dyad in common.

Given that, conditional on $\mathcal{F}_n$ and ${y}_{ij}$, $\boldsymbol{s}(\sigma\{i,i';j,j'\}, \boldsymbol{\theta}_{0})$ and $\boldsymbol{s}(\sigma\{i,i'';j,j''\}, \boldsymbol{\theta}_{0})$ are independent if $i' \neq i''$ and $j' \neq j''$, the leading term of the variance of the projection is characterized as:
\begin{adjustwidth}{-0.25in}{-0.25in}
\begin{align}
    \boldsymbol{\Upsilon}_l &= \sum_{i} \sum_{j \neq i} \sum_{i' \neq i,j} \sum_{j' \neq i,j,i'} \sum_{i'' \neq i,j,i'} \sum_{j'' \neq i,j,j',i''} \mathbbm{E} \left[ \mathbbm{E} [\boldsymbol{s}(\sigma\{i,i';j,j'\}, \boldsymbol{\theta}_{0}) \mid \tilde{y}_{ij}, \mathcal{F}_n] \mathbbm{E} [\boldsymbol{s}(\sigma\{i,i'';j,j''\}, \boldsymbol{\theta}_{0}) \mid \tilde{y}_{ij}, \mathcal{F}_n]'\right] \nonumber \\
    &= \sum_{i} \sum_{j \neq i} \sum_{i' \neq i,j} \sum_{j' \neq i,j,i'} \sum_{i'' \neq i,j,i'} \sum_{j'' \neq i,j,j',i''}  \mathbbm{E} \left[ \mathbbm{E} [\boldsymbol{s}(\sigma\{i,i';j,j'\}, \boldsymbol{\theta}_{0}) \boldsymbol{s}(\sigma\{i,i'';j,j''\}, \boldsymbol{\theta}_{0}) \mid \tilde{y}_{ij}, \mathcal{F}_n]'\right] \nonumber \\
    &= \sum_{i} \sum_{j \neq i} \sum_{i' \neq i,j} \sum_{j' \neq i,j,i'} \sum_{i'' \neq i,j,i'} \sum_{j'' \neq i,j,j',i''}  \mathbbm{E} \left[ \boldsymbol{s}(\sigma\{i,i';j,j'\}, \boldsymbol{\theta}_{0}) \boldsymbol{s}(\sigma\{i,i'';j,j''\}, \boldsymbol{\theta}_{0})'\right] 
\end{align}
\end{adjustwidth}
\noindent where the second equality follows from conditional independence. As mentioned before, note that for the second quadruple, it is required that $i'' \neq i'$, and that $j'' \neq j'$, since, otherwise, the dyads $(i',j)$ and $(i'',j)$, and the dyads $(i,j')$ and $(i,j'')$ are the same. The condition $i^{\prime \prime} \neq i^{\prime}$ and $j^{\prime \prime} \neq j^{\prime}$ ensures that the quadruples $\sigma\left\{i, i^{\prime} ; j, j^{\prime}\right\}$ and $\sigma\left\{i, i^{\prime \prime} ; j, j^{\prime \prime}\right\}$ share only the dyad $(i, j)$, making them conditionally independent given $\left(\tilde{y}_{i j}, \mathcal{F}_n\right)$. This conditional independence is crucial for the second equality above. 

To establish asymptotic equivalence between $\boldsymbol{S}_n\left(\boldsymbol{\theta}_0\right)$ and $\boldsymbol{V}_n\left(\boldsymbol{\theta}_0\right)$, one needs to show that their covariance structures coincide asymptotically. Specifically, following \citet{van2000asymptotic}, I establish:
$$ \lim_{n \xrightarrow[]{} \infty} \Upsilon^{-1/2} \mathbbm{E} \left[ (\boldsymbol{V}_n(\boldsymbol{\theta}_{0}) - \boldsymbol{S}_n (\boldsymbol{\theta}_{0}))(\boldsymbol{V}_n(\boldsymbol{\theta}_{0}) - \boldsymbol{S}_n (\boldsymbol{\theta}_{0}))'\right]\Upsilon^{-1/2} = 0.$$
Or, equivalently,
$$ \lim_{n \xrightarrow[]{} \infty} \Upsilon^{-1/2} \mathbbm{E} \left[\boldsymbol{V}_n(\boldsymbol{\theta}_{0})\boldsymbol{V}_n(\boldsymbol{\theta}_{0})'- \boldsymbol{S}_n (\boldsymbol{\theta}_{0}) \boldsymbol{V}_n(\boldsymbol{\theta}_{0})' - \boldsymbol{V}_n(\boldsymbol{\theta}_{0})\boldsymbol{S}_n (\boldsymbol{\theta}_{0})' + \boldsymbol{S}_n (\boldsymbol{\theta}_{0})\boldsymbol{S}_n (\boldsymbol{\theta}_{0})'\right]\Upsilon^{-1/2} = 0.$$

The next step is to show that the (leading term of the) covariance matrix of the score $\boldsymbol{S}_n (\boldsymbol{\theta}_{0})$ coincides with the (leading term of the) covariance matrix for the projection, that is given above. The covariance of the score is given by $\mathbb{E}[\boldsymbol{S}_n (\boldsymbol{\theta}_{0}) \boldsymbol{S}_n (\boldsymbol{\theta}_{0})']$. 

From the previous intermediate results, because (i) $\mathbb{E}[\bo s(\sigma; \bo \theta_0) \mid \mathcal{F}_\sigma] = 0$ for all $\sigma \in \mathcal{N}_{m_n}$, where $\mathcal{F}_\sigma$ is the collection of covariates and fixed effects for the nodes in the quadruple $\sigma$, as in Intermediate result 1, (ii) and link decisions are conditionally independent, it follows that:
$$E\left(\bo s\left(\sigma ; \bo \theta_0\right) \bo s\left(\sigma^{\prime} ; \bo \theta_0\right)^{\prime} \mid \mathcal{F}_\sigma, \mathcal{F}_{\sigma^{\prime}}\right)=0$$
unless $\sigma$ and $\sigma^{\prime}$ have at least one dyad in common. Due to the same reasoning as before, there are $O(n^6)$ terms with only one dyad in common, while configurations sharing more than one dyad number $O(n^5)$ and, by Cauchy--Schwarz and the moment bounds of Assumption \ref{assumption5}, contribute $O(n^5)$ in aggregate. Since the leading term is of order $n^6 p_n$, as established in the rate calculation at the end of this proof, these configurations are negligible relative to it, so that the factor-of-16 and factor-of-4 relations between the leading covariance terms below hold after normalization by the full covariance scale. Therefore, the leading term of the variance of the scores is given by correlations between $\boldsymbol{s}\left(\sigma ; \boldsymbol{\theta}_0\right)$, and and $\boldsymbol{s}\left(\sigma^{\prime} ; \boldsymbol{\theta}_0\right)$ for which the quadruples $\sigma, \sigma^{\prime}$ have exactly one dyad in common. Similarly to \citet{jochmans2018semiparametric}, one can fix the dyad in common to be the first sender-receiver dyad $(i,j)$, and then multiply the expression for the score $\boldsymbol{s}\left(\sigma ; \boldsymbol{\theta}_0\right)$ by 4. The factor 4 comes from the fact that in the expression for the score of the first quadruple, the shared dyad can be in 4 different positions, and similarly for the second dyad. Therefore, fixing the dyad in common to be the first sender-receiver in each quadruple captures only 1 out of 16 possibilities. The leading term of the variance of the score is then given by:
\begin{align}
    \boldsymbol{\Upsilon}_{s,l} = \sum_{i} \sum_{j \neq i} \sum_{i' \neq i,j} \sum_{j' \neq i,j,i'} \sum_{i'' \neq i,j,i'} \sum_{j'' \neq i,j,j',i''} 16 \times \mathbbm{E} \left[ \boldsymbol{s}(\sigma\{i,i';j,j'\}, \boldsymbol{\theta}_{0}) \boldsymbol{s}(\sigma\{i,i'';j,j''\}, \boldsymbol{\theta}_{0})'\right] 
\end{align}
Thus, an immediate result is that $\boldsymbol{\Upsilon}_{s,l} = 16 \times \boldsymbol{\Upsilon}_l$. Note that the discrepancy by a factor of 16 follows from the fact that in the variance of the scores, the shared dyads can be in any of the four positions in each quadruple, while in the variance of the projection, $(i,j)$ is always set to the first position due to the definition of the projection.

It follows not only that $\boldsymbol{\Upsilon}^{-1/2} \mathbb{E} [\boldsymbol{V}_n(\boldsymbol{\theta}_{0}) \boldsymbol{V}_n(\boldsymbol{\theta}_{0})']\boldsymbol{\Upsilon}^{-1/2} = \bo I + o(1)$, but also from the equivalence of the leading terms, it also follows that $\boldsymbol{\Upsilon}^{-1/2} \mathbb{E} [\tilde{\boldsymbol{S}}_n(\boldsymbol{\theta}_{0}) \tilde{\boldsymbol{S}}_n(\boldsymbol{\theta}_{0})']\boldsymbol{\Upsilon}^{-1/2} = \bo I + o(1)$, where $ \tilde{\boldsymbol{S}}_n(\boldsymbol{\theta}_{0}) = 1/4 \boldsymbol{S}_n(\boldsymbol{\theta}_{0})$, a rescaled score such that the factor of 16 is absorbed, and standard asymptotic equivalence follows directly. Note that this rescaling is merely for notational convenience; the original score $\boldsymbol{S}_n(\boldsymbol{\theta}_{0})$ and projection $\boldsymbol{V}_n(\boldsymbol{\theta}_{0})$ remain asymptotically equivalent after proper normalization. From analogous arguments, given:
\begin{adjustwidth}{-0.25in}{-0.25in}
\begin{align}
    \mathbbm{E} [\boldsymbol{S}_n (\boldsymbol{\theta}_{0})\boldsymbol{V}_n (\boldsymbol{\theta}_{0})'] \nonumber 
    &= \sum_{i} \sum_{j \neq i} \sum_{i' \neq i,j} \sum_{j' \neq i,j,i'} \sum_{i'' \neq i,j} \sum_{j'' \neq i,j,i''} \mathbbm{E} \left[ 4 \times \boldsymbol{s}(\sigma\{i,i';j,j'\}, \boldsymbol{\theta}_{0}) \mathbbm{E} [\boldsymbol{s}(\sigma\{i,i'';j,j''\}, \boldsymbol{\theta}_{0}) \mid {y}_{ij}, \mathcal{F}_n]'\right], \nonumber 
\end{align}
\end{adjustwidth}
$\boldsymbol{\Upsilon}^{-1/2} \mathbb{E} [\tilde{\boldsymbol{S}}_n(\boldsymbol{\theta}_{0}) \boldsymbol{V}_n(\boldsymbol{\theta}_{0})']\boldsymbol\Upsilon^{-1/2} = \bo I + o(1)$, and analogously $\boldsymbol{\Upsilon}^{-1/2} \mathbb{E} [ \boldsymbol{V}_n(\boldsymbol{\theta}_{0}) \tilde{\boldsymbol{S}}_n(\boldsymbol{\theta}_{0})']\boldsymbol{\Upsilon}^{-1/2} = \bo I + o(1)$, which proves the asymptotic equivalence above (after proper normalization).

The factor of 4 in this expression arises again from the positional asymmetry between the score and projection terms. In the score $\boldsymbol{S}_n\left(\bo \theta_0\right)$, when summing over all quadruples, each pair of quadruples sharing dyad $(i, j)$ can have this shared dyad appearing in any of the 4 possible positions within the quadruple structure. However, in the projection $\boldsymbol{V}_n\left(\bo \theta_0\right)$, the term $\boldsymbol{v}_{i j}$ specifically conditions on $\tilde{y}_{i j}$ and only includes quadruples where dyad $(i, j)$ occupies the first sender-first receiver position. Therefore, when computing the cross-product $\mathbb{E}\left[\boldsymbol{S}_n\left(\bo \theta_0\right) \boldsymbol{V}_n\left(\bo \theta_0\right)^{\prime}\right]$, each score contribution with the shared dyad in any of its 4 possible positions gets matched with projection terms where that same dyad is fixed in the first position. This results in the multiplicative factor of 4 , which is consistent with the earlier finding that the variance of the score has a factor of 16 (from $4 \times 4$ possible position combinations) while the projection variance has no such factor.\\
\\
\noindent \textbf{Step 2.} \\
\noindent \textit{Projection covariance and the score--projection bridge.}\\
Recall that $\boldsymbol{v}_{ij}$ are zero mean and independent conditional on the sequence of covariates and fixed effects (as shown in intermediate result 3). Let 
$$ \boldsymbol{\Upsilon}_X=\sum_{i=1}^N \sum_{j \neq i} \mathbb{E} \left(\boldsymbol{v}_{i j} \boldsymbol{v}_{i j}^{\prime} \mid\left\{\boldsymbol{x}_{i j}\right\}_{n, n}, \{\alpha_i\}_n, \{\gamma_j\}_n \right). $$
Recall the matrix $\boldsymbol{\Upsilon}_n(\boldsymbol{\theta})$ from the main text, which for any $\boldsymbol{\theta}$ is given by
\begin{align} \label{upsilon_n}
    \boldsymbol{\Upsilon}_{n}({\boldsymbol{\theta}}) = \sum_{i} \sum_{j \neq i} \sum_{i' \neq i,j} \sum_{j' \neq i,j,i'} \sum_{i'' \neq i,j,i'} \sum_{j'' \neq i,j,j',i''} 16 \times \left[ \boldsymbol{s}(\sigma\{i,i';j,j'\}; \boldsymbol{\theta}) \boldsymbol{s}(\sigma\{i,i'';j,j''\}; \boldsymbol{\theta})'\right].
\end{align}
Assumption \ref{assumption_scorevar} is imposed on this matrix, evaluated at $\boldsymbol{\theta}_0$. It is connected to the conditional covariance of the projection by the following result:
\begin{align} \label{bridge}
(n^6 p_n)^{-1} \left\| \boldsymbol{\Upsilon}_n(\boldsymbol{\theta}_0) - 16 \boldsymbol{\Upsilon}_X \right\| \xrightarrow{p} 0.
\end{align}
To establish \eqref{bridge}, decompose
$$\boldsymbol{\Upsilon}_n(\boldsymbol{\theta}_0) - 16 \boldsymbol{\Upsilon}_X = \Big\{ \boldsymbol{\Upsilon}_n(\boldsymbol{\theta}_0) - \mathbb{E}[\boldsymbol{\Upsilon}_n(\boldsymbol{\theta}_0) \mid \mathcal{F}_n] \Big\} + \Big\{ \mathbb{E}[\boldsymbol{\Upsilon}_n(\boldsymbol{\theta}_0) \mid \mathcal{F}_n] - 16 \boldsymbol{\Upsilon}_X \Big\}.$$
For the second term, note that the index restrictions $i'' \neq i'$ and $j'' \neq j'$ in \eqref{upsilon_n} ensure that the quadruples $\sigma\{i,i';j,j'\}$ and $\sigma\{i,i'';j,j''\}$ share only the dyad $(i,j)$, so that, conditional on $(\tilde{y}_{ij}, \mathcal{F}_n)$, the two score contributions are independent. By iterated expectations,
\begin{align}
\mathbb{E}&\left[ \boldsymbol{s}(\sigma\{i,i';j,j'\}; \boldsymbol{\theta}_0) \boldsymbol{s}(\sigma\{i,i'';j,j''\}; \boldsymbol{\theta}_0)' \mid \mathcal{F}_n \right] \nonumber
\\ &= \mathbb{E}\left[ \mathbb{E}[\boldsymbol{s}(\sigma\{i,i';j,j'\}; \boldsymbol{\theta}_0) \mid \tilde{y}_{ij}, \mathcal{F}_n] \, \mathbb{E}[\boldsymbol{s}(\sigma\{i,i'';j,j''\}; \boldsymbol{\theta}_0) \mid \tilde{y}_{ij}, \mathcal{F}_n]' \mid \mathcal{F}_n \right], \nonumber
\end{align}
which is exactly the corresponding term in the expansion of $\boldsymbol{\Upsilon}_X$. Hence $\mathbb{E}[\boldsymbol{\Upsilon}_n(\boldsymbol{\theta}_0) \mid \mathcal{F}_n]$ coincides with $16$ times the restriction of $\boldsymbol{\Upsilon}_X$ to configurations satisfying $i'' \neq i'$ and $j'' \neq j'$. The discrepancy therefore consists only of the configurations excluded by these restrictions. As established in Step 1, there are $O(n^5)$ such configurations, each contributing $O(1)$ in expectation, so that their total contribution is $O_p(n^5)$.

For the first term, $\boldsymbol{\Upsilon}_n(\boldsymbol{\theta}_0)$ is a sum of $O(n^6)$ matrix-valued terms, each a product of two score contributions. Conditional on $\mathcal{F}_n$, each such term depends only on the outcomes of the dyads in its two quadruples, and these outcomes are independent across dyads. Two terms therefore have zero conditional covariance unless their sets of directed dyads overlap. For a fixed configuration, selecting a shared directed dyad fixes two node indices of the second configuration, leaving at most four free, so that there are $O(n^4)$ configurations overlapping any given one, and $O(n^{10})$ pairs with potentially non-zero conditional covariance in total. Using the moment bounds of Assumption \ref{assumption5} and, for fixed parameter dimension, a conditional Chebyshev inequality,
$$\mathbb{E}\left[ \left\| \boldsymbol{\Upsilon}_n(\boldsymbol{\theta}_0) - \mathbb{E}[\boldsymbol{\Upsilon}_n(\boldsymbol{\theta}_0) \mid \mathcal{F}_n] \right\|_F^2 \right] = O(n^{10}), \qquad \text{so that} \qquad \left\| \boldsymbol{\Upsilon}_n(\boldsymbol{\theta}_0) - \mathbb{E}[\boldsymbol{\Upsilon}_n(\boldsymbol{\theta}_0) \mid \mathcal{F}_n] \right\| = O_p(n^5).$$
Note that, unlike the unconditional covariance counting used for the Hessian in Step 3, where pairs of quadruples sharing a node are dependent through the covariates and fixed effects, the present argument conditions on $\mathcal{F}_n$, so that only dyad-sharing configurations contribute.

Both terms in the decomposition above are therefore $O_p(n^5)$. Since $np_n \to \infty$ by Assumption \ref{assumption4},
$$(n^6 p_n)^{-1} O_p(n^5) = O_p\left((n p_n)^{-1}\right) = o_p(1),$$
which proves \eqref{bridge}.\\

\noindent \textit{Eigenvalue transfers.}\\
The eigenvalue condition in Assumption \ref{assumption_scorevar} transfers to the covariance matrices used below. By \eqref{bridge} and Weyl's inequality, and since $\tfrac{1}{16}\boldsymbol{\Upsilon}_n(\boldsymbol{\theta}_0)$ satisfies the eigenvalue bound of Assumption \ref{assumption_scorevar} with constant $c/16$,
$$\lambda_{\min}\left[(n^6 p_n)^{-1} \boldsymbol{\Upsilon}_X \right] \geq \frac{c}{16} - o_p(1)$$
with probability approaching one, so that $\boldsymbol{\Upsilon}_X$ is positive definite with probability approaching one, its symmetric inverse square root is well defined on an event whose probability tends to one, and $\|\boldsymbol{\Upsilon}_X^{-1/2}\| = O_p((n^6p_n)^{-1/2})$.

The same holds for the population matrix $\boldsymbol{\Upsilon} = \mathbb{E}[\boldsymbol{\Upsilon}_X]$. Let $A_n = \{\lambda_{\min}(\boldsymbol{\Upsilon}_X) \geq c_X n^6 p_n\}$ for a sufficiently small constant $c_X > 0$, so that $\Pr(A_n) \to 1$. Since $\boldsymbol{\Upsilon}_X$ is positive semidefinite, for every unit vector $\boldsymbol{a}$,
$$\boldsymbol{a}' \boldsymbol{\Upsilon} \boldsymbol{a} = \mathbb{E}[\boldsymbol{a}' \boldsymbol{\Upsilon}_X \boldsymbol{a}] \geq \mathbb{E}[\boldsymbol{a}' \boldsymbol{\Upsilon}_X \boldsymbol{a} \, 1\{A_n\}] \geq c_X n^6 p_n \Pr(A_n),$$
so that $\lambda_{\min}(\boldsymbol{\Upsilon}) \geq c_\Upsilon n^6 p_n$ for some $c_\Upsilon > 0$ and all sufficiently large $n$. Together with the upper bound $\|\boldsymbol{\Upsilon}\| = O(n^6 p_n)$ from the leading-term calculation, the matrix $\boldsymbol{\Upsilon}$ is positive definite for sufficiently large $n$ and $\|\boldsymbol{\Upsilon}^{-1/2}\| = O((n^6p_n)^{-1/2})$, as used in Step 1.\\

\noindent \textit{Conditional central limit theorem.}\\
Conditional on $\mathcal{F}_n$, the summands $\boldsymbol{v}_{ij}(\boldsymbol{\theta}_0)$ are independent and have conditional mean zero. The fourth-moment Lyapunov condition can be verified directly from the definition of the projection. Write $\mathcal{N}_{m_n,ij}$ for the set of quadruples containing the dyad $(i,j)$ in the first sender-receiver position, so that
$$\sum_{\sigma \in \mathcal{N}_{m_n,ij}} (\cdot) = \sum_{i' \neq i,j} \sum_{j' \neq i,j,i'} (\cdot), \qquad |\mathcal{N}_{m_n,ij}| = (n-2)(n-3) = O(n^2).$$
Since the scalar logistic residual in each score contribution is bounded by one in absolute value, $\|\boldsymbol{s}(\sigma;\boldsymbol{\theta}_0)\| \leq \|\boldsymbol{r}_\sigma\|$, and $\boldsymbol{r}_\sigma$ is $\mathcal{F}_n$-measurable. Conditional Jensen's inequality therefore gives
\begin{align} \label{lyapunov_envelope}
\|\boldsymbol{v}_{ij}(\boldsymbol{\theta}_0)\| \leq \sum_{\sigma \in \mathcal{N}_{m_n,ij}} \mathbb{E}\left[\|\boldsymbol{s}(\sigma;\boldsymbol{\theta}_0)\| \mid \tilde{y}_{ij}, \mathcal{F}_n\right] \leq \sum_{\sigma \in \mathcal{N}_{m_n,ij}} \|\boldsymbol{r}_\sigma\|.
\end{align}
By the power-mean inequality and $|\mathcal{N}_{m_n,ij}| = O(n^2)$,
$$\left( \sum_{\sigma \in \mathcal{N}_{m_n,ij}} \|\boldsymbol{r}_\sigma\| \right)^4 \leq |\mathcal{N}_{m_n,ij}|^3 \sum_{\sigma \in \mathcal{N}_{m_n,ij}} \|\boldsymbol{r}_\sigma\|^4,$$
and the uniformly bounded sixth moments of Assumption \ref{assumption5} imply uniformly bounded fourth moments of $\boldsymbol{r}_\sigma$. Summing over the $O(n^2)$ directed dyads, the resulting bound has unconditional expectation of order $n^{10}$, so that by Markov's inequality
$$\sum_{i=1}^n \sum_{j \neq i} \mathbb{E}\left[\|\boldsymbol{v}_{ij}(\boldsymbol{\theta}_0)\|^4 \mid \mathcal{F}_n\right] = O_p(n^{10}).$$
Combining this with the eigenvalue bound established above,
$$\sum_{i=1}^n \sum_{j \neq i} \mathbb{E}\left[\left\|\boldsymbol{\Upsilon}_X^{-1/2} \boldsymbol{v}_{ij}(\boldsymbol{\theta}_0)\right\|^4 \mid \mathcal{F}_n\right] \leq \|\boldsymbol{\Upsilon}_X^{-1/2}\|^4 \sum_{i=1}^n \sum_{j \neq i} \mathbb{E}\left[\|\boldsymbol{v}_{ij}(\boldsymbol{\theta}_0)\|^4 \mid \mathcal{F}_n\right] = O_p\left(\frac{n^{10}}{(n^6p_n)^2}\right) = O_p\left((np_n)^{-2}\right),$$
which converges to zero by Assumption \ref{assumption4}. The conditional fourth-moment Lyapunov condition therefore holds, and a conditional version of the Lyapunov CLT \citep{rao2009conditional} gives:
$$ \boldsymbol{\Upsilon}_X^{-1/2} \boldsymbol{V}_n(\boldsymbol{\theta}_{0}) \xrightarrow[]{d} N(0,\boldsymbol{I})$$
conditional on the covariates and fixed effects. Note that the convergence holds since conditional on covariates and fixed effects, the remaining randomness comes from the idiosyncratic errors in the outcomes $\{y_{ij}\}$, such that the projection $\boldsymbol{V}_n(\boldsymbol{\theta}_{0})$ is a summation over conditionally independent summands. Specifically, $\boldsymbol{V}_n\left(\theta_0\right)=\sum_{i, j} \boldsymbol{v}_{i j}\left(\theta_0\right)$ where each $\boldsymbol{v}_{i j}$ depends only on $\tilde{y}_{i j}$, and the $\tilde{y}_{i j}$'s are conditionally independent given $\mathcal{F}_n$ due to the independence of the error terms $\epsilon_{i j}$. In particular, the convergence result holds for any sequence of fixed effects, and any configuration of covariates. For an explicit fourth-moment verification in a closely related dyadic conditional-likelihood setting, see \citet{muris2025dyadic}. Their argument computes the projection weights explicitly; the envelope in \eqref{lyapunov_envelope} follows here directly from the abstract definition of the projection.\\

\noindent \textit{Empirical normalization.}\\
Define a matrix $\boldsymbol{\Upsilon}_n({\boldsymbol{\theta}}_n)$ to be the plug-in estimator of the leading term of the score variance $\boldsymbol{\Upsilon}_{s,l}$ above, such that:
\begin{align}
    \boldsymbol{\Upsilon}_{n}({\boldsymbol{\theta}}_n) = \sum_{i} \sum_{j \neq i} \sum_{i' \neq i,j} \sum_{j' \neq i,j,i'} \sum_{i'' \neq i,j,i'} \sum_{j'' \neq i,j,j',i''} 16 \times \left[ \boldsymbol{s}(\sigma\{i,i';j,j'\}, \boldsymbol{\theta}_n) \boldsymbol{s}(\sigma\{i,i'';j,j''\}, \boldsymbol{\theta}_n)'\right] 
\end{align}

The limit distribution is now expressed directly in terms of the empirical matrix $\boldsymbol{\Upsilon}_n(\boldsymbol{\theta}_0)$. Recall that $\tilde{\boldsymbol{S}}_n(\boldsymbol{\theta}_0) = \frac{1}{4}\boldsymbol{S}_n(\boldsymbol{\theta}_0)$ and write $\boldsymbol{U}_n = \tilde{\boldsymbol{S}}_n(\boldsymbol{\theta}_0) - \boldsymbol{V}_n(\boldsymbol{\theta}_0)$ for the remainder from Step 1. Denote the rate-normalized matrices by
$$\overline{\boldsymbol{\Upsilon}}_n(\boldsymbol{\theta}_0) := \frac{\boldsymbol{\Upsilon}_n(\boldsymbol{\theta}_0)}{16 \, n^6 p_n}, \qquad \overline{\boldsymbol{\Upsilon}}_X := \frac{\boldsymbol{\Upsilon}_X}{n^6 p_n}.$$
By \eqref{bridge}, $\|\overline{\boldsymbol{\Upsilon}}_n(\boldsymbol{\theta}_0) - \overline{\boldsymbol{\Upsilon}}_X\| \xrightarrow{p} 0$, and by the eigenvalue transfers established above both matrices have smallest eigenvalue bounded away from zero with probability approaching one. Writing
$$\overline{\boldsymbol{\Upsilon}}_n(\boldsymbol{\theta}_0)^{-1/2} - \overline{\boldsymbol{\Upsilon}}_X^{-1/2} = \overline{\boldsymbol{\Upsilon}}_n(\boldsymbol{\theta}_0)^{-1/2}\left(\overline{\boldsymbol{\Upsilon}}_X^{1/2} - \overline{\boldsymbol{\Upsilon}}_n(\boldsymbol{\theta}_0)^{1/2}\right)\overline{\boldsymbol{\Upsilon}}_X^{-1/2},$$
and using the perturbation bound for matrix square roots, which for symmetric positive definite matrices $\boldsymbol{A}$ and $\boldsymbol{B}$ gives $\|\boldsymbol{B}^{1/2} - \boldsymbol{A}^{1/2}\| \leq \|\boldsymbol{B}-\boldsymbol{A}\|/(\lambda_{\min}(\boldsymbol{A})^{1/2} + \lambda_{\min}(\boldsymbol{B})^{1/2})$, the eigenvalue bounds established above imply
$$\left\|\overline{\boldsymbol{\Upsilon}}_n(\boldsymbol{\theta}_0)^{-1/2} - \overline{\boldsymbol{\Upsilon}}_X^{-1/2}\right\| \xrightarrow{p} 0.$$
Moreover, $(n^6p_n)^{-1/2}\boldsymbol{V}_n(\boldsymbol{\theta}_0) = O_p(1)$, since its conditional variance is $\overline{\boldsymbol{\Upsilon}}_X = O_p(1)$. Therefore
$$\left[\left(\tfrac{1}{16}\boldsymbol{\Upsilon}_n(\boldsymbol{\theta}_0)\right)^{-1/2} - \boldsymbol{\Upsilon}_X^{-1/2}\right] \boldsymbol{V}_n(\boldsymbol{\theta}_0) = \left[\overline{\boldsymbol{\Upsilon}}_n(\boldsymbol{\theta}_0)^{-1/2} - \overline{\boldsymbol{\Upsilon}}_X^{-1/2}\right] \frac{\boldsymbol{V}_n(\boldsymbol{\theta}_0)}{(n^6p_n)^{1/2}} = o_p(1),$$
and combining this with the limit distribution of $\boldsymbol{\Upsilon}_X^{-1/2}\boldsymbol{V}_n(\boldsymbol{\theta}_0)$ obtained above, Slutsky's theorem gives
$$\left(\tfrac{1}{16}\boldsymbol{\Upsilon}_n(\boldsymbol{\theta}_0)\right)^{-1/2}\boldsymbol{V}_n(\boldsymbol{\theta}_0) \xrightarrow{d} N(\boldsymbol{0}, \boldsymbol{I}).$$
For the remainder, the eigenvalue bound on $\boldsymbol{\Upsilon}_n(\boldsymbol{\theta}_0)$ and the upper bound $\|\boldsymbol{\Upsilon}\| = O(n^6p_n)$ give $\|(\tfrac{1}{16}\boldsymbol{\Upsilon}_n(\boldsymbol{\theta}_0))^{-1/2}\boldsymbol{\Upsilon}^{1/2}\| = O_p(1)$, while the asymptotic equivalence established in Step 1 gives $\boldsymbol{\Upsilon}^{-1/2}\mathbb{E}[\boldsymbol{U}_n\boldsymbol{U}_n']\boldsymbol{\Upsilon}^{-1/2} = o(1)$, obtained by combining the three normalized second-moment results, so that $\boldsymbol{\Upsilon}^{-1/2}\boldsymbol{U}_n = o_p(1)$ by Markov's inequality. Hence
$$\left(\tfrac{1}{16}\boldsymbol{\Upsilon}_n(\boldsymbol{\theta}_0)\right)^{-1/2}\boldsymbol{U}_n = \left[\left(\tfrac{1}{16}\boldsymbol{\Upsilon}_n(\boldsymbol{\theta}_0)\right)^{-1/2}\boldsymbol{\Upsilon}^{1/2}\right]\left[\boldsymbol{\Upsilon}^{-1/2}\boldsymbol{U}_n\right] = o_p(1).$$
Since $\tilde{\boldsymbol{S}}_n(\boldsymbol{\theta}_0) = \boldsymbol{V}_n(\boldsymbol{\theta}_0) + \boldsymbol{U}_n$ and $(\tfrac{1}{16}\boldsymbol{\Upsilon}_n(\boldsymbol{\theta}_0))^{-1/2} \tilde{\boldsymbol{S}}_n(\boldsymbol{\theta}_0) = \boldsymbol{\Upsilon}_n(\boldsymbol{\theta}_0)^{-1/2} \boldsymbol{S}_n(\boldsymbol{\theta}_0)$, it follows that
$$\boldsymbol{\Upsilon}_n(\boldsymbol{\theta}_0)^{-1/2}\boldsymbol{S}_n(\boldsymbol{\theta}_0) \xrightarrow{d} N(\boldsymbol{0}, \boldsymbol{I}).$$
This convergence is unconditional: $\boldsymbol{\Upsilon}_n(\boldsymbol{\theta}_0)$ is outcome-dependent and therefore not $\mathcal{F}_n$-measurable, so Slutsky's theorem is applied to the unconditional limit obtained above.

The feasible version replaces $\boldsymbol{\theta}_0$ by $\boldsymbol{\theta}_n$ in $\boldsymbol{\Upsilon}_n(\cdot)$; the required plug-in consistency and the resulting invertibility are established in the final part of this proof. Note that the original score is considered here, and accordingly, the empirical matrix takes into account the factor 16.\\
\\
\noindent \textbf{Step 3.} \\
The proof proceeds by showing the uniform convergence of the Hessian. Recall that the Hessian is
$$\boldsymbol{H}_n(\boldsymbol{\theta})= - \sum_{\sigma \in \mathcal{N}_{m_n}} \boldsymbol{r}_\sigma \boldsymbol{r}_\sigma^{\prime} f\left(\boldsymbol{r}_\sigma^{\prime} \boldsymbol{\theta}\right) 1\left\{\boldsymbol{z}_\sigma \in\{-1,1\}\right\}.$$
The goal is to show that
$$\sup_{\theta \in \Theta}\left\|\frac{\boldsymbol{H}_n(\boldsymbol{\theta})}{m_n^*}-\frac{\mathbb{E}\left(\boldsymbol{H}_n(\boldsymbol{\theta})\right)}{m_n p_n}\right\| \xrightarrow{p} 0$$
as $n \rightarrow \infty$. The matrix $\lim _{n \rightarrow \infty}\left(m_n p_n\right)^{-1} E\left(\boldsymbol{H}_n(\boldsymbol{\theta}_{0})\right)$ is the matrix given in Assumption \ref{assumption4}. Because it was shown in the proof of Theorem \ref{theorem1} that $\left(m_n^* / m_n-p_n\right) \xrightarrow{p} 0$ as $n \rightarrow \infty$ it suffices to show that
$$
\frac{\sup _{\theta \in \Theta}\left\|\boldsymbol{H}_n(\boldsymbol{\theta})-E\left(\boldsymbol{H}_n(\boldsymbol{\theta})\right)\right\|}{m_n p_n} \xrightarrow{p} 0
$$
as $n \rightarrow \infty$. In this way, the randomness in the random variable $m_n^*$ is separated from the randomness in the quadruple values in the expression for the Hessian. Moreover, the uniform convergence over $\Theta$ ensures that the result holds at the random $\boldsymbol{\theta}_n$. To show that the above expression holds, the conditions of Lemma 2.9 of \citet{newey1994chapter} need to be verified, which essentially requires pointwise convergence in probability and stochastic equicontinuity of the Hessian.\\
\\
\textit{Stochastic equicontinuity.}\\
First, note that:
$$\boldsymbol{H}_n(\boldsymbol{\theta}_{1})-\boldsymbol{H}_n(\boldsymbol{\theta}_{2})= - \sum_{\sigma \in \mathcal{N}_{m_n}} \boldsymbol{r}_\sigma \boldsymbol{r}_\sigma^{\prime} \left[f(\boldsymbol{r}_\sigma^{\prime} \boldsymbol{\theta}_1) - f(\boldsymbol{r}_\sigma^{\prime} \boldsymbol{\theta}_2)\right] 1\left\{\boldsymbol{z}_\sigma \in\{-1,1\}\right\}$$
Then, it follows that:
$$ f(\boldsymbol{r}_\sigma^{\prime} \boldsymbol{\theta}_1) - f(\boldsymbol{r}_\sigma^{\prime} \boldsymbol{\theta}_2) = f'(\boldsymbol{r}_\sigma^{\prime} \boldsymbol{\theta}_\sigma^*) \boldsymbol{r}_{\sigma}' (\boldsymbol{\theta}_1 - \boldsymbol{\theta}_2)$$
By an application of the mean-value theorem to $\left[f(\boldsymbol{r}_\sigma^{\prime} \boldsymbol{\theta}_1) - f(\boldsymbol{r}_\sigma^{\prime} \boldsymbol{\theta}_2)\right]$, in the sense that for each term, there exists a $\boldsymbol{\theta}_\sigma^*$ between $\boldsymbol{\theta}_1$ and $\boldsymbol{\theta}_2$. Next, by an application of the Cauchy-Schwarz inequality:
\begin{align}
    \rvert \rvert \boldsymbol{H}_n(\boldsymbol{\theta}_{1})-\boldsymbol{H}_n(\boldsymbol{\theta}_{2}) \rvert \rvert &\leq \sum_{\sigma \in \mathcal{N}_{m_n}} \rvert \rvert \boldsymbol{r}_\sigma \boldsymbol{r}_\sigma^{\prime} \rvert \rvert \hspace{0.1cm} \rvert f'(\boldsymbol{r}_\sigma^{\prime} \boldsymbol{\theta}_\sigma^*) \rvert \hspace{0.1cm} \rvert \boldsymbol{r}_\sigma^{\prime} (\boldsymbol{\theta}_{1} - \boldsymbol{\theta}_{2}) \rvert \hspace{0.1cm} 1\left\{\boldsymbol{z}_\sigma \in\{-1,1\}\right\} \nonumber \\
&\leq \sum_{\sigma \in \mathcal{N}_{m_n}} \rvert \rvert \boldsymbol{r}_\sigma \rvert \rvert^3 \hspace{0.1cm} \rvert f'(\boldsymbol{r}_\sigma^{\prime} \boldsymbol{\theta}_\sigma^*) \rvert \hspace{0.1cm} \rvert \rvert \boldsymbol{\theta}_{1} - \boldsymbol{\theta}_{2}) \rvert \rvert \hspace{0.1cm} 1\left\{\boldsymbol{z}_\sigma \in\{-1,1\}\right\} \nonumber 
\end{align}
To obtain the final bound, since $\boldsymbol{\theta}_\sigma^*$ lies between $\boldsymbol{\theta}_1$ and $\boldsymbol{\theta}_2$, and both are in the compact set $\Theta$, it is possible to bound: 
$$ \rvert f'(\boldsymbol{r}_\sigma^{\prime} \boldsymbol{\theta}_\sigma^*) \rvert \leq \sup _{\varepsilon \in \mathcal{R}} \left \rvert \frac{\partial f(\varepsilon)}{\partial \varepsilon} \right \rvert $$
Leading to the expression:
\begin{align} \label{hessian_expression}
\frac{\left\|\boldsymbol{H}_n(\boldsymbol{\theta}_{1})-\boldsymbol{H}_n(\boldsymbol{\theta}_{2})\right\|}{m_n p_n} \leq\left(\left(m_n p_n\right)^{-1} \sum_{\sigma \in \mathcal{N}_{m_n}}\left\|\boldsymbol{r}_\sigma\right\|^3 1\left\{z_\sigma \in\{-1,1\}\right\}\right) \sup _{\varepsilon \in \mathcal{R}}\left|\frac{\partial f(\varepsilon)}{\partial \varepsilon}\right|\left\|\boldsymbol{\theta}_{1}-\boldsymbol{\theta}_{2}\right\|
\end{align}
for any $\boldsymbol{\theta}_{1}, \boldsymbol{\theta}_{2} \in \Theta$. 
Using the same arguments as those used to establish Theorem \ref{theorem1} it follows that
$$
\left(m_n p_n\right)^{-1} \sum_{\sigma \in \mathcal{N}_{m_n}}\left\|\boldsymbol{r}_\sigma\right\|^3 1\left\{z_\sigma \in\{-1,1\}\right\}=O_p(1)
$$
where the moment condition in Assumption \ref{assumption5} is used. The proof proceeds in formally showing this statement. By Chebyshev's inequality:
\begin{align}
&\operatorname{Pr} \left( \left \rvert \frac{\sum_{\sigma \in \mathcal{N}_{m_n}}\left\|\boldsymbol{r}_\sigma\right\|^3 1\left\{z_\sigma \in\{-1,1\}\right\}}{m_n p_n} - \mathbb{E} \left[ \frac{\sum_{\sigma \in \mathcal{N}_{m_n}}\left\|\boldsymbol{r}_\sigma\right\|^3 1\left\{z_\sigma \in\{-1,1\}\right\}}{m_n p_n}\right] \right \rvert > \varepsilon\right) \nonumber \\ &\leq \frac{\operatorname{Var}(\sum_{\sigma \in \mathcal{N}_{m_n}}\left\|\boldsymbol{r}_\sigma\right\|^3 1\left\{z_\sigma \in\{-1,1\}\right\})}{(m_n p_n)^2 \varepsilon^2}
\end{align}
I first focus on the term $\mathbb{E} \left[ \frac{\sum_{\sigma \in \mathcal{N}_{m_n}}\left\|\boldsymbol{r}_\sigma\right\|^3 1\left\{z_\sigma \in\{-1,1\}\right\}}{m_n p_n}\right]$. Note that $\sum_{\sigma \in \mathcal{N}_{m_n}} \mathbb{E}\left[1\left\{z_\sigma \in\{-1,1\}\right\}\right]=m_n p_n$. It is possible to rewrite:
\begin{align}
   \sum_{\sigma \in \mathcal{N}_{m_n}} \mathbb{E} [\left\|\boldsymbol{r}_\sigma\right\|^3 1\left\{z_\sigma \in\{-1,1\}\right\}] &= \sum_{\sigma \in \mathcal{N}_{m_n}}  \mathbb{E}\left[\left\|\boldsymbol{r}_\sigma\right\|^3 \cdot 1\left\{z_\sigma \in\{-1,1\}\right\}\right] \nonumber \\ &=\sum_{\sigma \in \mathcal{N}_{m_n}}  \mathbb{E}\left[\left\|\boldsymbol{r}_\sigma\right\|^3 \mid z_\sigma \in\{-1,1\}\right] \cdot \operatorname{Pr}\left(z_\sigma \in\{-1,1\}\right)
\end{align}
Since the conditional expectation $\mathbb{E}\left[\left\|\boldsymbol{r}_\sigma\right\|^3 \mid z_\sigma \in\{-1,1\}\right] \leq C^{\prime}$ for some finite constant $C^{\prime}$ (the third moment is bounded regardless of the link pattern), it follows that:
$$
\sum_{\sigma \in \mathcal{N}_{m_n}} \mathbb{E}\left[\left\|\boldsymbol{r}_\sigma\right\|^3 \cdot 1\left\{z_\sigma \in\{-1,1\}\right\}\right] \leq C^{\prime} \sum_{\sigma \in \mathcal{N}_{m_n}} \operatorname{Pr}\left(z_\sigma \in\{-1,1\}\right)=C^{\prime} \cdot m_n p_n
$$
Such that $$\mathbb{E} \left[ \frac{\sum_{\sigma \in \mathcal{N}_{m_n}}\left\|\boldsymbol{r}_\sigma\right\|^3 1\left\{z_\sigma \in\{-1,1\}\right\}}{m_n p_n}\right] = O(1)$$
For the variance term, I follow a similar argument as in Theorem \ref{theorem1}. Note that the term can be written as, by denoting $A_\sigma = \left\|\boldsymbol{r}_\sigma\right\|^3 1\left\{z_\sigma \in\{-1,1\}\right\} - \mathbb{E} \left( \left\|\boldsymbol{r}_\sigma\right\|^3 1\left\{z_\sigma \in\{-1,1\}\right\}\right)$:
\begin{align}
&\operatorname{Var}\left(\sum_{\sigma \in \mathcal{N}_{m_n}}\left\|\boldsymbol{r}_\sigma\right\|^3 1\left\{z_\sigma \in\{-1,1\}\right\}\right) = \mathbb{E} \left( \left \rvert \sum_{\sigma \in \mathcal{N}_{m_n}} A_\sigma \right \rvert^2  \right) \nonumber \\
&= \mathbb{E} \left( \left(\sum_{\sigma \in \mathcal{N}_{m_n}} A_{\sigma} \right) \left(\sum_{\sigma' \in \mathcal{N}_{m_n}}A_{\sigma'} \right)\right) \nonumber \\
&= \sum_{\sigma \in \mathcal{N}_{m_n}} \sum_{\sigma' \in \mathcal{N}_{m_n}} 1 \{\sigma \cap \sigma' \neq \emptyset \} \mathbb{E} \left( \left(A_\sigma \right) \left(A_{\sigma'} \right)\right) \nonumber \\
&= \sum_{\sigma \in \mathcal{N}_{m_n}} \sum_{\sigma' \in \mathcal{N}_{m_n}} 1 \{\sigma \cap \sigma' \neq \emptyset \} \operatorname{Cov} (\left\|\boldsymbol{r}_{\sigma}\right\|^3 1\left\{z_{\sigma} \in\{-1,1\}\right\}, \left\|\boldsymbol{r}_{\sigma'}\right\|^3 1\left\{z_{\sigma'} \in\{-1,1\}\right\})\nonumber \\
& \leq \sum_{\sigma \in \mathcal{N}_{m_n}} \sum_{\sigma' \in \mathcal{N}_{m_n}} 1 \{\sigma \cap \sigma' \neq \emptyset \} \rvert \operatorname{Cov} (\left\|\boldsymbol{r}_{\sigma}\right\|^3 1\left\{z_{\sigma} \in\{-1,1\}\right\}, \left\|\boldsymbol{r}_{\sigma'}\right\|^3 1\left\{z_{\sigma'} \in\{-1,1\}\right\}) \rvert \nonumber \\ 
&\leq \sum_{\sigma \in \mathcal{N}_{m_n}} \sum_{\sigma' \in \mathcal{N}_{m_n}} 1 \{\sigma \cap \sigma' \neq \emptyset \} \sqrt{\mathbb{E}\left(\left\|\boldsymbol{r}_{\sigma}\right\|^6 1\left\{z_{\sigma} \in\{-1,1\} \right\} \right)}  \sqrt{\mathbb{E}\left(\left\|\boldsymbol{r}_{\sigma'}\right\|^6 1\left\{z_{\sigma'} \in\{-1,1\} \right\} \right)} \nonumber \\
&\leq B'\sum_{\sigma \in \mathcal{N}_{m_n}} \sum_{\sigma' \in \mathcal{N}_{m_n}} 1 \{\sigma \cap \sigma' \neq \emptyset \} \sqrt{p_\sigma p_{\sigma'}} \nonumber \\
&= O(n^3 m_n p_n)
\end{align}
where the second inequality follows from an application of the Cauchy-Schwarz inequality. The expression $\left\|\boldsymbol{r}_{\sigma'}\right\|^6 1\left\{z_{\sigma'} \in\{-1,1\} \right\}$ follows from the fact that $1\left\{z_{\sigma'} \in\{-1,1\} \right\}^2 = 1\left\{z_{\sigma'} \in\{-1,1\} \right\}$. Moreover, by writing: $$\mathbb{E}\left(\left\|\boldsymbol{r}_{\sigma'}\right\|^6 1\left\{z_{\sigma'} \in\{-1,1\} \right\} \right) = \operatorname{Pr}(1\left\{z_{\sigma'} \in\{-1,1\}\right\}) \mathbb{E}\left(\left\|\boldsymbol{r}_{\sigma'}\right\|^6 \mid 1\left\{z_{\sigma'} \in\{-1,1\} \right\} \right),$$ and since the conditional expectation is bounded, since $\mathbb{E}\left(\left\|\boldsymbol{r}_{\sigma'}\right\|^6\right)$ is uniformly bounded from Assumption \ref{assumption5} by some constant $B'$, it follows that  $\mathbb{E}\left(\left\|\boldsymbol{r}_{\sigma}\right\|^6 1\left\{z_{\sigma} \in\{-1,1\} \right\} \right) \leq p_\sigma B'$. Finally, by applying the same reasoning as in the previous proof, by an application of Jensen's inequatility, and from the fact that $O(n^7)$ quadruples share a node, the result follows.

Furthermore, similarly to before, it follows that:
$$\frac{\operatorname{Var}(\sum_{\sigma \in \mathcal{N}_{m_n}}\left\|\boldsymbol{r}_\sigma\right\|^3 1\left\{z_\sigma \in\{-1,1\}\right\})}{(m_n p_n)^2} = O\left( \frac{1}{n p_n}\right)$$
which converges to zero by Assumption \ref{assumption4}. Thus:
\begin{align}
& \lim_{n \xrightarrow[]{} \infty} \operatorname{Pr} \left( \left \rvert \frac{\sum_{\sigma \in \mathcal{N}_{m_n}}\left\|\boldsymbol{r}_\sigma\right\|^3 1\left\{z_\sigma \in\{-1,1\}\right\}}{m_n p_n} - \mathbb{E} \left[ \frac{\sum_{\sigma \in \mathcal{N}_{m_n}}\left\|\boldsymbol{r}_\sigma\right\|^3 1\left\{z_\sigma \in\{-1,1\}\right\}}{m_n p_n}\right] \right \rvert > \varepsilon\right) \nonumber \\ &\leq \frac{\operatorname{Var}(\sum_{\sigma \in \mathcal{N}_{m_n}}\left\|\boldsymbol{r}_\sigma\right\|^3 1\left\{z_\sigma \in\{-1,1\}\right\})}{(m_n p_n)^2 \varepsilon^2} = 0
\end{align}
for any $\varepsilon > 0$, leading to:
$$
\left(m_n p_n\right)^{-1} \sum_{\sigma \in \mathcal{N}_{m_n}}\left\|\boldsymbol{r}_\sigma\right\|^3 1\left\{z_\sigma \in\{-1,1\}\right\}=O_p(1)
$$
Because the derivative of $f$ is bounded uniformly on $\mathcal{R}$, I gather these last results to obtain the following for Equation \ref{hessian_expression}:
$$
\frac{\left\|\boldsymbol{H}_n\left(\boldsymbol{\theta}_{1}\right)-\boldsymbol{H}_n\left(\boldsymbol{\theta}_{2}\right)\right\|}{m_n p_n}=O_p(1)\left\|\boldsymbol{\theta}_{1}-\boldsymbol{\theta}_{2}\right\|
$$
for any $\boldsymbol{{\theta}}_{1}, \boldsymbol{\theta}_{2} \in \Theta$. Thus, the Hessian matrix is stochastically equicontinuous. This implies that uniform convergence of the Hessian follows from pointwise convergence on $\Theta$, from an application of Lemma 2.9 in \citet{newey1994chapter}. \\
\\
\textit{Pointwise convergence.}\\
Assumption \ref{assumption5} implies that $E\left(\left\|\boldsymbol{r}_\sigma\right\|^4 \mid z_\sigma \in\{-1,1\}\right)$ is uniformly bounded in $\sigma$ while $f$ is bounded uniformly on $\mathcal{R}$. Therefore, in the following, using these results, the proof proceeds to show the convergence result
$$
\frac{\left\|\boldsymbol{H}_n(\boldsymbol{\theta})-E\left(\boldsymbol{H}_n(\boldsymbol{\theta})\right)\right\|}{m_n p_n} \xrightarrow{p} 0
$$
for all $\boldsymbol{\theta} \in \Theta$. By Chebyshev's inequality, it follows that, for any fixed $\boldsymbol{\theta}$ in $\Theta$ and $\varepsilon > 0$:
$$ \operatorname{Pr} \left( \frac{\left\|\boldsymbol{H}_n(\boldsymbol{\theta})-E\left(\boldsymbol{H}_n(\boldsymbol{\theta})\right)\right\|}{m_n p_n} > \varepsilon\right) \leq \frac{ \mathbb{E} \left( \left\|\boldsymbol{H}_n(\boldsymbol{\theta})-E\left(\boldsymbol{H}_n(\boldsymbol{\theta})\right)\right\|^2 \right)}{\varepsilon^2 (m_n p_n)^2} $$
For a fixed value of $\bo \theta$, define $\bo h_\sigma(\boldsymbol{\theta}) = - \boldsymbol{r}_\sigma \boldsymbol{r}_\sigma' f (\boldsymbol{r}_\sigma'\boldsymbol{\theta}) 1 \{ z_\sigma \in \{-1,1\} \}$, such that $\boldsymbol{H}_n(\boldsymbol{\theta}) = \sum_{\sigma \in \mathcal{N}_{m_n}} \bo h_\sigma(\boldsymbol{\theta})$.
Similarly to before, it follows that, using an application of the triangle inequality for norms and the Cauchy-Schwarz inequality:
\begin{adjustwidth}{-0.25in}{-0.25in}
\begin{align}
\mathbb{E} \left( \left\|\boldsymbol{H}_n(\boldsymbol{\theta})-E\left(\boldsymbol{H}_n(\boldsymbol{\theta})\right)\right\|^2 \right) &= \mathbb{E} \left( \left\|\sum_{\sigma \in \mathcal{N}_{m_n}} h_\sigma(\boldsymbol{\theta}) - \mathbb{E}(h_\sigma(\boldsymbol{\theta}))\right\|^2\right) \nonumber \\
& \leq \mathbb{E} \left( \left(\sum_{\sigma \in \mathcal{N}_{m_n}} \left\| h_\sigma(\boldsymbol{\theta}) - \mathbb{E}(h_\sigma(\boldsymbol{\theta}))\right\| \right)^2\right) \nonumber \\
& = \sum_{\sigma \in \mathcal{N}_{m_n}} \sum_{\sigma' \in \mathcal{N}_{m_n}} \mathbb{E}\left( \left\| h_\sigma(\boldsymbol{\theta}) - \mathbb{E}(h_\sigma(\boldsymbol{\theta}))\right\| \left\| h_{\sigma'}(\boldsymbol{\theta}) - \mathbb{E}(h_{\sigma'}(\boldsymbol{\theta}))\right\| \right) \nonumber \\
& = \sum_{\sigma \in \mathcal{N}_{m_n}} \sum_{\sigma' \in \mathcal{N}_{m_n}} 1 \{\sigma \cap \sigma' \neq \emptyset \} \mathbb{E}\left( \left\| h_\sigma(\boldsymbol{\theta}) - \mathbb{E}(h_\sigma(\boldsymbol{\theta}))\right\| \left\| h_{\sigma'}(\boldsymbol{\theta}) - \mathbb{E}(h_{\sigma'}(\boldsymbol{\theta}))\right\| \right) \\
& = \sum_{\sigma \in \mathcal{N}_{m_n}} \sum_{\sigma' \in \mathcal{N}_{m_n}} 1 \{\sigma \cap \sigma' \neq \emptyset \}  \operatorname{Cov}(h_\sigma(\boldsymbol{\theta}), h_{\sigma'}(\boldsymbol{\theta})) \nonumber \\
& \leq \sum_{\sigma \in \mathcal{N}_{m_n}} \sum_{\sigma' \in \mathcal{N}_{m_n}} 1 \{\sigma \cap \sigma' \neq \emptyset \} \rvert \operatorname{Cov}(h_\sigma(\boldsymbol{\theta}), h_{\sigma'}(\boldsymbol{\theta})) \rvert \nonumber \\
& \leq \sum_{\sigma \in \mathcal{N}_{m_n}} \sum_{\sigma' \in \mathcal{N}_{m_n}} 1 \{\sigma \cap \sigma' \neq \emptyset \} \sqrt{\operatorname{Var} (h_\sigma(\boldsymbol{\theta}))} \sqrt{\operatorname{Var} (h_{\sigma'}(\boldsymbol{\theta}))} \nonumber \\
&= O (n^3 m_n p_n)
\end{align}
\end{adjustwidth}
Where the final result follows the same arguments as in the proof of Theorem \ref{theorem1}, using the bound on sixth moments from Assumption \ref{assumption5} and the fact that there are $O\left(n^7\right)$ pairs of quadruples sharing nodes. Therefore, 
$$ \lim_{n \xrightarrow{} \infty} \operatorname{Pr} \left( \frac{\left\|\boldsymbol{H}_n(\boldsymbol{\theta})-E\left(\boldsymbol{H}_n(\boldsymbol{\theta})\right)\right\|}{m_n p_n} > \varepsilon\right) \leq \frac{ \mathbb{E} \left( \left\|\boldsymbol{H}_n(\boldsymbol{\theta})-E\left(\boldsymbol{H}_n(\boldsymbol{\theta})\right)\right\|^2 \right)}{\varepsilon^2 (m_n p_n)^2} = 0$$
for any $\varepsilon >0$.\\

\noindent \textit{Convergence to the limit matrix.}\\
Therefore, pointwise convergence and uniform convergence have been shown. In particular, evaluating at $\boldsymbol{\theta}_0$,
$$(m_n p_n)^{-1}\boldsymbol{H}_n(\boldsymbol{\theta}_0) \xrightarrow{p} \lim_{n \xrightarrow[]{} \infty} (m_n p_n)^{-1}\mathbb{E}(\boldsymbol{H}_n(\boldsymbol{\theta}_0)),$$
the matrix given in Assumption \ref{assumption4}.\\
\\
\noindent \textbf{Step 4.} \\
\textit{Limit distribution of the estimator.}\\
In this final step, the limit distribution of the estimator is derived. A first order Taylor expansion the first order condition to the log-likelihood optimization problem around the true value of the parameter, $\boldsymbol{\theta}_0$ yields:
$$ \boldsymbol{S}_n(\boldsymbol{\theta}_n) = \boldsymbol{S}_n (\boldsymbol{\theta}_0) + \boldsymbol{H}_n (\boldsymbol{\theta}_*) (\boldsymbol{\theta}_n - \boldsymbol{\theta}_0) $$
Where $\boldsymbol{\theta}_* \in \Theta$ is a value that lies between $\boldsymbol{{\theta}}_n$ and $\boldsymbol{\theta}_{0}$. Since from the first order condition it follows that $\boldsymbol{S}_n(\boldsymbol{\theta}_n) = 0$, and hence:
$$ \boldsymbol{\theta}_n - \boldsymbol{\theta}_0 = - \boldsymbol{H}_n (\boldsymbol{\theta}_*)^{-1} \boldsymbol{S}_n (\boldsymbol{\theta}_0) $$
To get to the limit distribution, a proper normalization is needed. Define the sandwich asymptotic variance matrix (given that the information matrix equality fails in this quasi-likelihood setting) $\boldsymbol{\Omega}_n ( \boldsymbol{\theta}_0)= \boldsymbol{H}_n( \boldsymbol{\theta}_0)^{-1} \boldsymbol{\Upsilon}_n ( \boldsymbol{\theta}_0) \boldsymbol{H}_n( \boldsymbol{\theta}_0)^{-1} $.

Since $\boldsymbol{\theta}_* \xrightarrow{p} \boldsymbol{\theta}_0$, the stochastic equicontinuity bound established in Step 3 gives $(m_n p_n)^{-1}\|\boldsymbol{H}_n(\boldsymbol{\theta}_*) - \boldsymbol{H}_n(\boldsymbol{\theta}_0)\| \xrightarrow{p} 0$, so that, on the event where both matrices are invertible, $(m_n p_n)\boldsymbol{H}_n(\boldsymbol{\theta}_*)^{-1} = (m_n p_n)\boldsymbol{H}_n(\boldsymbol{\theta}_0)^{-1} + o_p(1)$, the invertibility and the associated eigenvalue bounds being established in the final part of this proof. Consequently,
$$\boldsymbol{\Omega}_n(\boldsymbol{\theta}_0)^{-1/2}\left(\boldsymbol{\theta}_n-\boldsymbol{\theta}_{0}\right)=-\boldsymbol{\Omega}_n(\boldsymbol{\theta}_0)^{-1/2} \boldsymbol{H}_n\left(\boldsymbol{\theta}_{0}\right)^{-1} \boldsymbol{S}_n\left(\boldsymbol{\theta}_{0}\right) + o_p(1).$$
Passing from the limit distribution of the score to that of the estimator requires multiplying by
$$\boldsymbol{M}_n := \boldsymbol{\Omega}_n(\boldsymbol{\theta}_0)^{-1/2} \boldsymbol{H}_n(\boldsymbol{\theta}_0)^{-1} \boldsymbol{\Upsilon}_n(\boldsymbol{\theta}_0)^{1/2},$$
which satisfies $\boldsymbol{M}_n\boldsymbol{M}_n' = \boldsymbol{I}$ by the definition of $\boldsymbol{\Omega}_n(\boldsymbol{\theta}_0)$ and is therefore orthogonal for every $n$. Orthogonality alone is not enough: $\boldsymbol{M}_n$ is constructed from $\boldsymbol{\Upsilon}_n(\boldsymbol{\theta}_0)$ and is therefore not independent of the standardized score it multiplies, so the rotation invariance of the standard normal law cannot be invoked directly. The argument proceeds conditionally on $\mathcal{F}_n$, where the rotation is fixed. A deterministic rotation would serve equally well and would not need to converge, but establishing one would require the conditional score covariance to concentrate around its unconditional mean, which the assumptions do not deliver.

Let $\overline{\boldsymbol{H}}_n := (m_n p_n)^{-1}\boldsymbol{H}_n(\boldsymbol{\theta}_0)$ and $\overline{\boldsymbol{H}}_\infty := \lim_{n \to \infty}\mathbb{E}(\overline{\boldsymbol{H}}_n)$, which exists and is negative definite by Assumption \ref{assumption4}, and define
$$\boldsymbol{\Omega}_{X,n} := (m_n p_n)^{-2}\,\overline{\boldsymbol{H}}_\infty^{-1} \left(16\,\boldsymbol{\Upsilon}_X\right) \overline{\boldsymbol{H}}_\infty^{-1}, \qquad \boldsymbol{L}_{X,n} := \boldsymbol{\Omega}_{X,n}^{-1/2}\,(m_n p_n)^{-1}\overline{\boldsymbol{H}}_\infty^{-1}\left(16\,\boldsymbol{\Upsilon}_X\right)^{1/2},$$
so that $\boldsymbol{L}_{X,n}\boldsymbol{L}_{X,n}' = \boldsymbol{I}$ and $\boldsymbol{L}_{X,n}$ is a fixed matrix given $\mathcal{F}_n$, since $\boldsymbol{\Upsilon}_X$ is a conditional expectation given $\mathcal{F}_n$ and $\overline{\boldsymbol{H}}_\infty$ is deterministic. By the conditional central limit theorem of Step 2, together with $\boldsymbol{\Upsilon}^{-1/2}\boldsymbol{U}_n = o_p(1)$ from the asymptotic equivalence,
$$\left(16\,\boldsymbol{\Upsilon}_X\right)^{-1/2}\boldsymbol{S}_n(\boldsymbol{\theta}_0) \xrightarrow{d} N(\boldsymbol{0}, \boldsymbol{I})$$
conditionally on $\mathcal{F}_n$. Since an orthogonal transformation that is fixed given the conditioning variables leaves a conditionally standard normal vector standard normal, combining this with the expansion above and replacing $\boldsymbol{H}_n(\boldsymbol{\theta}_0)$ by $(m_n p_n)\overline{\boldsymbol{H}}_\infty$, which is negligible at the Hessian scale by Step 3, gives
$$\boldsymbol{\Omega}_{X,n}^{-1/2}\left(\boldsymbol{\theta}_n - \boldsymbol{\theta}_0\right) \xrightarrow{d} N(\boldsymbol{0}, \boldsymbol{I})$$
conditionally on $\mathcal{F}_n$. The limiting distribution does not depend on the conditioning variables, so the convergence also holds unconditionally.

Finally, $\boldsymbol{\Omega}_n(\boldsymbol{\theta}_0)$ and $\boldsymbol{\Omega}_{X,n}$ differ through $\overline{\boldsymbol{\Upsilon}}_n(\boldsymbol{\theta}_0)$ against $\overline{\boldsymbol{\Upsilon}}_X$ and $\overline{\boldsymbol{H}}_n$ against $\overline{\boldsymbol{H}}_\infty$, both of which vanish in norm by \eqref{bridge} and Step 3 respectively. The common scale $(n(n-1)p_n)^{-1}$ cancels in the ratio $\boldsymbol{\Omega}_n(\boldsymbol{\theta}_0)^{-1/2}\boldsymbol{\Omega}_{X,n}^{1/2}$, and since the eigenvalue bounds of Step 2 hold for both matrices, the perturbation bound for matrix square roots used there gives $\|\boldsymbol{\Omega}_n(\boldsymbol{\theta}_0)^{-1/2}\boldsymbol{\Omega}_{X,n}^{1/2} - \boldsymbol{I}\| \xrightarrow{p} 0$. Slutsky's theorem then yields
$$\boldsymbol{\Omega}_n(\boldsymbol{\theta}_0)^{-1/2} (\boldsymbol{\theta}_n - \boldsymbol{\theta}_0) \xrightarrow{d} N (0, \boldsymbol{I}).$$
Slutsky applies at this last step, unlike at the first, because the multiplying object is a ratio of two versions of the same covariance matrix rather than a product of distinct matrices, so that whatever drift the score covariance exhibits affects both and cancels.\\
\\
\noindent \textit{Determining the convergence rate.}\\
Now, the focus is on determining the convergence rate of the estimator $\boldsymbol{\theta}_n$. I have established before that $\boldsymbol{H}_n(\boldsymbol{\theta}_0) = O(m_n p_n) = O(n^4 p_n)$. The key remaining component is to establish the order of $\bo \Upsilon_n(\boldsymbol{\theta}_0) = O(n^6 p_n)$. Intuitively, it follows as: 
\begin{align} \label{variance_score}
    \operatorname{Var} (\boldsymbol{S}_n(\boldsymbol{\theta}_0)) &= \mathbb{E} [\boldsymbol{S}_n(\boldsymbol{\theta}_0) \boldsymbol{S}_n(\boldsymbol{\theta}_0)'] \nonumber \\
    &= \mathbb{E} \left[ \left( \sum_{\sigma \in \mathcal{N}_{m_n}}\bo s_\sigma(\boldsymbol{\theta}_0)\right) \left( \sum_{\sigma' \in \mathcal{N}_{m_n}}\bo s_{\sigma'}(\boldsymbol{\theta}_0)\right)\right] \nonumber \\
    &= \sum_{\sigma \in \mathcal{N}_{m_n}} \sum_{\sigma' \in \mathcal{N}_{m_n}} \mathbb{E} \left[ \bo s_\sigma(\boldsymbol{\theta}_0) \bo s_{\sigma'}(\boldsymbol{\theta}_0)'\right]
\end{align}
From the conditional independence structure, the intermediate result 4, and the results in Step 2, the leading tem of the variance is composed by pairs of quadruples that share one dyad in common, since for pairs sharing no dyads in common, $\mathbb{E} \left[ \bo s_\sigma(\boldsymbol{\theta}_0) \bo s_{\sigma'}(\boldsymbol{\theta}_0)'\right] = 0$. There are $O(n^6)$ of such pairs, and each contributes $O(p_n)$ to the variance, leading to the rate of convergence above.

To understand why each pair sharing a dyad in common contributes $O(p_n)$ to the variance, first, remember the expression for the summands of the score function, for a quadruple $\sigma=\left\{(i, j),\left(i, j^{\prime}\right),\left(i^{\prime}, j\right),\left(i^{\prime}, j^{\prime}\right)\right\}$ is given by $$\boldsymbol{s}_\sigma (\boldsymbol{\theta}_0)=\boldsymbol{r}_\sigma\left[1\left\{z_\sigma=1\right\}\left(1-F\left(\boldsymbol{r}_\sigma^{\prime} \boldsymbol{\theta}_0\right)\right)-1\left\{z_\sigma=-1\right\} F\left(\boldsymbol{r}_\sigma^{\prime} \boldsymbol{\theta}_0\right)\right].$$ Note further that:
$$\operatorname{Var}(\bo s_\sigma (\boldsymbol{\theta}_0)) = \mathbb{E}[\bo s_\sigma (\boldsymbol{\theta}_0) \bo s_\sigma(\boldsymbol{\theta}_0)' \mid 1 \{ z_{\sigma} \in \{-1, 1\}\}] \times \operatorname{Pr}(z_{\sigma} \in \{-1, 1\})$$

Such that: (i) if $z_\sigma=1: \bo s_\sigma (\boldsymbol{\theta}_0)=\boldsymbol{r}_\sigma\left(1-\boldsymbol{F}\left(\boldsymbol{r}_\sigma^{\prime} \theta_0\right)\right)$; and (ii) if $z_\sigma=-1: \bo s_\sigma (\boldsymbol{\theta}_0)=-\boldsymbol{r}_\sigma F\left(\boldsymbol{r}_\sigma^{\prime} \theta_0\right)$. Since $0<F(x)<1$ for the logistic function, both $(1-F)^2 \leq 1$ and $F^2 \leq 1$.
Therefore:
$$\|\bo s_\sigma (\boldsymbol{\theta}_0)\|^2 = \|\boldsymbol{r}_\sigma\|^2 \times \begin{cases}
(1-F(\boldsymbol{r}_\sigma'\theta_0))^2 & \text{if } z_\sigma = 1\\
F(\boldsymbol{r}_\sigma'\theta_0)^2 & \text{if } z\sigma = -1
\end{cases}$$
Such that:
$$ \mathbb{E}[ \| \bo s_\sigma (\boldsymbol{\theta}_0) \|^2 \mid z_{\sigma} \in \{-1, 1\}] \leq \mathbb{E} [\| \bo r_\sigma \|^2] \times O(1) = O(1) $$
by Assumption \ref{assumption3} (bounded second moments of covariates). Given this result, it follows that:
$$
\|\operatorname{Var}(\bo s_\sigma (\boldsymbol{\theta}_0))\| \leq \mathbb{E}\left[\|\bo s_\sigma (\boldsymbol{\theta}_0)\|^2\right]=O(1) \times \operatorname{Pr}\left(z_\sigma \in\{-1,1\}\right)=O\left(p_\sigma\right)
$$
leading to the expression $
\sqrt{\|\operatorname{Var}(\bo s_\sigma (\boldsymbol{\theta}_0))\|}=O\left(\sqrt{p_\sigma}\right)
$. Going back to the expression of the variance of the scores, given by Equation \ref{variance_score}, by an application of the Cauchy-Schwarz inequality:
\begin{align}
    \operatorname{Var} (\boldsymbol{S}_n(\boldsymbol{\theta}_0)) &= \mathbb{E} [\boldsymbol{S}_n(\boldsymbol{\theta}_0) \boldsymbol{S}_n(\boldsymbol{\theta}_0)'] \nonumber \\
    &= \mathbb{E} \left[ \left( \sum_{\sigma \in \mathcal{N}_{m_n}}s_\sigma(\boldsymbol{\theta}_0)\right) \left( \sum_{\sigma' \in \mathcal{N}_{m_n}}s_{\sigma'}(\boldsymbol{\theta}_0)\right)\right] \nonumber \\
    &= \sum_{\sigma \in \mathcal{N}_{m_n}} \sum_{\sigma' \in \mathcal{N}_{m_n}} \mathbb{E} \left[ \bo s_\sigma(\boldsymbol{\theta}_0) \bo s_{\sigma'}(\boldsymbol{\theta}_0)'\right] \nonumber \\
    &= \sum_{\sigma \in \mathcal{N}_{m_n}} \sum_{\sigma' \in \mathcal{N}_{m_n}} 1 \{\text{dyads}(\sigma) \cap \text{dyads}(\sigma') \neq \emptyset \} \mathbb{E} \left[ \bo s_\sigma(\boldsymbol{\theta}_0) \bo s_{\sigma'}(\boldsymbol{\theta}_0)'\right] \nonumber \\
    &= \sum_{\sigma \in \mathcal{N}_{m_n}} \sum_{\sigma' \in \mathcal{N}_{m_n}} 1 \{\text{dyads}(\sigma) \cap \text{dyads}(\sigma') \neq \emptyset \} \operatorname{Cov} (\bo s_\sigma(\boldsymbol{\theta}_0), \bo s_{\sigma'}(\boldsymbol{\theta}_0)) \nonumber \\
    &\leq \sum_{\sigma \in \mathcal{N}_{m_n}} \sum_{\sigma' \in \mathcal{N}_{m_n}} 1 \{\text{dyads}(\sigma) \cap \text{dyads}(\sigma') \neq \emptyset \} \| \operatorname{Cov} (\bo s_\sigma(\boldsymbol{\theta}_0), \bo s_{\sigma'}(\boldsymbol{\theta}_0)) \| \nonumber \\
    & \leq \sum_{\sigma \in \mathcal{N}_{m_n}} \sum_{\sigma' \in \mathcal{N}_{m_n}} 1 \{\text{dyads}(\sigma) \cap \text{dyads}(\sigma') \neq \emptyset \} \sqrt{\| \operatorname{Var} (\bo s_\sigma(\boldsymbol{\theta}_0))\|} \sqrt{\| \operatorname{Var} (\bo s_{\sigma'}(\boldsymbol{\theta}_0))\|} \nonumber \\
    &= \sum_{\sigma \in \mathcal{N}_{m_n}} \sqrt{\| \operatorname{Var} (\bo s_\sigma(\boldsymbol{\theta} _0))\|} \sum_{\sigma' \in \mathcal{N}_{m_n}} 1 \{\text{dyads}(\sigma) \cap \text{dyads}(\sigma') \neq \emptyset \} \sqrt{\| \operatorname{Var} (\bo s_{\sigma'}(\boldsymbol{\theta}_0))\|} \nonumber \\
    &= \sum_{\sigma \in \mathcal{N}_{m_n}} O(\sqrt{p_\sigma}) \sum_{\sigma' \in \mathcal{N}_{m_n}}1 \{\text{dyads}(\sigma) \cap \text{dyads}(\sigma') \neq \emptyset \} O(\sqrt{p_{\sigma'}}) \nonumber \\
    &= O(n^6 p_n)
\end{align}
Where $1 \{\text{dyads}(\sigma) \cap \text{dyads}(\sigma') \neq \emptyset \}$ denotes if quadruples $\sigma$ and $\sigma'$ share a dyad in common. The final equality follows from analysing the two sums separately. First, focusing on the term $\sum_{\sigma \in \mathcal{N}_{m_n}} \sqrt{\left\|\operatorname{Var}\left(\bo s_\sigma(\boldsymbol{\theta} _0)\right)\right\|}=\sum_{\sigma \in \mathcal{N}_{m_n}} O\left(\sqrt{p_\sigma}\right)
$. By Jensen's inequality (since square root is concave) $
\frac{1}{m_n} \sum_\sigma \sqrt{p_\sigma} \leq \sqrt{\frac{1}{m_n} \sum_\sigma p_\sigma}=\sqrt{p_n}
$, and therefore $\sum_{\sigma \in \mathcal{N}_{m_n}} \sqrt{p_\sigma} \leq m_n \sqrt{p_n}=O\left(n^4 \sqrt{p_n}\right)$.

For the term $\sum_{\sigma' \in \mathcal{N}_{m_n}} 1 \{\text{dyads}(\sigma) \cap \text{dyads}(\sigma') \neq \emptyset \} \sqrt{\| \operatorname{Var} (s_{\sigma'}(\boldsymbol{\theta}_0))\|}$, note that for any fixed $\sigma$, there are $O\left(n^2\right)$ quadruples $\sigma^{\prime}$ sharing a dyad with $\sigma$. Then, similarly, applying Jensen's inequality to this subset:$
\sum_{\sigma' \in \mathcal{N}_{m_n}} 1 \{\text{dyads}(\sigma) \cap \text{dyads}(\sigma') \neq \emptyset \} \sqrt{p_{\sigma^{\prime}}}=O\left(n^2 \sqrt{p_n}\right)
$. Which leads to the desired result.

From that, the rate of convergence of the estimator is determined by the order of
\begin{align}
    \boldsymbol{\Omega}_n ( \boldsymbol{\theta}_0)&= \boldsymbol{H}_n( \boldsymbol{\theta}_0)^{-1} \boldsymbol{\Upsilon}_n ( \boldsymbol{\theta}_0) \boldsymbol{H}_n( \boldsymbol{\theta}_0)^{-1} \nonumber \\
    &= O_p \left( \frac{1}{n^4p_n}\right) \times O_p(n^6p_n) \times O_p \left( \frac{1}{n^4p_n}\right) = O_p\left( \frac{1}{n^2 p_n}\right)
\end{align}
Such that $\boldsymbol{\Omega}_n ( \boldsymbol{\theta}_0)^{-1/2} = O_p(\sqrt{n^2 p_n})$, which is asymptotically equivalent to $O_p(\sqrt{n(n-1p_n)})$, delivering the desired result that $\| \bo \theta_n - \bo \theta_0 \| = O_p\left( \frac{1}{\sqrt{n(n-1)p_n}}\right)$.\\
\\
\noindent \textit{Feasible inference.}\\
The limit distribution derived above uses the infeasible normalization $\boldsymbol{\Omega}_n(\boldsymbol{\theta}_0)$. Replacing $\boldsymbol{\theta}_0$ by the estimator relies on the following results:
\begin{enumerate}
    \item The stochastic equicontinuity bound for the Hessian established in Step 3, which gives $(m_n p_n)^{-1}\|\boldsymbol{H}_n(\boldsymbol{\theta}_1) - \boldsymbol{H}_n(\boldsymbol{\theta}_2)\| \leq O_p(1)\|\boldsymbol{\theta}_1 - \boldsymbol{\theta}_2\|$ uniformly over $\Theta$, so that, combined with the consistency of $\boldsymbol{\theta}_n$, $(m_n p_n)^{-1}\|\boldsymbol{H}_n(\boldsymbol{\theta}_n) - \boldsymbol{H}_n(\boldsymbol{\theta}_0)\| \xrightarrow{p} 0$.
    \item The corresponding bound for the score covariance at its own scale, $(n^6 p_n)^{-1}\|\boldsymbol{\Upsilon}_n(\boldsymbol{\theta}_1) - \boldsymbol{\Upsilon}_n(\boldsymbol{\theta}_2)\| \leq O_p(1)\|\boldsymbol{\theta}_1 - \boldsymbol{\theta}_2\|$, obtained by the same argument, so that $(n^6 p_n)^{-1}\|\boldsymbol{\Upsilon}_n(\boldsymbol{\theta}_n) - \boldsymbol{\Upsilon}_n(\boldsymbol{\theta}_0)\| \xrightarrow{p} 0$.
    \item The consistency of $\boldsymbol{\theta}_n \xrightarrow{p} \boldsymbol{\theta}_0$ (Theorem \ref{theorem1}).
\end{enumerate}

By Assumption \ref{assumption_scorevar}, item 2 above, and Weyl's inequality
$$\lambda_{\min}\left[(n^6 p_n)^{-1} \boldsymbol{\Upsilon}_n(\boldsymbol{\theta}_n)\right] \geq c - o_p(1) \geq c/2$$
with probability approaching one, so that $\boldsymbol{\Upsilon}_n(\boldsymbol{\theta}_n)$ is positive definite and $\|\boldsymbol{\Upsilon}_n(\boldsymbol{\theta}_n)^{-1}\| = O_p((n^6 p_n)^{-1})$. For the Hessian, Assumption \ref{assumption4} states that $\lim_{n \to \infty}(m_n p_n)^{-1}\mathbb{E}(\boldsymbol{H}_n(\boldsymbol{\theta}_0))$ has maximal rank. This limit is negative semidefinite, being the limit of negative semidefinite matrices, and therefore negative definite; equivalently, $-\lim_{n \to \infty}(m_n p_n)^{-1}\mathbb{E}(\boldsymbol{H}_n(\boldsymbol{\theta}_0))$ is positive definite, with smallest eigenvalue $c_H > 0$. By the pointwise convergence of the normalized Hessian established in Step 3, together with $m_n^*/(m_n p_n) \xrightarrow{p} 1$ from the proof of Theorem \ref{theorem1}, item 1, and Weyl's inequality,
$$\lambda_{\min}\left[-(m_n p_n)^{-1}\boldsymbol{H}_n(\boldsymbol{\theta}_n)\right] \geq c_H - o_p(1) \geq c_H/2$$
with probability approaching one, so that $\boldsymbol{H}_n(\boldsymbol{\theta}_n)$ is invertible and $\|(m_n p_n)\boldsymbol{H}_n(\boldsymbol{\theta}_n)^{-1}\| = O_p(1)$.

The feasible sandwich matrix $\boldsymbol{\Omega}_n(\boldsymbol{\theta}_n)$ is therefore positive definite with probability approaching one, so that $\boldsymbol{\Omega}_n(\boldsymbol{\theta}_n)^{-1/2}$ is well defined and the required matrix inversions are valid. Writing the common scale $(n(n-1)p_n)^{-1}$ explicitly, it cancels in the ratio $\boldsymbol{\Omega}_n(\boldsymbol{\theta}_n)^{-1/2}\boldsymbol{\Omega}_n(\boldsymbol{\theta}_0)^{1/2}$, and items 1 and 2 together with the eigenvalue bounds give $\|\boldsymbol{\Omega}_n(\boldsymbol{\theta}_n)^{-1/2}\boldsymbol{\Omega}_n(\boldsymbol{\theta}_0)^{1/2} - \boldsymbol{I}\| \xrightarrow{p} 0$. By Slutsky's theorem, the limit distribution of the estimator can therefore be restated as:
$$\boldsymbol{\Omega}_n(\boldsymbol{\theta}_n)^{-1/2} (\boldsymbol{\theta}_n - \boldsymbol{\theta}_0) \xrightarrow{d} N (0, \boldsymbol{I})$$

The proof of Theorem \ref{theorem2} is thus complete.

\section{Sparsity Conditions: Detailed Derivations}\label{appendix_sparsity}

This appendix collects the derivations supporting the sparsity conditions and parameterization discussed in Subsection~\ref{subsection_sparsity}.
\ref{appendix_sparsity_rates} derives the relations
$q_{n,y} \sim \sqrt{p_{n,y}}$ in the left tail and
$1 - q_{n,y} \sim \sqrt{p_{n,y}}$ in the right tail, from which the rate
restrictions on the estimable threshold range follow.
\ref{appendix_sparsity_heterogeneity} analyzes how heterogeneity
patterns shape the admissible range of $|g_y(n)|$.

\subsection{Asymptotic Relations Between \texorpdfstring{$p_{n,y}$ and $q_{n,y}$}{p and q}}\label{appendix_sparsity_rates}

Following \citet{jochmans2018semiparametric}, consider sequences of fixed
effects where $\alpha_{i,y}$ and $\gamma_{j,y}$ tend to $-\infty$, so that
the probability of falling below the threshold $y$ vanishes (the left tail
of the distribution). With covariates of bounded support, the expected
fraction of informative quadruples satisfies
$$
p_{n,y} \sim
\left(\frac{\sum_{i=1}^n e^{\alpha_{i,y}}}{n}\right)^2
\left(\frac{\sum_{i=1}^n e^{\gamma_{i,y}}}{n}\right)^2
$$
as $n$ increases, by the exponential tails of the logistic distribution.
Moreover, it follows that
$$
q_{n,y} \sim
\frac{\sum_{i=1}^n e^{\alpha_{i,y}}}{n} \cdot
\frac{\sum_{i=1}^n e^{\gamma_{i,y}}}{n},
$$
so that $q_{n,y} \sim \sqrt{p_{n,y}}$ in the left tail.

\citet{jochmans2018semiparametric} notes that networks with many ones (when
$q_{n,y} \to 1$) are equally challenging. In the DR setting, this case is
often equally or more relevant. When outcomes are bounded below at zero
with positive support, the left tail becomes irrelevant, and the constraint
on the estimable range of thresholds comes from the right tail. By the
logistic symmetry, the same structure holds in the right tail, when fixed
effects tend to $+\infty$ as $n$ grows:
$$
p_{n,y} \sim
\left(\frac{\sum_{i=1}^n e^{-\alpha_{i,y}}}{n}\right)^2
\left(\frac{\sum_{i=1}^n e^{-\gamma_{i,y}}}{n}\right)^2,
$$
with $1 - q_{n,y} \sim \sqrt{p_{n,y}}$. Therefore, the rate at which
$p_{n,y}$ shrinks follows the same structure in both tails: whether the
fixed effects tend to $-\infty$ or $+\infty$, reduced variation in the
indicators generates fewer informative quadruples.

The relations $q_{n,y} \sim \sqrt{p_{n,y}}$ and
$1 - q_{n,y} \sim \sqrt{p_{n,y}}$ in the left and right tails, respectively,
combined with the condition $np_{n,y} \to \infty$ yield rate restrictions on
the estimable thresholds: $\sqrt{n}\, q_{n,y} \to \infty$ in the left tail, and
$\sqrt{n}\,(1 - q_{n,y}) \to \infty$ in the right tail. These conditions illustrate
that accessing very extreme quantiles requires larger samples, regardless of
the distribution of node heterogeneity.

\subsection{Heterogeneity Patterns and the Admissible Range of \texorpdfstring{$|g_y(n)|$}{g\_y(n)}}\label{appendix_sparsity_heterogeneity}

The condition $np_{n,y} \to \infty$ can be rewritten as
$A_{n,y}^2 G_{n,y}^2 / n^3 \to \infty$, by defining
$A_{n,y} = \sum_{i=1}^n e^{\alpha_{i,y}}$ and
$G_{n,y} = \sum_{j=1}^n e^{\gamma_{j,y}}$ in the left tail (where
$\alpha_{i,y}, \gamma_{i,y} \to -\infty$), and
$A_{n,y} = \sum_{i=1}^n e^{-\alpha_{i,y}}$ and
$G_{n,y} = \sum_{j=1}^n e^{-\gamma_{j,y}}$ in the right tail (where
$\alpha_{i,y}, \gamma_{i,y} \to +\infty$). In both cases, the condition
requires $A_{n,y}^2 G_{n,y}^2 / n^3 \to \infty$, placing a lower bound on
how fast $A_{n,y}$ and $G_{n,y}$ can shrink as $n \to \infty$.

Under the parameterization $\alpha_{i,y} = a_{n,i}\, g_y(n)$ and
$\gamma_{j,y} = b_{n,j}\, g_y(n)$ of
Subsection~\ref{subsection_sparsity}, the behavior of $A_{n,y}$ and
$G_{n,y}$ is governed by $g_y(n)$ jointly with the heterogeneity sequences
$a_{n,i}$ and $b_{n,j}$. As $|g_y(n)|$ grows, that is, as the threshold
moves into a sparser region, the fixed effects are driven toward
$\pm\infty$, so that $A_{n,y}$ and $G_{n,y}$, being sums of
$e^{\pm a_{n,i} g_y(n)}$ and $e^{\pm b_{n,j} g_y(n)}$, shrink toward zero.
For the condition $A_{n,y}^2 G_{n,y}^2 / n^3 \to \infty$ to hold, the
shrinkage of $A_{n,y}$ and $G_{n,y}$ cannot be too fast; the largest
$|g_y(n)|$ for which the condition still holds therefore depends on the
heterogeneity pattern.

The link probability satisfies $q_{n,y} \sim (A_{n,y}/n)(G_{n,y}/n)$, so
different heterogeneity patterns yield different levels of $q_{n,y}$ for
the same $g_y(n)$. Since $A_{n,y}$ and $G_{n,y}$ are sums of the convex
transformations $a \mapsto e^{a\, g_y(n)}$ and $b \mapsto e^{b\, g_y(n)}$
of the heterogeneity sequences, Jensen's inequality implies that, for a
fixed mean, greater dispersion in $a_{n,i}$ and $b_{n,j}$ yields larger
$A_{n,y}$ and $G_{n,y}$ for the same
$g_y(n)$. The condition
$A_{n,y}^2 G_{n,y}^2 / n^3 \to \infty$ therefore continues to hold at
larger $|g_y(n)|$, so the largest admissible $|g_y(n)|$ is larger under
more dispersed heterogeneity.

Both $q_{n,y}$ and $p_{n,y}$ are population quantities determined by the
fixed effects, hence by $a_{n,i}$, $b_{n,j}$, and $g_y(n)$ through
$A_{n,y}$ and $G_{n,y}$, so the identification condition is ultimately a
restriction on these primitives. It nonetheless admits a characterization
in the link probability alone. From the relations in
\ref{appendix_sparsity_rates}, $p_{n,y} \sim q_{n,y}^2$; since
this holds up to constants bounded away from zero and infinity and
$np_{n,y} \to \infty$ is a rate condition, it is equivalent to
$\sqrt{n}\, q_{n,y} \to \infty$, giving the estimable range
$\sqrt{n}\, q_{n,y} \to \infty$ in the left tail and $\sqrt{n}\,(1 - q_{n,y}) \to \infty$ in
the right tail. Thus the same restriction has a clean characterization in
the link probability it induces, while its reach in $|g_y(n)|$ ---
established above --- depends on the heterogeneity pattern through the
mapping $g_y(n) \mapsto q_{n,y}$; the two are not in tension. Specific
parameterization and their implications are explored in
Section~\ref{simulations}.

\section{Proof of Theorem \ref{theorem3}}\label{joint_distribution_appendix}

Throughout this appendix, threshold-specific objects are indexed by $k$ rather than by $y_k$ where this is more compact; in particular, $p_{n,k} \equiv p_{n,y_k}$ denotes the informativeness parameter at threshold $y_k$, and similarly for the score, Hessian, and covariance blocks.

\noindent \textit{Set up.}\\
Since the estimation for each threshold is done separately, I start with the first order Taylor expansions to the first order condition of the log-likelihood optimization problem around the true values of the parameters. Let $\mathbf{y} = (y_1, ..., y_K)'$ denote the vector of K ordered thresholds where $y_1 < y_2 < ... < y_K$, with each $y_k \in \mathcal{Y}$ (the support of the outcome variable). For each threshold $y_k$, the parameter vector $\boldsymbol{\theta}_{y_k} \in \mathbb{R}^p$ represents the $p$-dimensional coefficient vector at that threshold.

For each threshold $k = 1, ..., K$:
\begin{align}
\boldsymbol{\theta}_{n,y_k} - \boldsymbol{\theta}_{y_{k,0}} = -\boldsymbol{H}_{n,k}(\boldsymbol{\theta}_{y_k,*})^{-1} \boldsymbol{S}_{n,k}(\boldsymbol{\theta}_{y_{k,0}})
\end{align}
where $\boldsymbol{\theta}_{y_k,*}$ lies between $\boldsymbol{\theta}_{n,y_k}$ and $\boldsymbol{\theta}_{y_{k,0}}$ by the mean value theorem.

I proceed by stacking these K expansions into a single system. Define the stacked parameter vector $\boldsymbol{\theta}_{\mathbf{y}} = (\boldsymbol{\theta}_{y_1}', ..., \boldsymbol{\theta}_{y_K}')' \in \mathbb{R}^{Kp}$ and similarly for the estimates and true values:
\begin{adjustwidth}{-0.25in}{-0.25in}
\begin{align}
\begin{pmatrix} 
\boldsymbol{\theta}_{n,y_1} - \boldsymbol{\theta}_{y_{1,0}} \\ 
\vdots \\
\boldsymbol{\theta}_{n,y_K} - \boldsymbol{\theta}_{y_{K,0}}
\end{pmatrix}_{Kp \times 1} = 
-\begin{pmatrix}
\boldsymbol{H}_{n,1}(\boldsymbol{\theta}_{y_1,*})^{-1} & 0 & \cdots & 0 \\ 
0 & \boldsymbol{H}_{n,2}(\boldsymbol{\theta}_{y_2,*})^{-1} & \cdots & 0 \\
\vdots & \vdots & \ddots & \vdots \\
0 & 0 & \cdots & \boldsymbol{H}_{n,K}(\boldsymbol{\theta}_{y_K,*})^{-1}
\end{pmatrix}_{Kp \times Kp}
\begin{pmatrix} 
\boldsymbol{S}_{n,1}(\boldsymbol{\theta}_{y_{1,0}}) \\ 
\vdots \\
\boldsymbol{S}_{n,K}(\boldsymbol{\theta}_{y_{K,0}})
\end{pmatrix}_{Kp \times 1}
\end{align}
\end{adjustwidth}
The block diagonal structure of the Hessian matrix reflects the fact that the log-likelihood at each threshold is maximized independently, with no cross-threshold restrictions on the parameters.

The matrices $\boldsymbol{H}_{n,k}(\boldsymbol{\theta}_{y_k,*})^{-1}$ for $k = 1, ..., K$ are the inverse Hessian matrices evaluated at intermediate values $\boldsymbol{\theta}_{y_k,*}$. These intermediate values arise from applying the mean value theorem to the first-order conditions at each threshold separately. Specifically, for each $k$, the vector $\boldsymbol{\theta}_{y_k,*}$ lies between the true parameter $\boldsymbol{\theta}_{y_{k,0}}$ and the estimator $\boldsymbol{\theta}_{n,y_k}$. 

Importantly, these intermediate values are threshold-specific, that is, $\boldsymbol{\theta}_{y_1,*}, ..., \boldsymbol{\theta}_{y_K,*}$ are generally all different because they come from separate Taylor expansions of the first-order conditions for each threshold. Since $\boldsymbol{\theta}_{n,y_k} \xrightarrow{p} \boldsymbol{\theta}_{y_{k,0}}$ for each $k = 1, ..., K$ by the consistency established in Theorem \ref{theorem1}, it follows that $\boldsymbol{\theta}_{y_k,*} \xrightarrow{p} \boldsymbol{\theta}_{y_{k,0}}$ as $n \rightarrow \infty$. 

This convergence, combined with the stochastic equicontinuity bound for the normalized Hessian 
established in the proof of Theorem \ref{theorem2}, gives 
$(m_n p_{n,y_k})^{-1}\|\boldsymbol{H}_{n,k}(\boldsymbol{\theta}_{y_k,*}) - 
\boldsymbol{H}_{n,k}(\boldsymbol{\theta}_{y_{k,0}})\| \xrightarrow{p} 0$ for each $k$, so that, on 
the event where the matrices are invertible, each 
$\boldsymbol{H}_{n,k}(\boldsymbol{\theta}_{y_k,*})^{-1}$ may be replaced by 
$\boldsymbol{H}_{n,k}(\boldsymbol{\theta}_{y_{k,0}})^{-1}$ up to a remainder that is negligible at 
the block-specific scale. Invertibility at each of these arguments follows from the eigenvalue 
floor on the normalized Hessian established in the proof of Theorem \ref{theorem2}, applied 
threshold by threshold, together with the same equicontinuity bound. This substitution is valid 
simultaneously for all $K$ thresholds.

Note that it is also equivalent to write this expression as:
\begin{align} \label{joint_expansion}
    \begin{pmatrix} 
    \boldsymbol{\theta}_{n,y_1} - \boldsymbol{\theta}_{y_{1,0}} \\ 
    \vdots \\
    \boldsymbol{\theta}_{n,y_K} - \boldsymbol{\theta}_{y_{K,0}}
    \end{pmatrix} = -
    \begin{pmatrix}
    \boldsymbol{H}_{n,1}(\boldsymbol{\theta}_{y_1,0}) & 0 & \cdots & 0 \\ 
    0 & \boldsymbol{H}_{n,2}(\boldsymbol{\theta}_{y_2,0}) & \cdots & 0 \\
    \vdots & \vdots & \ddots & \vdots \\
    0 & 0 & \cdots & \boldsymbol{H}_{n,K}(\boldsymbol{\theta}_{y_K,0})
    \end{pmatrix}^{-1} 
    \begin{pmatrix} 
    \boldsymbol{S}_{n,1}(\boldsymbol{\theta}_{y_{1,0}}) \\ 
    \vdots \\
    \boldsymbol{S}_{n,K}(\boldsymbol{\theta}_{y_{K,0}}) 
    \end{pmatrix}
\end{align}
This equivalence will be useful when defining the joint Hessian matrix compactly.

Notice that the joint distribution of the estimators at different thresholds will be driven by the 
joint distributions of their score vectors. Let $\boldsymbol{\theta}_{\mathbf{y},0} = 
(\boldsymbol{\theta}_{y_1,0}', ..., \boldsymbol{\theta}_{y_K,0}')'$ denote the stacked true 
parameter vector, and write $\boldsymbol{H}_{n,\mathbf{y}}(\boldsymbol{\theta}_{\mathbf{y},0}) := 
\operatorname{diag}\left(\boldsymbol{H}_{n,1}(\boldsymbol{\theta}_{y_1,0}), \ldots, 
\boldsymbol{H}_{n,K}(\boldsymbol{\theta}_{y_K,0})\right)$ for the joint Hessian. The joint score 
covariance matrix is:

$$\boldsymbol{\Upsilon}_{n,\mathbf{y}}(\boldsymbol{\theta}_{\mathbf{y},0}) = 
\begin{pmatrix}
\boldsymbol{\Upsilon}_{n,11}(\boldsymbol{\theta}_{y_1,0}) & \boldsymbol{\Upsilon}_{n,12}(\boldsymbol{\theta}_{y_1,0}, \boldsymbol{\theta}_{y_2,0}) & \cdots & \boldsymbol{\Upsilon}_{n,1K}(\boldsymbol{\theta}_{y_1,0}, \boldsymbol{\theta}_{y_K,0})\\
\boldsymbol{\Upsilon}_{n,21}(\boldsymbol{\theta}_{y_2,0}, \boldsymbol{\theta}_{y_1,0}) & \boldsymbol{\Upsilon}_{n,22}(\boldsymbol{\theta}_{y_2,0}) & \cdots & \boldsymbol{\Upsilon}_{n,2K}(\boldsymbol{\theta}_{y_2,0}, \boldsymbol{\theta}_{y_K,0})\\
\vdots & \vdots & \ddots & \vdots \\
\boldsymbol{\Upsilon}_{n,K1}(\boldsymbol{\theta}_{y_K,0}, \boldsymbol{\theta}_{y_1,0}) & \boldsymbol{\Upsilon}_{n,K2}(\boldsymbol{\theta}_{y_K,0}, \boldsymbol{\theta}_{y_2,0}) & \cdots & \boldsymbol{\Upsilon}_{n,KK}(\boldsymbol{\theta}_{y_K,0})
\end{pmatrix}_{Kp \times Kp}$$

where each block is given by
\begin{adjustwidth}{-0.25in}{-0.25in}
\begin{align} \label{upsilon_n_kl}
    \boldsymbol{\Upsilon}_{n,kl}(\boldsymbol{\theta}_{y_k}, \boldsymbol{\theta}_{y_l}) = \sum_{i=1}^n \sum_{j \neq i} \sum_{i' \neq i,j} \sum_{j' \neq i,j,i'} \sum_{i'' \neq i,j,i'} \sum_{j'' \neq i,j,j',i''} 16 \times \Big[ \boldsymbol{s}_k(\sigma\{i,i';j,j'\}; \boldsymbol{\theta}_{y_k}) \boldsymbol{s}_l(\sigma\{i,i'';j,j''\}; \boldsymbol{\theta}_{y_l})' \Big],
\end{align}
\end{adjustwidth}
with $\boldsymbol{\Upsilon}_{n,kl}(\boldsymbol{\theta}_{y_k,0}, \boldsymbol{\theta}_{y_l,0}) = \boldsymbol{\Upsilon}_{n,lk}(\boldsymbol{\theta}_{y_l,0}, \boldsymbol{\theta}_{y_k,0})'$ by symmetry. The index restrictions $i'' \neq i'$ and $j'' \neq j'$ ensure that the two quadruples share only the dyad $(i,j)$. The joint sandwich variance matrix is:
$$\boldsymbol{\Omega}_{n,\mathbf{y}}(\boldsymbol{\theta}_{\mathbf{y},0}) = 
\boldsymbol{H}_{n,\mathbf{y}}^{-1}(\boldsymbol{\theta}_{\mathbf{y},0}) 
\boldsymbol{\Upsilon}_{n,\mathbf{y}}(\boldsymbol{\theta}_{\mathbf{y},0}) 
\boldsymbol{H}_{n,\mathbf{y}}^{-1}(\boldsymbol{\theta}_{\mathbf{y},0})'$$

Which delivers the following expression:
$$\boldsymbol{\Omega}_{n,\mathbf{y}}(\boldsymbol{\theta}_{\mathbf{y},0}) = 
\begin{pmatrix}
[\boldsymbol{\Omega}_{n,\mathbf{y}}]_{11} & [\boldsymbol{\Omega}_{n,\mathbf{y}}]_{12} & \cdots & [\boldsymbol{\Omega}_{n,\mathbf{y}}]_{1K} \\
[\boldsymbol{\Omega}_{n,\mathbf{y}}]_{21} & [\boldsymbol{\Omega}_{n,\mathbf{y}}]_{22} & \cdots & [\boldsymbol{\Omega}_{n,\mathbf{y}}]_{2K} \\
\vdots & \vdots & \ddots & \vdots \\
[\boldsymbol{\Omega}_{n,\mathbf{y}}]_{K1} & [\boldsymbol{\Omega}_{n,\mathbf{y}}]_{K2} & \cdots & [\boldsymbol{\Omega}_{n,\mathbf{y}}]_{KK}
\end{pmatrix}$$

where each block is given by:
$$[\boldsymbol{\Omega}_{n,\mathbf{y}}(\boldsymbol{\theta}_{\mathbf{y},0})]_{kl} = \boldsymbol{H}_{n,k}(\boldsymbol{\theta}_{y_k,0})^{-1}\boldsymbol{\Upsilon}_{n,kl}(\boldsymbol{\theta}_{y_k,0}, \boldsymbol{\theta}_{y_l,0})\boldsymbol{H}_{n,l}(\boldsymbol{\theta}_{y_l,0})^{-1}$$

The diagonal blocks $[\boldsymbol{\Omega}_{n,\mathbf{y}}(\boldsymbol{\theta}_{\mathbf{y},0})]_{kk} = \boldsymbol{H}_{n,k}(\boldsymbol{\theta}_{y_k,0})^{-1}\boldsymbol{\Upsilon}_{n,kk}(\boldsymbol{\theta}_{y_k,0})\boldsymbol{H}_{n,k}(\boldsymbol{\theta}_{y_k,0})^{-1}$ are the threshold-specific sandwich variances from the single-threshold case. The off-diagonal blocks $[\boldsymbol{\Omega}_{n,\mathbf{y}}(\boldsymbol{\theta}_{\mathbf{y},0})]_{kl}$ for $k \neq l$ capture the cross-threshold dependence, where $\boldsymbol{\Upsilon}_{n,kl}(\boldsymbol{\theta}_{y_k,0}, \boldsymbol{\theta}_{y_l,0})$ is the covariance between the score vectors at thresholds $y_k$ and $y_l$. 

Recall also the score-rate matrix $\boldsymbol{D}_{n,\mathbf{y}} = \operatorname{diag}\left((n^6 p_{n,1})^{1/2}\boldsymbol{I}_p, \ldots, (n^6 p_{n,K})^{1/2}\boldsymbol{I}_p\right)$ introduced in Section \ref{joint}, where $n^6 p_{n,k}$ is the order of the score covariance at threshold $y_k$ established in \ref{appendix_derivation}, arising from the $O(n^6)$ pairs of quadruples sharing exactly one dyad, each contributing $O(p_{n,k})$. The cross-threshold score covariance blocks are of order $n^6\sqrt{p_{n,k}p_{n,l}}$, as derived in the convergence-rate calculation below. Consequently, a linear combination $\sum_k \boldsymbol{c}_k' \boldsymbol{S}_{n,k}$ has variance of order $n^6 \sum_{k=1}^K \sum_{l=1}^K \|\boldsymbol{c}_k\| \|\boldsymbol{c}_l\| \sqrt{p_{n,k}p_{n,l}}$, which depends on the thresholds on which $\boldsymbol{c}$ loads. No single scalar rate normalizes the joint object, and the matrix $\boldsymbol{D}_{n,\mathbf{y}}$ is used below to normalize the joint score covariance blockwise.

Following the same approach as in the single-threshold case, I establish the joint limit distribution by first obtaining the limit distribution of the normalized stacked score vector, and then transferring it to the estimator through the stacked mean-value expansion in \eqref{joint_expansion}. Because the blocks converge at different rates, the normalization is carried out block by block. The goal is to establish parts (i)--(iii) of Theorem \ref{theorem3}: the joint Gaussian approximation for the coordinatewise studentized estimators, the limit distribution of fixed studentized linear combinations, and the $\chi^2$ limit of the Wald statistic.

Similarly to before, the proof proceeds in different steps:

\noindent\textbf{Step 1.} Establish the joint asymptotic equivalence between the score vectors and their projections across all K thresholds. Specifically, characterize the joint projections $\boldsymbol{V}_{n,k}(\boldsymbol{\theta}_{y_{k,0}})$ for $k = 1, ..., K$, and show that the stacked score vector $(\boldsymbol{S}_{n,1}', ..., \boldsymbol{S}_{n,K}')'$ is asymptotically equivalent to the stacked projection vector $(\boldsymbol{V}_{n,1}', ..., \boldsymbol{V}_{n,K}')'$.

\noindent\textbf{Step 2.} Obtain the joint limit distribution of the projections by applying the Cram\'{e}r-Wold theorem across thresholds and a conditional version of Lyapunov's CLT, express this limit under the block-rate normalization by the conditional covariance matrix $\boldsymbol{\Upsilon}_{X,\mathbf{y}}$, and transfer it from the projections to the stacked score.

\noindent\textbf{Step 3.} Combine the results from Steps 1 and 2 with the stacked mean-value expansion of the first-order conditions around the true values to obtain the joint Gaussian approximation for the studentized estimators.

\noindent\textbf{Step 4.} Determine the threshold-specific convergence rates from the block orders of the joint sandwich covariance.

\noindent\textbf{Step 5.} Establish that the plug-in covariance may replace the truth-evaluated one, and derive parts (i)--(iii) of Theorem \ref{theorem3}.\\

The cross-threshold block orders of the score covariance are derived in Step 4; they are used 
earlier in the proof, where they are cited as established there. The diagonal blocks and the 
Hessian orders are inherited from the single-threshold case.\\

\noindent \textbf{Step 1.}\\
Given $K$ thresholds $\mathbf{y} = (y_1, ..., y_K)'$, recall that our model allows for threshold-varying fixed effects, where $\alpha_{i,y_k}$ and $\gamma_{j,y_k}$ can differ across thresholds. Define the threshold-specific information sets $\mathcal{F}_{n,k} = \{\{\boldsymbol{x}_{ij}\}_{n,n}, \{\alpha_{i,y_k}\}_n, \{\gamma_{j,y_k}\}_n\}$ for $k = 1, ..., K$. 

The projection of the score at threshold $y_k$, evaluated at the true parameters, is given by:
\begin{align} \label{hajek_k}
    \boldsymbol{V}_{n,k} (\boldsymbol{\theta}_{y_{k,0}}) &=\sum_{i=1}^n \sum_{j \neq i} \sum_{i' \neq i, j} \sum_{j' \neq i,j,i'} \mathbbm{E}(\boldsymbol{s}_k(\sigma \{i,i';j,j'\}; \boldsymbol{\theta}_{y_{k,0}}) \mid \tilde{y}_{k,ij}, \mathcal{F}_{n,k}) \nonumber \\
    &= \sum_{i=1}^n \sum_{j\neq i} \boldsymbol{v}_{k,ij}(\boldsymbol{\theta}_{y_{k,0}})
\end{align}
where $\boldsymbol{v}_{k,ij}(\boldsymbol{\theta}_{y_{k,0}}) = \sum_{i' \neq i, j} \sum_{j' \neq i,j,i'} \mathbbm{E}(\boldsymbol{s}_k(\sigma \{i,i';j,j'\}; \boldsymbol{\theta}_{y_{k,0}}) \mid \tilde{y}_{k,ij}, \mathcal{F}_{n,k})$ and $\tilde{y}_{k,ij} = \mathbf{1}\{y_{ij} \leq y_k\}$ is the binary indicator at threshold $y_k$. 

The stacked projection vector for all K thresholds is:
\begin{align}
\boldsymbol{V}_{n,\mathbf{y}}(\boldsymbol{\theta}_{\mathbf{y},0}) = \begin{pmatrix}
\boldsymbol{V}_{n,1}(\boldsymbol{\theta}_{y_{1,0}}) \\
\vdots \\
\boldsymbol{V}_{n,K}(\boldsymbol{\theta}_{y_{K,0}})
\end{pmatrix} = \sum_{i=1}^n \sum_{j\neq i} \begin{pmatrix}
\boldsymbol{v}_{1,ij}(\boldsymbol{\theta}_{y_{1,0}}) \\
\vdots \\
\boldsymbol{v}_{K,ij}(\boldsymbol{\theta}_{y_{K,0}})
\end{pmatrix}
\end{align}
Note that the intermediate results 1-4 hold for all thresholds. Furthermore, intermediate results 3 and 4 are extended to the joint threshold setting.\\
\\
\noindent \textit{Intermediate result 3'.}\\
Because link decisions are conditionally independent across dyads given the information sets, for any pair of thresholds $k, l \in \{1, ..., K\}$:
$$\begin{aligned} 
\mathbb{E}\left[\boldsymbol{v}_{k,ij}\left(\boldsymbol{\theta}_{y_{k,0}}\right) \boldsymbol{v}_{l,i'j'}\left(\boldsymbol{\theta}_{y_{l,0}}\right)'\right] 
&= \mathbb{E}\left[\mathbb{E}\left[\boldsymbol{v}_{k,ij}\left(\boldsymbol{\theta}_{y_{k,0}}\right) \boldsymbol{v}_{l,i'j'}\left(\boldsymbol{\theta}_{y_{l,0}}\right)' \mid \mathcal{F}_{n,k}, \mathcal{F}_{n,l}\right]\right] \\ 
&= \mathbb{E}\left[\mathbb{E}\left[\boldsymbol{v}_{k,ij}\left(\boldsymbol{\theta}_{y_{k,0}}\right) \mid \mathcal{F}_{n,k}, \mathcal{F}_{n,l}\right] \mathbb{E}\left[\boldsymbol{v}_{l,i'j'}\left(\boldsymbol{\theta}_{y_{l,0}}\right) \mid \mathcal{F}_{n,k}, \mathcal{F}_{n,l}\right]'\right]=0
\end{aligned}$$
unless $i=i'$ and $j=j'$. 

The main difference here with respect to intermediate result 3 is that the conditioning is done on both information sets. Moreover, this result takes into account that when $i=i'$ and $j=j'$, the projections at different thresholds are correlated because: (i) they depend on the same underlying outcome $y_{ij}$, such that naturally, the idiosyncratic error terms for a dyad across different thresholds can be correlated; (ii) the binary indicators $\tilde{y}_{k,ij}$ and $\tilde{y}_{l,ij}$ are related by monotonicity (if $y_k < y_l$, then $\tilde{y}_{k,ij} = 1 \implies \tilde{y}_{l,ij} = 1$); (iii) the fixed effects may be correlated across thresholds (e.g., $\alpha_{i,y_k}$ and $\alpha_{i,y_l}$ can be dependent). This cross-threshold correlation for the same dyad is precisely what generates the off-diagonal blocks $\boldsymbol{\Upsilon}_{kl}$ in the joint covariance matrix.\\
\\
\noindent \textit{Intermediate result 4'.}\\
From the conditional independence of links, it follows that for any pair of thresholds $k, l \in \{1, ..., K\}$:
$$\mathbb{E}\left[\boldsymbol{s}_{k,\sigma}\left(\boldsymbol{\theta}_{y_{k,0}}\right) \boldsymbol{s}_{l,\sigma'}\left(\boldsymbol{\theta}_{y_{l,0}}\right)'\right]=\mathbb{E}\left[\mathbb{E}\left[\boldsymbol{s}_{k,\sigma}\left(\boldsymbol{\theta}_{y_{k,0}}\right) \boldsymbol{s}_{l,\sigma'}\left(\boldsymbol{\theta}_{y_{l,0}}\right)' \mid \mathcal{F}_{n,k}, \mathcal{F}_{n,l}\right]\right]= 0$$
unless $\sigma$ and $\sigma'$ share at least one dyad in common.

When $\sigma$ and $\sigma'$ share exactly one dyad in common, the cross-threshold covariance is non-zero because both score contributions depend on the same underlying continuous outcome $y_{ij}$ for that shared dyad. This cross-dependence at the score level, aggregated over all quadruple pairs sharing dyads, contributes to the off-diagonal block $\boldsymbol{\Upsilon}_{kl}$ in the joint score covariance matrix.\\
\\
In order to establish the asymptotic equivalence between the vector of scores and the vector of projections, I establish that:
\begin{align}
    &\lim_{n \to \infty} \boldsymbol{\Upsilon}_{\mathbf{y}}^{-1/2} 
    \mathbb{E} \left[ 
        \begin{pmatrix} 
        \boldsymbol{V}_{n,1}(\boldsymbol{\theta}_{y_{1,0}}) - \boldsymbol{S}_{n,1}(\boldsymbol{\theta}_{y_{1,0}}) \\ 
        \vdots \\
        \boldsymbol{V}_{n,K}(\boldsymbol{\theta}_{y_{K,0}}) - \boldsymbol{S}_{n,K}(\boldsymbol{\theta}_{y_{K,0}}) 
        \end{pmatrix}
        \begin{pmatrix} 
        \boldsymbol{V}_{n,1}(\boldsymbol{\theta}_{y_{1,0}}) - \boldsymbol{S}_{n,1}(\boldsymbol{\theta}_{y_{1,0}}) \\ 
        \vdots \\
        \boldsymbol{V}_{n,K}(\boldsymbol{\theta}_{y_{K,0}}) - \boldsymbol{S}_{n,K}(\boldsymbol{\theta}_{y_{K,0}}) 
        \end{pmatrix}'
    \right]
    \boldsymbol{\Upsilon}_{\mathbf{y}}^{-1/2'} = 0
\end{align}

where $\boldsymbol{\Upsilon}_{\mathbf{y}}$ is the $(Kp) \times (Kp)$ joint population covariance matrix of the projections with blocks:
$$\boldsymbol{\Upsilon}_{\mathbf{y}} = \begin{pmatrix}
\boldsymbol{\Upsilon}_{11} & \boldsymbol{\Upsilon}_{12} & \cdots & \boldsymbol{\Upsilon}_{1K} \\
\boldsymbol{\Upsilon}_{21} & \boldsymbol{\Upsilon}_{22} & \cdots & \boldsymbol{\Upsilon}_{2K} \\
\vdots & \vdots & \ddots & \vdots \\
\boldsymbol{\Upsilon}_{K1} & \boldsymbol{\Upsilon}_{K2} & \cdots & \boldsymbol{\Upsilon}_{KK}
\end{pmatrix}$$

The diagonal blocks $\boldsymbol{\Upsilon}_{kk}$ denote the population variance of the projection for threshold $y_k$, which were defined previously. The off-diagonal blocks $\boldsymbol{\Upsilon}_{kl} = \mathbb{E}[\boldsymbol{V}_{n,k}(\boldsymbol{\theta}_{y_{k,0}}) \boldsymbol{V}_{n,l}(\boldsymbol{\theta}_{y_{l,0}})']$ denote the population covariance between projections at thresholds $y_k$ and $y_l$, with $\boldsymbol{\Upsilon}_{kl} = \boldsymbol{\Upsilon}_{lk}'$.

Using the intermediate result 3', the expression for the covariance matrix can be further pinned down. For any pair of thresholds $k, l \in \{1, ..., K\}$:
\begin{align}
    \boldsymbol{\Upsilon}_{kl} &= \mathbb{E} [\boldsymbol{V}_{n,k}(\boldsymbol{\theta}_{y_{k,0}}) \boldsymbol{V}_{n,l}(\boldsymbol{\theta}_{y_{l,0}})'] \nonumber \\
    &= \sum_{i=1}^n \sum_{j \neq i} \mathbb{E} [\boldsymbol{v}_{k,ij}(\boldsymbol{\theta}_{y_{k,0}}) \boldsymbol{v}_{l,ij}(\boldsymbol{\theta}_{y_{l,0}})'] \nonumber \\
    &= \sum_{i=1}^n \sum_{j \neq i} \mathbb{E} \Big[ \sum_{i' \neq i,j} \sum_{j' \neq i,j,i'} \mathbb{E} [\boldsymbol{s}_k(\sigma\{i,i';j,j'\}, \boldsymbol{\theta}_{y_{k,0}}) \mid \tilde{y}_{k,ij}, \mathcal{F}_{n,k}] \nonumber \\ 
    &\hspace{2cm}\sum_{i'' \neq i,j} \sum_{j'' \neq i,j,i''} \mathbb{E} [\boldsymbol{s}_l(\sigma\{i,i'';j,j''\}, \boldsymbol{\theta}_{y_{l,0}}) \mid \tilde{y}_{l,ij}, \mathcal{F}_{n,l}]' \Big] \nonumber \\
    & = \sum_{i=1}^n \sum_{j \neq i} \sum_{i' \neq i,j} \sum_{j' \neq i,j,i'} \sum_{i'' \neq i,j} \sum_{j'' \neq i,j,i''} \mathbb{E} \Big[ \mathbb{E} [\boldsymbol{s}_k(\sigma\{i,i';j,j'\}, \boldsymbol{\theta}_{y_{k,0}}) \mid \tilde{y}_{k,ij}, \mathcal{F}_{n,k}] \nonumber \\ 
    &\hspace{2cm} \mathbb{E} [\boldsymbol{s}_l(\sigma\{i,i'';j,j''\}, \boldsymbol{\theta}_{y_{l,0}}) \mid \tilde{y}_{l,ij}, \mathcal{F}_{n,l}]'\Big] 
\end{align}

This expression holds for all $K(K+1)/2$ unique blocks of the symmetric covariance matrix $\boldsymbol{\Upsilon}_{\mathbf{y}}$, with $\boldsymbol{\Upsilon}_{kl} = \boldsymbol{\Upsilon}_{lk}'$. Once again, due to the same reasoning as before, we can focus on the leading term of this covariance. The leading term arises from quadruples at thresholds $k$ and $l$ that share exactly one dyad in common, analogous to the single-threshold case. Thus, for any pair $k, l \in \{1, ..., K\}$:
\begin{align}
    \boldsymbol{\Upsilon}_{kl,\ell} &= \sum_{i=1}^n \sum_{j \neq i} \sum_{i' \neq i,j} \sum_{j' \neq i,j,i'} \sum_{i'' \neq i,j, i'} \sum_{j'' \neq i,j,j',i''} \mathbb{E} \Big[ \mathbb{E} [\boldsymbol{s}_k(\sigma\{i,i';j,j'\}, \boldsymbol{\theta}_{y_{k,0}}) \mid \tilde{y}_{k,ij}, \mathcal{F}_{n,k}] \nonumber \\ 
    &\hspace{2cm} \mathbb{E} [\boldsymbol{s}_l(\sigma\{i,i'';j,j''\}, \boldsymbol{\theta}_{y_{l,0}}) \mid \tilde{y}_{l,ij}, \mathcal{F}_{n,l}]'\Big] \nonumber \\ 
    &= \sum_{i=1}^n \sum_{j \neq i} \sum_{i' \neq i,j} \sum_{j' \neq i,j,i'} \sum_{i'' \neq i,j, i'} \sum_{j'' \neq i,j,j',i''}  \mathbb{E} \Big[ \mathbb{E} [\boldsymbol{s}_k(\sigma\{i,i';j,j'\}, \boldsymbol{\theta}_{y_{k,0}})  \nonumber \\ 
    &\hspace{2cm} \boldsymbol{s}_l(\sigma\{i,i'';j,j''\}, \boldsymbol{\theta}_{y_{l,0}}) \mid \tilde{y}_{k,ij}, \mathcal{F}_{n,k}, \tilde{y}_{l,ij}, \mathcal{F}_{n,l}]'\Big] \nonumber \\ 
    &= \sum_{i=1}^n \sum_{j \neq i} \sum_{i' \neq i,j} \sum_{j' \neq i,j,i'} \sum_{i'' \neq i,j, i'} \sum_{j'' \neq i,j,j',i''}  \mathbb{E} \Big[ \boldsymbol{s}_k(\sigma\{i,i';j,j'\}, \boldsymbol{\theta}_{y_{k,0}}) \boldsymbol{s}_l(\sigma\{i,i'';j,j''\}, \boldsymbol{\theta}_{y_{l,0}})' \Big] 
\end{align}
where the condition $i'' \neq i'$ and $j'' \neq j',i''$ ensures that the quadruples $\sigma\{i,i';j,j'\}$ and $\sigma\{i,i'';j,j''\}$ share only the dyad $(i,j)$ in common, making them conditionally independent given $(\tilde{y}_{k,ij}, \tilde{y}_{l,ij}, \mathcal{F}_{n,k}, \mathcal{F}_{n,l})$. This conditional independence is crucial for the second equality above.

Collecting these blocks gives the leading term of the joint projection covariance,
$$\boldsymbol{\Upsilon}_{\ell,\mathbf{y}} = \begin{pmatrix}
\boldsymbol{\Upsilon}_{11,\ell} & \boldsymbol{\Upsilon}_{12,\ell} & \cdots & \boldsymbol{\Upsilon}_{1K,\ell} \\
\boldsymbol{\Upsilon}_{21,\ell} & \boldsymbol{\Upsilon}_{22,\ell} & \cdots & \boldsymbol{\Upsilon}_{2K,\ell} \\
\vdots & \vdots & \ddots & \vdots \\
\boldsymbol{\Upsilon}_{K1,\ell} & \boldsymbol{\Upsilon}_{K2,\ell} & \cdots & \boldsymbol{\Upsilon}_{KK,\ell}
\end{pmatrix},$$
which is used here only to identify the leading covariance blocks and their factor-of-16 relation with the score covariance, established next. Formal nondegeneracy of the joint covariance is imposed in Assumption \ref{assumption_jointscorevar}, and its implications for the joint projection covariance are established in Step 2; the normalizations by $\boldsymbol{\Upsilon}_{\mathbf{y}}^{-1/2}$ used below are well defined on that basis for sufficiently large $n$.

The next step is to show that the leading term of the covariance matrix of the scores is asymptotically the same as the leading term of the covariance matrix for the projections given above. For any pair of thresholds $k, l \in \{1, ..., K\}$, the covariance of the scores is given by $\mathbb{E}[\boldsymbol{S}_{n,k}(\boldsymbol{\theta}_{y_{k,0}}) \boldsymbol{S}_{n,l}(\boldsymbol{\theta}_{y_{l,0}})']$. 

Following intermediate result 4', the leading term comprises pairs of quadruples (one at threshold $k$, one at threshold $l$) sharing exactly one dyad in common. In the projection covariance $\boldsymbol{\Upsilon}_{kl,\ell}$, the shared dyad $(i,j)$ is fixed in the first sender-receiver position for both thresholds. In the score covariance, this shared dyad can appear in any of the 4 possible positions within each quadruple. Since there are 4 positions for threshold $k$ and 4 positions for threshold $l$, there are $4 \times 4 = 16$ configurations that map to each term in the projection. Therefore, the leading term of the score covariance is:
\begin{adjustwidth}{-0.25in}{-0.25in}
\begin{align}
    \boldsymbol{\Upsilon}_{kl,s\ell} &= \sum_{i=1}^n \sum_{j \neq i} \sum_{i' \neq i,j} \sum_{j' \neq i,j,i'} \sum_{i'' \neq i,j, i'} \sum_{j'' \neq i,j,j',i''} 16 \times \mathbb{E} \Big[ \boldsymbol{s}_k(\sigma\{i,i';j,j'\}, \boldsymbol{\theta}_{y_{k,0}}) \boldsymbol{s}_l(\sigma\{i,i'';j,j''\}, \boldsymbol{\theta}_{y_{l,0}})' \Big] 
\end{align}
\end{adjustwidth}

Thus, $\boldsymbol{\Upsilon}_{kl,s\ell} = 16 \times \boldsymbol{\Upsilon}_{kl,\ell}$ for all pairs $(k,l)$, maintaining the same relationship between score and projection covariances as in the single-threshold case. This holds for both diagonal blocks ($k = l$) and off-diagonal blocks ($k \neq l$).

Two results follow immediately:
\begin{align}
\boldsymbol{\Upsilon}_{\mathbf{y}}^{-1/2} \mathbb{E} \left[ 
\begin{pmatrix} 
\boldsymbol{V}_{n,1}(\boldsymbol{\theta}_{y_{1,0}}) \\ 
\vdots \\
\boldsymbol{V}_{n,K}(\boldsymbol{\theta}_{y_{K,0}})  
\end{pmatrix}
\begin{pmatrix} 
\boldsymbol{V}_{n,1}(\boldsymbol{\theta}_{y_{1,0}}) \\ 
\vdots \\
\boldsymbol{V}_{n,K}(\boldsymbol{\theta}_{y_{K,0}})  
\end{pmatrix}'
\right]
\boldsymbol{\Upsilon}_{\mathbf{y}}^{-1/2'}  =  \boldsymbol{I}_{Kp} + o(1)
\end{align}
\begin{align}
\boldsymbol{\Upsilon}_{\mathbf{y}}^{-1/2} \mathbb{E} \left[ 
\begin{pmatrix} 
\tilde{\boldsymbol{S}}_{n,1}(\boldsymbol{\theta}_{y_{1,0}}) \\ 
\vdots \\
\tilde{\boldsymbol{S}}_{n,K}(\boldsymbol{\theta}_{y_{K,0}}) 
\end{pmatrix}
\begin{pmatrix} 
\tilde{\boldsymbol{S}}_{n,1}(\boldsymbol{\theta}_{y_{1,0}}) \\ 
\vdots \\
\tilde{\boldsymbol{S}}_{n,K}(\boldsymbol{\theta}_{y_{K,0}}) 
\end{pmatrix}'
\right]
\boldsymbol{\Upsilon}_{\mathbf{y}}^{-1/2'} = \boldsymbol{I}_{Kp} + o(1)
\end{align}
where, analogously to the single-threshold case, $\tilde{\boldsymbol{S}}_{n,k}(\boldsymbol{\theta}_{y_{k,0}}) = \frac{1}{4}\boldsymbol{S}_{n,k}(\boldsymbol{\theta}_{y_{k,0}})$ for $k = 1, ..., K$ are rescaled scores such that the factor of 16 is absorbed. Standard asymptotic equivalence follows directly. Note that this rescaling is merely for computational convenience, since the original vector of scores and projections remain asymptotically equivalent after proper normalization.

From analogous arguments, for any pair $k, l \in \{1, ..., K\}$:
\begin{align}
    \mathbb{E} [\boldsymbol{S}_{n,k}(\boldsymbol{\theta}_{y_{k,0}})\boldsymbol{V}_{n,l}(\boldsymbol{\theta}_{y_{l,0}})'] &= \sum_{i=1}^n \sum_{j \neq i} \sum_{i' \neq i,j} \sum_{j' \neq i,j,i'} \sum_{i'' \neq i,j, i'} \sum_{j'' \neq i,j,j',i''} \nonumber \\
    &\quad \mathbb{E} \Big[ 4 \times \boldsymbol{s}_k(\sigma\{i,i';j,j'\}, \boldsymbol{\theta}_{y_{k,0}}) \mathbb{E} [\boldsymbol{s}_l(\sigma\{i,i'';j,j''\}, \boldsymbol{\theta}_{y_{l,0}}) \mid \tilde{y}_{l,ij}, \mathcal{F}_{n,l}]'\Big]
\end{align}
Applying the same reasoning as before:
\begin{align}
    \boldsymbol{\Upsilon}_{\mathbf{y}}^{-1/2} \mathbb{E} \left[ 
        \begin{pmatrix} 
        \tilde{\boldsymbol{S}}_{n,1}(\boldsymbol{\theta}_{y_{1,0}}) \\ 
        \vdots \\
        \tilde{\boldsymbol{S}}_{n,K}(\boldsymbol{\theta}_{y_{K,0}}) 
        \end{pmatrix}
        \begin{pmatrix} 
        \boldsymbol{V}_{n,1}(\boldsymbol{\theta}_{y_{1,0}}) \\ 
        \vdots \\
        \boldsymbol{V}_{n,K}(\boldsymbol{\theta}_{y_{K,0}}) 
        \end{pmatrix}'
    \right]
    \boldsymbol{\Upsilon}_{\mathbf{y}}^{-1/2'} = \boldsymbol{I}_{Kp} + o(1)
\end{align}
And, analogously:
\begin{align}
    \boldsymbol{\Upsilon}_{\mathbf{y}}^{-1/2} \mathbb{E} \left[ 
        \begin{pmatrix} 
        \boldsymbol{V}_{n,1}(\boldsymbol{\theta}_{y_{1,0}}) \\ 
        \vdots \\
        \boldsymbol{V}_{n,K}(\boldsymbol{\theta}_{y_{K,0}}) 
        \end{pmatrix}
        \begin{pmatrix} 
        \tilde{\boldsymbol{S}}_{n,1}(\boldsymbol{\theta}_{y_{1,0}}) \\ 
        \vdots \\
        \tilde{\boldsymbol{S}}_{n,K}(\boldsymbol{\theta}_{y_{K,0}}) 
        \end{pmatrix}'
    \right]
    \boldsymbol{\Upsilon}_{\mathbf{y}}^{-1/2'} = \boldsymbol{I}_{Kp} + o(1)
\end{align}

This proves the asymptotic equivalence result for all $K$ thresholds simultaneously. This establishes that the leading terms of the score and projection covariance matrices are proportional, with $\boldsymbol{\Upsilon}_{s,\mathbf{y}} = 16 \times \boldsymbol{\Upsilon}_{\mathbf{y}}$. For every fixed pair of thresholds $(y_k, y_l)$, configurations in which the two quadruples share more than one dyad have at most five free node indices and therefore number $O(n^5)$, with aggregate contribution $O(n^5)$ by Cauchy--Schwarz and the moment bounds of Assumption \ref{assumption5}. The leading $(k,l)$ block scale being $n^6\sqrt{p_{n,k} p_{n,l}}$, as established in the convergence-rate calculation below, these configurations are negligible:
$$\frac{O(n^5)}{n^6\sqrt{p_{n,k} p_{n,l}}} = O\left(\frac{1}{\sqrt{(n p_{n,k})(n p_{n,l})}}\right) = o(1)$$
by Assumption \ref{assumption4}. Thus the one-shared-dyad blocks are the leading blocks at their threshold-specific rates, and the normalized scores and normalized projections converge to the same limiting distribution (after accounting for the scaling factor). This asymptotic equivalence allows us to work with either the scores or projections for establishing the joint limiting distribution across all $K$ thresholds.
\\
\\
\noindent \textbf{Step 2.}\\
\noindent \textit{Conditional joint covariance and the score-projection bridge.}\\
Recall that $\boldsymbol{v}_{k,ij}(\boldsymbol{\theta}_{y_{k,0}})$ for $k = 1, ..., K$ are zero mean random vectors of dimension $p$, independent across dyads $(i,j)$ conditional on the sequence of covariates and fixed effects. Define the joint information set:
$$\mathcal{F}_{n,\mathbf{y}} = \{\{\boldsymbol{x}_{ij}\}_{n,n}, \{\alpha_{i,y_1}\}_n, \{\gamma_{j,y_1}\}_n, ..., \{\alpha_{i,y_K}\}_n, \{\gamma_{j,y_K}\}_n\}$$
which contains all covariates and the fixed effects at all $K$ thresholds.

Define the conditional joint covariance matrix with blocks. For the diagonal blocks:
\begin{align}
\boldsymbol{\Upsilon}_{X,kk} &= \sum_{i=1}^n \sum_{j \neq i} E\left(\boldsymbol{v}_{k,ij}(\boldsymbol{\theta}_{y_{k,0}}) \boldsymbol{v}_{k,ij}(\boldsymbol{\theta}_{y_{k,0}})' \mid \mathcal{F}_{n,\mathbf{y}}\right), \quad k = 1, ..., K
\end{align}

And for the off-diagonal blocks:
\begin{align}
\boldsymbol{\Upsilon}_{X,kl} &= \sum_{i=1}^n \sum_{j \neq i} E\left(\boldsymbol{v}_{k,ij}(\boldsymbol{\theta}_{y_{k,0}}) \boldsymbol{v}_{l,ij}(\boldsymbol{\theta}_{y_{l,0}})' \mid \mathcal{F}_{n,\mathbf{y}}\right), \quad k \neq l \\
\boldsymbol{\Upsilon}_{X,lk} &= \boldsymbol{\Upsilon}_{X,kl}'
\end{align}
Let $\boldsymbol{\Upsilon}_{X,\mathbf{y}}$ denote the full $(Kp) \times (Kp)$ conditional joint covariance matrix with blocks $[\boldsymbol{\Upsilon}_{X,\mathbf{y}}]_{kl} = \boldsymbol{\Upsilon}_{X,kl}$. Assumption \ref{assumption_jointscorevar} imposes an eigenvalue floor on the block-normalized sample covariance matrix $\boldsymbol{D}_{n,\mathbf{y}}^{-1}\boldsymbol{\Upsilon}_{n,\mathbf{y}}(\boldsymbol{\theta}_{\mathbf{y},0})\boldsymbol{D}_{n,\mathbf{y}}^{-1}$. The floor will be carried to the two other covariance matrices used in the proof — the conditional 
matrix $\boldsymbol{\Upsilon}_{X,\mathbf{y}}$, which normalizes the projections in the conditional 
central limit theorem below, and the population matrix $\boldsymbol{\Upsilon}_{\mathbf{y}}$, which 
normalizes the asymptotic equivalence result of Step 1. The first ingredient is the joint analogue 
of the score--projection covariance bridge:
\begin{align} \label{joint_bridge}
\left\| \boldsymbol{D}_{n,\mathbf{y}}^{-1} \Big[ \boldsymbol{\Upsilon}_{n,\mathbf{y}}(\boldsymbol{\theta}_{\mathbf{y},0}) - 16\, \boldsymbol{\Upsilon}_{X,\mathbf{y}} \Big] \boldsymbol{D}_{n,\mathbf{y}}^{-1} \right\| \xrightarrow{p} 0.
\end{align}
The argument to show that the above holds is the one used in \ref{appendix_derivation}, applied to each block. It has two parts. First, the index restrictions $i'' \neq i'$ and $j'' \neq j'$ in \eqref{upsilon_n_kl} ensure that the two quadruples share only the dyad $(i,j)$, so that the conditional independence used in the derivation of $\boldsymbol{\Upsilon}_{kl,\ell}$ above gives, by iterated expectations,
$$\mathbb{E}\left[\boldsymbol{\Upsilon}_{n,kl}(\boldsymbol{\theta}_{y_{k,0}}, \boldsymbol{\theta}_{y_{l,0}}) \mid \mathcal{F}_{n,\mathbf{y}}\right] = 16\,\boldsymbol{\Upsilon}_{X,kl}$$
up to the configurations excluded by these restrictions, which number $O(n^5)$ and contribute $O_p(n^5)$ in aggregate. Second, $\boldsymbol{\Upsilon}_{n,kl}$ concentrates around this conditional mean: given $\mathcal{F}_{n,\mathbf{y}}$, each summand depends only on the outcomes of the dyads in its two quadruples, and these are independent across dyads, so two summands are conditionally uncorrelated unless their configurations share a dyad. There are $O(n^{10})$ such pairs, each contributing $O(1)$ in expectation under Assumption \ref{assumption5}, so a conditional Chebyshev inequality gives a deviation of order $O_p(n^5)$. Dividing both parts by the block scale $n^6\sqrt{p_{n,k}p_{n,l}}$ makes them $o_p(1)$ by Assumption \ref{assumption4}, and since $K$ is fixed, blockwise convergence gives convergence of the full matrix.\\

\noindent \textit{Eigenvalue transfers.}\\
Assumption \ref{assumption_jointscorevar}, \eqref{joint_bridge} and Weyl's inequality then imply, by the same transfer argument used in \ref{appendix_derivation} with pre- and post-multiplication by $\boldsymbol{D}_{n,\mathbf{y}}^{-1}$ in place of the scalar normalization,
$$\lambda_{\min}\left( \boldsymbol{D}_{n,\mathbf{y}}^{-1} \boldsymbol{\Upsilon}_{X,\mathbf{y}} \boldsymbol{D}_{n,\mathbf{y}}^{-1} \right) \geq \frac{c_{\mathbf{y}}}{16} - o_p(1)$$
with probability approaching one, where the factor $1/16$ arises because $\tfrac{1}{16}\boldsymbol{\Upsilon}_{n,\mathbf{y}}(\boldsymbol{\theta}_{\mathbf{y},0})$ is the matrix comparable to $\boldsymbol{\Upsilon}_{X,\mathbf{y}}$. Let $c_0 > 0$ be any constant with $c_0 < c_{\mathbf{y}}/16$, so that the $o_p(1)$ term is eventually dominated. Then $\boldsymbol{\Upsilon}_{X,\mathbf{y}}$ is positive definite with probability approaching one, and substituting $\boldsymbol{a} = \boldsymbol{D}_{n,\mathbf{y}}\boldsymbol{c}$ in the quadratic-form version of the eigenvalue bound gives, for every nonzero $\boldsymbol{c} \in \mathbb{R}^{Kp}$,
\begin{align} \label{joint_floor}
\boldsymbol{c}' \boldsymbol{\Upsilon}_{X,\mathbf{y}} \boldsymbol{c} \geq c_0 \|\boldsymbol{D}_{n,\mathbf{y}} \boldsymbol{c}\|^2 > 0
\end{align}
with probability approaching one. This is the form required for the Cram\'{e}r-Wold argument below, since $\boldsymbol{c}' \boldsymbol{\Upsilon}_{X,\mathbf{y}} \boldsymbol{c}$ is the conditional variance of the linear combination $W_n(\boldsymbol{\theta}_{\mathbf{y},0})$, and it demonstrates that this variance is bounded below at the rate determined by the blocks on which $\boldsymbol{c}$ loads. Since $\|\boldsymbol{D}_{n,\mathbf{y}}\boldsymbol{c}\|^2 = \sum_{k=1}^K n^6 p_{n,k}\|\boldsymbol{c}_k\|^2 \geq n^6 (\min_k p_{n,k}) \|\boldsymbol{c}\|^2$, \eqref{joint_floor} implies $\lambda_{\min}(\boldsymbol{\Upsilon}_{X,\mathbf{y}}) \geq c_0 n^6 \min_k p_{n,k}$, so that $\boldsymbol{\Upsilon}_{X,\mathbf{y}}$ is invertible and $\|\boldsymbol{\Upsilon}_{X,\mathbf{y}}^{-1/2}\|^2 \leq (c_0 n^6 \min_k p_{n,k})^{-1}$, both with probability approaching one. In the other direction, $\|\overline{\boldsymbol{\Upsilon}}_{X,\mathbf{y}}\| = O_p(1)$, since each block of $\boldsymbol{\Upsilon}_{X,\mathbf{y}}$ is of order $n^6\sqrt{p_{n,k}p_{n,l}}$ by the leading-term calculations of Step 1 and $K$ is fixed.

Taking expectations as in the pointwise proof, the normalized population matrix $\boldsymbol{D}_{n,\mathbf{y}}^{-1} \boldsymbol{\Upsilon}_{\mathbf{y}} \boldsymbol{D}_{n,\mathbf{y}}^{-1}$, where $\boldsymbol{\Upsilon}_{\mathbf{y}} = \mathbb{E}[\boldsymbol{\Upsilon}_{X,\mathbf{y}}]$,is positive definite for sufficiently large $n$, which justifies the normalizations by $\boldsymbol{\Upsilon}_{\mathbf{y}}^{-1/2}$ used in Step 1; moreover $\|\boldsymbol{D}_{n,\mathbf{y}}^{-1}\boldsymbol{\Upsilon}_{\mathbf{y}}\boldsymbol{D}_{n,\mathbf{y}}^{-1}\| = O(1)$, since each block of $\boldsymbol{\Upsilon}_{\mathbf{y}}$ is of order $n^6\sqrt{p_{n,k}p_{n,l}}$ by the leading-term calculations of Step 1 and $K$ is fixed.\\

\noindent \textit{Conditional central limit theorem.}\\
Joint asymptotic normality is established by applying the Cram\'{e}r-Wold theorem to the standardized stacked projection. By \eqref{joint_floor}, $\boldsymbol{\Upsilon}_{X,\mathbf{y}}$ is invertible with probability approaching one, so
$$\boldsymbol{Z}_{n,\mathbf{y}} := \boldsymbol{\Upsilon}_{X,\mathbf{y}}^{-1/2} \boldsymbol{V}_{n,\mathbf{y}}(\boldsymbol{\theta}_{\mathbf{y},0}) = \sum_{i=1}^n \sum_{j \neq i} \boldsymbol{\Upsilon}_{X,\mathbf{y}}^{-1/2} \boldsymbol{v}_{ij,\mathbf{y}}, \qquad \boldsymbol{v}_{ij,\mathbf{y}} := \left(\boldsymbol{v}_{1,ij}(\boldsymbol{\theta}_{y_{1,0}})', \ldots, \boldsymbol{v}_{K,ij}(\boldsymbol{\theta}_{y_{K,0}})'\right)',$$
is well defined. Standardizing by the conditional covariance of the vector itself gives, exactly,
$$\text{Var}(\boldsymbol{Z}_{n,\mathbf{y}} \mid \mathcal{F}_{n,\mathbf{y}}) = \boldsymbol{\Upsilon}_{X,\mathbf{y}}^{-1/2} \boldsymbol{\Upsilon}_{X,\mathbf{y}} \boldsymbol{\Upsilon}_{X,\mathbf{y}}^{-1/2} = \boldsymbol{I}_{Kp},$$
so the threshold-specific scales of the covariance blocks are absorbed without any further normalization at this stage.

For any non-zero vector $\boldsymbol{a} = (\boldsymbol{a}_1', ..., \boldsymbol{a}_K')' \in \mathbb{R}^{Kp}$, where each $\boldsymbol{a}_k \in \mathbb{R}^p$, consider the scalar linear combination
\begin{align}
W_n(\boldsymbol{\theta}_{\mathbf{y},0}) = \boldsymbol{a}' \boldsymbol{Z}_{n,\mathbf{y}} = \sum_{i=1}^n \sum_{j \neq i} w_{ij}, \qquad w_{ij} = \boldsymbol{a}' \boldsymbol{\Upsilon}_{X,\mathbf{y}}^{-1/2} \boldsymbol{v}_{ij,\mathbf{y}}.
\end{align}
Since $\boldsymbol{\Upsilon}_{X,\mathbf{y}}$ is $\mathcal{F}_{n,\mathbf{y}}$-measurable and, conditional on $\mathcal{F}_{n,\mathbf{y}}$, each $\boldsymbol{v}_{ij,\mathbf{y}}$ is a function of $y_{ij}$ alone, the scalars $w_{ij}$ are independent across dyads conditional on $\mathcal{F}_{n,\mathbf{y}}$, with zero mean by intermediate result 1 and
\begin{align}
\text{Var}(W_n \mid \mathcal{F}_{n,\mathbf{y}}) = \sum_{i=1}^n \sum_{j \neq i} \text{Var}(w_{ij} \mid \mathcal{F}_{n,\mathbf{y}}) = \boldsymbol{a}' \text{Var}(\boldsymbol{Z}_{n,\mathbf{y}} \mid \mathcal{F}_{n,\mathbf{y}}) \boldsymbol{a} = \|\boldsymbol{a}\|^2 .
\end{align}
The conditional fourth-moment Lyapunov condition for these summands can be verified directly from the definition of the projections. The condition to be established is
$$L_n := \frac{\sum_{i=1}^n \sum_{j \neq i} \mathbb{E}\left[|w_{ij}|^4 \mid \mathcal{F}_{n,\mathbf{y}}\right]}{\left(\sum_{i=1}^n \sum_{j \neq i} \text{Var}(w_{ij} \mid \mathcal{F}_{n,\mathbf{y}})\right)^2} \xrightarrow{p} 0,$$
which requires that no single dyad contribute a non-negligible share of the total variation, and which implies the conditional Lindeberg condition. By the variance identity above, the denominator equals $\|\boldsymbol{a}\|^4$, so it remains to bound the numerator.

By the Cauchy--Schwarz inequality and submultiplicativity of the operator norm,
$$|w_{ij}|^4 \leq \|\boldsymbol{a}\|^4 \, \|\boldsymbol{\Upsilon}_{X,\mathbf{y}}^{-1/2}\|^4 \, \|\boldsymbol{v}_{ij,\mathbf{y}}\|^4,$$
where both factors preceding $\|\boldsymbol{v}_{ij,\mathbf{y}}\|^4$ are $\mathcal{F}_{n,\mathbf{y}}$-measurable and therefore pass outside the conditional expectation. The factor $\|\boldsymbol{a}\|^4$ cancels against the denominator, so the resulting bound is uniform in $\boldsymbol{a}$ and the condition needs to be verified only once rather than separately for each direction.

For the stacked projection, note that $\|\boldsymbol{v}_{ij,\mathbf{y}}\| \leq \sum_{k=1}^K \|\boldsymbol{v}_{k,ij}(\boldsymbol{\theta}_{y_{k,0}})\|$. With $\mathcal{N}_{m_n,ij}$ as in \ref{appendix_derivation}, the envelope inequality established there, applied threshold by threshold, gives
$$\|\boldsymbol{v}_{ij,\mathbf{y}}\| \leq \sum_{k=1}^K \sum_{\sigma \in \mathcal{N}_{m_n,ij}} \|\boldsymbol{r}_\sigma\|.$$
Raising this bound to the fourth power and applying the power-mean inequality $(\sum_{m=1}^M a_m)^4 \leq M^3 \sum_{m=1}^M a_m^4$ twice, once to the sum over the $K$ thresholds and once to the sum over $\mathcal{N}_{m_n,ij}$, gives
$$\|\boldsymbol{v}_{ij,\mathbf{y}}\|^4 \leq C_K \, |\mathcal{N}_{m_n,ij}|^3 \sum_{\sigma \in \mathcal{N}_{m_n,ij}} \|\boldsymbol{r}_\sigma\|^4 = O(n^6) \sum_{\sigma \in \mathcal{N}_{m_n,ij}} \|\boldsymbol{r}_\sigma\|^4,$$
where $C_K$ depends only on $K$ and $|\mathcal{N}_{m_n,ij}| = O(n^2)$. The inner sum has $O(n^2)$ terms, each with uniformly bounded fourth moment under Assumption \ref{assumption5}, so its expectation is $O(n^2)$ and each dyad contributes $O(n^8)$. Summing over the $O(n^2)$ dyads gives an unconditional expectation of order $n^{10}$, and Markov's inequality then yields
$$\sum_{i=1}^n \sum_{j \neq i} \mathbb{E}\left[\|\boldsymbol{v}_{ij,\mathbf{y}}\|^4 \mid \mathcal{F}_{n,\mathbf{y}}\right] = O_p(n^{10}).$$

For the normalizing factor, since $\boldsymbol{\Upsilon}_{X,\mathbf{y}}$ is symmetric and positive definite with probability approaching one, $\|\boldsymbol{\Upsilon}_{X,\mathbf{y}}^{-1/2}\|^4 = \lambda_{\min}(\boldsymbol{\Upsilon}_{X,\mathbf{y}})^{-2}$, and the eigenvalue bound $\lambda_{\min}(\boldsymbol{\Upsilon}_{X,\mathbf{y}}) \geq c_0 n^6 \min_k p_{n,k}$ established above gives
$$\|\boldsymbol{\Upsilon}_{X,\mathbf{y}}^{-1/2}\|^4 \leq \left(c_0 n^6 \min_k p_{n,k}\right)^{-2}$$
with probability approaching one. Combining the two bounds,
$$L_n = O_p\left(\frac{n^{10}}{\left(c_0 n^6 \min_k p_{n,k}\right)^{2}}\right) = O_p\left(\left[n \min_k p_{n,k}\right]^{-2}\right) \longrightarrow 0,$$
since $n p_{n,k} \to \infty$ at every included threshold by Assumption \ref{assumption4}, and hence also for the minimum over the finitely many thresholds. 

A conditional version of the Lyapunov CLT (see, e.g., \citet{rao2009conditional}) therefore gives
\begin{align}
W_n(\boldsymbol{\theta}_{\mathbf{y},0}) = \boldsymbol{a}' \boldsymbol{Z}_{n,\mathbf{y}} \xrightarrow{d} N(0, \|\boldsymbol{a}\|^2)
\end{align}
conditional on $\mathcal{F}_{n,\mathbf{y}}$. Since the limit does not depend on the conditioning variables, the convergence also holds unconditionally, and since it holds for every fixed non-zero $\boldsymbol{a} \in \mathbb{R}^{Kp}$, the Cram\'{e}r-Wold theorem implies
\begin{align}
\boldsymbol{\Upsilon}_{X,\mathbf{y}}^{-1/2} \boldsymbol{V}_{n,\mathbf{y}}(\boldsymbol{\theta}_{\mathbf{y},0}) \xrightarrow{d} N(0, \boldsymbol{I}_{Kp}).
\end{align}

\noindent \textit{Normalization by the score covariance and passage to the score.}\\
The limit obtained above is stated for the projection, normalized by 
$\boldsymbol{\Upsilon}_{X,\mathbf{y}}^{-1/2}$. Two changes remain: the normalizer must be written 
in a form compatible with the bound available from Step 1, and the projection must then be 
replaced by the stacked score.

The difficulty is that Step 1 bounds the remainder $\boldsymbol{U}_{n,\mathbf{y}}$ under the normalization by $\boldsymbol{\Upsilon}_{\mathbf{y}}^{-1/2}$, the population matrix, while the limit above is normalized by $\boldsymbol{\Upsilon}_{X,\mathbf{y}}^{-1/2}$. Passing from one to the other requires bounding the product $\boldsymbol{\Upsilon}_{X,\mathbf{y}}^{-1/2}\boldsymbol{\Upsilon}_{\mathbf{y}}^{1/2}$, and by submultiplicativity of the operator norm, together with $\|\boldsymbol{A}^{-1/2}\| = \lambda_{\min}(\boldsymbol{A})^{-1/2}$ and $\|\boldsymbol{A}^{1/2}\| = \lambda_{\max}(\boldsymbol{A})^{1/2}$ for symmetric positive definite $\boldsymbol{A}$,
$$\left\|\boldsymbol{\Upsilon}_{X,\mathbf{y}}^{-1/2}\boldsymbol{\Upsilon}_{\mathbf{y}}^{1/2}\right\|^2 \leq \frac{\lambda_{\max}(\boldsymbol{\Upsilon}_{\mathbf{y}})}{\lambda_{\min}(\boldsymbol{\Upsilon}_{X,\mathbf{y}})}.$$
The two extreme eigenvalues sit at opposite ends of the block scales. Both matrices have $(k,l)$ blocks of order $n^6\sqrt{p_{n,k}p_{n,l}}$, so a unit vector loading only on the block with the largest sparsity parameter gives a quadratic form of order $n^6\max_k p_{n,k}$, while one loading only on the smallest gives $n^6\min_k p_{n,k}$; since $\lambda_{\max}$ is the supremum of the quadratic form over unit vectors and $\lambda_{\min}$ the infimum, the displayed ratio is of order $\max_k p_{n,k}/\min_k p_{n,k}$. This is bounded only if the threshold-specific sparsity rates are restricted relative to one another, which the framework does not assume: convergence and boundedness of each block at its own scale say nothing about the ratio between scales.

The solution is to write the normalizer in a form that separates the scales from the well-conditioned factors. Denote the block-rate-normalized conditional covariance by
$$\overline{\boldsymbol{\Upsilon}}_{X,\mathbf{y}} := \boldsymbol{D}_{n,\mathbf{y}}^{-1} \boldsymbol{\Upsilon}_{X,\mathbf{y}} \boldsymbol{D}_{n,\mathbf{y}}^{-1},$$
and, for any positive definite $\boldsymbol{M}$, define the rate-adapted normalizing matrix
$$\mathcal{C}_{\boldsymbol{D}_{n,\mathbf{y}}}(\boldsymbol{M}) := \left(\boldsymbol{D}_{n,\mathbf{y}}^{-1} \boldsymbol{M} \boldsymbol{D}_{n,\mathbf{y}}^{-1}\right)^{-1/2} \boldsymbol{D}_{n,\mathbf{y}}^{-1} = \overline{\boldsymbol{M}}^{-1/2}\boldsymbol{D}_{n,\mathbf{y}}^{-1},$$
which satisfies $\mathcal{C}_{\boldsymbol{D}_{n,\mathbf{y}}}(\boldsymbol{M}) \boldsymbol{M} \, \mathcal{C}_{\boldsymbol{D}_{n,\mathbf{y}}}(\boldsymbol{M})' = \boldsymbol{I}_{Kp}$. All rate heterogeneity is carried by the deterministic diagonal matrix $\boldsymbol{D}_{n,\mathbf{y}}$, while $\overline{\boldsymbol{M}}^{-1/2}$ acts on a matrix whose eigenvalues are bounded above and below by constants, by the eigenvalue transfers established above. Bounds obtained with this normalizer therefore have constants depending only on $c_0$, with no dependence on the relative sparsity rates.

Both $\mathcal{C}_{\boldsymbol{D}_{n,\mathbf{y}}}(\boldsymbol{\Upsilon}_{X,\mathbf{y}})$ and $\boldsymbol{\Upsilon}_{X,\mathbf{y}}^{-1/2}$ standardize $\boldsymbol{V}_{n,\mathbf{y}}(\boldsymbol{\theta}_{\mathbf{y},0})$ conditionally on $\mathcal{F}_{n,\mathbf{y}}$, so
$$\boldsymbol{O}_{n,\mathbf{y}} := \mathcal{C}_{\boldsymbol{D}_{n,\mathbf{y}}}(\boldsymbol{\Upsilon}_{X,\mathbf{y}}) \, \boldsymbol{\Upsilon}_{X,\mathbf{y}}^{1/2}$$
satisfies $\boldsymbol{O}_{n,\mathbf{y}}\boldsymbol{O}_{n,\mathbf{y}}' = \boldsymbol{I}_{Kp}$ and is therefore orthogonal. It is also $\mathcal{F}_{n,\mathbf{y}}$-measurable, being a function of $\boldsymbol{\Upsilon}_{X,\mathbf{y}}$ and $\boldsymbol{D}_{n,\mathbf{y}}$ alone. Since an orthogonal transformation that is fixed given the conditioning variables leaves a conditionally standard normal vector standard normal, the conditional limit established above also holds with $\mathcal{C}_{\boldsymbol{D}_{n,\mathbf{y}}}(\boldsymbol{\Upsilon}_{X,\mathbf{y}})$ in place of $\boldsymbol{\Upsilon}_{X,\mathbf{y}}^{-1/2}$:
$$\mathcal{C}_{\boldsymbol{D}_{n,\mathbf{y}}}(\boldsymbol{\Upsilon}_{X,\mathbf{y}}) \, \boldsymbol{V}_{n,\mathbf{y}}(\boldsymbol{\theta}_{\mathbf{y},0}) \xrightarrow{d} N(\boldsymbol{0}, \boldsymbol{I}_{Kp}).$$

It remains to extend this from the projection to the score. Write $\tilde{\boldsymbol{S}}_{n,\mathbf{y}}(\boldsymbol{\theta}_{\mathbf{y},0}) = \tfrac{1}{4}\boldsymbol{S}_{n,\mathbf{y}}(\boldsymbol{\theta}_{\mathbf{y},0})$ for the rescaled stacked score, which places it on the same scale as the projection, and $\boldsymbol{U}_{n,\mathbf{y}} = \tilde{\boldsymbol{S}}_{n,\mathbf{y}}(\boldsymbol{\theta}_{\mathbf{y},0}) - \boldsymbol{V}_{n,\mathbf{y}}(\boldsymbol{\theta}_{\mathbf{y},0})$ for the remainder. Factorizing so that Step 1's bound can be applied,
$$\mathcal{C}_{\boldsymbol{D}_{n,\mathbf{y}}}\!\left(\boldsymbol{\Upsilon}_{X,\mathbf{y}}\right)\boldsymbol{U}_{n,\mathbf{y}} = \left[\mathcal{C}_{\boldsymbol{D}_{n,\mathbf{y}}}\!\left(\boldsymbol{\Upsilon}_{X,\mathbf{y}}\right)\boldsymbol{\Upsilon}_{\mathbf{y}}^{1/2}\right]\left[\boldsymbol{\Upsilon}_{\mathbf{y}}^{-1/2}\boldsymbol{U}_{n,\mathbf{y}}\right],$$
where the first bracket satisfies
$$\left\|\mathcal{C}_{\boldsymbol{D}_{n,\mathbf{y}}}\!\left(\boldsymbol{\Upsilon}_{X,\mathbf{y}}\right)\boldsymbol{\Upsilon}_{\mathbf{y}}^{1/2}\right\|^2 \leq \left\|\overline{\boldsymbol{\Upsilon}}_{X,\mathbf{y}}^{-1/2}\right\|^2 \left\|\boldsymbol{D}_{n,\mathbf{y}}^{-1}\boldsymbol{\Upsilon}_{\mathbf{y}}\boldsymbol{D}_{n,\mathbf{y}}^{-1}\right\| = O_p(1)$$
by the eigenvalue floor on $\overline{\boldsymbol{\Upsilon}}_{X,\mathbf{y}}$ and the operator-norm bound on $\boldsymbol{D}_{n,\mathbf{y}}^{-1}\boldsymbol{\Upsilon}_{\mathbf{y}}\boldsymbol{D}_{n,\mathbf{y}}^{-1}$, both established above. This is where the factorization is used: with $\boldsymbol{\Upsilon}_{X,\mathbf{y}}^{-1/2}$ in place of $\mathcal{C}_{\boldsymbol{D}_{n,\mathbf{y}}}(\boldsymbol{\Upsilon}_{X,\mathbf{y}})$ the same bound would carry the ratio of sparsity rates described above. For the second bracket, combining the four normalized second-moment results of Step 1 gives $\boldsymbol{\Upsilon}_{\mathbf{y}}^{-1/2}\mathbb{E}[\boldsymbol{U}_{n,\mathbf{y}}\boldsymbol{U}_{n,\mathbf{y}}']\boldsymbol{\Upsilon}_{\mathbf{y}}^{-1/2} = o(1)$, so that $\boldsymbol{\Upsilon}_{\mathbf{y}}^{-1/2}\boldsymbol{U}_{n,\mathbf{y}} = o_p(1)$ by Markov's inequality. The product is therefore $o_p(1)$.

Since $\tilde{\boldsymbol{S}}_{n,\mathbf{y}} = \boldsymbol{V}_{n,\mathbf{y}} + \boldsymbol{U}_{n,\mathbf{y}}$, and since $\mathcal{C}_{\boldsymbol{D}_{n,\mathbf{y}}}(c\boldsymbol{M}) = c^{-1/2}\mathcal{C}_{\boldsymbol{D}_{n,\mathbf{y}}}(\boldsymbol{M})$ for scalar $c > 0$ gives
$$\mathcal{C}_{\boldsymbol{D}_{n,\mathbf{y}}}\!\left(16\,\boldsymbol{\Upsilon}_{X,\mathbf{y}}\right) \boldsymbol{S}_{n,\mathbf{y}}(\boldsymbol{\theta}_{\mathbf{y},0}) = \mathcal{C}_{\boldsymbol{D}_{n,\mathbf{y}}}\!\left(\boldsymbol{\Upsilon}_{X,\mathbf{y}}\right) \tilde{\boldsymbol{S}}_{n,\mathbf{y}}(\boldsymbol{\theta}_{\mathbf{y},0}),$$
so that the factor $4$ arising from the normalizer cancels against the factor $1/4$ in the rescaled score,
\begin{align} \label{joint_score_clt_cond}
d_{\mathrm{BL}}\left(\mathcal{L}\left(\mathcal{C}_{\boldsymbol{D}_{n,\mathbf{y}}}\!\left(16\,\boldsymbol{\Upsilon}_{X,\mathbf{y}}\right) \boldsymbol{S}_{n,\mathbf{y}}(\boldsymbol{\theta}_{\mathbf{y},0}) \,\middle|\, \mathcal{F}_{n,\mathbf{y}}\right), \, N(\boldsymbol{0}, \boldsymbol{I}_{Kp})\right) \xrightarrow{p} 0.
\end{align}
The limiting distribution here is fixed, so this is conditional convergence in distribution; the bounded--Lipschitz form is used because the corresponding statement in Step 3 has a target that varies with $n$.\\
\\
\noindent \textbf{Step 3.}\\
\textit{Joint limit distribution.}\\
The joint Gaussian approximation of Theorem \ref{theorem3} follows by combining the stacked mean-value expansion \eqref{joint_expansion} with \eqref{joint_score_clt_cond}. The expansion is an algebraic identity: it writes the estimation error as the score premultiplied by the inverse Hessian. Obtaining the limit for the studentized estimator from that of the score therefore requires multiplying by a matrix that combines the inverse Hessian with the normalization of the score covariance, denoted $\boldsymbol{L}_{X,\mathbf{y}}$ below and constructed with the normalizer $\mathcal{C}_{\boldsymbol{D}_{n,\mathbf{y}}}$ of Step 2, so that \eqref{joint_score_clt_cond} can be used as it stands. What shapes the argument is the mode of convergence.

Slutsky's theorem would apply if $\boldsymbol{L}_{X,\mathbf{y}}$ were deterministic, or converged 
in probability to a constant. It is neither: built from the score covariance, it is not 
independent of the standardized score it multiplies, so orthogonality alone does not preserve the 
limit. A deterministic counterpart would suffice and would not need to converge, but obtaining 
one would require the conditional score covariance to concentrate around its unconditional mean, 
which Assumption \ref{assumption_jointscorevar} does not deliver: it bounds the normalized 
score covariance away from degeneracy without restricting its behaviour across $n$. Such a 
condition would in any case concern the off-diagonal blocks, which describe which quadruples are 
informative at two thresholds at once, and would therefore restrict the joint design rather than 
each threshold separately.

Rather than restrict the design in this way, the argument proceeds conditionally on $\mathcal{F}_{n,\mathbf{y}}$, where $\boldsymbol{L}_{X,\mathbf{y}}$ is fixed, exactly as in the single-threshold case of \ref{appendix_derivation}; no limit for the score covariance is required in either. What differs is the normalization of the conclusion. Theorem \ref{theorem2} is stated with the full sandwich normalization $\boldsymbol{\Omega}_n(\boldsymbol{\theta}_n)^{-1/2}$, so its target is $N(\boldsymbol{0}, \boldsymbol{I}_p)$ and the conditional statement lifts unconditionally. Part (i) here studentizes only by the diagonal $\boldsymbol{\sigma}_{n,\mathbf{y}}(\boldsymbol{\theta}_{n,\mathbf{y}})$, retaining the cross-threshold correlations, so the target is $N(\boldsymbol{0}, \boldsymbol{P}_{n,\mathbf{y}}(\boldsymbol{\theta}_{n,\mathbf{y}}))$ and varies with $n$; the corresponding fully normalized statement is available, and is what parts (ii) and (iii) below use, but the remark following \eqref{joint_gauss} explains why it is not the form the simultaneous bands require.

This requires every object entering $\boldsymbol{L}_{X,\mathbf{y}}$ to be $\mathcal{F}_{n,\mathbf{y}}$-measurable, which is why the realized Hessian is replaced by its conditional expectation below, and why the standard errors used to studentize are computed from the resulting conditional sandwich covariance. These conditional quantities do not appear in the final result: in the inference discussion below they are replaced by their feasible counterparts, computed at the estimator, which yields the form stated in Theorem \ref{theorem3}(i).\\

\noindent \textit{Conditional sandwich covariance and Hessian bounds.}\\
Let $\boldsymbol{H}_{X,\mathbf{y}} := \mathbb{E}[\boldsymbol{H}_{n,\mathbf{y}}(\boldsymbol{\theta}_{\mathbf{y},0}) \mid \mathcal{F}_{n,\mathbf{y}}]$, which is block diagonal since estimation is carried out separately at each threshold, and define the $\mathcal{F}_{n,\mathbf{y}}$-measurable sandwich covariance
\begin{align} \label{omega_X}
\boldsymbol{\Omega}_{X,\mathbf{y}} := \boldsymbol{H}_{X,\mathbf{y}}^{-1} \left(16\,\boldsymbol{\Upsilon}_{X,\mathbf{y}}\right) \boldsymbol{H}_{X,\mathbf{y}}^{-1},
\end{align}
together with $\boldsymbol{\sigma}_{X,\mathbf{y}}$, the diagonal matrix of square roots of its diagonal entries, and $\boldsymbol{P}_{X,\mathbf{y}} := \boldsymbol{\sigma}_{X,\mathbf{y}}^{-1}\boldsymbol{\Omega}_{X,\mathbf{y}}\boldsymbol{\sigma}_{X,\mathbf{y}}^{-1}$. The matrix $\boldsymbol{\Omega}_{X,\mathbf{y}}$ is the conditional counterpart of the sandwich covariance, $\boldsymbol{\sigma}_{X,\mathbf{y}}$ collects the corresponding standard errors, and $\boldsymbol{P}_{X,\mathbf{y}}$ collects the conditional correlations between the studentized estimators across thresholds and covariates, given $\mathcal{F}_{n,\mathbf{y}}$. It is the correlation structure of the Gaussian approximation established below, and the object whose feasible counterpart determines the simultaneous critical value in Section \ref{joint}.

The variance bound for the Hessian established in Step 3 of \ref{appendix_derivation} applies to $\boldsymbol{H}_{X,\mathbf{y}}$ as well. Writing $H_{ab}$ for an entry of $\boldsymbol{H}_{n,\mathbf{y}}(\boldsymbol{\theta}_{\mathbf{y},0})$, the law of total variance gives
$$\operatorname{Var}(H_{ab}) = \underbrace{\operatorname{Var}\left(\mathbb{E}[H_{ab} \mid \mathcal{F}_{n,\mathbf{y}}]\right)}_{\text{entry of } \boldsymbol{H}_{X,\mathbf{y}}} + \underbrace{\mathbb{E}\left[\operatorname{Var}(H_{ab} \mid \mathcal{F}_{n,\mathbf{y}})\right]}_{\text{residual}},$$
and both terms are non-negative, so each is bounded by $\operatorname{Var}(H_{ab})$. In particular, the expected conditional variance, which is the second moment of the corresponding entry of $\boldsymbol{H}_{n,\mathbf{y}}(\boldsymbol{\theta}_{\mathbf{y},0}) - \boldsymbol{H}_{X,\mathbf{y}}$, is bounded by $\operatorname{Var}(H_{ab})$. For the block at threshold $y_k$ this is of order $n^3 m_n p_{n,k}$ by the calculation in Step 3 of \ref{appendix_derivation}, and since the dimension is fixed the same order holds for the matrix norms. Markov's inequality therefore gives
$$\left\|\boldsymbol{E}_{n,\mathbf{y}}^{-1}\left[\boldsymbol{H}_{n,\mathbf{y}}(\boldsymbol{\theta}_{\mathbf{y},0}) - \boldsymbol{H}_{X,\mathbf{y}}\right]\boldsymbol{E}_{n,\mathbf{y}}^{-1}\right\| \xrightarrow{p} 0, \qquad \boldsymbol{E}_{n,\mathbf{y}} := \operatorname{diag}\left((n^4 p_{n,1})^{1/2}\boldsymbol{I}_p, \ldots, (n^4 p_{n,K})^{1/2}\boldsymbol{I}_p\right).$$
This is what allows the realized Hessian in \eqref{joint_expansion} to be replaced by $\boldsymbol{H}_{X,\mathbf{y}}$.

Since $\boldsymbol{D}_{n,\mathbf{y}} = n\,\boldsymbol{E}_{n,\mathbf{y}}$, setting $\boldsymbol{G}_{n,\mathbf{y}} := \boldsymbol{Q}_{n,\mathbf{y}}^{-1}\boldsymbol{H}_{X,\mathbf{y}}^{-1}\boldsymbol{D}_{n,\mathbf{y}}$, definition \eqref{omega_X} gives
$$\boldsymbol{Q}_{n,\mathbf{y}}^{-1}\boldsymbol{\Omega}_{X,\mathbf{y}}\boldsymbol{Q}_{n,\mathbf{y}}^{-1} = \boldsymbol{G}_{n,\mathbf{y}}\left[16\,\overline{\boldsymbol{\Upsilon}}_{X,\mathbf{y}}\right]\boldsymbol{G}_{n,\mathbf{y}}'.$$
Since $\boldsymbol{Q}_{n,\mathbf{y}}^{-1} = (\kappa_n/n)\boldsymbol{E}_{n,\mathbf{y}}$ and $\boldsymbol{D}_{n,\mathbf{y}} = n\boldsymbol{E}_{n,\mathbf{y}}$, with $\kappa_n = [(n-1)/n]^{1/2} \to 1$, this is $\boldsymbol{G}_{n,\mathbf{y}} = \kappa_n\,\boldsymbol{E}_{n,\mathbf{y}}\boldsymbol{H}_{X,\mathbf{y}}^{-1}\boldsymbol{E}_{n,\mathbf{y}}$, so the bounds below on the normalized conditional Hessian are exactly two-sided bounds on $\boldsymbol{G}_{n,\mathbf{y}}$, and the normalizations are matched so that the rates cancel in every block. For the conditional Hessian, the rank condition of Assumption \ref{assumption4} applied at each threshold bounds the normalized limit of $\mathbb{E}[\boldsymbol{H}_{n,k}(\boldsymbol{\theta}_{y_{k,0}})]$ away from singularity; the pointwise convergence established in Step 3 of \ref{appendix_derivation} transfers this to $\boldsymbol{H}_{n,k}(\boldsymbol{\theta}_{y_{k,0}})$, and the equation above transfers it in turn to $\boldsymbol{H}_{X,\mathbf{y}}$, in each case by Weyl's inequality; the maintained moment bounds give the corresponding upper bound. The floor gives $\|\boldsymbol{E}_{n,\mathbf{y}}\boldsymbol{H}_{X,\mathbf{y}}^{-1}\boldsymbol{E}_{n,\mathbf{y}}\| = O_p(1)$ and the upper bound gives $\|\boldsymbol{E}_{n,\mathbf{y}}^{-1}\boldsymbol{H}_{X,\mathbf{y}}\boldsymbol{E}_{n,\mathbf{y}}^{-1}\| = O_p(1)$, and hence two-sided bounds on $\boldsymbol{G}_{n,\mathbf{y}}$. For the middle factor, the floor and the upper bound on $\overline{\boldsymbol{\Upsilon}}_{X,\mathbf{y}}$ established above apply. Hence there are constants $0 < c_\Omega < C_\Omega < \infty$ with
$$c_\Omega \boldsymbol{I}_{Kp} \preceq \boldsymbol{Q}_{n,\mathbf{y}}^{-1}\boldsymbol{\Omega}_{X,\mathbf{y}}\boldsymbol{Q}_{n,\mathbf{y}}^{-1} \preceq C_\Omega \boldsymbol{I}_{Kp}, \qquad \boldsymbol{Q}_{n,\mathbf{y}} := \operatorname{diag}\left([n(n-1)p_{n,1}]^{-1/2}\boldsymbol{I}_p, \ldots, [n(n-1)p_{n,K}]^{-1/2}\boldsymbol{I}_p\right),$$
with probability approaching one. Since $\boldsymbol{A} \preceq \boldsymbol{B}$ means $\boldsymbol{a}'\boldsymbol{A}\boldsymbol{a} \leq \boldsymbol{a}'\boldsymbol{B}\boldsymbol{a}$ for every $\boldsymbol{a} \in \mathbb{R}^{Kp}$, taking $\boldsymbol{a} = \boldsymbol{e}_j$ and using that $\boldsymbol{Q}_{n,\mathbf{y}}$ is diagonal, so that $\boldsymbol{Q}_{n,\mathbf{y}}^{-1}\boldsymbol{e}_j = [\boldsymbol{Q}_{n,\mathbf{y}}]_{jj}^{-1}\boldsymbol{e}_j$, gives, for each $j = 1, \ldots, Kp$,
$$c_\Omega \leq \frac{[\boldsymbol{\Omega}_{X,\mathbf{y}}]_{jj}}{[\boldsymbol{Q}_{n,\mathbf{y}}]_{jj}^2} \leq C_\Omega,$$
so that the diagonal entries of $\boldsymbol{\sigma}_{X,\mathbf{y}}$ are positive and of the threshold-specific order $[n(n-1)p_{n,k}]^{-1/2}$. Define the diagonal matrix $\boldsymbol{\Delta}_{n,\mathbf{y}}$ by
$$[\boldsymbol{\Delta}_{n,\mathbf{y}}]_{jj} := \frac{[\boldsymbol{\sigma}_{X,\mathbf{y}}]_{jj}}{[\boldsymbol{Q}_{n,\mathbf{y}}]_{jj}}, \qquad j = 1, \ldots, Kp,$$
so that $\boldsymbol{\sigma}_{X,\mathbf{y}} = \boldsymbol{Q}_{n,\mathbf{y}}\boldsymbol{\Delta}_{n,\mathbf{y}}$ and $\boldsymbol{\Delta}_{n,\mathbf{y}}$ collects the ratios of the standard errors to their nominal rates. Since $[\boldsymbol{\sigma}_{X,\mathbf{y}}]_{jj}^2 = [\boldsymbol{\Omega}_{X,\mathbf{y}}]_{jj}$, the equation above states that $[\boldsymbol{\Delta}_{n,\mathbf{y}}]_{jj}^2 \in [c_\Omega, C_\Omega]$, so that the entries of $\boldsymbol{\Delta}_{n,\mathbf{y}}$ lie in $[c_\Omega^{1/2}, C_\Omega^{1/2}]$. Since
$$\boldsymbol{P}_{X,\mathbf{y}} = \boldsymbol{\sigma}_{X,\mathbf{y}}^{-1}\boldsymbol{\Omega}_{X,\mathbf{y}}\boldsymbol{\sigma}_{X,\mathbf{y}}^{-1} = \boldsymbol{\Delta}_{n,\mathbf{y}}^{-1}\left[\boldsymbol{Q}_{n,\mathbf{y}}^{-1}\boldsymbol{\Omega}_{X,\mathbf{y}}\boldsymbol{Q}_{n,\mathbf{y}}^{-1}\right]\boldsymbol{\Delta}_{n,\mathbf{y}}^{-1},$$
it follows that, for any unit vector $\boldsymbol{a}$,
$$\boldsymbol{a}'\boldsymbol{P}_{X,\mathbf{y}}\boldsymbol{a} \geq c_\Omega\|\boldsymbol{\Delta}_{n,\mathbf{y}}^{-1}\boldsymbol{a}\|^2 \geq \frac{c_\Omega}{C_\Omega},$$
so that $\lambda_{\min}(\boldsymbol{P}_{X,\mathbf{y}}) \geq c_\Omega/C_\Omega > 0$.  

The matrix $\boldsymbol{P}_{X,\mathbf{y}}$ therefore varies with $n$ but remains in a fixed set of correlation matrices bounded away from singularity. This is what allows the arguments below to proceed without a limiting correlation matrix. It is used twice: in the inference discussion, where the replacement of $\boldsymbol{P}_{X,\mathbf{y}}$ by its feasible counterpart relies on the Gaussian law being Lipschitz in its correlation matrix, which holds uniformly only away from singularity; and in the passage from the bounded--Lipschitz approximation to the rectangular coverage probability underlying the simultaneous critical value of Section \ref{bands}.\\

\noindent \textit{Linear representation and Gaussian approximation.}\\
Combining \eqref{joint_expansion}, \eqref{joint_score_clt_cond} and the Hessian results above, the studentized estimator admits the linear representation
\begin{align} \label{joint_linear}
\boldsymbol{\sigma}_{X,\mathbf{y}}^{-1}\left(\boldsymbol{\theta}_{n,\mathbf{y}} - \boldsymbol{\theta}_{\mathbf{y},0}\right) = \boldsymbol{L}_{X,\mathbf{y}} \, \mathcal{C}_{\boldsymbol{D}_{n,\mathbf{y}}}\!\left(16\,\boldsymbol{\Upsilon}_{X,\mathbf{y}}\right)\boldsymbol{S}_{n,\mathbf{y}}(\boldsymbol{\theta}_{\mathbf{y},0}) + o_p(1),
\end{align}
where the remainder collects the replacement of the intermediate values $\boldsymbol{\theta}_{y_k,*}$ by $\boldsymbol{\theta}_{y_{k,0}}$ in the expansion and the replacement of the realized Hessian by $\boldsymbol{H}_{X,\mathbf{y}}$, both negligible at the block scales established above, and where
$$\boldsymbol{L}_{X,\mathbf{y}} := \boldsymbol{\sigma}_{X,\mathbf{y}}^{-1} \boldsymbol{H}_{X,\mathbf{y}}^{-1} \, \mathcal{C}_{\boldsymbol{D}_{n,\mathbf{y}}}\!\left(16\,\boldsymbol{\Upsilon}_{X,\mathbf{y}}\right)^{-1}.$$
Since $\mathcal{C}_{\boldsymbol{D}_{n,\mathbf{y}}}(16\,\boldsymbol{\Upsilon}_{X,\mathbf{y}})^{-1}\mathcal{C}_{\boldsymbol{D}_{n,\mathbf{y}}}(16\,\boldsymbol{\Upsilon}_{X,\mathbf{y}})^{-1\prime} = 16\,\boldsymbol{\Upsilon}_{X,\mathbf{y}}$, it follows from \eqref{omega_X} that $\boldsymbol{L}_{X,\mathbf{y}}\boldsymbol{L}_{X,\mathbf{y}}' = \boldsymbol{P}_{X,\mathbf{y}}$, so that $\|\boldsymbol{L}_{X,\mathbf{y}}\|^2 \leq \operatorname{tr}(\boldsymbol{P}_{X,\mathbf{y}}) = Kp$. 

By \eqref{joint_score_clt_cond}, the standardized score
$$\boldsymbol{W}_{n,\mathbf{y}} := \mathcal{C}_{\boldsymbol{D}_{n,\mathbf{y}}}\!\left(16\,\boldsymbol{\Upsilon}_{X,\mathbf{y}}\right)\boldsymbol{S}_{n,\mathbf{y}}(\boldsymbol{\theta}_{\mathbf{y},0})$$
satisfies $d_{\mathrm{BL}}\left(\mathcal{L}(\boldsymbol{W}_{n,\mathbf{y}} \mid \mathcal{F}_{n,\mathbf{y}}), N(\boldsymbol{0}, \boldsymbol{I}_{Kp})\right) \xrightarrow{p} 0$, and by \eqref{joint_linear} the studentized estimator is obtained from it as
$$\boldsymbol{\sigma}_{X,\mathbf{y}}^{-1}\left(\boldsymbol{\theta}_{n,\mathbf{y}} - \boldsymbol{\theta}_{\mathbf{y},0}\right) = \boldsymbol{L}_{X,\mathbf{y}}\boldsymbol{W}_{n,\mathbf{y}} + o_p(1).$$
Two properties of $\boldsymbol{L}_{X,\mathbf{y}}$ transfer the approximation from the score to the estimator. First, it is $\mathcal{F}_{n,\mathbf{y}}$-measurable, hence a fixed matrix given the conditioning, so that for any $f$ with $\|f\|_\infty \leq 1$ and $\operatorname{Lip}(f) \leq 1$ the composition $f \circ \boldsymbol{L}_{X,\mathbf{y}}$ satisfies $\|f \circ \boldsymbol{L}_{X,\mathbf{y}}\|_\infty \leq 1$ and $\operatorname{Lip}(f \circ \boldsymbol{L}_{X,\mathbf{y}}) \leq \|\boldsymbol{L}_{X,\mathbf{y}}\|$; since $\|\boldsymbol{L}_{X,\mathbf{y}}\|^2 \leq Kp$,
$$d_{\mathrm{BL}}\left(\mathcal{L}(\boldsymbol{L}_{X,\mathbf{y}}\boldsymbol{W}_{n,\mathbf{y}} \mid \mathcal{F}_{n,\mathbf{y}}), \, \mathcal{L}(\boldsymbol{L}_{X,\mathbf{y}}\boldsymbol{Z} \mid \mathcal{F}_{n,\mathbf{y}})\right) \leq \max\left(1, (Kp)^{1/2}\right) d_{\mathrm{BL}}\left(\mathcal{L}(\boldsymbol{W}_{n,\mathbf{y}} \mid \mathcal{F}_{n,\mathbf{y}}), N(\boldsymbol{0}, \boldsymbol{I}_{Kp})\right) \xrightarrow{p} 0,$$
where $\boldsymbol{Z} \sim N(\boldsymbol{0}, \boldsymbol{I}_{Kp})$. Second, $\boldsymbol{L}_{X,\mathbf{y}}\boldsymbol{Z} \sim N(\boldsymbol{0}, \boldsymbol{L}_{X,\mathbf{y}}\boldsymbol{L}_{X,\mathbf{y}}') = N(\boldsymbol{0}, \boldsymbol{P}_{X,\mathbf{y}})$, so the transformed target is the required Gaussian law. 

Finally, the $o_p(1)$ remainder in the linear representation is absorbed. Write $\boldsymbol{T}_{n,\mathbf{y}}^{X} := \boldsymbol{\sigma}_{X,\mathbf{y}}^{-1}(\boldsymbol{\theta}_{n,\mathbf{y}} - \boldsymbol{\theta}_{\mathbf{y},0})$ for the studentized estimator and $R_n := \min(\|\boldsymbol{T}_{n,\mathbf{y}}^{X} - \boldsymbol{L}_{X,\mathbf{y}}\boldsymbol{W}_{n,\mathbf{y}}\|, 2)$. Let $f$ satisfy $\|f\|_\infty \leq 1$ and $\operatorname{Lip}(f) \leq 1$. The Lipschitz property gives $|f(\boldsymbol{T}_{n,\mathbf{y}}^{X}) - f(\boldsymbol{L}_{X,\mathbf{y}}\boldsymbol{W}_{n,\mathbf{y}})| \leq \|\boldsymbol{T}_{n,\mathbf{y}}^{X} - \boldsymbol{L}_{X,\mathbf{y}}\boldsymbol{W}_{n,\mathbf{y}}\|$, while boundedness gives $|f(\boldsymbol{T}_{n,\mathbf{y}}^{X}) - f(\boldsymbol{L}_{X,\mathbf{y}}\boldsymbol{W}_{n,\mathbf{y}})| \leq 2$. The difference is therefore bounded by $R_n$, which does not depend on $f$. Taking conditional expectations and the supremum over such $f$, 
\begin{align}
d_{\mathrm{BL}}&\left(\mathcal{L}\left(\boldsymbol{\sigma}_{X,\mathbf{y}}^{-1}(\boldsymbol{\theta}_{n,\mathbf{y}} - \boldsymbol{\theta}_{\mathbf{y},0}) \mid \mathcal{F}_{n,\mathbf{y}}\right), \, \mathcal{L}\left(\boldsymbol{L}_{X,\mathbf{y}}\boldsymbol{W}_{n,\mathbf{y}} \mid \mathcal{F}_{n,\mathbf{y}}\right)\right) \leq \mathbb{E}\left[R_n \mid \mathcal{F}_{n,\mathbf{y}}\right]. \nonumber
\end{align}
Since $R_n \leq 2$ and $R_n \xrightarrow{p} 0$ by the linear representation, for any $\varepsilon > 0$
$$\mathbb{E}[R_n] \leq \varepsilon + 2\Pr(R_n > \varepsilon) \longrightarrow \varepsilon,$$
so that $\mathbb{E}[R_n] \to 0$; the truncation is what rules out a vanishing-probability event with a large value keeping the mean away from zero. By the law of iterated expectations and Markov's inequality,
$$\Pr\left(\mathbb{E}[R_n \mid \mathcal{F}_{n,\mathbf{y}}] > \varepsilon\right) \leq \frac{\mathbb{E}[R_n]}{\varepsilon} \longrightarrow 0,$$
so that $\mathbb{E}[R_n \mid \mathcal{F}_{n,\mathbf{y}}] = o_p(1)$. Combining this with the bound on the transformed law and the identification of the transformed target, the triangle inequality for $d_{\mathrm{BL}}$ gives
\begin{align} \label{joint_gauss}
d_{\mathrm{BL}}\left(\mathcal{L}\left(\boldsymbol{\sigma}_{X,\mathbf{y}}^{-1}(\boldsymbol{\theta}_{n,\mathbf{y}} - \boldsymbol{\theta}_{\mathbf{y},0}) \mid \mathcal{F}_{n,\mathbf{y}}\right), \, N(\boldsymbol{0}, \boldsymbol{P}_{X,\mathbf{y}})\right) \xrightarrow{p} 0.
\end{align}
This establishes the joint Gaussian approximation underlying Theorem \ref{theorem3}(i), stated here with the conditional quantities; the replacement by their feasible counterparts is carried out in the inference discussion below, where parts (ii) and (iii) are also derived from \eqref{joint_gauss}.

Whitening by $\boldsymbol{P}_{X,\mathbf{y}}^{-1/2}$ would deliver a fixed limit: since $\boldsymbol{P}_{X,\mathbf{y}}$ has eigenvalues bounded away from zero, $\boldsymbol{P}_{X,\mathbf{y}}^{-1/2}\boldsymbol{\sigma}_{X,\mathbf{y}}^{-1}(\boldsymbol{\theta}_{n,\mathbf{y}} - \boldsymbol{\theta}_{\mathbf{y},0}) \xrightarrow{d} N(\boldsymbol{0}, \boldsymbol{I}_{Kp})$ holds, and parts (ii) and (iii) below are of exactly this whitened form, which is why their limits are fixed. It is not, however, the form required by the simultaneous bands of Section \ref{bands}. Those bands are rectangles in the coordinates of $\boldsymbol{\theta}_{\mathbf{y}}$: sets defined by one constraint per coefficient, which is what makes them readable threshold by threshold and invertible into statements about individual coefficients. A diagonal transformation such as $\boldsymbol{\sigma}_{X,\mathbf{y}}^{-1}$ rescales each constraint without mixing coordinates and therefore maps rectangles to rectangles. The matrix $\boldsymbol{P}_{X,\mathbf{y}}^{-1/2}$ is not diagonal, so under it each constraint becomes a bound on a linear combination of coefficients across thresholds, and the resulting region says nothing about any individual $\theta_{y_k,d}$.

The cross-threshold dependence is not eliminated by whitening; it is moved. The hypotheses of interest, and the coefficients the bands report, remain in the original coordinates, so the dependence reappears as soon as the whitened statement is inverted back into a statement about $\boldsymbol{\theta}_{\mathbf{y}}$. In the sup-$t$ construction that dependence is instead priced in the critical value, which is computed from $\boldsymbol{P}_{n,\mathbf{y}}(\boldsymbol{\theta}_{n,\mathbf{y}})$ and falls towards $z_{1-\alpha/2}$ when the estimates are highly correlated across nearby thresholds; this is the source of the gains over Bonferroni-type constructions noted in Section \ref{bands}. Studentizing by $\boldsymbol{\sigma}_{X,\mathbf{y}}$ removes the threshold-specific scales, which would otherwise make the coordinates incomparable, and leaves the correlations available to be priced. The coordinatewise statement with a correlation matrix varying in $n$ is therefore the form the bands require.\\
\\
\noindent \textbf{Step 4.}\\ 
\textit{Convergence rates.}\\
The next step is to determine the convergence rate, given by the order of $\boldsymbol{\Omega}_{n,\mathbf{y}}(\boldsymbol{\theta}_{\mathbf{y},0})$. The order of the diagonal blocks $\boldsymbol{H}_{n,k}(\boldsymbol{\theta}_{y_{k,0}})^{-1}$ are given by the single-threshold case and are of order $O_p(1/(n^4p_{n,k}))$ for $k = 1, ..., K$. Similarly, the order of the diagonal blocks of $\boldsymbol{\Upsilon}_{n,\mathbf{y}}(\boldsymbol{\theta}_{\mathbf{y},0})$ are $O_p(n^6p_{n,k})$ for each $k$. 

I now establish the order of the off-diagonal blocks. For any pair $k, l \in \{1, ..., K\}$ with $k \neq l$, by an application of the Cauchy-Schwarz inequality:
\begin{align}
    &\text{Cov}(\boldsymbol{S}_{n,k}(\boldsymbol{\theta}_{y_k,0}), \boldsymbol{S}_{n,l}(\boldsymbol{\theta}_{y_l,0})) = \mathbb{E}[\boldsymbol{S}_{n,k}(\boldsymbol{\theta}_{y_k,0}) \boldsymbol{S}_{n,l}(\boldsymbol{\theta}_{y_l,0})'] \nonumber \\
    &= \mathbb{E}\left[\left(\sum_{\sigma \in \mathcal{N}_{m_n}}\boldsymbol{s}_{k,\sigma}(\boldsymbol{\theta}_{y_k,0})\right) \left(\sum_{\sigma' \in \mathcal{N}_{m_n}}\boldsymbol{s}_{l,\sigma'}(\boldsymbol{\theta}_{y_l,0})\right)\right] \nonumber \\
    &= \sum_{\sigma \in \mathcal{N}_{m_n}} \sum_{\sigma' \in \mathcal{N}_{m_n}} \mathbb{E}[\boldsymbol{s}_{k,\sigma}(\boldsymbol{\theta}_{y_k,0}) \boldsymbol{s}_{l,\sigma'}(\boldsymbol{\theta}_{y_l,0})'] \nonumber \\
    &= \sum_{\sigma \in \mathcal{N}_{m_n}} \sum_{\sigma' \in \mathcal{N}_{m_n}} 1 \{\text{dyads}(\sigma) \cap \text{dyads}(\sigma') \neq \emptyset\} \text{Cov}(\boldsymbol{s}_{k,\sigma}(\boldsymbol{\theta}_{y_k,0}), \boldsymbol{s}_{l,\sigma'}(\boldsymbol{\theta}_{y_l,0})) \nonumber \\
    &\leq \sum_{\sigma \in \mathcal{N}_{m_n}} \sum_{\sigma' \in \mathcal{N}_{m_n}} 1\{\text{dyads}(\sigma) \cap \text{dyads}(\sigma') \neq \emptyset\} \|\text{Cov}(\boldsymbol{s}_{k,\sigma}(\boldsymbol{\theta}_{y_k,0}), \boldsymbol{s}_{l,\sigma'}(\boldsymbol{\theta}_{y_l,0}))\| \nonumber \\
    &\leq \sum_{\sigma \in \mathcal{N}_{m_n}} \sum_{\sigma' \in \mathcal{N}_{m_n}} 1\{\text{dyads}(\sigma) \cap \text{dyads}(\sigma') \neq \emptyset\} \sqrt{\|\text{Var}(\boldsymbol{s}_{k,\sigma}(\boldsymbol{\theta}_{y_k,0}))\|} \sqrt{\|\text{Var}(\boldsymbol{s}_{l,\sigma'}(\boldsymbol{\theta}_{y_l,0}))\|} \nonumber \\
    &= \sum_{\sigma \in \mathcal{N}_{m_n}} \sqrt{\|\text{Var}(\boldsymbol{s}_{k,\sigma}(\boldsymbol{\theta}_{y_k,0}))\|} \sum_{\sigma' \in \mathcal{N}_{m_n}} 1\{\text{dyads}(\sigma) \cap \text{dyads}(\sigma') \neq \emptyset\} \sqrt{\|\text{Var}(\boldsymbol{s}_{l,\sigma'}(\boldsymbol{\theta}_{y_l,0}))\|} \nonumber \\
    &= \sum_{\sigma \in \mathcal{N}_{m_n}} O(\sqrt{p_{k,\sigma}}) \sum_{\sigma' \in \mathcal{N}_{m_n}} 1\{\text{dyads}(\sigma) \cap \text{dyads}(\sigma') \neq \emptyset\} O(\sqrt{p_{l,\sigma'}}) \nonumber \\
    &= O(n^6\sqrt{p_{n,k}p_{n,l}})
\end{align}
where each quadruple at threshold $y_k$ has informativeness probability $p_{k,\sigma} = \Pr(z_{k,\sigma} \in \{-1,1\} \mid \alpha_k, \gamma_k)$, and $p_{n,k} = \frac{1}{m_n}\sum_{\sigma \in \mathcal{N}_{m_n}} \Pr(z_{k,\sigma} \in \{-1,1\} \mid \alpha_k, \gamma_k)$. The final result follows from the same combinatorial reasoning as in the single-threshold case.

From the above analysis, the block orders of the joint sandwich covariance can be determined, and they are consistent with the threshold-specific rates of Theorem \ref{theorem2}. For the diagonal blocks, where $k = 1, ..., K$:
\begin{align}
[\boldsymbol{\Omega}_{n,\mathbf{y}}(\boldsymbol{\theta}_{\mathbf{y},0})]_{kk} &= \boldsymbol{H}_{n,k}(\boldsymbol{\theta}_{y_{k,0}})^{-1} \boldsymbol{\Upsilon}_{n,kk}(\boldsymbol{\theta}_{y_{k,0}}) \boldsymbol{H}_{n,k}(\boldsymbol{\theta}_{y_{k,0}})^{-1} \\
&= O_p\left(\frac{1}{n^4 p_{n,k}}\right) \times O_p(n^6 p_{n,k}) \times O_p\left(\frac{1}{n^4 p_{n,k}}\right) = O_p\left(\frac{1}{n^2 p_{n,k}}\right)
\end{align}
which is asymptotically equivalent to $O_p(\sqrt{n(n-1)p_{n,k}})$ and therefore consistent with the rate obtained in Theorem \ref{theorem2} at each threshold.
For the off-diagonal blocks, where $k \neq l$:
\begin{align}
[\boldsymbol{\Omega}_{n,\mathbf{y}}(\boldsymbol{\theta}_{\mathbf{y},0})]_{kl} &= \boldsymbol{H}_{n,k}(\boldsymbol{\theta}_{y_{k,0}})^{-1} \boldsymbol{\Upsilon}_{n,kl}(\boldsymbol{\theta}_{y_{k,0}}, \boldsymbol{\theta}_{y_{l,0}}) \boldsymbol{H}_{n,l}(\boldsymbol{\theta}_{y_{l,0}})^{-1} \\
&= O_p\left(\frac{1}{n^4 p_{n,k}}\right) \times O_p(n^6 \sqrt{p_{n,k}p_{n,l}}) \times O_p\left(\frac{1}{n^4 p_{n,l}}\right) \\
&= O_p\left(\frac{1}{n^2 \sqrt{p_{n,k}p_{n,l}}}\right)
\end{align}

The off-diagonal orders imply that the correlation between the estimators at thresholds $y_k$ and $y_l$,
$$\frac{[\boldsymbol{\Omega}_{n,\mathbf{y}}(\boldsymbol{\theta}_{\mathbf{y},0})]_{kl}}{\sqrt{[\boldsymbol{\Omega}_{n,\mathbf{y}}(\boldsymbol{\theta}_{\mathbf{y},0})]_{kk}[\boldsymbol{\Omega}_{n,\mathbf{y}}(\boldsymbol{\theta}_{\mathbf{y},0})]_{ll}}} = O_p(1),$$
is of order one whatever the relative magnitudes of $p_{n,k}$ and $p_{n,l}$, since the geometric-mean scaling of the off-diagonal blocks cancels against the diagonal ones. The cross-threshold dependence therefore does not vanish under rate heterogeneity.

The joint Gaussian approximation of Theorem \ref{theorem3}(i) accommodates these heterogeneous rates through the block structure of the sandwich covariance.\\
\\
\noindent \textbf{Step 5.}\\ \textit{Feasible inference.}\\
For inference, the joint asymptotic variance is estimated using the plug-in estimator:
$$\boldsymbol{\Omega}_{n,\mathbf{y}}(\boldsymbol{\theta}_{n,\mathbf{y}}) = \boldsymbol{H}_{n,\mathbf{y}}(\boldsymbol{\theta}_{n,\mathbf{y}})^{-1} \boldsymbol{\Upsilon}_{n,\mathbf{y}}(\boldsymbol{\theta}_{n,\mathbf{y}}) \boldsymbol{H}_{n,\mathbf{y}}(\boldsymbol{\theta}_{n,\mathbf{y}})^{-1}.$$
Replacing $\boldsymbol{\theta}_{\mathbf{y},0}$ by the estimator relies on the following results:
\begin{enumerate}
\item The stochastic equicontinuity bound for the Hessian established in Step 3 of \ref{appendix_derivation}, applied to each diagonal block, which together with consistency gives $(m_n p_{n,k})^{-1}\|\boldsymbol{H}_{n,k}(\boldsymbol{\theta}_{n,y_k}) - \boldsymbol{H}_{n,k}(\boldsymbol{\theta}_{y_{k,0}})\| \xrightarrow{p} 0$ for each $k$.
\item The corresponding bound for the score covariance blocks, established below, giving $\|\boldsymbol{D}_{n,\mathbf{y}}^{-1}[\boldsymbol{\Upsilon}_{n,\mathbf{y}}(\boldsymbol{\theta}_{n,\mathbf{y}}) - \boldsymbol{\Upsilon}_{n,\mathbf{y}}(\boldsymbol{\theta}_{\mathbf{y},0})]\boldsymbol{D}_{n,\mathbf{y}}^{-1}\| \xrightarrow{p} 0$.
\item The consistency of $\boldsymbol{\theta}_{n,y_k} \xrightarrow{p} \boldsymbol{\theta}_{y_{k,0}}$ for each $k$ (Theorem \ref{theorem1}).
\end{enumerate}

Item 2 follows from a stochastic equicontinuity bound for $\boldsymbol{\Upsilon}_{n,kl}(\cdot,\cdot)$, defined in \eqref{upsilon_n_kl}, at the block scale. These covariance estimators depend on estimators that may converge at different rates when $p_{n,k} \neq p_{n,l}$; the bound below is established for all pairs $k, l \in \{1, ..., K\}$.\\
\\
\textit{Stochastic equicontinuity.}\\
For any $(\boldsymbol{\theta}_{k,1}, \boldsymbol{\theta}_{l,1}), (\boldsymbol{\theta}_{k,2}, \boldsymbol{\theta}_{l,2}) \in \Theta \times \Theta$, applying the product rule to each summand and the mean value theorem to each score difference (as in Step 3 of Appendix C for the Hessian), and using the bounds $\left\|\frac{\partial \boldsymbol{s}_k}{\partial \boldsymbol{\theta}'}\right\| \leq \|\boldsymbol{r}_\sigma\|^2 \sup_{\varepsilon \in \mathbb{R}} |f(\varepsilon)| \cdot 1\{z_{k,\sigma} \in \{-1,1\}\}$ and $\|\boldsymbol{s}_l(\sigma'; \boldsymbol{\theta}_{l,1})\| \leq \|\boldsymbol{r}_{\sigma'}\| \cdot 1\{z_{l,\sigma'} \in \{-1,1\}\}$:
\begin{adjustwidth}{-0.25in}{-0.25in}
\begin{align}
    &\|\boldsymbol{\Upsilon}_{n,kl}(\boldsymbol{\theta}_{k,1}, \boldsymbol{\theta}_{l,1}) - \boldsymbol{\Upsilon}_{n,kl}(\boldsymbol{\theta}_{k,2}, \boldsymbol{\theta}_{l,2})\| \nonumber \\[4pt]
    &\leq 16\sup_{\varepsilon \in \mathbb{R}} |f(\varepsilon)| \left(\sum_{i} \sum_{j \neq i} \sum_{i' \neq i,j} \sum_{j' \neq i,j,i'} \sum_{i'' \neq i,j,i'} \sum_{j'' \neq i,j,j',i''} \|\boldsymbol{r}_\sigma\|^2 \|\boldsymbol{r}_{\sigma'}\| 1\{z_{k,\sigma} \in \{-1,1\}\} 1\{z_{l,\sigma'} \in \{-1,1\}\}\right) \nonumber \\
    &\hspace{10cm} \times \|\boldsymbol{\theta}_{k,1} - \boldsymbol{\theta}_{k,2}\| \nonumber \\[4pt]
    &\quad + 16\sup_{\varepsilon \in \mathbb{R}} |f(\varepsilon)| \left(\sum_{i} \sum_{j \neq i} \sum_{i' \neq i,j} \sum_{j' \neq i,j,i'} \sum_{i'' \neq i,j,i'} \sum_{j'' \neq i,j,j',i''} \|\boldsymbol{r}_\sigma\| \|\boldsymbol{r}_{\sigma'}\|^2 1\{z_{k,\sigma} \in \{-1,1\}\} 1\{z_{l,\sigma'} \in \{-1,1\}\}\right) \nonumber \\
    &\hspace{10cm} \times \|\boldsymbol{\theta}_{l,1} - \boldsymbol{\theta}_{l,2}\|
\end{align}
\end{adjustwidth}
where $\sigma = \sigma\{i,i';j,j'\}$ and $\sigma' = \sigma\{i,i'';j,j''\}$. Denote the first bracketed sum by $B_{n,kl}^{(1)}$. Note that $\sup_{\varepsilon \in \mathbb{R}} |f(\varepsilon)|$ is a finite constant, since the logistic density $f(\varepsilon) = F(\varepsilon)(1 - F(\varepsilon))$ attains its maximum of $1/4$ at $\varepsilon = 0$. Leading to the expression:

\begin{align}
    &\frac{\|\boldsymbol{\Upsilon}_{n,kl}(\boldsymbol{\theta}_{k,1}, \boldsymbol{\theta}_{l,1}) - \boldsymbol{\Upsilon}_{n,kl}(\boldsymbol{\theta}_{k,2}, \boldsymbol{\theta}_{l,2})\|}{n^6\sqrt{p_{n,k}p_{n,l}}} \nonumber \\[4pt]
    &\leq 16\sup_{\varepsilon \in \mathbb{R}} |f(\varepsilon)| \left(\frac{B_{n,kl}^{(1)}}{n^6\sqrt{p_{n,k}p_{n,l}}}\right) \|\boldsymbol{\theta}_{k,1} - \boldsymbol{\theta}_{k,2}\| + 16\sup_{\varepsilon \in \mathbb{R}} |f(\varepsilon)| \left(\frac{B_{n,kl}^{(2)}}{n^6\sqrt{p_{n,k}p_{n,l}}}\right) \|\boldsymbol{\theta}_{l,1} - \boldsymbol{\theta}_{l,2}\|
\end{align}
for any $(\boldsymbol{\theta}_{k,1}, \boldsymbol{\theta}_{l,1}), (\boldsymbol{\theta}_{k,2}, \boldsymbol{\theta}_{l,2}) \in \Theta \times \Theta$. Using the same arguments as those used to establish Theorem 1 it follows that:
\begin{align}
    \left(n^6\sqrt{p_{n,k}p_{n,l}}\right)^{-1} B_{n,kl}^{(1)} = O_p(1)
\end{align}
The proof proceeds in formally showing this statement. By Chebyshev's inequality:
\begin{align}
    &\Pr\left(\left|\frac{B_{n,kl}^{(1)}}{n^6\sqrt{p_{n,k}p_{n,l}}} - \mathbb{E}\left[\frac{B_{n,kl}^{(1)}}{n^6\sqrt{p_{n,k}p_{n,l}}}\right]\right| > \varepsilon\right) \leq \frac{\text{Var}\left(B_{n,kl}^{(1)}\right)}{(n^6\sqrt{p_{n,k}p_{n,l}})^2 \varepsilon^2}
\end{align}

I proceed with bounding the expectation:
\begin{align}
    \mathbb{E}\left[B_{n,kl}^{(1)}\right] &= \sum_{i} \sum_{j \neq i} \sum_{i' \neq i,j} \sum_{j' \neq i,j,i'} \sum_{i'' \neq i,j,i'} \sum_{j'' \neq i,j,j',i''} \mathbb{E}\left[\|\boldsymbol{r}_\sigma\|^2 1\{z_{k,\sigma} \in \{-1,1\}\} \cdot \|\boldsymbol{r}_{\sigma'}\| 1\{z_{l,\sigma'} \in \{-1,1\}\}\right] \nonumber \\[4pt]
    &\leq \sum_{i} \sum_{j \neq i} \sum_{i' \neq i,j} \sum_{j' \neq i,j,i'} \sum_{i'' \neq i,j,i'} \sum_{j'' \neq i,j,j',i''} \sqrt{\mathbb{E}\left[\|\boldsymbol{r}_\sigma\|^4 1\{z_{k,\sigma} \in \{-1,1\}\}\right]} \nonumber \\
    &\hspace{6cm} \cdot \sqrt{\mathbb{E}\left[\|\boldsymbol{r}_{\sigma'}\|^2 1\{z_{l,\sigma'} \in \{-1,1\}\}\right]} \nonumber \\[4pt]
    &\leq C'' \sum_{i} \sum_{j \neq i} \sum_{i' \neq i,j} \sum_{j' \neq i,j,i'} \sum_{i'' \neq i,j,i'} \sum_{j'' \neq i,j,j',i''} \sqrt{p_{k,\sigma}} \cdot \sqrt{p_{l,\sigma'}} \nonumber \\[4pt]
    &= C'' \sum_{i} \sum_{j \neq i} \left(\sum_{i' \neq i,j} \sum_{j' \neq i,j,i'} \sqrt{p_{k,\sigma}}\right) \left(\sum_{i'' \neq i,j,i'} \sum_{j'' \neq i,j,j',i''} \sqrt{p_{l,\sigma'}}\right) \nonumber \\[4pt]
    &= O\left(n^6\sqrt{p_{n,k}\, p_{n,l}}\right)
\end{align}
where $C'' < \infty$ is a finite constant. The first inequality follows from the Cauchy-Schwarz inequality applied to each summand. The second inequality follows from the conditional expectation decomposition (as in equation (C.14)): $\mathbb{E}\left[\|\boldsymbol{r}_\sigma\|^4 1\{z_{k,\sigma} \in \{-1,1\}\}\right] = \mathbb{E}\left[\|\boldsymbol{r}_\sigma\|^4 \mid z_{k,\sigma} \in \{-1,1\}\right] \cdot \Pr(z_{k,\sigma} \in \{-1,1\}) \leq C' \cdot p_{k,\sigma}$, where $C' < \infty$ since the conditional expectation is uniformly bounded by Assumption \ref{assumption5}; similarly $\mathbb{E}\left[\|\boldsymbol{r}_{\sigma'}\|^2 1\{z_{l,\sigma'} \in \{-1,1\}\}\right] \leq C''' \cdot p_{l,\sigma'}$ for some finite constant $C''' < \infty$. The fourth line separates the 6-fold sum into two factors sharing the dyad $(i,j)$. The final equality follows from Jensen's inequality (since $\sqrt{\cdot}$ is concave): $\sum_{i',j'} \sqrt{p_{k,\sigma}} \leq O(n^2\sqrt{p_{n,k}})$ and $\sum_{i'',j''} \sqrt{p_{l,\sigma'}} \leq O(n^2\sqrt{p_{n,l}})$, multiplied by $O(n^2)$ choices for $(i,j)$, following the same argument as in equations (C.1) and (C.20). Therefore:
\begin{align}
    \mathbb{E}\left[\frac{B_{n,kl}^{(1)}}{n^6\sqrt{p_{n,k}p_{n,l}}}\right] = O(1)
\end{align}

Next, I proceed with bounding the variance. Let $\xi = (i,j,i',j',i'',j'')$ index a position in the 6-fold sum, determining the pair of quadruples $\sigma = \sigma\{i,i';j,j'\}$ and $\sigma' = \sigma\{i,i'';j,j''\}$ sharing dyad $(i,j)$. Denote:
$$A_\xi = \|\boldsymbol{r}_\sigma\|^2 \|\boldsymbol{r}_{\sigma'}\| 1\{z_{k,\sigma} \in \{-1,1\}\} 1\{z_{l,\sigma'} \in \{-1,1\}\} - \mathbb{E}\left[\|\boldsymbol{r}_\sigma\|^2 \|\boldsymbol{r}_{\sigma'}\| 1\{z_{k,\sigma} \in \{-1,1\}\} 1\{z_{l,\sigma'} \in \{-1,1\}\}\right]$$
and write $1\{\xi \cap \xi' \neq \emptyset\}$ to indicate that the node sets of $\xi$ and $\xi'$ share at least one node. Then:
\begin{align}
    &\text{Var}\left(B_{n,kl}^{(1)}\right) = \mathbb{E}\left[\left|\sum_{\xi} A_\xi\right|^2\right] \nonumber \\[4pt]
    &= \mathbb{E}\left[\left(\sum_{\xi} A_\xi\right)\left(\sum_{\xi'} A_{\xi'}\right)\right] \nonumber \\[4pt]
    &= \sum_{\xi} \sum_{\xi'} 1\left\{\xi \cap \xi' \neq \emptyset\right\} \text{Cov}\left(\|\boldsymbol{r}_\sigma\|^2 \|\boldsymbol{r}_{\sigma'}\| 1\{z_{k,\sigma} \in \{-1,1\}\} 1\{z_{l,\sigma'} \in \{-1,1\}\},\right. \nonumber \\
    &\hspace{5cm} \left. \|\boldsymbol{r}_{\sigma_1}\|^2 \|\boldsymbol{r}_{\sigma_1'}\| 1\{z_{k,\sigma_1} \in \{-1,1\}\} 1\{z_{l,\sigma_1'} \in \{-1,1\}\}\right) \nonumber \\[4pt]
    &\leq \sum_{\xi} \sum_{\xi'} 1\left\{\xi \cap \xi' \neq \emptyset\right\} \left|\text{Cov}\left(\|\boldsymbol{r}_\sigma\|^2 \|\boldsymbol{r}_{\sigma'}\| 1\{z_{k,\sigma} \in \{-1,1\}\} 1\{z_{l,\sigma'} \in \{-1,1\}\},\right.\right. \nonumber \\
    &\hspace{5cm} \left.\left. \|\boldsymbol{r}_{\sigma_1}\|^2 \|\boldsymbol{r}_{\sigma_1'}\| 1\{z_{k,\sigma_1} \in \{-1,1\}\} 1\{z_{l,\sigma_1'} \in \{-1,1\}\}\right)\right| \nonumber \\[4pt]
    &\leq \sum_{\xi} \sum_{\xi'} 1\left\{\xi \cap \xi' \neq \emptyset\right\} \sqrt{\mathbb{E}\left[\|\boldsymbol{r}_\sigma\|^4 \|\boldsymbol{r}_{\sigma'}\|^2 1\{z_{k,\sigma} \in \{-1,1\}\} 1\{z_{l,\sigma'} \in \{-1,1\}\}\right]} \nonumber \\
    &\hspace{4.5cm} \times \sqrt{\mathbb{E}\left[\|\boldsymbol{r}_{\sigma_1}\|^4 \|\boldsymbol{r}_{\sigma_1'}\|^2 1\{z_{k,\sigma_1} \in \{-1,1\}\} 1\{z_{l,\sigma_1'} \in \{-1,1\}\}\right]} \nonumber \\[4pt]
    &\leq B' \sum_{\xi} \sum_{\xi'} 1\left\{\xi \cap \xi' \neq \emptyset\right\} (p_{k,\sigma}\, p_{l,\sigma'})^{1/4} \cdot (p_{k,\sigma_1}\, p_{l,\sigma_1'})^{1/4} \nonumber \\[4pt]
    &= O\left(n^{11}\sqrt{p_{n,k}\, p_{n,l}}\right)
\end{align}
where $B' < \infty$ is a finite constant. The third equality follows since 6-tuples involving entirely distinct sets of nodes are independent by Assumption 1, contributing zero to the covariance. The second inequality follows from the Cauchy-Schwarz inequality and $\text{Var}(X) \leq \mathbb{E}[X^2]$, using $1\{z_{k,\sigma} \in \{-1,1\}\}^2 = 1\{z_{k,\sigma} \in \{-1,1\}\}$. The third inequality follows from the conditional expectation decomposition applied to the joint event:

\begin{align}
    &\mathbb{E}\left[\|\boldsymbol{r}_\sigma\|^4 \|\boldsymbol{r}_{\sigma'}\|^2 1\{z_{k,\sigma} \in \{-1,1\}\} 1\{z_{l,\sigma'} \in \{-1,1\}\}\right] \nonumber \\[4pt]
    &= \Pr\left(z_{k,\sigma} \in \{-1,1\},\, z_{l,\sigma'} \in \{-1,1\}\right) \cdot \mathbb{E}\left[\|\boldsymbol{r}_\sigma\|^4 \|\boldsymbol{r}_{\sigma'}\|^2 \mid z_{k,\sigma} \in \{-1,1\},\, z_{l,\sigma'} \in \{-1,1\}\right] \nonumber \\[4pt]
    &\leq \sqrt{p_{k,\sigma}\, p_{l,\sigma'}} \cdot C''''
\end{align}
where $C'''' < \infty$ is a finite constant bounding the conditional expectation (which is uniformly bounded since $\mathbb{E}\left[\|\boldsymbol{r}_\sigma\|^4 \|\boldsymbol{r}_{\sigma'}\|^2\right] \leq C_6 < \infty$ by H\"{o}lder's inequality with exponents $3/2$ and $3$: $\mathbb{E}[\|\boldsymbol{r}_\sigma\|^4 \|\boldsymbol{r}_{\sigma'}\|^2] \leq (\mathbb{E}[\|\boldsymbol{r}_\sigma\|^6])^{2/3} (\mathbb{E}[\|\boldsymbol{r}_{\sigma'}\|^6])^{1/3}$, which is finite by Assumption \ref{assumption5}); and the joint probability is bounded by the Cauchy-Schwarz inequality: $\Pr(z_{k,\sigma} \in \{-1,1\},\, z_{l,\sigma'} \in \{-1,1\}) \leq \sqrt{\Pr(z_{k,\sigma} \in \{-1,1\}) \Pr(z_{l,\sigma'} \in \{-1,1\})} = \sqrt{p_{k,\sigma}\, p_{l,\sigma'}}$.

The final order $O(n^{11}\sqrt{p_{n,k}\, p_{n,l}})$ follows from the same reasoning as in the proof of Theorem 1 and equations (C.15)--(C.18), by applying Jensen's inequality to the sums over $\xi$ and $\xi'$ separately: there are $O(n^6)$ 6-tuples $\xi$, and for each fixed $\xi$, there are $O(n^5)$ 6-tuples $\xi'$ with $\xi \cap \xi' \neq \emptyset$ (each $\xi$ involves 6 nodes; sharing one node leaves 5 free indices).

Furthermore:
\begin{align}
    \frac{\text{Var}\left(B_{n,kl}^{(1)}\right)}{(n^6\sqrt{p_{n,k}p_{n,l}})^2} = \frac{O(n^{11}\sqrt{p_{n,k}\, p_{n,l}})}{n^{12} p_{n,k} p_{n,l}} = O\left(\frac{1}{n\, \sqrt{p_{n,k}\, p_{n,l}}}\right) \rightarrow 0
\end{align}
since $np_{n,k} \rightarrow \infty$ for all $k$ by Assumption \ref{assumption4}, which implies $n\sqrt{p_{n,k}\, p_{n,l}} \geq n \min_k p_{n,k} \rightarrow \infty$.\\
\\
\noindent\textit{Conclusion for stochastic equicontinuity.} Gathering these results, by Chebyshev's inequality, $\allowbreak(n^6\sqrt{p_{n,k}p_{n,l}})^{-1} B_{n,kl}^{(1)} = O_p(1)$. By symmetry in the roles of $k$ and $l$, the same holds for $B_{n,kl}^{(2)}$. Because $\sup_{\varepsilon \in \mathbb{R}} |f(\varepsilon)|$ is a finite constant, it follows that:
\begin{align}
    \frac{\|\boldsymbol{\Upsilon}_{n,kl}(\boldsymbol{\theta}_{k,1}, \boldsymbol{\theta}_{l,1}) - \boldsymbol{\Upsilon}_{n,kl}(\boldsymbol{\theta}_{k,2}, \boldsymbol{\theta}_{l,2})\|}{n^6\sqrt{p_{n,k}p_{n,l}}} \leq O_p(1) \left(\|\boldsymbol{\theta}_{k,1} - \boldsymbol{\theta}_{k,2}\| + \|\boldsymbol{\theta}_{l,1} - \boldsymbol{\theta}_{l,2}\|\right)
\end{align}
for any $(\boldsymbol{\theta}_{k,1}, \boldsymbol{\theta}_{l,1}), (\boldsymbol{\theta}_{k,2}, \boldsymbol{\theta}_{l,2}) \in \Theta \times \Theta$, with the $O_p(1)$ constant not depending on the parameter values. Thus, the normalized $\boldsymbol{\Upsilon}_{n,kl}$ is stochastically equicontinuous over the product parameter space.

Since $\boldsymbol{\theta}_{n,y_k} \xrightarrow{p} \boldsymbol{\theta}_{y_{k,0}}$ for each $k$ by Theorem \ref{theorem1}, the stochastic equicontinuity bound gives, for every pair $(k,l)$,
$$\frac{\|\boldsymbol{\Upsilon}_{n,kl}(\boldsymbol{\theta}_{n,y_k}, \boldsymbol{\theta}_{n,y_l}) - \boldsymbol{\Upsilon}_{n,kl}(\boldsymbol{\theta}_{y_{k,0}}, \boldsymbol{\theta}_{y_{l,0}})\|}{n^6\sqrt{p_{n,k}p_{n,l}}} \xrightarrow{p} 0,$$
and since $K$ is fixed this is item 2 above.\\

\noindent\textit{Eigenvalue bounds at the estimator.}\\
By Assumption \ref{assumption_jointscorevar}, item 2 above, and Weyl's inequality,
$$\lambda_{\min}\left(\boldsymbol{D}_{n,\mathbf{y}}^{-1}\boldsymbol{\Upsilon}_{n,\mathbf{y}}(\boldsymbol{\theta}_{n,\mathbf{y}})\boldsymbol{D}_{n,\mathbf{y}}^{-1}\right) \geq c_{\mathbf{y}} - o_p(1) \geq c_{\mathbf{y}}/2$$
with probability approaching one, so that $\boldsymbol{\Upsilon}_{n,\mathbf{y}}(\boldsymbol{\theta}_{n,\mathbf{y}})$ is positive definite and $\|\boldsymbol{D}_{n,\mathbf{y}}\boldsymbol{\Upsilon}_{n,\mathbf{y}}(\boldsymbol{\theta}_{n,\mathbf{y}})^{-1}\boldsymbol{D}_{n,\mathbf{y}}\| = O_p(1)$. For the Hessian, the rank condition of Assumption \ref{assumption4} applied at each threshold, together with item 1 and the argument of the Feasible inference passage of \ref{appendix_derivation} applied blockwise, gives $\|(m_n p_{n,k})\boldsymbol{H}_{n,k}(\boldsymbol{\theta}_{n,y_k})^{-1}\| = O_p(1)$ for each $k$. The feasible sandwich matrix $\boldsymbol{\Omega}_{n,\mathbf{y}}(\boldsymbol{\theta}_{n,\mathbf{y}})$ is therefore positive definite with probability approaching one, so that the diagonal entries of $\boldsymbol{\sigma}_{n,\mathbf{y}}(\boldsymbol{\theta}_{n,\mathbf{y}})$ are positive and $\boldsymbol{P}_{n,\mathbf{y}}(\boldsymbol{\theta}_{n,\mathbf{y}})$ is well defined.\\

\noindent\textit{Comparison with the conditional sandwich covariance.}\\
To compare the feasible sandwich matrix with $\boldsymbol{\Omega}_{X,\mathbf{y}}$, substitute one factor at a time:
\begin{adjustwidth}{-0.25in}{-0.25in}
\begin{align} \label{sandwich_decomp}
\boldsymbol{\Omega}_{n,\mathbf{y}}(\boldsymbol{\theta}_{n,\mathbf{y}}) - \boldsymbol{\Omega}_{X,\mathbf{y}} &= \boldsymbol{H}_{n,\mathbf{y}}(\boldsymbol{\theta}_{n,\mathbf{y}})^{-1}\left[\boldsymbol{\Upsilon}_{n,\mathbf{y}}(\boldsymbol{\theta}_{n,\mathbf{y}}) - 16\,\boldsymbol{\Upsilon}_{X,\mathbf{y}}\right]\boldsymbol{H}_{n,\mathbf{y}}(\boldsymbol{\theta}_{n,\mathbf{y}})^{-1} \nonumber \\[1ex]
&\quad + \left[\boldsymbol{H}_{n,\mathbf{y}}(\boldsymbol{\theta}_{n,\mathbf{y}})^{-1} - \boldsymbol{H}_{X,\mathbf{y}}^{-1}\right]\left(16\,\boldsymbol{\Upsilon}_{X,\mathbf{y}}\right)\boldsymbol{H}_{n,\mathbf{y}}(\boldsymbol{\theta}_{n,\mathbf{y}})^{-1} \nonumber \\[1ex]
&\quad + \boldsymbol{H}_{X,\mathbf{y}}^{-1}\left(16\,\boldsymbol{\Upsilon}_{X,\mathbf{y}}\right)\left[\boldsymbol{H}_{n,\mathbf{y}}(\boldsymbol{\theta}_{n,\mathbf{y}})^{-1} - \boldsymbol{H}_{X,\mathbf{y}}^{-1}\right].
\end{align}
\end{adjustwidth}
Since $\boldsymbol{H}_{n,\mathbf{y}}$ is block-diagonal and $\boldsymbol{D}_{n,\mathbf{y}}$, 
$\boldsymbol{E}_{n,\mathbf{y}}$ and $\boldsymbol{Q}_{n,\mathbf{y}}$ act as scalars within each 
block, each of the three normalizers commutes with $\boldsymbol{H}_{n,\mathbf{y}}$ and with the 
others, and the three scales are consistent:
$$\boldsymbol{Q}_{n,\mathbf{y}}^{-1}\boldsymbol{D}_{n,\mathbf{y}} = \kappa_n \boldsymbol{E}_{n,\mathbf{y}}^2, \qquad \kappa_n := \left(\frac{n-1}{n}\right)^{1/2} \longrightarrow 1.$$
Normalizing \eqref{sandwich_decomp} by $\boldsymbol{Q}_{n,\mathbf{y}}^{-1}$ on both sides and 
inserting $\boldsymbol{D}_{n,\mathbf{y}}\boldsymbol{D}_{n,\mathbf{y}}^{-1}$ around the score 
covariance in the first term, and $\boldsymbol{E}_{n,\mathbf{y}}\boldsymbol{E}_{n,\mathbf{y}}^{-1}$ 
around the Hessian differences in the remaining two, gives
\begin{adjustwidth}{-0.25in}{-0.25in}
\begin{align} \label{sandwich_decomp_norm}
&\boldsymbol{Q}_{n,\mathbf{y}}^{-1}\left[\boldsymbol{\Omega}_{n,\mathbf{y}}(\boldsymbol{\theta}_{n,\mathbf{y}}) - \boldsymbol{\Omega}_{X,\mathbf{y}}\right]\boldsymbol{Q}_{n,\mathbf{y}}^{-1} \nonumber \\[1ex]
&\quad = \kappa_n^2 \left[\boldsymbol{E}_{n,\mathbf{y}}\boldsymbol{H}_{n,\mathbf{y}}(\boldsymbol{\theta}_{n,\mathbf{y}})^{-1}\boldsymbol{E}_{n,\mathbf{y}}\right] \left[\boldsymbol{D}_{n,\mathbf{y}}^{-1}\left(\boldsymbol{\Upsilon}_{n,\mathbf{y}}(\boldsymbol{\theta}_{n,\mathbf{y}}) - 16\,\boldsymbol{\Upsilon}_{X,\mathbf{y}}\right)\boldsymbol{D}_{n,\mathbf{y}}^{-1}\right] \left[\boldsymbol{E}_{n,\mathbf{y}}\boldsymbol{H}_{n,\mathbf{y}}(\boldsymbol{\theta}_{n,\mathbf{y}})^{-1}\boldsymbol{E}_{n,\mathbf{y}}\right] \nonumber \\[1ex]
&\qquad - \kappa_n^2 \left[\boldsymbol{E}_{n,\mathbf{y}}\boldsymbol{H}_{n,\mathbf{y}}(\boldsymbol{\theta}_{n,\mathbf{y}})^{-1}\boldsymbol{E}_{n,\mathbf{y}}\right] \left[\boldsymbol{E}_{n,\mathbf{y}}^{-1}\left(\boldsymbol{H}_{n,\mathbf{y}}(\boldsymbol{\theta}_{n,\mathbf{y}}) - \boldsymbol{H}_{X,\mathbf{y}}\right)\boldsymbol{E}_{n,\mathbf{y}}^{-1}\right] \left[\boldsymbol{E}_{n,\mathbf{y}}\boldsymbol{H}_{X,\mathbf{y}}^{-1}\boldsymbol{E}_{n,\mathbf{y}}\right] \nonumber \\[1ex]
&\qquad \hspace{2cm} \times \left[\boldsymbol{D}_{n,\mathbf{y}}^{-1}\left(16\,\boldsymbol{\Upsilon}_{X,\mathbf{y}}\right)\boldsymbol{D}_{n,\mathbf{y}}^{-1}\right] \left[\boldsymbol{E}_{n,\mathbf{y}}\boldsymbol{H}_{n,\mathbf{y}}(\boldsymbol{\theta}_{n,\mathbf{y}})^{-1}\boldsymbol{E}_{n,\mathbf{y}}\right] \nonumber \\[1ex]
&\qquad - \kappa_n^2 \left[\boldsymbol{E}_{n,\mathbf{y}}\boldsymbol{H}_{X,\mathbf{y}}^{-1}\boldsymbol{E}_{n,\mathbf{y}}\right] \left[\boldsymbol{D}_{n,\mathbf{y}}^{-1}\left(16\,\boldsymbol{\Upsilon}_{X,\mathbf{y}}\right)\boldsymbol{D}_{n,\mathbf{y}}^{-1}\right] \left[\boldsymbol{E}_{n,\mathbf{y}}\boldsymbol{H}_{n,\mathbf{y}}(\boldsymbol{\theta}_{n,\mathbf{y}})^{-1}\boldsymbol{E}_{n,\mathbf{y}}\right] \nonumber \\[1ex]
&\qquad \hspace{2cm} \times \left[\boldsymbol{E}_{n,\mathbf{y}}^{-1}\left(\boldsymbol{H}_{n,\mathbf{y}}(\boldsymbol{\theta}_{n,\mathbf{y}}) - \boldsymbol{H}_{X,\mathbf{y}}\right)\boldsymbol{E}_{n,\mathbf{y}}^{-1}\right] \left[\boldsymbol{E}_{n,\mathbf{y}}\boldsymbol{H}_{X,\mathbf{y}}^{-1}\boldsymbol{E}_{n,\mathbf{y}}\right],
\end{align}
\end{adjustwidth}
where the minus signs in the last two terms come from
$$\boldsymbol{H}_{n,\mathbf{y}}(\boldsymbol{\theta}_{n,\mathbf{y}})^{-1} - \boldsymbol{H}_{X,\mathbf{y}}^{-1} = -\,\boldsymbol{H}_{n,\mathbf{y}}(\boldsymbol{\theta}_{n,\mathbf{y}})^{-1}\left[\boldsymbol{H}_{n,\mathbf{y}}(\boldsymbol{\theta}_{n,\mathbf{y}}) - \boldsymbol{H}_{X,\mathbf{y}}\right]\boldsymbol{H}_{X,\mathbf{y}}^{-1},$$
an exact identity valid whenever both inverses exist, which is used because the results available bound the difference of the Hessians rather than the difference of their inverses.

Each bracket in \eqref{sandwich_decomp_norm} is now bounded separately. Two of them contain a difference and are shown to be negligible; the remaining ones are bounded in probability at their own scales. Because $\boldsymbol{\Upsilon}_{X,\mathbf{y}}$ and $\boldsymbol{H}_{X,\mathbf{y}}$ are defined as conditional expectations evaluated at $\boldsymbol{\theta}_{\mathbf{y},0}$, while the feasible matrices are evaluated at $\boldsymbol{\theta}_{n,\mathbf{y}}$, each of the two differences involves a change of argument and a change from the realized matrix to its conditional expectation; these are treated in turn.

\noindent\textit{The score covariance difference.} Adding and subtracting $\boldsymbol{\Upsilon}_{n,\mathbf{y}}(\boldsymbol{\theta}_{\mathbf{y},0})$ and applying the triangle inequality,
\begin{align}
&\left\|\boldsymbol{D}_{n,\mathbf{y}}^{-1}\left[\boldsymbol{\Upsilon}_{n,\mathbf{y}}(\boldsymbol{\theta}_{n,\mathbf{y}}) - 16\,\boldsymbol{\Upsilon}_{X,\mathbf{y}}\right]\boldsymbol{D}_{n,\mathbf{y}}^{-1}\right\| \nonumber \\
&\qquad \leq \underbrace{\left\|\boldsymbol{D}_{n,\mathbf{y}}^{-1}\left[\boldsymbol{\Upsilon}_{n,\mathbf{y}}(\boldsymbol{\theta}_{n,\mathbf{y}}) - \boldsymbol{\Upsilon}_{n,\mathbf{y}}(\boldsymbol{\theta}_{\mathbf{y},0})\right]\boldsymbol{D}_{n,\mathbf{y}}^{-1}\right\|}_{\text{item 2: change of argument}} + \underbrace{\left\|\boldsymbol{D}_{n,\mathbf{y}}^{-1}\left[\boldsymbol{\Upsilon}_{n,\mathbf{y}}(\boldsymbol{\theta}_{\mathbf{y},0}) - 16\,\boldsymbol{\Upsilon}_{X,\mathbf{y}}\right]\boldsymbol{D}_{n,\mathbf{y}}^{-1}\right\|}_{\eqref{joint_bridge}: \text{ passage to the conditional matrix}} \xrightarrow{p} 0. \nonumber
\end{align}
The first term is item 2, established by the stochastic equicontinuity bound for $\boldsymbol{\Upsilon}_{n,kl}$ together with the consistency of $\boldsymbol{\theta}_{n,y_k}$ at each threshold. The second is the joint bridge \eqref{joint_bridge} of Step 2, which carries the factor $16$ because $\boldsymbol{\Upsilon}_{n,\mathbf{y}}$ is defined in \eqref{upsilon_n_kl} with that multiplicity while $\boldsymbol{\Upsilon}_{X,\mathbf{y}}$ is built from the projections and carries none.

\noindent\textit{The Hessian difference.} By the same decomposition, now with $\boldsymbol{H}_{n,\mathbf{y}}(\boldsymbol{\theta}_{\mathbf{y},0})$ as the intermediate matrix,
\begin{align}
&\left\|\boldsymbol{E}_{n,\mathbf{y}}^{-1}\left[\boldsymbol{H}_{n,\mathbf{y}}(\boldsymbol{\theta}_{n,\mathbf{y}}) - \boldsymbol{H}_{X,\mathbf{y}}\right]\boldsymbol{E}_{n,\mathbf{y}}^{-1}\right\| \nonumber \\
&\qquad \leq \underbrace{\left\|\boldsymbol{E}_{n,\mathbf{y}}^{-1}\left[\boldsymbol{H}_{n,\mathbf{y}}(\boldsymbol{\theta}_{n,\mathbf{y}}) - \boldsymbol{H}_{n,\mathbf{y}}(\boldsymbol{\theta}_{\mathbf{y},0})\right]\boldsymbol{E}_{n,\mathbf{y}}^{-1}\right\|}_{\text{item 1: change of argument}} + \underbrace{\left\|\boldsymbol{E}_{n,\mathbf{y}}^{-1}\left[\boldsymbol{H}_{n,\mathbf{y}}(\boldsymbol{\theta}_{\mathbf{y},0}) - \boldsymbol{H}_{X,\mathbf{y}}\right]\boldsymbol{E}_{n,\mathbf{y}}^{-1}\right\|}_{\text{Step 3: passage to the conditional matrix}} \xrightarrow{p} 0. \nonumber
\end{align}
The first term is item 1, which is the blockwise stochastic equicontinuity bound of Step 3 of \ref{appendix_derivation} combined with consistency; note that $\boldsymbol{E}_{n,\mathbf{y}}^2$ has blocks $n^4 p_{n,k}$, which is the scale $m_n p_{n,k}$ at which item 1 is stated, up to the factor $\kappa_n^2$. The second term is the law-of-total-variance bound established earlier in Step 3.

\noindent\textit{The remaining brackets.} The three normalized inverse Hessians and the normalized conditional score covariance are bounded in probability at their own scales:
\begin{itemize}
\item $\left\|\boldsymbol{E}_{n,\mathbf{y}}\boldsymbol{H}_{n,\mathbf{y}}(\boldsymbol{\theta}_{n,\mathbf{y}})^{-1}\boldsymbol{E}_{n,\mathbf{y}}\right\| = O_p(1)$, by the eigenvalue bound $\|(m_n p_{n,k})\boldsymbol{H}_{n,k}(\boldsymbol{\theta}_{n,y_k})^{-1}\| = O_p(1)$ established above, applied blockwise. This is the only bracket evaluated at the estimator.
\item $\left\|\boldsymbol{E}_{n,\mathbf{y}}\boldsymbol{H}_{X,\mathbf{y}}^{-1}\boldsymbol{E}_{n,\mathbf{y}}\right\| = O_p(1)$, shown in Step 3.
\item $\left\|\boldsymbol{D}_{n,\mathbf{y}}^{-1}\left(16\,\boldsymbol{\Upsilon}_{X,\mathbf{y}}\right)\boldsymbol{D}_{n,\mathbf{y}}^{-1}\right\| = 16\left\|\overline{\boldsymbol{\Upsilon}}_{X,\mathbf{y}}\right\| = O_p(1)$, by the block orders established in Step 2.
\end{itemize}

\noindent\textit{Collecting the terms.}\\
Each of the three terms in \eqref{sandwich_decomp_norm} replaces one factor at a time and therefore contains exactly one bracket holding a difference of the two matrices being compared: the score covariance difference in the first term, and the Hessian difference in the second and third, the latter appearing in the middle through the identity above. Each such bracket is $o_p(1)$ and every remaining bracket is $O_p(1)$, so submultiplicativity of the operator norm bounds each term by a product of $O_p(1)$ factors and at least one $o_p(1)$ factor. Summing the three terms,
\begin{align} \label{sandwich_bridge}
\left\|\boldsymbol{Q}_{n,\mathbf{y}}^{-1}\left[\boldsymbol{\Omega}_{n,\mathbf{y}}(\boldsymbol{\theta}_{n,\mathbf{y}}) - \boldsymbol{\Omega}_{X,\mathbf{y}}\right]\boldsymbol{Q}_{n,\mathbf{y}}^{-1}\right\| \xrightarrow{p} 0.
\end{align}

\noindent\textit{Standard errors and correlations.}\\
Since $|\boldsymbol{e}_j'\boldsymbol{A}\boldsymbol{e}_j| \leq \|\boldsymbol{A}\|$ for any matrix $\boldsymbol{A}$, and $\boldsymbol{Q}_{n,\mathbf{y}}$ is diagonal, so that $\boldsymbol{Q}_{n,\mathbf{y}}^{-1}\boldsymbol{e}_j = [\boldsymbol{Q}_{n,\mathbf{y}}]_{jj}^{-1}\boldsymbol{e}_j$, \eqref{sandwich_bridge} gives
$$\frac{[\boldsymbol{\Omega}_{n,\mathbf{y}}(\boldsymbol{\theta}_{n,\mathbf{y}})]_{jj} - [\boldsymbol{\Omega}_{X,\mathbf{y}}]_{jj}}{[\boldsymbol{Q}_{n,\mathbf{y}}]_{jj}^2} \xrightarrow{p} 0, \qquad j = 1, \ldots, Kp.$$

Since $[\boldsymbol{\Omega}_{X,\mathbf{y}}]_{jj}/[\boldsymbol{Q}_{n,\mathbf{y}}]_{jj}^2 = [\boldsymbol{\Delta}_{n,\mathbf{y}}]_{jj}^2 \geq c_\Omega$ by Step 3,
$$\frac{[\boldsymbol{\Omega}_{n,\mathbf{y}}(\boldsymbol{\theta}_{n,\mathbf{y}})]_{jj}}{[\boldsymbol{\Omega}_{X,\mathbf{y}}]_{jj}} - 1 = \frac{1}{[\boldsymbol{\Delta}_{n,\mathbf{y}}]_{jj}^2} \cdot \frac{[\boldsymbol{\Omega}_{n,\mathbf{y}}(\boldsymbol{\theta}_{n,\mathbf{y}})]_{jj} - [\boldsymbol{\Omega}_{X,\mathbf{y}}]_{jj}}{[\boldsymbol{Q}_{n,\mathbf{y}}]_{jj}^2} \xrightarrow{p} 0,$$
uniformly in $j$ since $Kp$ is fixed. Taking square roots and reciprocals, both continuous at one, gives $[\boldsymbol{\sigma}_{n,\mathbf{y}}(\boldsymbol{\theta}_{n,\mathbf{y}})^{-1}\boldsymbol{\sigma}_{X,\mathbf{y}}]_{jj} \xrightarrow{p} 1$ for each $j$, and since both matrices are diagonal,
\begin{align} \label{feasible_swap_se}
\left\|\boldsymbol{\sigma}_{n,\mathbf{y}}(\boldsymbol{\theta}_{n,\mathbf{y}})^{-1}\boldsymbol{\sigma}_{X,\mathbf{y}} - \boldsymbol{I}_{Kp}\right\| \xrightarrow{p} 0.
\end{align}
For the correlation matrices, the same one-factor-at-a-time substitution gives
\begin{align} \label{corr_decomp}
&\boldsymbol{P}_{n,\mathbf{y}}(\boldsymbol{\theta}_{n,\mathbf{y}}) - \boldsymbol{P}_{X,\mathbf{y}} \nonumber \\[1ex]
&\quad = \boldsymbol{\sigma}_{n,\mathbf{y}}(\boldsymbol{\theta}_{n,\mathbf{y}})^{-1}\left[\boldsymbol{\Omega}_{n,\mathbf{y}}(\boldsymbol{\theta}_{n,\mathbf{y}}) - \boldsymbol{\Omega}_{X,\mathbf{y}}\right]\boldsymbol{\sigma}_{n,\mathbf{y}}(\boldsymbol{\theta}_{n,\mathbf{y}})^{-1} \nonumber \\[1ex]
&\qquad + \left[\boldsymbol{\sigma}_{n,\mathbf{y}}(\boldsymbol{\theta}_{n,\mathbf{y}})^{-1} - \boldsymbol{\sigma}_{X,\mathbf{y}}^{-1}\right]\boldsymbol{\Omega}_{X,\mathbf{y}}\,\boldsymbol{\sigma}_{n,\mathbf{y}}(\boldsymbol{\theta}_{n,\mathbf{y}})^{-1} \nonumber \\[1ex]
&\qquad + \boldsymbol{\sigma}_{X,\mathbf{y}}^{-1}\boldsymbol{\Omega}_{X,\mathbf{y}}\left[\boldsymbol{\sigma}_{n,\mathbf{y}}(\boldsymbol{\theta}_{n,\mathbf{y}})^{-1} - \boldsymbol{\sigma}_{X,\mathbf{y}}^{-1}\right].
\end{align}
Inserting $\boldsymbol{Q}_{n,\mathbf{y}}\boldsymbol{Q}_{n,\mathbf{y}}^{-1} = \boldsymbol{I}_{Kp}$ between adjacent factors, the three terms of \eqref{corr_decomp} become
\begin{align} \label{corr_decomp_norm}
&\boldsymbol{P}_{n,\mathbf{y}}(\boldsymbol{\theta}_{n,\mathbf{y}}) - \boldsymbol{P}_{X,\mathbf{y}} \nonumber \\[1ex]
&\quad = \left[\boldsymbol{\sigma}_{n,\mathbf{y}}(\boldsymbol{\theta}_{n,\mathbf{y}})^{-1}\boldsymbol{Q}_{n,\mathbf{y}}\right]\left[\boldsymbol{Q}_{n,\mathbf{y}}^{-1}\left(\boldsymbol{\Omega}_{n,\mathbf{y}}(\boldsymbol{\theta}_{n,\mathbf{y}}) - \boldsymbol{\Omega}_{X,\mathbf{y}}\right)\boldsymbol{Q}_{n,\mathbf{y}}^{-1}\right]\left[\boldsymbol{Q}_{n,\mathbf{y}}\boldsymbol{\sigma}_{n,\mathbf{y}}(\boldsymbol{\theta}_{n,\mathbf{y}})^{-1}\right] \nonumber \\[1ex]
&\qquad + \left[\left(\boldsymbol{\sigma}_{n,\mathbf{y}}(\boldsymbol{\theta}_{n,\mathbf{y}})^{-1} - \boldsymbol{\sigma}_{X,\mathbf{y}}^{-1}\right)\boldsymbol{Q}_{n,\mathbf{y}}\right]\left[\boldsymbol{Q}_{n,\mathbf{y}}^{-1}\boldsymbol{\Omega}_{X,\mathbf{y}}\boldsymbol{Q}_{n,\mathbf{y}}^{-1}\right]\left[\boldsymbol{Q}_{n,\mathbf{y}}\boldsymbol{\sigma}_{n,\mathbf{y}}(\boldsymbol{\theta}_{n,\mathbf{y}})^{-1}\right] \nonumber \\[1ex]
&\qquad + \left[\boldsymbol{\sigma}_{X,\mathbf{y}}^{-1}\boldsymbol{Q}_{n,\mathbf{y}}\right]\left[\boldsymbol{Q}_{n,\mathbf{y}}^{-1}\boldsymbol{\Omega}_{X,\mathbf{y}}\boldsymbol{Q}_{n,\mathbf{y}}^{-1}\right]\left[\boldsymbol{Q}_{n,\mathbf{y}}\left(\boldsymbol{\sigma}_{n,\mathbf{y}}(\boldsymbol{\theta}_{n,\mathbf{y}})^{-1} - \boldsymbol{\sigma}_{X,\mathbf{y}}^{-1}\right)\right].
\end{align}
Every bracket in \eqref{corr_decomp_norm} is now controlled by a result already established, and each term contains exactly one bracket holding a difference.

For the standard-error factors, $\boldsymbol{\sigma}_{X,\mathbf{y}} = \boldsymbol{Q}_{n,\mathbf{y}}\boldsymbol{\Delta}_{n,\mathbf{y}}$ gives $\boldsymbol{\sigma}_{X,\mathbf{y}}^{-1}\boldsymbol{Q}_{n,\mathbf{y}} = \boldsymbol{\Delta}_{n,\mathbf{y}}^{-1}$, whose entries lie in $[C_\Omega^{-1/2}, c_\Omega^{-1/2}]$ by Step 3, so that $\|\boldsymbol{\Delta}_{n,\mathbf{y}}^{-1}\| \leq c_\Omega^{-1/2}$. Writing the other two standard-error brackets in terms of it,
\begin{align}
&\boldsymbol{\sigma}_{n,\mathbf{y}}(\boldsymbol{\theta}_{n,\mathbf{y}})^{-1}\boldsymbol{Q}_{n,\mathbf{y}} = \left[\boldsymbol{\sigma}_{n,\mathbf{y}}(\boldsymbol{\theta}_{n,\mathbf{y}})^{-1}\boldsymbol{\sigma}_{X,\mathbf{y}}\right]\boldsymbol{\Delta}_{n,\mathbf{y}}^{-1}, \nonumber \\ &\left[\boldsymbol{\sigma}_{n,\mathbf{y}}(\boldsymbol{\theta}_{n,\mathbf{y}})^{-1} - \boldsymbol{\sigma}_{X,\mathbf{y}}^{-1}\right]\boldsymbol{Q}_{n,\mathbf{y}} = \left[\boldsymbol{\sigma}_{n,\mathbf{y}}(\boldsymbol{\theta}_{n,\mathbf{y}})^{-1}\boldsymbol{\sigma}_{X,\mathbf{y}} - \boldsymbol{I}_{Kp}\right]\boldsymbol{\Delta}_{n,\mathbf{y}}^{-1},\nonumber
\end{align}
both of which follow from $\boldsymbol{\sigma}_{X,\mathbf{y}}^{-1}\boldsymbol{Q}_{n,\mathbf{y}} = \boldsymbol{\Delta}_{n,\mathbf{y}}^{-1}$ and are valid in this order because all four matrices are diagonal. By \eqref{feasible_swap_se}, $\|\boldsymbol{\sigma}_{n,\mathbf{y}}(\boldsymbol{\theta}_{n,\mathbf{y}})^{-1}\boldsymbol{\sigma}_{X,\mathbf{y}}\| \leq 1 + o_p(1)$ and the bracket in the second equation is $o_p(1)$, so
$$\left\|\boldsymbol{\sigma}_{n,\mathbf{y}}(\boldsymbol{\theta}_{n,\mathbf{y}})^{-1}\boldsymbol{Q}_{n,\mathbf{y}}\right\| = O_p(1), \qquad \left\|\left[\boldsymbol{\sigma}_{n,\mathbf{y}}(\boldsymbol{\theta}_{n,\mathbf{y}})^{-1} - \boldsymbol{\sigma}_{X,\mathbf{y}}^{-1}\right]\boldsymbol{Q}_{n,\mathbf{y}}\right\| \xrightarrow{p} 0.$$
For the sandwich factors, $\|\boldsymbol{Q}_{n,\mathbf{y}}^{-1}\boldsymbol{\Omega}_{X,\mathbf{y}}\boldsymbol{Q}_{n,\mathbf{y}}^{-1}\| \leq C_\Omega$ by Step 3, and the difference bracket is $o_p(1)$ by \eqref{sandwich_bridge}. Each term in \eqref{corr_decomp_norm} is therefore a product of $O_p(1)$ factors with one $o_p(1)$ factor, so
\begin{align} \label{feasible_swap}
\left\|\boldsymbol{P}_{n,\mathbf{y}}(\boldsymbol{\theta}_{n,\mathbf{y}}) - \boldsymbol{P}_{X,\mathbf{y}}\right\| \xrightarrow{p} 0.
\end{align}

Together, \eqref{feasible_swap_se} and \eqref{feasible_swap} are what permit the conditional quantities of Step 3, which are not computable, to be replaced by their feasible counterparts: the first says that the feasible standard errors agree with the conditional ones in ratio, threshold by threshold, and the second that the feasible correlation matrix, which determines the simultaneous critical value, agrees with the conditional one in norm. Neither requires either matrix to converge; only their difference is controlled.\\
\\
\noindent\textbf{The three parts of Theorem \ref{theorem3} now follow.}\\
\noindent\textit{\textbf{Part (i).}} Write $\boldsymbol{T}_{n,\mathbf{y}}^{X} := \boldsymbol{\sigma}_{X,\mathbf{y}}^{-1}(\boldsymbol{\theta}_{n,\mathbf{y}} - \boldsymbol{\theta}_{\mathbf{y},0})$ for the studentized estimator built from the conditional standard errors, so that $\boldsymbol{T}_{n,\mathbf{y}} = \left[\boldsymbol{\sigma}_{n,\mathbf{y}}(\boldsymbol{\theta}_{n,\mathbf{y}})^{-1}\boldsymbol{\sigma}_{X,\mathbf{y}}\right]\boldsymbol{T}_{n,\mathbf{y}}^{X}$. Their difference is
$$\boldsymbol{T}_{n,\mathbf{y}} - \boldsymbol{T}_{n,\mathbf{y}}^{X} = \left[\boldsymbol{\sigma}_{n,\mathbf{y}}(\boldsymbol{\theta}_{n,\mathbf{y}})^{-1}\boldsymbol{\sigma}_{X,\mathbf{y}} - \boldsymbol{I}_{Kp}\right]\boldsymbol{T}_{n,\mathbf{y}}^{X} = o_p(1),$$
by \eqref{feasible_swap_se} together with the tightness of $\boldsymbol{T}_{n,\mathbf{y}}^{X}$, which follows from \eqref{joint_gauss} and the bound $\|\boldsymbol{P}_{X,\mathbf{y}}\| \leq Kp$.

The triangle inequality for the bounded--Lipschitz distance then gives
\begin{align}
&d_{\mathrm{BL}}\left(\mathcal{L}\left(\boldsymbol{T}_{n,\mathbf{y}} \mid \mathcal{F}_{n,\mathbf{y}}\right), N\left(\boldsymbol{0}, \boldsymbol{P}_{n,\mathbf{y}}(\boldsymbol{\theta}_{n,\mathbf{y}})\right)\right) \nonumber \\[1ex]
&\quad \leq \underbrace{d_{\mathrm{BL}}\left(\mathcal{L}\left(\boldsymbol{T}_{n,\mathbf{y}} \mid \mathcal{F}_{n,\mathbf{y}}\right), \mathcal{L}\left(\boldsymbol{T}_{n,\mathbf{y}}^{X} \mid \mathcal{F}_{n,\mathbf{y}}\right)\right)}_{\text{(a)}} + \underbrace{d_{\mathrm{BL}}\left(\mathcal{L}\left(\boldsymbol{T}_{n,\mathbf{y}}^{X} \mid \mathcal{F}_{n,\mathbf{y}}\right), N\left(\boldsymbol{0}, \boldsymbol{P}_{X,\mathbf{y}}\right)\right)}_{\text{(b)}} \nonumber \\[1ex]
&\qquad + \underbrace{d_{\mathrm{BL}}\left(N\left(\boldsymbol{0}, \boldsymbol{P}_{X,\mathbf{y}}\right), N\left(\boldsymbol{0}, \boldsymbol{P}_{n,\mathbf{y}}(\boldsymbol{\theta}_{n,\mathbf{y}})\right)\right)}_{\text{(c)}}. \nonumber
\end{align}
Term (a) vanishes by the displayed $o_p(1)$ bound: for any $f$ with $\|f\|_\infty \leq 1$ and $\operatorname{Lip}(f) \leq 1$, the difference $|f(\boldsymbol{T}_{n,\mathbf{y}}) - f(\boldsymbol{T}_{n,\mathbf{y}}^{X})|$ is bounded both by $\|\boldsymbol{T}_{n,\mathbf{y}} - \boldsymbol{T}_{n,\mathbf{y}}^{X}\|$ and by $2$, hence by the truncated remainder $\min(\|\boldsymbol{T}_{n,\mathbf{y}} - \boldsymbol{T}_{n,\mathbf{y}}^{X}\|, 2)$, which does not depend on $f$. The argument given at the end of Step 3 then applies in the same manner in this case, giving $d_{\mathrm{BL}} \xrightarrow{p} 0$. Term (b) is \eqref{joint_gauss}. Term (c) vanishes because the bounded--Lipschitz distance between mean-zero Gaussian laws is Lipschitz in the correlation matrix, with a constant depending only on the smallest eigenvalue through the square-root perturbation bound used in Step 2, uniformly over correlation matrices bounded away from singularity; the floor established in Step 3 applies to $\boldsymbol{P}_{X,\mathbf{y}}$ and 
transfers to $\boldsymbol{P}_{n,\mathbf{y}}(\boldsymbol{\theta}_{n,\mathbf{y}})$ by Weyl's inequality and \eqref{feasible_swap}, which also supplies the vanishing difference. Hence
$$d_{\mathrm{BL}}\left(\mathcal{L}\left(\boldsymbol{T}_{n,\mathbf{y}} \mid \mathcal{F}_{n,\mathbf{y}}\right), \, N\left(\boldsymbol{0}, \boldsymbol{P}_{n,\mathbf{y}}(\boldsymbol{\theta}_{n,\mathbf{y}})\right)\right) \xrightarrow{p} 0,$$
which is Theorem \ref{theorem3}(i).\\

\noindent\textit{\textbf{Part (ii).}} Fix a non-zero $\boldsymbol{a} \in \mathbb{R}^{Kp}$ and set $\boldsymbol{b} := \boldsymbol{\sigma}_{X,\mathbf{y}}\boldsymbol{a}$, which is non-zero with probability approaching one since $\boldsymbol{\sigma}_{X,\mathbf{y}}$ is invertible by Step 3, so that
$$\frac{\boldsymbol{a}'(\boldsymbol{\theta}_{n,\mathbf{y}} - \boldsymbol{\theta}_{\mathbf{y},0})}{\sqrt{\boldsymbol{a}'\boldsymbol{\Omega}_{X,\mathbf{y}}\boldsymbol{a}}} = \frac{\boldsymbol{b}'\boldsymbol{T}_{n,\mathbf{y}}^{X}}{\sqrt{\boldsymbol{b}'\boldsymbol{P}_{X,\mathbf{y}}\boldsymbol{b}}}, \qquad \boldsymbol{T}_{n,\mathbf{y}}^{X} = \boldsymbol{\sigma}_{X,\mathbf{y}}^{-1}(\boldsymbol{\theta}_{n,\mathbf{y}} - \boldsymbol{\theta}_{\mathbf{y},0}).$$
Since $\boldsymbol{b}$ is $\mathcal{F}_{n,\mathbf{y}}$-measurable, the right-hand side is a fixed linear function of $\boldsymbol{T}_{n,\mathbf{y}}^{X}$ given the conditioning, divided by the standard deviation that this function has under the Gaussian approximation of \eqref{joint_gauss}. Writing $\ell(\boldsymbol{t}) := \boldsymbol{b}'\boldsymbol{t}/\sqrt{\boldsymbol{b}'\boldsymbol{P}_{X,\mathbf{y}}\boldsymbol{b}}$ for that map, $\ell$ is linear, so its Lipschitz constant is $\|\boldsymbol{b}\|/\sqrt{\boldsymbol{b}'\boldsymbol{P}_{X,\mathbf{y}}\boldsymbol{b}}$ by the Cauchy--Schwarz inequality. Since $\boldsymbol{b}'\boldsymbol{P}_{X,\mathbf{y}}\boldsymbol{b} \geq \lambda_{\min}(\boldsymbol{P}_{X,\mathbf{y}})\|\boldsymbol{b}\|^2$, the norm $\|\boldsymbol{b}\|$ cancels and
$$\operatorname{Lip}(\ell) \leq \lambda_{\min}(\boldsymbol{P}_{X,\mathbf{y}})^{-1/2} \leq (C_\Omega/c_\Omega)^{1/2}$$
by the eigenvalue floor of Step 3, a bound not depending on $\boldsymbol{b}$. Consequently, for any $f: \mathbb{R} \to \mathbb{R}$ with $\|f\|_\infty \leq 1$ and $\operatorname{Lip}(f) \leq 1$, the function $f \circ \ell$ on $\mathbb{R}^{Kp}$ satisfies $\|f \circ \ell\|_\infty \leq 1$ and $\operatorname{Lip}(f \circ \ell) \leq (C_\Omega/c_\Omega)^{1/2}$, so the approximation of \eqref{joint_gauss} transfers from the vector to the scalar, as in the treatment of $\boldsymbol{L}_{X,\mathbf{y}}$ in Step 3. Since a fixed linear function of an $N(\boldsymbol{0}, \boldsymbol{P}_{X,\mathbf{y}})$ vector divided by its own standard deviation is standard normal, and this limit is fixed,
$$\frac{\boldsymbol{a}'(\boldsymbol{\theta}_{n,\mathbf{y}} - \boldsymbol{\theta}_{\mathbf{y},0})}{\sqrt{\boldsymbol{a}'\boldsymbol{\Omega}_{X,\mathbf{y}}\boldsymbol{a}}} \xrightarrow{d} N(0,1)$$
conditionally on $\mathcal{F}_{n,\mathbf{y}}$. Self-standardization removes the dependence on $\boldsymbol{P}_{X,\mathbf{y}}$: whatever that matrix is, the standardized scalar is standard normal, so the target no longer varies with $n$ and no longer depends on the conditioning variables. The convergence therefore also holds unconditionally, and the remaining steps can be carried out without conditioning.

It remains to replace $\boldsymbol{\Omega}_{X,\mathbf{y}}$ by its feasible counterpart. Applying \eqref{sandwich_bridge} to the unit vector $\boldsymbol{Q}_{n,\mathbf{y}}\boldsymbol{a}/\|\boldsymbol{Q}_{n,\mathbf{y}}\boldsymbol{a}\|$ gives
$$\frac{\boldsymbol{a}'\left[\boldsymbol{\Omega}_{n,\mathbf{y}}(\boldsymbol{\theta}_{n,\mathbf{y}}) - \boldsymbol{\Omega}_{X,\mathbf{y}}\right]\boldsymbol{a}}{\|\boldsymbol{Q}_{n,\mathbf{y}}\boldsymbol{a}\|^2} \xrightarrow{p} 0,$$
while the lower bound $\boldsymbol{a}'\boldsymbol{\Omega}_{X,\mathbf{y}}\boldsymbol{a} \geq c_\Omega\|\boldsymbol{Q}_{n,\mathbf{y}}\boldsymbol{a}\|^2$ follows from the lower bound $c_\Omega \boldsymbol{I}_{Kp} \preceq \boldsymbol{Q}_{n,\mathbf{y}}^{-1}\boldsymbol{\Omega}_{X,\mathbf{y}}\boldsymbol{Q}_{n,\mathbf{y}}^{-1}$ of Step 3, evaluated at $\boldsymbol{Q}_{n,\mathbf{y}}\boldsymbol{a}$. Dividing the equation above by $\boldsymbol{a}'\boldsymbol{\Omega}_{X,\mathbf{y}}\boldsymbol{a}/\|\boldsymbol{Q}_{n,\mathbf{y}}\boldsymbol{a}\|^2 \geq c_\Omega$,
$$\frac{\boldsymbol{a}'\boldsymbol{\Omega}_{n,\mathbf{y}}(\boldsymbol{\theta}_{n,\mathbf{y}})\boldsymbol{a}}{\boldsymbol{a}'\boldsymbol{\Omega}_{X,\mathbf{y}}\boldsymbol{a}} - 1 \xrightarrow{p} 0,$$
and Slutsky's theorem gives Theorem \ref{theorem3}(ii). The bound holds for every $\boldsymbol{a}$, including those loading on several thresholds, since numerator and denominator are dominated by the same blocks. Slutsky's theorem applies here, unlike in the passage from the score to the estimator in Step 3, without requiring the normalized covariance to converge: the multiplying object is a ratio of the same quadratic form evaluated at $\boldsymbol{\theta}_{n,\mathbf{y}}$ and at $\boldsymbol{\theta}_{\mathbf{y},0}$, so that whatever drift the covariance exhibits affects numerator and denominator alike and cancels. The matrix $\boldsymbol{L}_{X,\mathbf{y}}$ in Step 3 is not a comparison of this kind but a product of distinct matrices, so no such cancellation occurs.\\

\noindent\textit{\textbf{Part (iii).}} Let $\boldsymbol{R}$ be a fixed $q \times Kp$ matrix of full row rank. Pre- and post-multiplying the lower bound $c_\Omega \boldsymbol{I}_{Kp} \preceq \boldsymbol{Q}_{n,\mathbf{y}}^{-1}\boldsymbol{\Omega}_{X,\mathbf{y}}\boldsymbol{Q}_{n,\mathbf{y}}^{-1}$ of Step 3 by $\boldsymbol{R}\boldsymbol{Q}_{n,\mathbf{y}}$ and its transpose gives
$$\boldsymbol{R}\boldsymbol{\Omega}_{X,\mathbf{y}}\boldsymbol{R}' \succeq c_\Omega (\boldsymbol{R}\boldsymbol{Q}_{n,\mathbf{y}})(\boldsymbol{R}\boldsymbol{Q}_{n,\mathbf{y}})',$$
which is positive definite since $\boldsymbol{R}$ has full row rank and $\boldsymbol{Q}_{n,\mathbf{y}}$ is nonsingular, so that $(\boldsymbol{R}\boldsymbol{\Omega}_{X,\mathbf{y}}\boldsymbol{R}')^{-1/2}$ is well defined with probability approaching one. Setting $\boldsymbol{M}_{n} := (\boldsymbol{R}\boldsymbol{\Omega}_{X,\mathbf{y}}\boldsymbol{R}')^{-1/2}\boldsymbol{R}\boldsymbol{Q}_{n,\mathbf{y}}$ and rearranging the equation above,
$$\boldsymbol{R}\boldsymbol{Q}_{n,\mathbf{y}}\boldsymbol{Q}_{n,\mathbf{y}}\boldsymbol{R}' \preceq c_\Omega^{-1}\,\boldsymbol{R}\boldsymbol{\Omega}_{X,\mathbf{y}}\boldsymbol{R}'.$$
Pre- and post-multiplying by $(\boldsymbol{R}\boldsymbol{\Omega}_{X,\mathbf{y}}\boldsymbol{R}')^{-1/2}$ preserves the inequality, since the quadratic form of each side at any vector becomes the quadratic form at the transformed vector, and gives $\boldsymbol{M}_{n}\boldsymbol{M}_{n}' \preceq c_\Omega^{-1}\boldsymbol{I}_q$, so that $\|\boldsymbol{M}_{n}\|^2 = \lambda_{\max}(\boldsymbol{M}_{n}\boldsymbol{M}_{n}') \leq c_\Omega^{-1}$.

By the argument of part (ii), applied to the $q$-dimensional vector $\boldsymbol{R}(\boldsymbol{\theta}_{n,\mathbf{y}} - \boldsymbol{\theta}_{\mathbf{y},0})$ rather than to a scalar, with the map $\boldsymbol{t} \mapsto (\boldsymbol{R}\boldsymbol{\Omega}_{X,\mathbf{y}}\boldsymbol{R}')^{-1/2}\boldsymbol{R}\boldsymbol{\sigma}_{X,\mathbf{y}}\boldsymbol{t}$ in place of $\ell$, whose norm is bounded by $\|\boldsymbol{M}_{n}\|\,\|\boldsymbol{Q}_{n,\mathbf{y}}^{-1}\boldsymbol{\sigma}_{X,\mathbf{y}}\| = \|\boldsymbol{M}_{n}\|\,\|\boldsymbol{\Delta}_{n,\mathbf{y}}\| \leq (C_\Omega/c_\Omega)^{1/2}$ by Step 3,
\begin{align} \label{wald_gauss}
\boldsymbol{\xi}_{n} := \left(\boldsymbol{R}\boldsymbol{\Omega}_{X,\mathbf{y}}\boldsymbol{R}'\right)^{-1/2}\boldsymbol{R}(\boldsymbol{\theta}_{n,\mathbf{y}} - \boldsymbol{\theta}_{\mathbf{y},0}) \xrightarrow{d} N(\boldsymbol{0}, \boldsymbol{I}_q).
\end{align}

The Wald statistic normalizes by $\boldsymbol{R}\boldsymbol{\Omega}_{n,\mathbf{y}}(\boldsymbol{\theta}_{n,\mathbf{y}})\boldsymbol{R}'$ rather than by $\boldsymbol{R}\boldsymbol{\Omega}_{X,\mathbf{y}}\boldsymbol{R}'$. Set
$$\boldsymbol{A}_n := \left(\boldsymbol{R}\boldsymbol{\Omega}_{X,\mathbf{y}}\boldsymbol{R}'\right)^{-1/2}\left(\boldsymbol{R}\boldsymbol{\Omega}_{n,\mathbf{y}}(\boldsymbol{\theta}_{n,\mathbf{y}})\boldsymbol{R}'\right)\left(\boldsymbol{R}\boldsymbol{\Omega}_{X,\mathbf{y}}\boldsymbol{R}'\right)^{-1/2},$$
which measures the discrepancy between the two normalizations. Since $$\boldsymbol{I}_q = (\boldsymbol{R}\boldsymbol{\Omega}_{X,\mathbf{y}}\boldsymbol{R}')^{-1/2}(\boldsymbol{R}\boldsymbol{\Omega}_{X,\mathbf{y}}\boldsymbol{R}')(\boldsymbol{R}\boldsymbol{\Omega}_{X,\mathbf{y}}\boldsymbol{R}')^{-1/2},$$ subtracting this from the definition of $\boldsymbol{A}_n$ and collecting the two middle factors gives
$$\boldsymbol{A}_n - \boldsymbol{I}_q = \left(\boldsymbol{R}\boldsymbol{\Omega}_{X,\mathbf{y}}\boldsymbol{R}'\right)^{-1/2}\boldsymbol{R}\left[\boldsymbol{\Omega}_{n,\mathbf{y}}(\boldsymbol{\theta}_{n,\mathbf{y}}) - \boldsymbol{\Omega}_{X,\mathbf{y}}\right]\boldsymbol{R}'\left(\boldsymbol{R}\boldsymbol{\Omega}_{X,\mathbf{y}}\boldsymbol{R}'\right)^{-1/2}.$$
Writing $\boldsymbol{R} = \boldsymbol{R}\boldsymbol{Q}_{n,\mathbf{y}}\boldsymbol{Q}_{n,\mathbf{y}}^{-1}$ on the left and $\boldsymbol{R}' = \boldsymbol{Q}_{n,\mathbf{y}}^{-1}\boldsymbol{Q}_{n,\mathbf{y}}\boldsymbol{R}'$ on the right, the outer factors combine into $\boldsymbol{M}_{n}$ and $\boldsymbol{M}_{n}'$, so that
\begin{align} \label{wald_swap}
\left\|\boldsymbol{A}_n - \boldsymbol{I}_q\right\| \leq \|\boldsymbol{M}_{n}\|^2 \left\|\boldsymbol{Q}_{n,\mathbf{y}}^{-1}\left[\boldsymbol{\Omega}_{n,\mathbf{y}}(\boldsymbol{\theta}_{n,\mathbf{y}}) - \boldsymbol{\Omega}_{X,\mathbf{y}}\right]\boldsymbol{Q}_{n,\mathbf{y}}^{-1}\right\| \xrightarrow{p} 0
\end{align}
by \eqref{sandwich_bridge}. By continuity of matrix inversion at $\boldsymbol{I}_q$, $\|\boldsymbol{A}_n^{-1} - \boldsymbol{I}_q\| \xrightarrow{p} 0$ as well.

Finally, the Wald statistic is
$$\mathcal{W}_n := \boldsymbol{R}(\boldsymbol{\theta}_{n,\mathbf{y}} - \boldsymbol{\theta}_{\mathbf{y},0})'\left(\boldsymbol{R}\boldsymbol{\Omega}_{n,\mathbf{y}}(\boldsymbol{\theta}_{n,\mathbf{y}})\boldsymbol{R}'\right)^{-1}\boldsymbol{R}(\boldsymbol{\theta}_{n,\mathbf{y}} - \boldsymbol{\theta}_{\mathbf{y},0}),$$
which involves only feasible quantities. Substituting $\boldsymbol{R}(\boldsymbol{\theta}_{n,\mathbf{y}} - \boldsymbol{\theta}_{\mathbf{y},0}) = (\boldsymbol{R}\boldsymbol{\Omega}_{X,\mathbf{y}}\boldsymbol{R}')^{1/2}\boldsymbol{\xi}_{n}$ from \eqref{wald_gauss} and using $\boldsymbol{A}_n^{-1} = (\boldsymbol{R}\boldsymbol{\Omega}_{X,\mathbf{y}}\boldsymbol{R}')^{1/2}(\boldsymbol{R}\boldsymbol{\Omega}_{n,\mathbf{y}}(\boldsymbol{\theta}_{n,\mathbf{y}})\boldsymbol{R}')^{-1}(\boldsymbol{R}\boldsymbol{\Omega}_{X,\mathbf{y}}\boldsymbol{R}')^{1/2}$ gives $\mathcal{W}_n = \boldsymbol{\xi}_{n}'\boldsymbol{A}_n^{-1}\boldsymbol{\xi}_{n}$. Since $\boldsymbol{\xi}_{n} \xrightarrow{d} N(\boldsymbol{0}, \boldsymbol{I}_q)$ and $\boldsymbol{A}_n^{-1} \xrightarrow{p} \boldsymbol{I}_q$, Slutsky's theorem gives $\mathcal{W}_n = \|\boldsymbol{\xi}_{n}\|^2 + o_p(1)$, and the continuous mapping theorem applied to the squared norm gives $\mathcal{W}_n \xrightarrow{d} \chi^2_q$. This is Theorem \ref{theorem3}(iii).

The proof of Theorem \ref{theorem3} is thus complete.
\section{Simulations}\label{simulation_appendix}

\subsection{Monte Carlo Simulations for a Single Threshold}\label{sim_single_threshold}

Building on the data generating processes (DGP) of \citet{jochmans2018semiparametric}, I consider additional specifications to better understand the implied growth rates of fixed effects that satisfy Assumption \ref{assumption4} and their connection to network sparsity. The outcome for a network with $n$ nodes is generally generated as
$$ y_{ij} = {1} \{x_{ij} \theta_0 + \alpha_i + \gamma_j - \epsilon_{ij} \geq 0\}, $$
where $\varepsilon_{ij} \sim \text{i.i.d.}\ 
\text{Logistic}(0,1)$ across the $n(n-1)$ dyads. Following Section \ref{asymptotics}, the fixed effects take the general form
$ \alpha_i = a_{n,i} g(n)$, and $\gamma_j = b_{n,j} g(n)$, where $a_{n,i}, b_{n,j} \in [0,1]$ captures cross-node 
heterogeneity and $g(n)$ controls the growth rate of the fixed effects with $n$. The distribution
of $(a_{n,i}, b_{n,j})$ and the rate $g(n)$ together determine the degree of network sparsity. The DGPs below differ in the specification of $(a_{n,i}, b_{n,j})$.

I set $\theta_0 = -1$ throughout, to ensure that covariates and fixed effects operate in the same direction, with the single regressor 
generated as $x_{ij} = -|u_i - u_j|$ where 
$u_i = \nu_i - 1/2$ for $\nu_i \sim Beta(2,2)$. Positive 
values of $g(n)$ push the fixed effects toward $+\infty$, 
generating networks where most pairs are linked and 
sparsity arises through few non-links. As $g(n)$ 
grows, both $p_n$ and $1 - q_n$ shrink to zero, reducing 
the share of informative quadruples available for 
estimation. This is the 
empirically relevant case for the trade application in 
Section~\ref{application}: bilateral trade is bounded below at zero with a mass point at zero, so the estimable thresholds lie above the bound where $\Pr(y_{ij}\le y)$ rises toward one, and sparsity arises only in the right tail. \citet{jochmans2018semiparametric} studies the 
left-tail case (few links); the right-tail analysis is 
new to this paper.

Sample sizes are $n \in \{25, 50, 100, 150\}$, with $1000$ Monte Carlo 
replications. The values of $g(n)$ 
range from $0$ (the dense case) through $\log\log n$, 
$\sqrt{\log n}$, $\log n$, $2\log n$, $(\log n)^{3/2}$, 
and $(\log n)^2$, in order of increasing sparsity. Tables \ref{tab:net_jochmans_right_full}, \ref{tab:est_jochmans_right_uniform_full} and \ref{tab:est_jochmans_right_beta_full} report the complete set of results.

\paragraph{DGP 1: Uniform}

Following \citet{jochmans2018semiparametric}, the fixed effects are a deterministic function of the sample size:
$$ \alpha_i = \frac{n-i}{n-1} g(n), \quad \gamma_i = \alpha_i,$$
yielding a sequence $a_{n,i}$ uniformly spaced between $[0,1]$. For large $g(n)$, the condition in Assumption \ref{assumption4} translates into the 
requirement that $g(n) = o(n^{1/4})$. Since all the growth rates $g(n)$ 
considered are logarithmic functions of $n$, they all 
satisfy Assumption~\ref{assumption4}, as any 
power of $\log n$ grows slower than $n^{1/4}$. However, the 
speed at which $np_n$ grows with $n$ varies substantially 
across specifications: for slower-growing $g(n)$ such as 
$\log\log n$ or $\sqrt{\log n}$, the growth of $np_n$ is 
apparent at moderate sample sizes, while for faster-growing 
$g(n)$ such as $(\log n)^2$, it may require sample sizes 
far beyond those considered here. 

\paragraph{DGP 2: Beta heterogeneity}

The main change in this specification is regarding the sequence of fixed effects. I consider a deterministic sequence of fixed effects, that is normalized to be in the range $[0,1]$, using a Beta quantile function such that
$$ a_{n,i} = F^{-1}_{Beta} \left( \frac{n - i}{n - 1}\right), $$
where $F^{-1}_{Beta}$ is the inverse CDF of $Beta(\alpha, \beta)$. 

For $\alpha = \beta <1$, this specification produces a U-shaped distribution of $a_{n,i}$, concentrating mass near zero and one. Compared to the uniform distribution above, a higher fraction of nodes has $a_{n,i} \approx 0$ and thus maintains moderate link probabilities even for large $|g(n)|$, generating more informative 
quadruples. At the same time, also a higher fraction of nodes has $a_{n,i} \approx 1$ relative to the uniform case. This greater effective heterogeneity 
allows for higher magnitudes of $|g(n)|$ than under uniform spacing, for a given general level of sparsity. 
 
More specifically, for large $g(n)$ and $\alpha = \beta = 0.5$, the condition in Assumption \ref{assumption4} translates into 
$g(n) = o(n^{1/2})$, which is less restrictive than in the uniform case. As in the uniform case, all 
logarithmic growth rates satisfy 
Assumption~\ref{assumption4} asymptotically, as any power 
of $\log n$ grows slower than $n^{1/2}$. However, the 
faster growth of $np_n$ under Beta heterogeneity means that 
it becomes empirically visible at smaller sample sizes and 
to $g(n)$ specifications that grow more rapidly. The sparsity 
level accommodated by the CMLE, as measured by $p_n$, is 
ultimately the same across both specifications; what differs 
is the magnitude of $g(n)$ required to reach a given level 
of sparsity.

\paragraph{Results}

\begin{table}[htbp]
\centering\footnotesize
\setlength{\tabcolsep}{4pt}
\caption{Network Statistics --- Jochmans DGP, Right Tail ($\theta_0 = -1$)}
\label{tab:net_jochmans_right_full}
\makebox[\linewidth][c]{%
\begin{tabular}{cc cccc cc c cccc cc}
\toprule
 & & \multicolumn{6}{c}{Uniform Heterogeneity} & & \multicolumn{6}{c}{Beta Heterogeneity} \\
\cmidrule(lr){3-8} \cmidrule(lr){10-15}
 & & \multicolumn{4}{c}{Network Characteristics} & \multicolumn{2}{c}{$np_n$} & & \multicolumn{4}{c}{Network Characteristics} & \multicolumn{2}{c}{$np_n$} \\
\cmidrule(lr){3-6} \cmidrule(lr){7-8} \cmidrule(lr){10-13} \cmidrule(lr){14-15}
$n$ & $g(n)$ & \% Quad & $p_n$ & \% Links & $1-q_n$ & Theor. & Empir. & & \% Quad & $p_n$ & \% Links & $1-q_n$ & Theor. & Empir. \\
\midrule
25 & $0$ & 12.02 & 1.0000 & 56.40 & 1.0000 & 25.00 & 3.01 & & 12.02 & 1.0000 & 56.40 & 1.0000 & 25.00 & 3.01 \\
50 &  & 12.04 & 1.0000 & 56.36 & 1.0000 & 50.00 & 6.02 & & 12.04 & 1.0000 & 56.36 & 1.0000 & 50.00 & 6.02 \\
100 &  & 12.04 & 1.0000 & 56.35 & 1.0000 & 100.00 & 12.04 & & 12.04 & 1.0000 & 56.35 & 1.0000 & 100.00 & 12.04 \\
150 &  & 12.04 & 1.0000 & 56.35 & 1.0000 & 150.00 & 18.06 & & 12.04 & 1.0000 & 56.35 & 1.0000 & 150.00 & 18.06 \\
\addlinespace[2pt]
25 & $\log\log n$ & 4.90 & 0.1232 & 79.45 & 0.3509 & 3.08 & 1.23 & & 4.90 & 0.1366 & 78.97 & 0.3696 & 3.42 & 1.22 \\
50 &  & 3.95 & 0.0898 & 82.01 & 0.2996 & 4.49 & 1.97 & & 3.98 & 0.1036 & 81.39 & 0.3219 & 5.18 & 1.99 \\
100 &  & 3.29 & 0.0696 & 83.84 & 0.2638 & 6.96 & 3.29 & & 3.36 & 0.0833 & 83.08 & 0.2885 & 8.33 & 3.36 \\
150 &  & 2.99 & 0.0612 & 84.73 & 0.2474 & 9.18 & 4.48 & & 3.07 & 0.0747 & 83.91 & 0.2733 & 11.20 & 4.60 \\
\addlinespace[2pt]
25 & $\sqrt{\log n}$ & 2.37 & 0.0486 & 86.44 & 0.2206 & 1.22 & 0.59 & & 2.46 & 0.0613 & 85.57 & 0.2476 & 1.53 & 0.61 \\
50 &  & 1.93 & 0.0369 & 87.95 & 0.1922 & 1.85 & 0.96 & & 2.06 & 0.0491 & 86.92 & 0.2215 & 2.45 & 1.03 \\
100 &  & 1.60 & 0.0291 & 89.15 & 0.1705 & 2.91 & 1.60 & & 1.76 & 0.0406 & 87.99 & 0.2015 & 4.06 & 1.76 \\
150 &  & 1.44 & 0.0256 & 89.76 & 0.1601 & 3.85 & 2.16 & & 1.60 & 0.0368 & 88.56 & 0.1917 & 5.51 & 2.41 \\
\addlinespace[2pt]
25 & $\log n$ & 0.46 & 0.0089 & 94.10 & 0.0946 & 0.22 & 0.12 & & 0.61 & 0.0166 & 92.66 & 0.1287 & 0.41 & 0.15 \\
50 &  & 0.24 & 0.0043 & 95.80 & 0.0654 & 0.21 & 0.12 & & 0.39 & 0.0101 & 94.19 & 0.1003 & 0.50 & 0.19 \\
100 &  & 0.14 & 0.0023 & 96.89 & 0.0475 & 0.23 & 0.14 & & 0.27 & 0.0067 & 95.23 & 0.0816 & 0.67 & 0.27 \\
150 &  & 0.10 & 0.0016 & 97.36 & 0.0401 & 0.24 & 0.15 & & 0.22 & 0.0054 & 95.70 & 0.0736 & 0.81 & 0.33 \\
\addlinespace[2pt]
25 & $2\log n$ & 0.03 & 0.0008 & 98.31 & 0.0288 & 0.02 & 0.01 & & 0.09 & 0.0036 & 96.82 & 0.0596 & 0.09 & 0.02 \\
50 &  & 0.02 & 0.0003 & 98.87 & 0.0184 & 0.02 & 0.01 & & 0.07 & 0.0022 & 97.45 & 0.0464 & 0.11 & 0.03 \\
100 &  & 0.01 & 0.0002 & 99.19 & 0.0127 & 0.02 & 0.01 & & 0.05 & 0.0014 & 97.85 & 0.0380 & 0.14 & 0.05 \\
150 &  & 0.01 & 0.0001 & 99.32 & 0.0105 & 0.02 & 0.01 & & 0.04 & 0.0012 & 98.04 & 0.0344 & 0.18 & 0.06 \\
\addlinespace[2pt]
25 & $(\log n)^{3/2}$ & 0.05 & 0.0012 & 97.91 & 0.0348 & 0.03 & 0.01 & & 0.12 & 0.0045 & 96.40 & 0.0668 & 0.11 & 0.03 \\
50 &  & 0.02 & 0.0004 & 98.84 & 0.0187 & 0.02 & 0.01 & & 0.07 & 0.0022 & 97.42 & 0.0469 & 0.11 & 0.03 \\
100 &  & 0.01 & 0.0001 & 99.29 & 0.0111 & 0.01 & 0.01 & & 0.04 & 0.0012 & 98.01 & 0.0353 & 0.12 & 0.04 \\
150 &  & 0.00 & 0.0001 & 99.46 & 0.0085 & 0.01 & 0.01 & & 0.03 & 0.0009 & 98.26 & 0.0306 & 0.14 & 0.05 \\
\addlinespace[2pt]
25 & $(\log n)^2$ & 0.00 & 0.0002 & 99.31 & 0.0130 & 0.00 & 0.00 & & 0.03 & 0.0014 & 98.11 & 0.0372 & 0.03 & 0.01 \\
50 &  & 0.00 & 0.0000 & 99.68 & 0.0056 & 0.00 & 0.00 & & 0.01 & 0.0006 & 98.74 & 0.0237 & 0.03 & 0.01 \\
100 &  & 0.00 & 0.0000 & 99.84 & 0.0027 & 0.00 & 0.00 & & 0.01 & 0.0003 & 99.10 & 0.0163 & 0.03 & 0.01 \\
150 &  & 0.00 & 0.0000 & 99.89 & 0.0018 & 0.00 & 0.00 & & 0.01 & 0.0002 & 99.25 & 0.0135 & 0.03 & 0.01 \\
\bottomrule
\end{tabular}
}
\par\smallskip
\parbox{\linewidth}{\footnotesize\textit{Notes:} Based on 1000 Monte Carlo replications. \% Quad is the share of informative quadruples; $p_n$ and $1-q_n$ are the theoretical non-link and quadruple density parameters; $np_n$ Theor.\ uses the population $p_n$, Empir.\ uses the simulated share. The dense case ($g(n)=0$) is identical across heterogeneity specifications.}
\end{table}

\begin{table}[htbp]
\centering\footnotesize
\setlength{\tabcolsep}{4pt}
\caption{Estimation Results --- Jochmans DGP, Right Tail ($\theta_0 = -1$), Uniform Heterogeneity}
\label{tab:est_jochmans_right_uniform_full}
\makebox[\linewidth][c]{%
\begin{tabular}{cc ccc ccc ccc ccc}
\toprule
 & & \multicolumn{3}{c}{Mean Bias} & \multicolumn{3}{c}{Median Bias} & \multicolumn{3}{c}{Size} & \multicolumn{3}{c}{se/sd} \\
\cmidrule(lr){3-5} \cmidrule(lr){6-8} \cmidrule(lr){9-11} \cmidrule(lr){12-14}
$n$ & $g(n)$ & MLE & BC & CMLE & MLE & BC & CMLE & MLE & BC & CMLE & MLE & BC & CMLE \\
\midrule
25 & $0$ & $-$0.098 & $-$0.004 & $-$0.014 & $-$0.103 & $-$0.010 & $-$0.016 & 0.063 & 0.037 & 0.032 & 0.93 & 1.02 & 1.05 \\
50 &  & $-$0.035 & 0.008 & 0.006 & $-$0.035 & 0.007 & 0.006 & 0.051 & 0.044 & 0.038 & 0.99 & 1.04 & 1.06 \\
100 &  & $-$0.019 & 0.002 & 0.001 & $-$0.019 & 0.002 & 0.001 & 0.060 & 0.057 & 0.053 & 0.97 & 0.99 & 1.00 \\
150 &  & $-$0.010 & 0.003 & 0.003 & $-$0.011 & 0.003 & 0.003 & 0.049 & 0.045 & 0.043 & 1.01 & 1.02 & 1.03 \\
\addlinespace[2pt]
25 & $\log\log n$ & $-$0.070 & 0.033 & 0.018 & $-$0.046 & 0.052 & 0.034 & 0.091 & 0.050 & 0.044 & 0.88 & 0.98 & 1.01 \\
50 &  & $-$0.020 & 0.027 & 0.024 & $-$0.005 & 0.041 & 0.042 & 0.059 & 0.039 & 0.036 & 0.97 & 1.02 & 1.05 \\
100 &  & $-$0.025 & $-$0.002 & $-$0.003 & $-$0.023 & 0.000 & $-$0.000 & 0.052 & 0.042 & 0.040 & 0.97 & 1.00 & 1.01 \\
150 &  & $-$0.009 & 0.006 & 0.006 & $-$0.006 & 0.009 & 0.006 & 0.047 & 0.042 & 0.039 & 1.01 & 1.03 & 1.04 \\
\addlinespace[2pt]
25 & $\sqrt{\log n}$ & $-$0.057 & 0.061 & 0.048 & $-$0.073 & 0.045 & 0.048 & 0.073 & 0.044 & 0.041 & 0.89 & 1.01 & 1.03 \\
50 &  & $-$0.038 & 0.016 & 0.012 & $-$0.041 & 0.013 & $-$0.002 & 0.045 & 0.039 & 0.035 & 0.98 & 1.03 & 1.06 \\
100 &  & $-$0.028 & $-$0.002 & $-$0.003 & $-$0.014 & 0.011 & 0.012 & 0.053 & 0.050 & 0.045 & 0.99 & 1.01 & 1.03 \\
150 &  & $-$0.009 & 0.008 & 0.008 & $-$0.009 & 0.008 & 0.006 & 0.050 & 0.048 & 0.046 & 0.98 & 0.99 & 1.00 \\
\addlinespace[2pt]
25 & $\log n$ & $-$0.091 & 0.105 & 0.061 & $-$0.102 & 0.088 & 0.057 & 0.066 & 0.026 & 0.025 & 0.87 & 1.04 & 1.05 \\
50 &  & $-$0.082 & 0.011 & $-$0.004 & $-$0.100 & $-$0.007 & $-$0.001 & 0.064 & 0.042 & 0.030 & 0.92 & 1.00 & 1.04 \\
100 &  & $-$0.056 & $-$0.009 & $-$0.013 & $-$0.059 & $-$0.011 & $-$0.016 & 0.058 & 0.049 & 0.042 & 0.97 & 1.01 & 1.03 \\
150 &  & $-$0.024 & 0.008 & 0.008 & $-$0.029 & 0.002 & 0.010 & 0.057 & 0.044 & 0.043 & 0.97 & 1.00 & 1.01 \\
\addlinespace[2pt]
25 & $2\log n$ & --- & --- & --- & --- & --- & --- & --- & --- & --- & --- & --- & --- \\
50 &  & $-$0.183 & 0.070 & 0.037 & $-$0.280 & $-$0.053 & $-$0.137 & 0.062 & 0.011 & 0.023 & 0.73 & 0.93 & 0.90 \\
100 &  & $-$0.093 & 0.015 & $-$0.003 & $-$0.046 & 0.058 & 0.042 & 0.060 & 0.032 & 0.031 & 0.93 & 1.03 & 1.05 \\
150 &  & $-$0.054 & 0.016 & 0.010 & $-$0.076 & $-$0.006 & 0.008 & 0.054 & 0.043 & 0.032 & 0.98 & 1.05 & 1.07 \\
\addlinespace[2pt]
25 & $(\log n)^{3/2}$ & --- & --- & --- & --- & --- & --- & --- & --- & --- & --- & --- & --- \\
50 &  & $-$0.152 & 0.090 & 0.059 & $-$0.254 & $-$0.019 & $-$0.109 & 0.067 & 0.014 & 0.024 & 0.73 & 0.92 & 0.90 \\
100 &  & $-$0.114 & 0.006 & $-$0.017 & $-$0.097 & 0.025 & $-$0.009 & 0.060 & 0.036 & 0.028 & 0.94 & 1.06 & 1.09 \\
150 &  & $-$0.044 & 0.036 & 0.029 & $-$0.072 & 0.007 & 0.011 & 0.054 & 0.040 & 0.038 & 0.95 & 1.03 & 1.05 \\
\addlinespace[2pt]
25 & $(\log n)^2$ & --- & --- & --- & --- & --- & --- & --- & --- & --- & --- & --- & --- \\
50 &  & --- & --- & --- & --- & --- & --- & --- & --- & --- & --- & --- & --- \\
100 &  & 0.887 & 1.015 & 0.950$^\dagger$ & $-$0.442 & $-$0.040 & $-$0.153 & 0.058 & 0.004 & 0.015 & 1.41 & 1.96 & 0.22 \\
150 &  & $-$0.219 & 0.057 & $-$0.002 & $-$0.220 & 0.021 & $-$0.077 & 0.078 & 0.016 & 0.026 & 0.82 & 1.04 & 1.03 \\
\bottomrule
\end{tabular}
}
\par\smallskip
\parbox{\linewidth}{\footnotesize\textit{Notes:} True parameter $\theta_0 = -1$; 1000 replications. MLE = maximum likelihood; BC = bias-corrected \citep{chernozhukov2020network}; CMLE = conditional maximum likelihood (pairwise differencing). Size = rejection frequency of the two-sided $t$-test at the 5\% nominal level. se/sd = ratio of average estimated standard error to simulation standard deviation. Entries marked~--- indicate that a non-negligible share of replications failed to converge due to insufficient variation in the binary outcomes, rendering 
the row unreliable.}
\end{table}

\begin{table}[htbp]
\centering\footnotesize
\setlength{\tabcolsep}{4pt}
\caption{Estimation Results --- Jochmans DGP, Right Tail ($\theta_0 = -1$), Beta Heterogeneity}
\label{tab:est_jochmans_right_beta_full}
\makebox[\linewidth][c]{%
\begin{tabular}{cc ccc ccc ccc ccc}
\toprule
 & & \multicolumn{3}{c}{Mean Bias} & \multicolumn{3}{c}{Median Bias} & \multicolumn{3}{c}{Size} & \multicolumn{3}{c}{se/sd} \\
\cmidrule(lr){3-5} \cmidrule(lr){6-8} \cmidrule(lr){9-11} \cmidrule(lr){12-14}
$n$ & $g(n)$ & MLE & BC & CMLE & MLE & BC & CMLE & MLE & BC & CMLE & MLE & BC & CMLE \\
\midrule
25 & $0$ & $-$0.098 & $-$0.004 & $-$0.014 & $-$0.103 & $-$0.010 & $-$0.016 & 0.063 & 0.037 & 0.032 & 0.93 & 1.02 & 1.05 \\
50 &  & $-$0.035 & 0.008 & 0.006 & $-$0.035 & 0.007 & 0.006 & 0.051 & 0.044 & 0.038 & 0.99 & 1.04 & 1.06 \\
100 &  & $-$0.019 & 0.002 & 0.001 & $-$0.019 & 0.002 & 0.001 & 0.060 & 0.057 & 0.053 & 0.97 & 0.99 & 1.00 \\
150 &  & $-$0.010 & 0.003 & 0.003 & $-$0.011 & 0.003 & 0.003 & 0.049 & 0.045 & 0.043 & 1.01 & 1.02 & 1.03 \\
\addlinespace[2pt]
25 & $\log\log n$ & $-$0.071 & 0.033 & 0.017 & $-$0.048 & 0.055 & 0.019 & 0.078 & 0.046 & 0.041 & 0.89 & 0.98 & 1.01 \\
50 &  & $-$0.022 & 0.025 & 0.022 & $-$0.002 & 0.045 & 0.040 & 0.051 & 0.036 & 0.035 & 0.99 & 1.03 & 1.06 \\
100 &  & $-$0.023 & 0.000 & $-$0.000 & $-$0.024 & $-$0.000 & $-$0.000 & 0.063 & 0.050 & 0.049 & 0.97 & 1.00 & 1.01 \\
150 &  & $-$0.009 & 0.007 & 0.006 & $-$0.007 & 0.009 & 0.008 & 0.048 & 0.044 & 0.042 & 1.01 & 1.03 & 1.04 \\
\addlinespace[2pt]
25 & $\sqrt{\log n}$ & $-$0.080 & 0.043 & 0.024 & $-$0.081 & 0.040 & 0.034 & 0.061 & 0.035 & 0.033 & 0.92 & 1.03 & 1.06 \\
50 &  & $-$0.033 & 0.022 & 0.018 & $-$0.044 & 0.010 & 0.010 & 0.052 & 0.046 & 0.036 & 0.98 & 1.03 & 1.07 \\
100 &  & $-$0.025 & 0.002 & 0.001 & $-$0.013 & 0.013 & 0.010 & 0.052 & 0.044 & 0.042 & 0.99 & 1.02 & 1.04 \\
150 &  & $-$0.012 & 0.006 & 0.006 & $-$0.009 & 0.008 & 0.009 & 0.060 & 0.055 & 0.054 & 0.97 & 0.99 & 1.00 \\
\addlinespace[2pt]
25 & $\log n$ & $-$0.083 & 0.101 & 0.069 & $-$0.073 & 0.093 & 0.046 & 0.062 & 0.021 & 0.021 & 0.88 & 1.06 & 1.05 \\
50 &  & $-$0.083 & 0.007 & $-$0.003 & $-$0.096 & $-$0.007 & $-$0.014 & 0.056 & 0.028 & 0.027 & 0.95 & 1.04 & 1.07 \\
100 &  & $-$0.056 & $-$0.010 & $-$0.012 & $-$0.051 & $-$0.005 & $-$0.001 & 0.065 & 0.048 & 0.042 & 0.96 & 1.00 & 1.02 \\
150 &  & $-$0.029 & 0.001 & 0.001 & $-$0.026 & 0.005 & 0.010 & 0.063 & 0.056 & 0.047 & 0.97 & 1.00 & 1.02 \\
\addlinespace[2pt]
25$^\dagger$ & $2\log n$ & $-$0.039 & 0.397 & 0.286 & $-$0.179 & 0.148 & 0.109 & 0.069 & 0.002 & 0.016 & 0.60 & 0.99 & 0.85 \\
50 &  & $-$0.095 & 0.060 & 0.044 & $-$0.147 & 0.014 & 0.008 & 0.070 & 0.027 & 0.029 & 0.85 & 0.99 & 1.01 \\
100 &  & $-$0.071 & 0.002 & $-$0.004 & $-$0.050 & 0.020 & 0.013 & 0.059 & 0.043 & 0.036 & 0.93 & 1.00 & 1.02 \\
150 &  & $-$0.042 & 0.006 & 0.005 & $-$0.047 & 0.000 & 0.003 & 0.049 & 0.036 & 0.035 & 1.00 & 1.05 & 1.07 \\
\addlinespace[2pt]
25$^\dagger$ & $(\log n)^{3/2}$ & $-$0.003 & 0.353 & 0.247 & $-$0.151 & 0.153 & 0.102 & 0.066 & 0.006 & 0.018 & 0.68 & 0.94 & 0.88 \\
50 &  & $-$0.096 & 0.058 & 0.041 & $-$0.130 & 0.021 & $-$0.004 & 0.073 & 0.029 & 0.031 & 0.85 & 0.99 & 1.01 \\
100 &  & $-$0.076 & 0.001 & $-$0.008 & $-$0.060 & 0.016 & 0.014 & 0.057 & 0.044 & 0.037 & 0.93 & 1.00 & 1.03 \\
150 &  & $-$0.047 & 0.005 & 0.003 & $-$0.054 & $-$0.004 & $-$0.016 & 0.039 & 0.030 & 0.028 & 0.99 & 1.04 & 1.07 \\
\addlinespace[2pt]
25 & $(\log n)^2$ & --- & --- & --- & --- & --- & --- & --- & --- & --- & --- & --- & --- \\
50 &  & $-$0.182 & 0.102 & 0.047 & $-$0.272 & 0.017 & $-$0.088 & 0.074 & 0.012 & 0.019 & 0.49 & 0.62 & 0.59 \\
100 &  & $-$0.120 & 0.009 & $-$0.019 & $-$0.126 & 0.001 & $-$0.035 & 0.066 & 0.034 & 0.037 & 0.90 & 1.02 & 1.05 \\
150 &  & $-$0.026 & 0.055 & 0.046 & $-$0.038 & 0.045 & 0.027 & 0.050 & 0.034 & 0.028 & 0.99 & 1.07 & 1.10 \\
\bottomrule
\end{tabular}
}
\par\smallskip
\parbox{\linewidth}{\footnotesize\textit{Notes:} True parameter $\theta_0 = -1$; 1000 replications. MLE = maximum likelihood; BC = bias-corrected \citep{chernozhukov2020network}; CMLE = conditional maximum likelihood (pairwise differencing). Size = rejection frequency of the two-sided $t$-test at the 5\% nominal level. se/sd = ratio of average estimated standard error to simulation standard deviation. $^\dagger$Results in this row are based on 999 out of 1000 replications, where all estimators produced finite 
estimates. Entries marked~--- indicate that a non-negligible share of replications failed to converge due to insufficient variation in the binary outcomes, rendering 
the row unreliable.}
\end{table}

Table~\ref{tab:net_jochmans_right_full} reports network statistics. The DGP predictions are reflected in finite-sample 
behavior: under uniform heterogeneity, growth in $np_n$ 
is visible through $g(n) = \sqrt{\log n}$, becomes modest at 
$g(n) = \log n$ (from $0.12$ to $0.15$), and is effectively 
undetectable beyond. Under Beta heterogeneity, growth extends 
through $g(n) = (\log n)^{3/2}$ (from $0.03$ to $0.05$), with 
stagnation only at $g(n) = (\log n)^2$. Although Assumption~\ref{assumption4} is satisfied in all 
cases asymptotically, the simulations illustrate that the 
practical feasibility of estimation depends on whether $np_n$ 
grows with $n$ at the sample sizes available.

Notice that as $g(n)$ increases, the theoretical and empirical 
$np_n$ move closer together. The theoretical $p_n$ in the 
table is computed using the tail approximation derived in 
\ref{appendix_sparsity}, which is valid when fixed 
effects tend to $\pm\infty$. In the dense case ($g(n) = 0$), 
fixed effects are at zero and this approximation is outside 
its regime of validity, producing $p_n = 1$ while only about 
$12\%$ of quadruples are empirically informative. As $g(n)$ 
grows and the fixed effects move into the regime where the 
approximation applies, the theoretical and empirical measures 
align, both shrinking toward zero. This is also the regime 
where the rate restriction $\sqrt{n}\,(1 - q_n) \to \infty$ becomes 
operative, with $1 - q_n$ visibly shrinking under both 
heterogeneity patterns as $g(n)$ grows.

Tables \ref{tab:est_jochmans_right_uniform_full} and \ref{tab:est_jochmans_right_beta_full} report estimation results across the same sparsity configurations, for uniform heterogeneity and for Beta. The MLE exhibits bias away from zero across most configurations, as expected with two-way fixed effects. While the bias decreases with
$n$, it remains non-negligible at $n=150$ in sparser configurations. In the two most extreme sparsity rows at $n=25$ under Beta heterogeneity, the mean bias is near zero but the median bias remains substantial ($-$0.179 and $-$0.151). The MLE's rejection frequencies do not account for this bias, since the estimator is centered at the wrong value.

The bias correction removes most of the 
incidental-parameter bias at moderate sparsity levels but 
deteriorates in sparse networks. At $g(n) = \log n$ and 
$n = 25$, the BC mean bias is $0.105$ under uniform 
heterogeneity and $0.101$ under Beta, comparable in 
magnitude to the MLE bias it is designed to remove. Under 
Beta heterogeneity at $g(n) = 2\log n$ and 
$(\log n)^{3/2}$ with $n = 25$, the BC mean bias increases 
further to $0.397$ and $0.353$, while the CMLE remains at 
$0.286$ and $0.247$ with smaller median bias throughout. 
The gap between BC and CMLE widens as sparsity increases, 
and BC's size becomes extremely conservative (rejection 
rates near zero) at the most extreme sparsity levels.

The CMLE performs well across configurations where $np_n$
continues to grow with $n$. Under uniform heterogeneity, it
delivers reliable inference through $g(n) = \log n$ (over
$97\%$ of pairs linked); under Beta heterogeneity, through
$g(n) = (\log n)^{3/2}$ (over $98\%$ of pairs linked). The
boundary of reliable CMLE performance aligns with the range
where $np_n$ grows visibly with $n$, established above.
Beyond this range, the CMLE remains reasonable at larger $n$
even where $np_n$ growth is slow. Throughout, the CMLE is
slightly conservative: rejection frequencies at or below
nominal and se/sd ratios above one, indicating mild
overestimation of variability. Overall, the two estimators perform comparably at moderate sparsity levels, but BC deteriorates more than the CMLE at the sparse extremes, particularly at small sample sizes.

\clearpage
\subsection{Additional Tables and Figures for Subsection \ref{monte_carlo_DR}}

\begin{figure}[htbp]
\centering
\captionsetup{skip=0pt}
\captionsetup[subfigure]{font=footnotesize, skip=0pt}
    \includegraphics[width=0.45\textwidth]{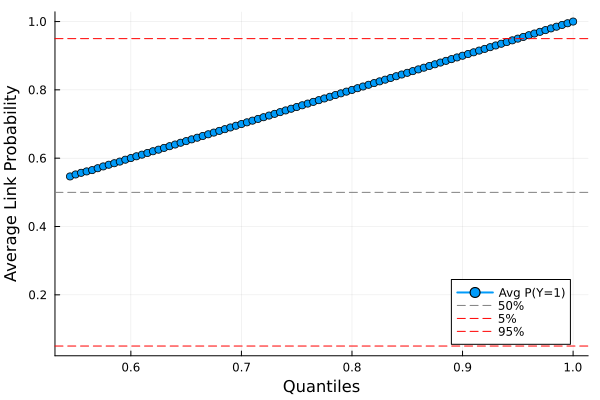}
    \caption{Average probability 
$\Pr(\tilde{y}_{ij,k} = 1)$ at each threshold $y_k$ in the empirically calibrated DGP. 
Dashed lines indicate reference levels at 5\%, 50\%, and 
95\%. Thresholds correspond to empirical
quantiles in [0.545, 0.990] at intervals of 0.005.}
\label{fig:sparsity}
\end{figure}

\begin{figure}[htbp]
\centering
\captionsetup{skip=0pt}
\captionsetup[subfigure]{font=footnotesize, skip=0pt}
\begin{subfigure}[b]{0.32\textwidth}
    \centering
    \includegraphics[width=\textwidth]{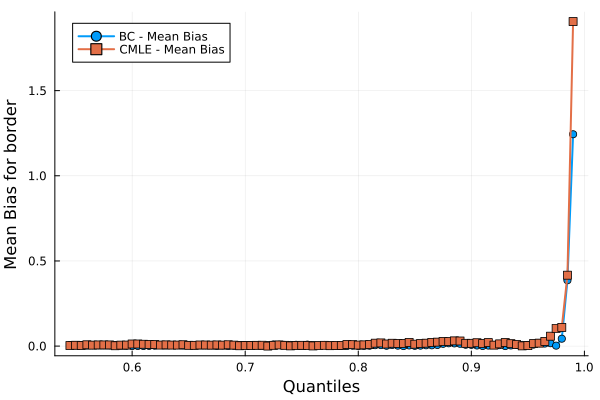}
    \caption{Border}
    \label{fig:sim_bias_border}
\end{subfigure}
\hfill
\begin{subfigure}[b]{0.32\textwidth}
    \centering
    \includegraphics[width=\textwidth]{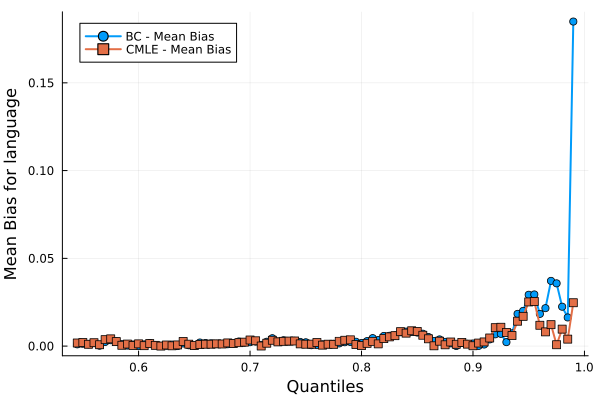}
    \caption{Language}
    \label{fig:sim_bias_language}
\end{subfigure}
\hfill
\begin{subfigure}[b]{0.32\textwidth}
    \centering
    \includegraphics[width=\textwidth]{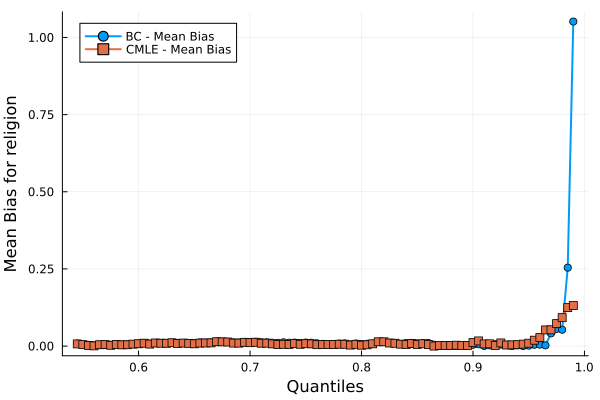}
    \caption{Religion}
    \label{fig:sim_bias_religion}
\end{subfigure}
\caption{Mean bias of 
the bias-corrected estimator (BC) and the conditional 
maximum likelihood estimator (CMLE) across quantiles of 
the trade distribution, based on 500 replications using 
the empirically calibrated DGP. Thresholds correspond to empirical
quantiles in [0.545, 0.990] at intervals of 0.005.}
\label{fig:sim_bias_appendix}
\end{figure}

\begin{figure}[htbp]
\centering
\captionsetup{skip=0pt}
\captionsetup[subfigure]{font=footnotesize, skip=0pt}
\begin{subfigure}[b]{0.32\textwidth}
    \centering
    \includegraphics[width=\textwidth]{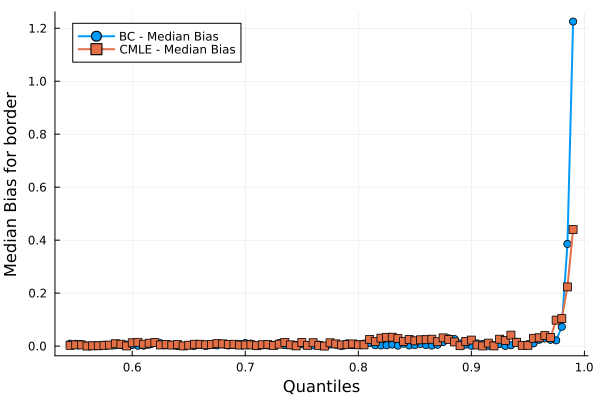}
    \caption{Border}
    \label{fig:sim_medbias_border}
\end{subfigure}
\hfill
\begin{subfigure}[b]{0.32\textwidth}
    \centering
    \includegraphics[width=\textwidth]{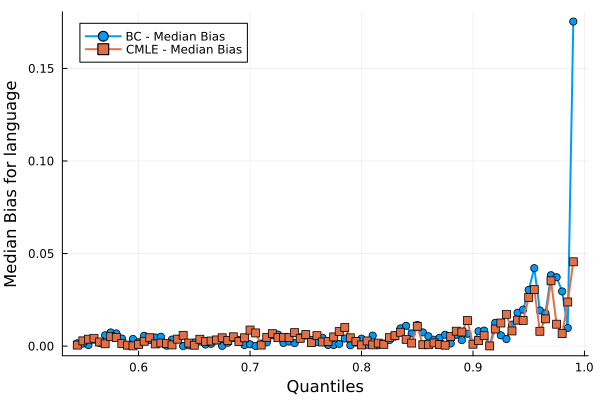}
    \caption{Language}
    \label{fig:sim_medbias_language}
\end{subfigure}
\hfill
\begin{subfigure}[b]{0.32\textwidth}
    \centering
    \includegraphics[width=\textwidth]{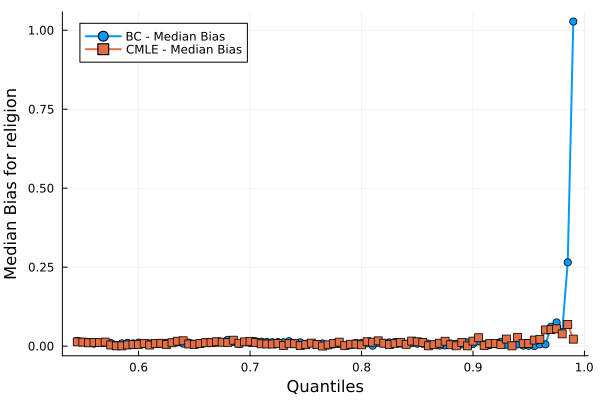}
    \caption{Religion}
    \label{fig:sim_medbias_religion}
\end{subfigure}
\caption{Median bias of 
the bias-corrected estimator (BC) and the conditional 
maximum likelihood estimator (CMLE) across quantiles of 
the trade distribution, based on 500 replications using 
the empirically calibrated DGP. Thresholds correspond to empirical
quantiles in [0.545, 0.990] at intervals of 0.005.}
\label{fig:sim_medbias_appendix}
\end{figure}

\begin{figure}[htbp]
\centering
\captionsetup{skip=0pt}
\captionsetup[subfigure]{font=footnotesize, skip=0pt}
\begin{subfigure}[b]{0.32\textwidth}
    \centering
    \includegraphics[width=\textwidth]{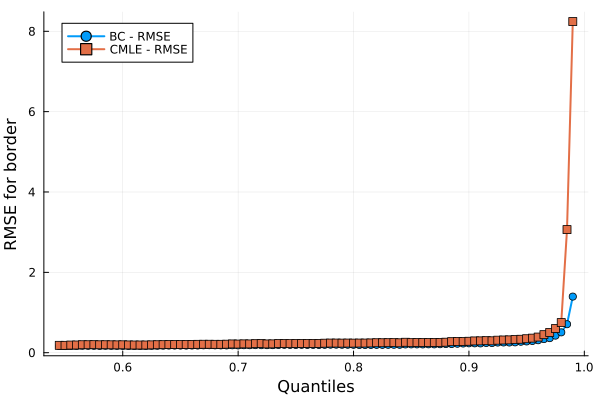}
    \caption{Border}
    \label{fig:sim_rmse_border}
\end{subfigure}
\hfill
\begin{subfigure}[b]{0.32\textwidth}
    \centering
    \includegraphics[width=\textwidth]{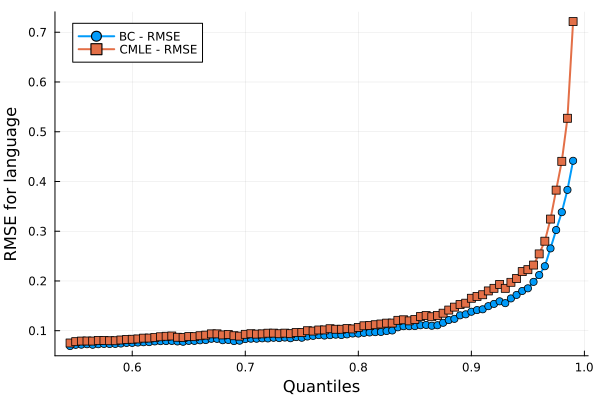}
    \caption{Language}
    \label{fig:sim_rmse_language}
\end{subfigure}
\hfill
\begin{subfigure}[b]{0.32\textwidth}
    \centering
    \includegraphics[width=\textwidth]{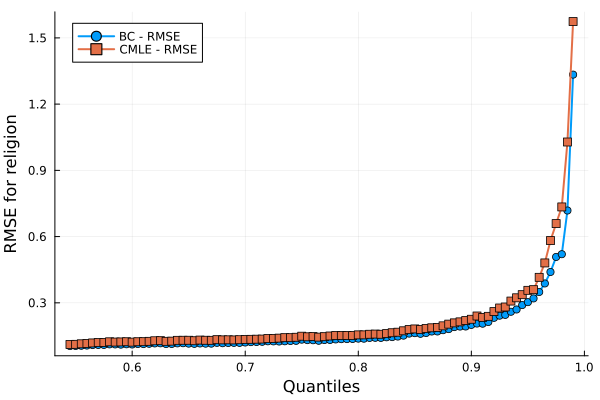}
    \caption{Religion}
    \label{fig:sim_rmse_religion}
\end{subfigure}
\caption{RMSE for 
the bias-corrected estimator (BC) and the conditional 
maximum likelihood estimator (CMLE) across quantiles of 
the trade distribution, based on 500 replications using 
the empirically calibrated DGP. Thresholds correspond to empirical
quantiles in [0.545, 0.990] at intervals of 0.005.}
\label{fig:sim_rmse_appendix}
\end{figure}

\begin{figure}[htbp]
\centering
\captionsetup{skip=0pt}
\captionsetup[subfigure]{font=footnotesize, skip=0pt}
\begin{subfigure}[b]{0.32\textwidth}
    \centering
    \includegraphics[width=\textwidth]{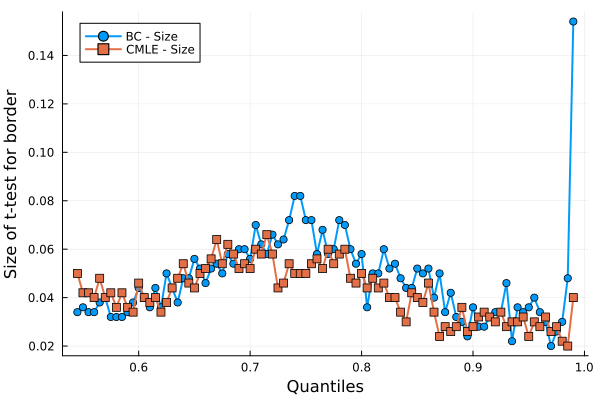}
    \caption{Border}
    \label{fig:sim_size_border}
\end{subfigure}
\hfill
\begin{subfigure}[b]{0.32\textwidth}
    \centering
    \includegraphics[width=\textwidth]{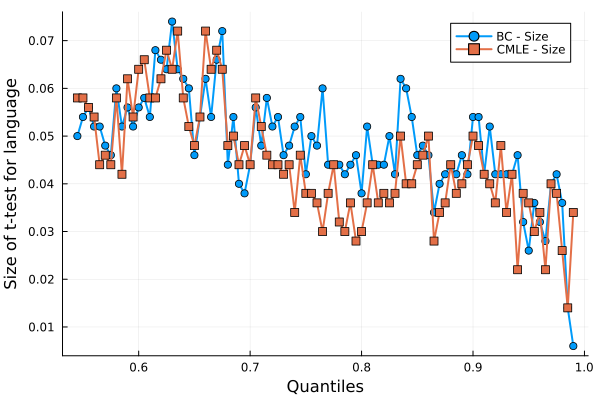}
    \caption{Language}
    \label{fig:sim_size_language}
\end{subfigure}
\hfill
\begin{subfigure}[b]{0.32\textwidth}
    \centering
    \includegraphics[width=\textwidth]{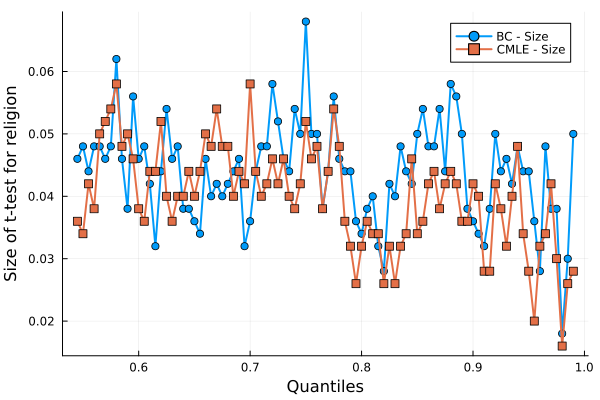}
    \caption{Religion}
    \label{fig:sim_size_religion}
\end{subfigure}
\caption{Rejection 
frequency of the two-sided $t$-test at the 5\% nominal 
level for the bias-corrected estimator (BC) and the 
conditional maximum likelihood estimator (CMLE) across 
quantiles of the trade distribution, based on 500 
replications using the empirically calibrated DGP. Thresholds correspond to empirical
quantiles in [0.545, 0.990] at intervals of 0.005.}
\label{fig:sim_size_appendix}
\end{figure}

\clearpage
\subsection{Additional Tables and Figures for Subsection \ref{monte_carlo_simultaneous}}
\label{app_monte_carlo_simultaneous}

\begin{table}[htbp]
\centering
\captionsetup{skip=0pt}
\captionsetup[subfigure]{font=footnotesize, skip=0pt}
\caption{Coverage Comparison: Pointwise vs Sup-$t$ Bands ($\tau_{\max} = 0.99$, Equally Spaced)}
\label{tab:main_coverage_099}
\begin{threeparttable}
\footnotesize
\begin{tabular}{ll cc cc cc cc cc}
\toprule
& & \multicolumn{2}{c}{Distance} & \multicolumn{2}{c}{Legal} & \multicolumn{2}{c}{Border} & \multicolumn{2}{c}{Language} & \multicolumn{2}{c}{Religion} \\
\cmidrule(lr){3-4} \cmidrule(lr){5-6} \cmidrule(lr){7-8} \cmidrule(lr){9-10} \cmidrule(lr){11-12}
$K$ & DGP & PW & Sup-$t$ & PW & Sup-$t$ & PW & Sup-$t$ & PW & Sup-$t$ & PW & Sup-$t$ \\
\midrule
5 & Varying & 82.80 & 97.60 & 85.60 & 97.00 & 83.60 & 93.20 & 80.80 & 95.00 & 83.60 & 96.40 \\
5 & Constant & 85.80 & 97.00 & 84.80 & 96.80 & 82.60 & 96.40 & 82.00 & 95.80 & 84.80 & 96.00 \\
\addlinespace
15 & Varying & 66.40 & 96.20 & 69.60 & 96.80 & 71.20 & 92.60 & 64.80 & 96.00 & 69.20 & 95.40 \\
15 & Constant & 70.60 & 96.60 & 70.40 & 96.40 & 69.00 & 96.40 & 65.60 & 95.80 & 67.80 & 96.60 \\
\addlinespace
50 & Varying & 49.60 & 96.20 & 50.40 & 97.40 & 56.00 & 93.00 & 46.20 & 95.60 & 51.40 & 96.80 \\
50 & Constant & 52.20 & 96.00 & 51.80 & 97.20 & 54.80 & 95.20 & 47.40 & 94.80 & 53.20 & 97.40 \\
\bottomrule
\end{tabular}
\begin{tablenotes}
\smallskip\footnotesize
\item \textit{Notes:} Empirical coverage rates (in \%) of 95\% simultaneous confidence bands. ``PW'' (Pointwise) uses $z_{0.975} = 1.96$ at each threshold; ``Sup-$t$'' uses simulated critical values. Target coverage is 95\%. ``Constant'' = coefficients identical across thresholds ($H_0$ true); ``Varying'' = coefficients follow empirical trade data pattern ($H_0$ false). Based on 500 Monte Carlo simulations.
\end{tablenotes}
\end{threeparttable}
\end{table}

\begin{table}[htbp]
\centering
\captionsetup{skip=0pt}
\captionsetup[subfigure]{font=footnotesize, skip=0pt}
\caption{Coverage Comparison: Pointwise vs Sup-$t$ Bands ($\tau_{\max} = 0.95$, First $K$ and Last $K$)}
\label{tab:appendix_coverage_095}
\begin{threeparttable}
\footnotesize
\begin{tabular}{lll cc cc cc cc cc}
\toprule
& & & \multicolumn{2}{c}{Distance} & \multicolumn{2}{c}{Legal} & \multicolumn{2}{c}{Border} & \multicolumn{2}{c}{Language} & \multicolumn{2}{c}{Religion} \\
\cmidrule(lr){4-5} \cmidrule(lr){6-7} \cmidrule(lr){8-9} \cmidrule(lr){10-11} \cmidrule(lr){12-13}
$K$ & Selection & DGP & PW & Sup-$t$ & PW & Sup-$t$ & PW & Sup-$t$ & PW & Sup-$t$ & PW & Sup-$t$ \\
\midrule
5 & First $K$ & Varying & 90.80 & 95.20 & 92.20 & 96.20 & 91.80 & 96.00 & 90.80 & 94.60 & 92.60 & 96.60 \\
5 & First $K$ & Constant & 91.60 & 95.20 & 92.00 & 96.00 & 91.20 & 96.20 & 90.60 & 95.60 & 93.20 & 95.80 \\
5 & Last $K$ & Varying & 90.60 & 95.60 & 91.00 & 96.60 & 90.80 & 97.40 & 89.80 & 97.20 & 90.00 & 97.00 \\
5 & Last $K$ & Constant & 91.60 & 95.80 & 91.60 & 97.40 & 91.60 & 97.80 & 90.20 & 97.40 & 90.40 & 97.60 \\
\addlinespace
15 & First $K$ & Varying & 85.40 & 95.40 & 87.00 & 95.20 & 87.00 & 96.60 & 82.80 & 93.80 & 86.40 & 95.80 \\
15 & First $K$ & Constant & 86.60 & 94.80 & 87.80 & 96.20 & 86.80 & 97.20 & 86.60 & 95.40 & 88.60 & 95.60 \\
15 & Last $K$ & Varying & 80.60 & 97.40 & 81.40 & 97.60 & 82.80 & 97.40 & 80.20 & 95.60 & 79.60 & 97.60 \\
15 & Last $K$ & Constant & 80.60 & 96.60 & 80.60 & 97.20 & 82.00 & 98.40 & 78.40 & 96.40 & 77.40 & 96.80 \\
\addlinespace
50 & First $K$ & Varying & 66.60 & 95.60 & 69.20 & 96.60 & 75.80 & 94.40 & 67.00 & 94.60 & 69.20 & 96.20 \\
50 & First $K$ & Constant & 69.40 & 95.40 & 73.80 & 97.00 & 73.60 & 95.20 & 70.40 & 95.00 & 73.60 & 97.40 \\
50 & Last $K$ & Varying & 61.80 & 96.00 & 61.40 & 96.80 & 64.40 & 95.60 & 59.80 & 96.80 & 63.40 & 96.20 \\
50 & Last $K$ & Constant & 60.00 & 96.00 & 59.20 & 95.60 & 62.80 & 96.60 & 59.00 & 97.40 & 64.20 & 97.20 \\
\bottomrule
\end{tabular}
\begin{tablenotes}
\smallskip\footnotesize
\item \textit{Notes:} Empirical coverage rates (in \%) of 95\% simultaneous confidence bands. ``PW'' (Pointwise) uses $z_{0.975} = 1.96$ at each threshold; ``Sup-$t$'' uses simulated critical values. Target coverage is 95\%. ``Constant'' = coefficients identical across thresholds ($H_0$ true); ``Varying'' = coefficients follow empirical trade data pattern ($H_0$ false). First $K$ selects the lowest $K$ thresholds; Last $K$ selects the highest $K$ thresholds up to $\tau_{\max}$. Based on 500 Monte Carlo simulations.
\end{tablenotes}
\end{threeparttable}
\end{table}

\begin{table}[htbp]
\centering
\captionsetup{skip=0pt}
\captionsetup[subfigure]{font=footnotesize, skip=0pt}
\caption{Coverage Comparison: Pointwise vs Sup-$t$ Bands ($\tau_{\max} = 0.99$, First $K$ and Last $K$)}
\label{tab:appendix_coverage_099}
\begin{threeparttable}
\footnotesize
\begin{tabular}{lll cc cc cc cc cc}
\toprule
& & & \multicolumn{2}{c}{Distance} & \multicolumn{2}{c}{Legal} & \multicolumn{2}{c}{Border} & \multicolumn{2}{c}{Language} & \multicolumn{2}{c}{Religion} \\
\cmidrule(lr){4-5} \cmidrule(lr){6-7} \cmidrule(lr){8-9} \cmidrule(lr){10-11} \cmidrule(lr){12-13}
$K$ & Selection & DGP & PW & Sup-$t$ & PW & Sup-$t$ & PW & Sup-$t$ & PW & Sup-$t$ & PW & Sup-$t$ \\
\midrule
5 & First $K$ & Varying & 90.80 & 95.20 & 92.20 & 96.20 & 91.80 & 96.00 & 90.80 & 94.60 & 92.60 & 96.60 \\
5 & First $K$ & Constant & 91.60 & 95.20 & 92.00 & 96.00 & 91.20 & 96.20 & 90.60 & 95.60 & 93.20 & 95.80 \\
5 & Last $K$ & Varying & 90.00 & 98.20 & 87.80 & 98.40 & 89.00 & 94.80 & 88.00 & 98.40 & 87.80 & 97.00 \\
5 & Last $K$ & Constant & 91.00 & 98.00 & 86.00 & 98.20 & 92.00 & 98.40 & 86.00 & 98.40 & 84.80 & 97.60 \\
\addlinespace
15 & First $K$ & Varying & 85.40 & 95.40 & 87.00 & 95.20 & 87.00 & 96.60 & 82.80 & 93.80 & 86.40 & 95.80 \\
15 & First $K$ & Constant & 86.60 & 94.80 & 87.80 & 96.20 & 86.80 & 97.20 & 86.60 & 95.40 & 88.60 & 95.60 \\
15 & Last $K$ & Varying & 76.00 & 97.80 & 76.60 & 98.00 & 76.80 & 94.40 & 73.40 & 96.60 & 72.40 & 97.80 \\
15 & Last $K$ & Constant & 75.80 & 98.00 & 77.40 & 98.00 & 72.60 & 96.80 & 75.00 & 98.20 & 70.80 & 98.20 \\
\addlinespace
50 & First $K$ & Varying & 66.60 & 95.60 & 69.20 & 96.60 & 75.80 & 94.40 & 67.00 & 94.60 & 69.20 & 96.20 \\
50 & First $K$ & Constant & 69.40 & 95.40 & 73.80 & 97.00 & 73.60 & 95.20 & 70.40 & 95.00 & 73.60 & 97.40 \\
50 & Last $K$ & Varying & 57.40 & 97.40 & 54.80 & 97.00 & 57.80 & 93.40 & 52.40 & 97.60 & 55.40 & 95.80 \\
50 & Last $K$ & Constant & 56.20 & 96.80 & 52.40 & 94.20 & 55.60 & 97.60 & 52.00 & 97.60 & 58.00 & 96.20 \\
\bottomrule
\end{tabular}
\begin{tablenotes}
\smallskip\footnotesize
\item \textit{Notes:} Empirical coverage rates (in \%) of 95\% simultaneous confidence bands. ``PW'' (Pointwise) uses $z_{0.975} = 1.96$ at each threshold; ``Sup-$t$'' uses simulated critical values. Target coverage is 95\%. ``Constant'' = coefficients identical across thresholds ($H_0$ true); ``Varying'' = coefficients follow empirical trade data pattern ($H_0$ false). First $K$ selects the lowest $K$ thresholds; Last $K$ selects the highest $K$ thresholds up to $\tau_{\max}$. Based on 500 Monte Carlo simulations.
\end{tablenotes}
\end{threeparttable}
\end{table}

Table \ref{tab:appendix_coverage_095} provides the results for the coverage of the pointwise confidence intervals and the sup-$t$ simultaneous confidence bands for the specifications \textit{first $K$} and \textit{last $K$} when considering the maximum quantile to be $\tau_{max} = 0.95$, and Tables \ref{tab:main_coverage_099} and \ref{tab:appendix_coverage_099} provide the results for all the designs considering the maximum quantile to be $\tau_{max} = 0.99$.

Again, while the results for the pointwise confidence intervals show under-coverage, the sup-$t$ bands maintain coverage at or slightly above 95\% level across all configurations. Moreover, the pointwise coverage varies substantially across threshold selection schemes. First $K$ tends to achieve higher 
pointwise coverage, particularly at larger $K$, while 
equally spaced and last $K$ show more pronounced 
under-coverage, reflecting increased estimation 
uncertainty when thresholds extend into sparser regions 
of the distribution. In all cases, however, pointwise confidence bands exhibit substantial under-coverage that worsens with $K$. For the sup-$t$ simultaneous bands, while \textit{last $K$} shows slightly more conservative coverage compared to \textit{equally spaced} and \textit{first $K$}, this deviation from the nominal target is small, whereas pointwise under-coverage is far more pronounced.

Importantly, the choice of the maximum quantile $\tau_{max}$ has negligible impact on sup-$t$ coverage. At 
$\tau_{\max} = 0.99$, Border shows mild undercoverage 
(approximately 93\%) in some configurations, which may 
reflect higher estimation uncertainty at the most extreme 
thresholds for this covariate; all other covariates maintain 
coverage at or above 95\%. For pointwise coverage, however, extending to more extreme quantiles has noticeable effects. \textit{First $K$} results are identical across different maximal thresholds since this scheme always selects the lowest thresholds. \textit{Equally Spaced} shows deterioration at large $K$, while \textit{last $K$} shows consistent deterioration across all $K$ values because this scheme selects thresholds closer to the more extreme boundary ($\tau_{\max} = 0.99$). These patterns reflect increased estimation uncertainty at extreme quantiles, which the sup-$t$ simulated critical values accommodate but pointwise inference does not.

\begin{table}[htbp]
\centering
\captionsetup{skip=0pt}
\captionsetup[subfigure]{font=footnotesize, skip=0pt}
\caption{Size and Power by Covariate: Sup-$t$ vs Wald ($\tau_{\max} = 0.99$, Equally Spaced)}
\label{tab:main_by_covariate_099}
\begin{threeparttable}
\footnotesize
\begin{tabular}{l cc cc cc cc cc}
\toprule
& \multicolumn{2}{c}{Distance} & \multicolumn{2}{c}{Legal} & \multicolumn{2}{c}{Border} & \multicolumn{2}{c}{Language} & \multicolumn{2}{c}{Religion} \\
\cmidrule(lr){2-3} \cmidrule(lr){4-5} \cmidrule(lr){6-7} \cmidrule(lr){8-9} \cmidrule(lr){10-11}
$K$ & Sup-$t$ & Wald & Sup-$t$ & Wald & Sup-$t$ & Wald & Sup-$t$ & Wald & Sup-$t$ & Wald \\
\midrule
\multicolumn{11}{l}{\textit{Panel A: Constant Coefficients (Size, target: 5\%)}} \\
\addlinespace
5 & 2.00 & 2.20 & 3.40 & 3.00 & 2.60 & 3.00 & 3.20 & 3.80 & 4.20 & 3.40 \\
15 & 2.40 & 3.00 & 3.80 & 2.00 & 3.20 & 3.80 & 2.80 & 2.40 & 3.60 & 1.40 \\
50 & 3.40 & 2.40 & 2.40 & 1.00 & 4.60 & 14.20 & 4.60 & 2.20 & 3.40 & 2.60 \\
\midrule
\multicolumn{11}{l}{\textit{Panel B: Varying Coefficients (Power)}} \\
\addlinespace
5 & 99.80 & 100.00 & 100.00 & 100.00 & 97.60 & 99.20 & 26.20 & 24.40 & 67.20 & 77.20 \\
15 & 100.00 & 100.00 & 100.00 & 100.00 & 99.00 & 98.60 & 58.80 & 55.40 & 69.20 & 88.60 \\
50 & 100.00 & 100.00 & 100.00 & 100.00 & 99.60 & 100.00 & 53.80 & 77.80 & 88.80 & 93.40 \\
\addlinespace
\bottomrule
\end{tabular}
\begin{tablenotes}
\smallskip\footnotesize
\item \textit{Notes:} Rejection rates (in \%) for testing $H_0: \theta_d(\tau_1) = \cdots = \theta_d(\tau_K)$ at the 5\% level. Sup-$t$ = Sup-$t$ test; Wald = Wald test. ``Constant'' = coefficients identical across thresholds ($H_0$ true); ``Varying'' = coefficients follow empirical trade data pattern ($H_0$ false). Based on 500 Monte Carlo simulations.
\end{tablenotes}
\end{threeparttable}
\end{table}

\begin{table}[htbp]
\centering
\caption{Size and Power by Covariate: First $K$ and Last $K$ ($\tau_{\max} = 0.95$)}
\label{tab:appendix_by_covariate_095}
\begin{threeparttable}
\footnotesize
\begin{tabular}{ll cc cc cc cc cc}
\toprule
& & \multicolumn{2}{c}{Distance} & \multicolumn{2}{c}{Legal} & \multicolumn{2}{c}{Border} & \multicolumn{2}{c}{Language} & \multicolumn{2}{c}{Religion} \\
\cmidrule(lr){3-4} \cmidrule(lr){5-6} \cmidrule(lr){7-8} \cmidrule(lr){9-10} \cmidrule(lr){11-12}
$K$ & Selection & Sup-$t$ & Wald & Sup-$t$ & Wald & Sup-$t$ & Wald & Sup-$t$ & Wald & Sup-$t$ & Wald \\
\midrule
\multicolumn{12}{l}{\textit{Panel A: Constant Coefficients (Size, target: 5\%)}} \\
\addlinespace
5 & First $K$ & 3.40 & 3.60 & 4.20 & 4.20 & 5.20 & 6.00 & 5.80 & 5.40 & 3.20 & 3.40 \\
5 & Last $K$ & 1.80 & 1.60 & 2.20 & 1.60 & 2.40 & 1.80 & 2.40 & 1.80 & 2.40 & 2.80 \\
\addlinespace
15 & First $K$ & 2.80 & 1.80 & 3.40 & 2.80 & 5.20 & 8.20 & 5.60 & 3.40 & 2.80 & 3.00 \\
15 & Last $K$ & 2.40 & 2.60 & 2.40 & 0.40 & 1.80 & 2.60 & 2.40 & 2.20 & 2.40 & 1.60 \\
\addlinespace
50 & First $K$ & 3.20 & 1.60 & 2.00 & 1.20 & 5.60 & 22.60 & 4.20 & 3.20 & 3.60 & 2.60 \\
50 & Last $K$ & 3.00 & 1.80 & 2.40 & 0.60 & 2.60 & 9.40 & 2.00 & 2.00 & 1.80 & 1.80 \\
\midrule
\multicolumn{12}{l}{\textit{Panel B: Varying Coefficients (Power)}} \\
\addlinespace
5 & First $K$ & 87.60 & 83.60 & 14.20 & 14.40 & 10.00 & 4.80 & 53.80 & 38.80 & 6.00 & 7.20 \\
5 & Last $K$ & 5.40 & 4.80 & 3.60 & 3.60 & 16.20 & 11.00 & 58.00 & 45.20 & 10.80 & 9.00 \\
\addlinespace
15 & First $K$ & 100.00 & 97.60 & 40.80 & 44.40 & 20.60 & 11.60 & 65.60 & 45.40 & 12.80 & 26.20 \\
15 & Last $K$ & 5.80 & 29.60 & 55.20 & 84.20 & 33.80 & 31.00 & 51.40 & 43.00 & 9.40 & 21.60 \\
\addlinespace
50 & First $K$ & 100.00 & 98.60 & 87.40 & 95.00 & 91.40 & 90.40 & 64.80 & 58.00 & 16.00 & 82.40 \\
50 & Last $K$ & 13.60 & 55.00 & 99.40 & 100.00 & 62.20 & 90.40 & 47.00 & 67.60 & 48.00 & 87.60 \\
\addlinespace
\bottomrule
\end{tabular}
\begin{tablenotes}
\smallskip\footnotesize
\item \textit{Notes:} Rejection rates (in \%) for testing $H_0: \theta_d(\tau_1) = \cdots = \theta_d(\tau_K)$ at the 5\% level. Sup-$t$ = Sup-$t$ test; Wald = Wald test. ``Constant'' = coefficients identical across thresholds ($H_0$ true); ``Varying'' = coefficients follow empirical trade data pattern ($H_0$ false). First $K$ selects the lowest $K$ thresholds; Last $K$ selects the highest $K$ thresholds up to $\tau_{\max}$. Based on 500 Monte Carlo simulations.
\end{tablenotes}
\end{threeparttable}
\end{table}

\begin{table}[htbp]
\centering
\captionsetup{skip=0pt}
\captionsetup[subfigure]{font=footnotesize, skip=0pt}
\caption{Size and Power by Covariate: First $K$ and Last $K$ ($\tau_{\max} = 0.99$)}
\label{tab:appendix_by_covariate_099}
\begin{threeparttable}
\footnotesize
\begin{tabular}{ll cc cc cc cc cc}
\toprule
& & \multicolumn{2}{c}{Distance} & \multicolumn{2}{c}{Legal} & \multicolumn{2}{c}{Border} & \multicolumn{2}{c}{Language} & \multicolumn{2}{c}{Religion} \\
\cmidrule(lr){3-4} \cmidrule(lr){5-6} \cmidrule(lr){7-8} \cmidrule(lr){9-10} \cmidrule(lr){11-12}
$K$ & Selection & Sup-$t$ & Wald & Sup-$t$ & Wald & Sup-$t$ & Wald & Sup-$t$ & Wald & Sup-$t$ & Wald \\
\midrule
\multicolumn{12}{l}{\textit{Panel A: Constant Coefficients (Size, target: 5\%)}} \\
\addlinespace
5 & First $K$ & 3.40 & 3.60 & 4.20 & 4.20 & 5.20 & 6.00 & 5.80 & 5.40 & 3.20 & 3.40 \\
5 & Last $K$ & 0.40 & 0.80 & 2.60 & 1.80 & 1.00 & 1.20 & 2.20 & 1.20 & 3.00 & 1.60 \\
\addlinespace
15 & First $K$ & 2.80 & 1.80 & 3.40 & 2.80 & 5.20 & 8.20 & 5.60 & 3.40 & 2.80 & 3.00 \\
15 & Last $K$ & 1.40 & 0.40 & 1.80 & 0.40 & 1.20 & 1.40 & 1.20 & 0.60 & 3.80 & 1.80 \\
\addlinespace
50 & First $K$ & 3.20 & 1.60 & 2.00 & 1.20 & 5.60 & 22.60 & 4.20 & 3.20 & 3.60 & 2.60 \\
50 & Last $K$ & 1.80 & 1.40 & 2.20 & 1.60 & 2.20 & 9.00 & 2.40 & 1.00 & 3.40 & 1.00 \\
\midrule
\multicolumn{12}{l}{\textit{Panel B: Varying Coefficients (Power)}} \\
\addlinespace
5 & First $K$ & 87.60 & 83.60 & 14.20 & 14.40 & 10.00 & 4.80 & 53.80 & 38.80 & 6.00 & 7.20 \\
5 & Last $K$ & 2.60 & 9.80 & 7.40 & 24.60 & 12.80 & 22.00 & 3.40 & 2.40 & 2.80 & 3.20 \\
\addlinespace
15 & First $K$ & 100.00 & 97.60 & 40.80 & 44.40 & 20.60 & 11.60 & 65.60 & 45.40 & 12.80 & 26.20 \\
15 & Last $K$ & 7.60 & 23.60 & 14.40 & 20.20 & 61.20 & 84.80 & 3.80 & 38.20 & 6.60 & 15.40 \\
\addlinespace
50 & First $K$ & 100.00 & 98.60 & 87.40 & 95.00 & 91.40 & 90.40 & 64.80 & 58.00 & 16.00 & 82.40 \\
50 & Last $K$ & 13.20 & 72.00 & 67.00 & 100.00 & 76.60 & 98.00 & 4.00 & 65.80 & 15.20 & 92.00 \\
\addlinespace
\bottomrule
\end{tabular}
\begin{tablenotes}
\smallskip\footnotesize
\item \textit{Notes:} Rejection rates (in \%) for testing $H_0: \theta_d(\tau_1) = \cdots = \theta_d(\tau_K)$ at the 5\% level. Sup-$t$ = Sup-$t$ test; Wald = Wald test. ``Constant'' = coefficients identical across thresholds ($H_0$ true); ``Varying'' = coefficients follow empirical trade data pattern ($H_0$ false). First $K$ selects the lowest $K$ thresholds; Last $K$ selects the highest $K$ thresholds up to $\tau_{\max}$. Based on 500 Monte Carlo simulations.
\end{tablenotes}
\end{threeparttable}
\end{table}

Table \ref{tab:appendix_by_covariate_095} provides the results for size and power of the joint equality testing procedures for $\tau_{max} = 0.95$ for the specifications \textit{first $K$} and \textit{last $K$} when considering the maximum quantile to be $\tau_{max} = 0.95$, and Tables \ref{tab:main_coverage_099} and \ref{tab:appendix_coverage_099} provide the results for all the designs considering the maximum quantile to be $\tau_{max} = 0.99$. 

Overall, the sup-$t$ test maintains valid size across the configurations, with rejection rates typically between 2\% and 6\% under the null hypothesis for $\tau_{max} = 0.95$. However, when thresholds are concentrated in the upper 
tail (\textit{last $K$}), the sup-$t$ test tends to be 
conservative, with size below the nominal level, 
particularly at $\tau_{\max} = 0.99$ where the selected 
thresholds lie in sparser regions. Moreover, power is reduced for covariates whose coefficients vary primarily at 
lower quantiles. This reflects both the wider confidence bands at 
extreme thresholds, and the limited heterogeneity across the coefficients in the selected range. However, in comparison, the Wald test shows more substantial size distortion that varies by covariate and worsens with $K$. For Border, the Wald test over-rejects in all configurations when $K=50$, with the distortion most severe for \textit{first $K$} at 22.6\%, exceeding the 5\% nominal level substantially. The Wald test's poor performance at large $K$ is consistent with its known theoretical limitations when testing many restrictions jointly. Furthermore, note that the choice of $\tau_{\max}$ (0.95 vs 0.99) does not affect size for either test substantially under the equally 
spaced configuration. 

Turning to more details on the results for power, the sup-$t$ test exhibits high rejection rates for Distance, Legal, and Border across all specifications for the \textit{equally spaced} design, which is consistent with the larger heterogeneity across the coefficients for these covariates in Figures \ref{fig:coef_paths_main} and \ref{fig:coef_paths_main_099}. Language and Religion show more moderate power, reflecting their smaller coefficient variation across thresholds. For Religion, power increases substantially from $K = 5$ 
to $K = 15$ and remains high at $K = 50$. For Language, 
power increases from $K = 5$ to $K = 15$ but declines 
slightly at $K = 50$, suggesting that the additional 
thresholds introduce estimation noise without capturing 
further coefficient variation. The choice of $\tau_{\max}$ (0.95 vs 0.99) does not yield a consistent pattern for power; the effect is covariate-specific and depends on where each covariate's coefficients vary most. 

Across the different specifications for threshold selection, power depends on which region of the distribution captures the most coefficient variation. For instance, Distance varies mostly at lower quantiles, yielding high power for \textit{first $K$} but substantially lower power for \textit{last $K$}. Legal shows the opposite pattern at larger $K$, with higher 
power for \textit{last $K$} where its coefficients vary most (Figures \ref{fig:coef_paths_appendix} and \ref{fig:coef_paths_appendix_099}). Comparing the two tests, Wald shows higher power than sup-$t$ in several configurations, particularly for Language and Religion at large $K$. However, given the Wald test's size distortion for other covariates at the same levels of $K$ and its known theoretical limitations when testing many restrictions jointly, the sup-$t$ test offers more reliable inference overall.

\begin{figure}[htbp]
\centering
\captionsetup{skip=0pt}
\captionsetup[subfigure]{font=footnotesize, skip=0pt}
\includegraphics[width=\textwidth]{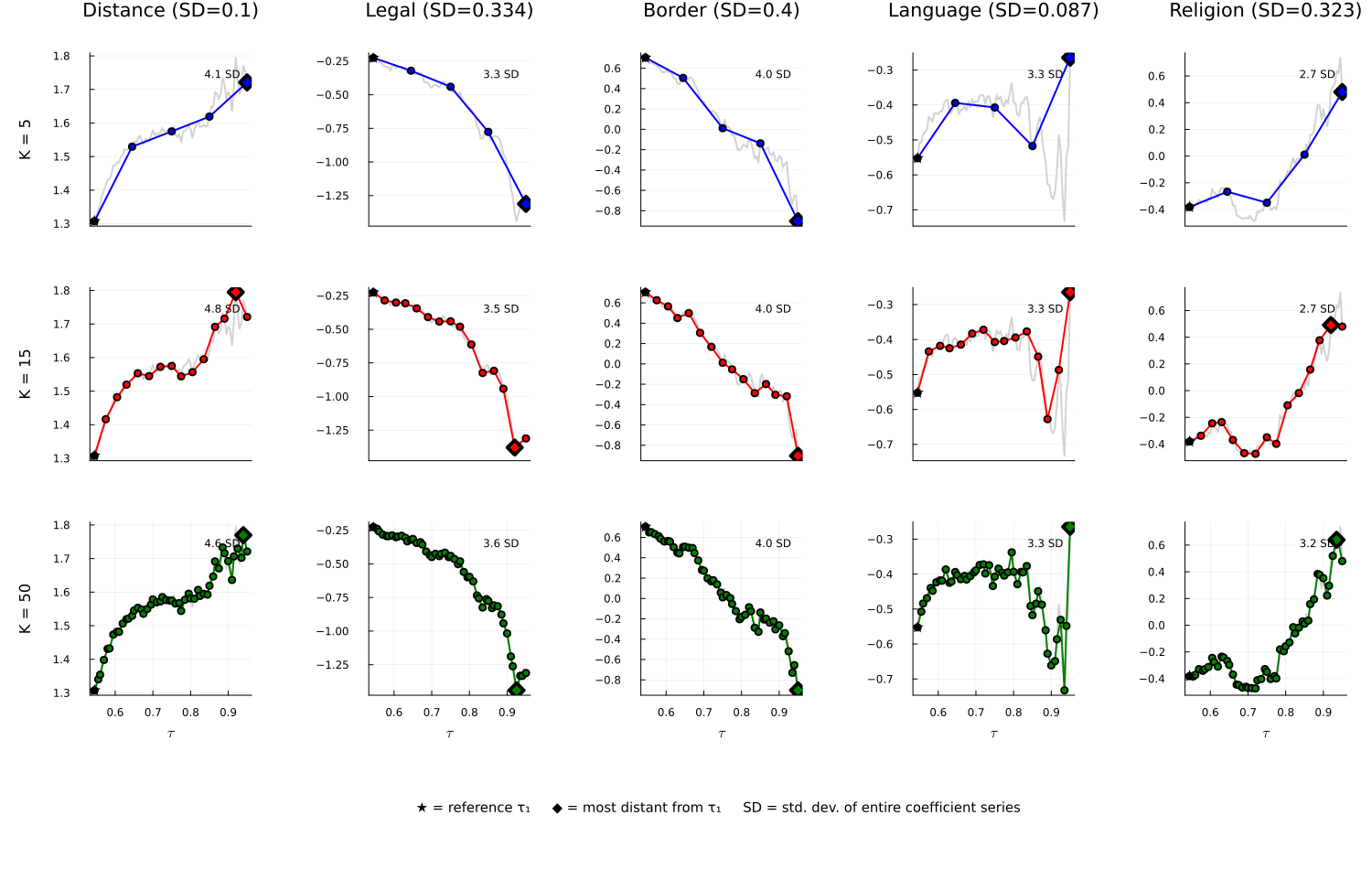}
\caption{True coefficient paths from empirical trade data with equally spaced threshold selection ($\tau_{\max} = 0.95$). Gray line: full coefficient series $\{\theta_d(\tau)\}_{\tau \leq \tau_{\max}}$. Colored markers: selected thresholds for $K = 5$ (blue), $K = 15$ (red), $K = 50$ (green). Star ($\bigstar$) marks reference threshold $\tau_1$; diamond ($\blacklozenge$) marks most distant coefficient from $\tau_1$. Each panel reports the maximum deviation in standard deviation units, where SD is computed over the entire coefficient series.}
\label{fig:coef_paths_main}
\end{figure}

\begin{figure}[htbp]
\centering
\captionsetup{skip=0pt}
\captionsetup[subfigure]{font=footnotesize, skip=0pt}
\includegraphics[width=\textwidth]{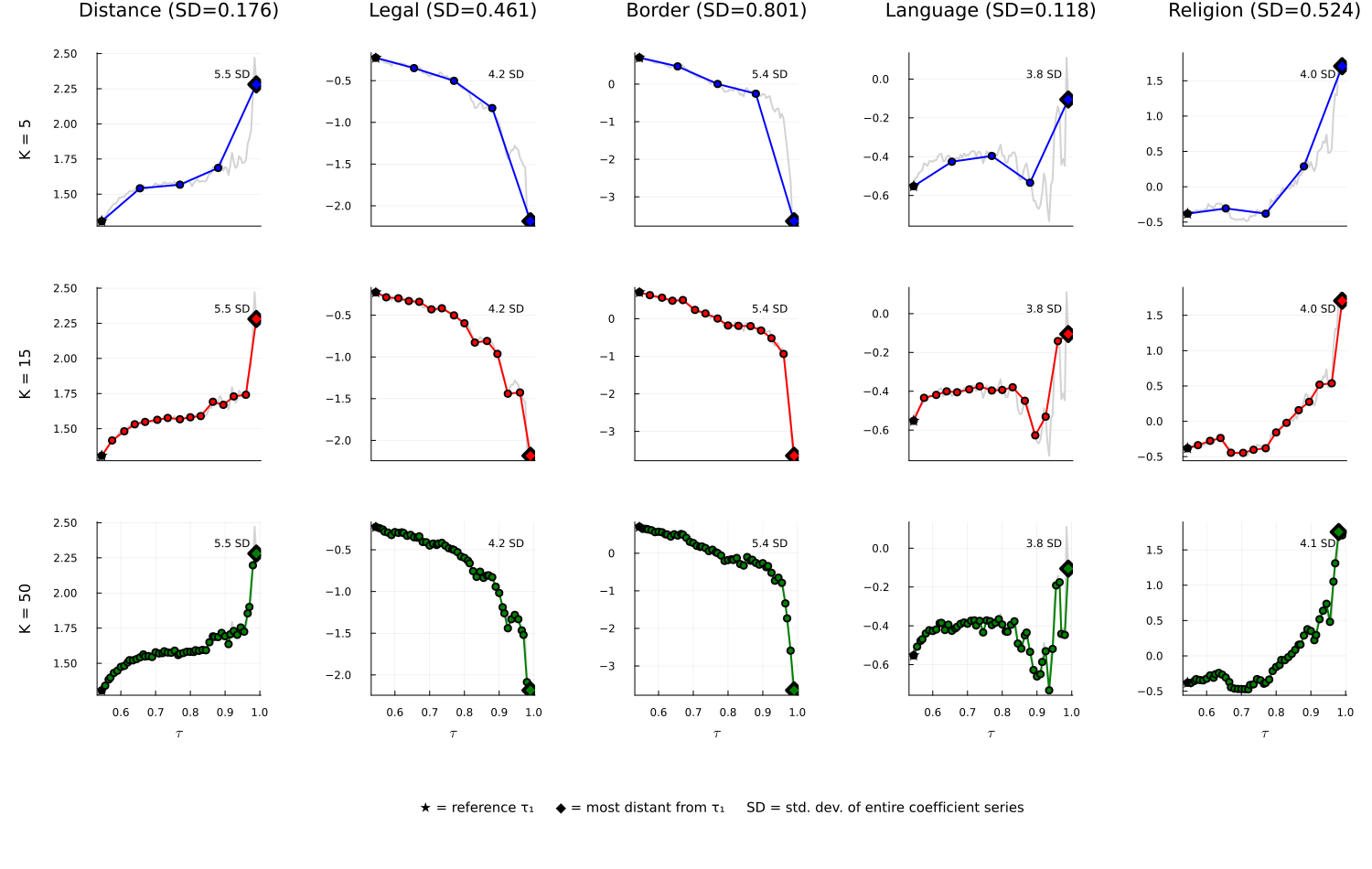}
\caption{True coefficient paths from empirical trade data with equally spaced threshold selection ($\tau_{\max} = 0.99$). Gray line: full coefficient series $\{\theta_d(\tau)\}_{\tau \leq \tau_{\max}}$. Colored markers: selected thresholds for $K = 5$ (blue), $K = 15$ (red), $K = 50$ (green). Star ($\bigstar$) marks reference threshold $\tau_1$; diamond ($\blacklozenge$) marks most distant coefficient from $\tau_1$. Each panel reports the maximum deviation in standard deviation units, where SD is computed over the entire coefficient series.}
\label{fig:coef_paths_main_099}
\end{figure}

\begin{figure}[htbp]
\centering
\captionsetup{skip=0pt}
\captionsetup[subfigure]{font=footnotesize, skip=0pt}
\includegraphics[width=\textwidth]{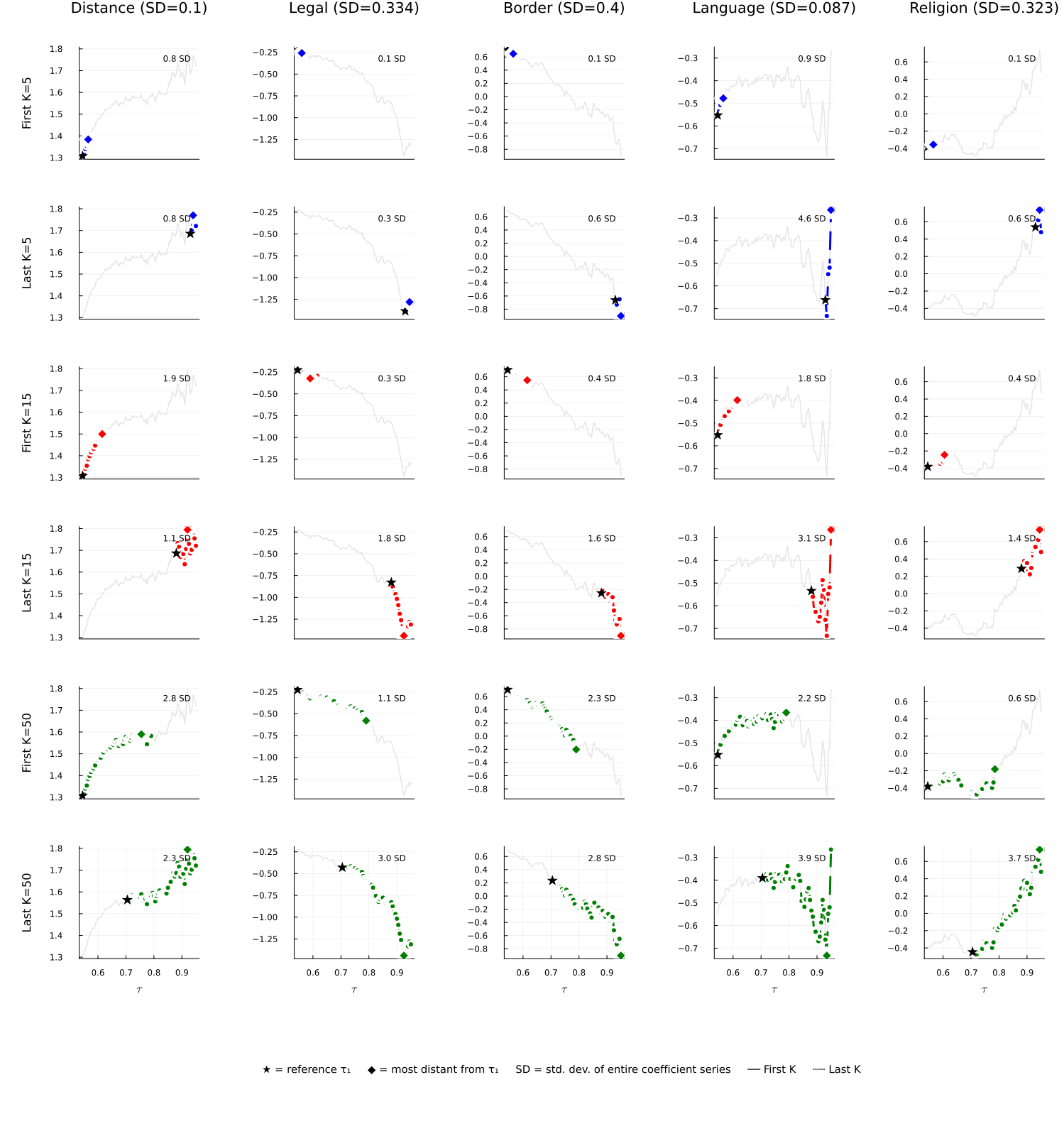}
\caption{True coefficient paths from empirical trade data with First $K$ and Last $K$ threshold selection ($\tau_{\max} = 0.95$). Gray line: full coefficient series $\{\theta_d(\tau)\}_{\tau \leq \tau_{\max}}$. First $K$ selects the lowest $K$ thresholds; Last $K$ selects the highest $K$ thresholds. Colored markers indicate $K = 5$ (blue), $K = 15$ (red), $K = 50$ (green). Star ($\bigstar$) marks reference threshold $\tau_1$; diamond ($\blacklozenge$) marks most distant coefficient. Lower power in these configurations reflects reduced coefficient variation within the selected threshold ranges compared to equally spaced selection.}
\label{fig:coef_paths_appendix}
\end{figure}

 \begin{figure}[htbp]
\centering
\captionsetup{skip=0pt}
\captionsetup[subfigure]{font=footnotesize, skip=0pt}
\includegraphics[width=\textwidth]{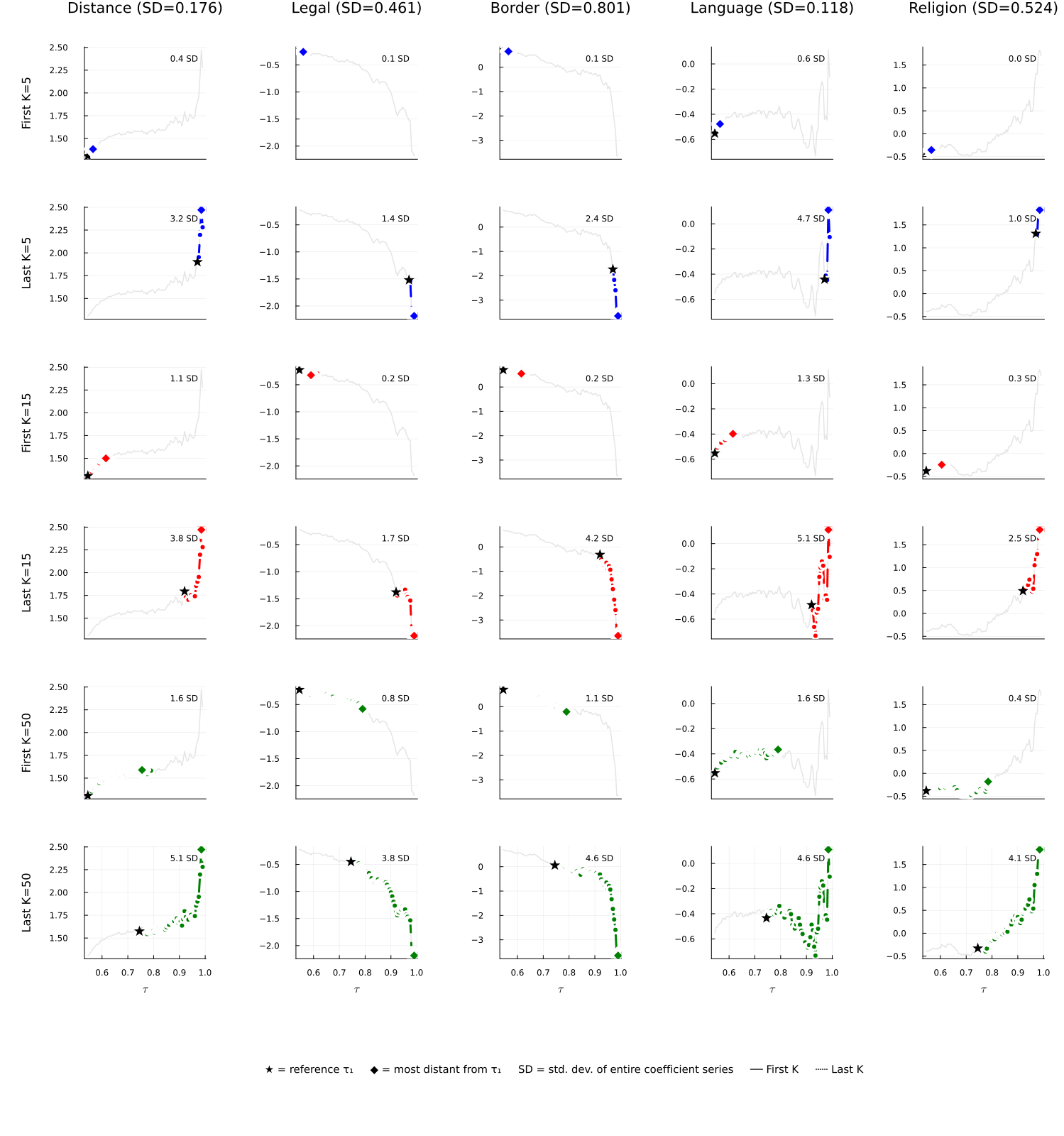}
\caption{True coefficient paths from empirical trade data with First $K$ and Last $K$ threshold selection ($\tau_{\max} = 0.99$). Gray line: full coefficient series $\{\theta_d(\tau)\}_{\tau \leq \tau_{\max}}$. First $K$ selects the lowest $K$ thresholds; Last $K$ selects the highest $K$ thresholds. Colored markers indicate $K = 5$ (blue), $K = 15$ (red), $K = 50$ (green). Star ($\bigstar$) marks reference threshold $\tau_1$; diamond ($\blacklozenge$) marks most distant coefficient. Lower power in these configurations reflects reduced coefficient variation within the selected threshold ranges compared to equally spaced selection.}
\label{fig:coef_paths_appendix_099}
\end{figure}
\clearpage
\subsection{Additional Tables and Figures for Section \ref{application}}\label{application_appendix}
\begin{table}[H]
    \centering
    \captionsetup{skip=0pt}
\captionsetup[subfigure]{font=footnotesize, skip=0pt}
    \caption{Descriptive statistics. Source: \cite{helpman2008estimating}.}
    \label{tab:descriptive}
    \footnotesize
      \begin{tabular}{rrrrr}
            &       &       &       &  \\
  \cmidrule{2-4}          &       & \multicolumn{1}{l}{Mean} & \multicolumn{1}{l}{Std. Dev.} &  \\
  \cmidrule{2-4}          & \multicolumn{1}{l}{Trade} & 0.45  & 0.50  &  \\
            & \multicolumn{1}{l}{Trade volume} & 84.54 & 1,082,219 &  \\
            & \multicolumn{1}{l}{Log distance} & 4.18  & 0.78  &  \\
            & \multicolumn{1}{l}{Legal} & 0.37  & 0.48  &  \\
            & \multicolumn{1}{l}{Language} & 0.29  & 0.45  &  \\
            & \multicolumn{1}{l}{Religion} & 0.17  & 0.25  &  \\
            & \multicolumn{1}{l}{Border} & 0.02  & 0.13  &  \\
            & \multicolumn{1}{l}{Currency} & 0.01  & 0.09  &  \\
            & \multicolumn{1}{l}{FTA} & 0.01  & 0.08  &  \\
            & \multicolumn{1}{l}{Colony} & 0.01  & 0.10  &  \\
  \cmidrule{2-4}          &       &       &       &  \\
      \end{tabular}%
  \end{table}%

  \begin{figure}[htbp]
\centering
\captionsetup{skip=0pt}
\captionsetup[subfigure]{font=footnotesize, skip=0pt}
\begin{subfigure}[b]{0.32\textwidth}
    \centering
    \includegraphics[width=\textwidth]{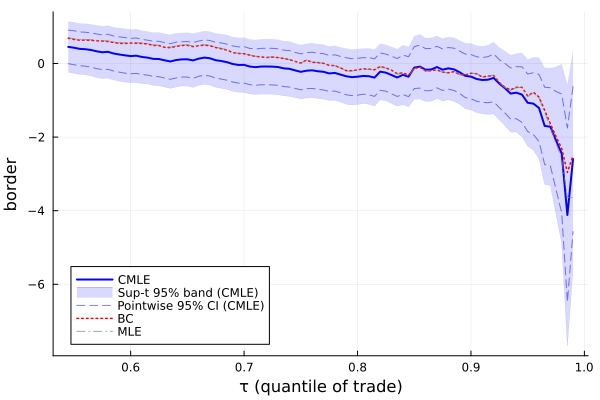}
    \caption{Border}
    \label{fig:band_border}
\end{subfigure}
\hfill
\begin{subfigure}[b]{0.32\textwidth}
    \centering
    \includegraphics[width=\textwidth]{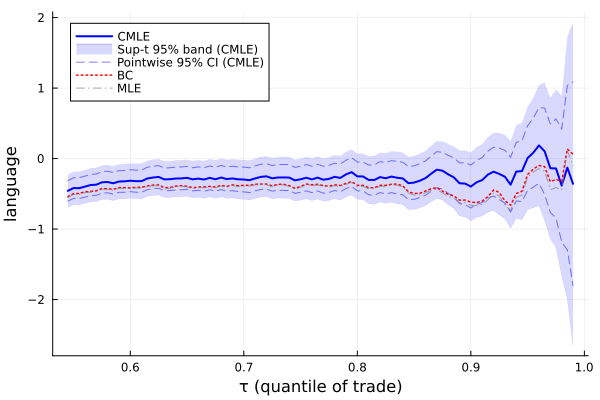}
    \caption{Language}
    \label{fig:band_language}
\end{subfigure}
\hfill
\begin{subfigure}[b]{0.32\textwidth}
    \centering
    \includegraphics[width=\textwidth]{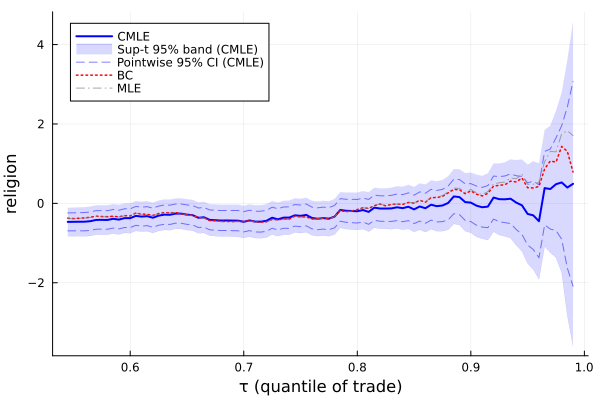}
    \caption{Religion}
    \label{fig:band_religion}
\end{subfigure}
\caption{Distribution regression estimates of the effect of remaining covariates on bilateral trade. Solid blue: CMLE; dotted red: BC; dash-dotted 
gray: MLE. Shaded region and dashed lines show the 95\% simultaneous sup-$t$ 
confidence band and pointwise confidence intervals for the CMLE, respectively. 
Thresholds correspond to empirical quantiles in $[0.545, 0.990]$ at intervals 
of $0.005$.}
\label{fig:bands_main_appendix}
\end{figure}

  \begin{figure}[htbp]
\centering
\captionsetup{skip=0pt}
\captionsetup[subfigure]{font=footnotesize, skip=0pt}
\begin{subfigure}[b]{0.48\textwidth}
    \centering
    \includegraphics[width=\textwidth]{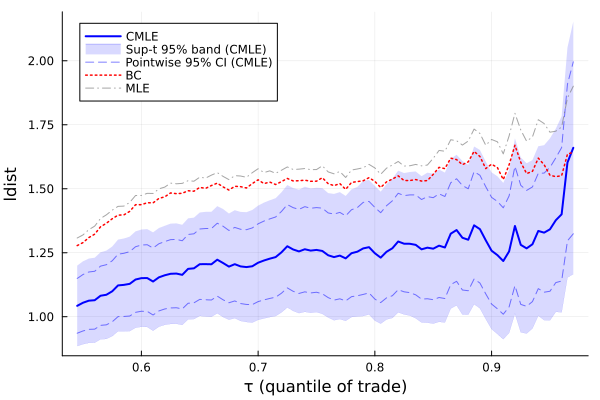}
    \caption{Log distance}
    \label{fig:band2_ldist}
\end{subfigure}
\hfill
\begin{subfigure}[b]{0.48\textwidth}
    \centering
    \includegraphics[width=\textwidth]{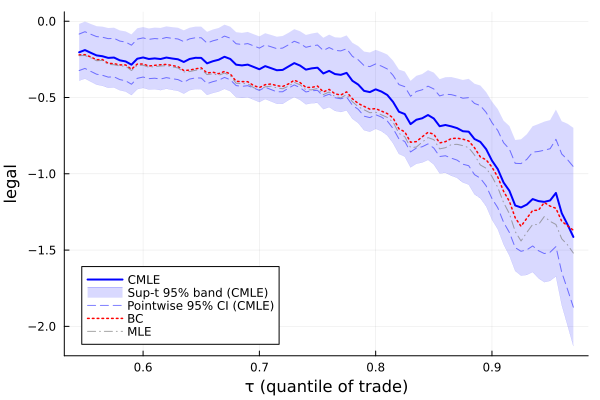}
    \caption{Legal system}
    \label{fig:band2_legal}
\end{subfigure}
\hfill
\begin{subfigure}[b]{0.48\textwidth}
    \centering
    \includegraphics[width=\textwidth]{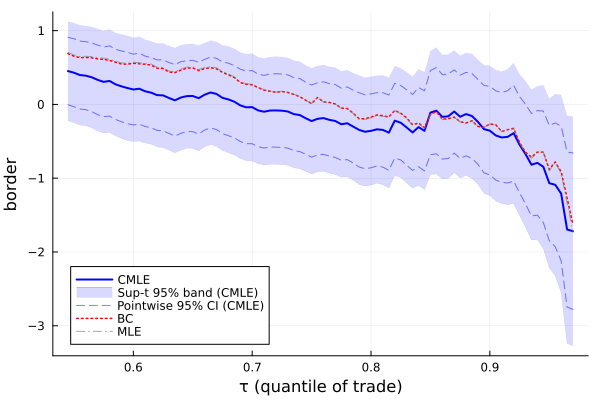}
    \caption{Border}
    \label{fig:band2_legal}
\end{subfigure}
\hfill
\begin{subfigure}[b]{0.48\textwidth}
    \centering
    \includegraphics[width=\textwidth]{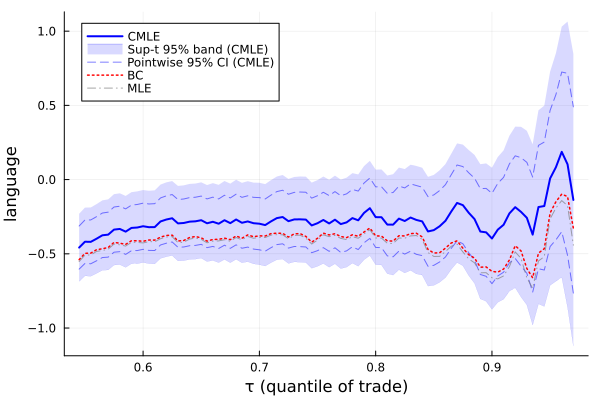}
    \caption{Language}
    \label{fig:band2_language}
\end{subfigure}
\hfill
\begin{subfigure}[b]{0.48\textwidth}
    \centering
    \includegraphics[width=\textwidth]{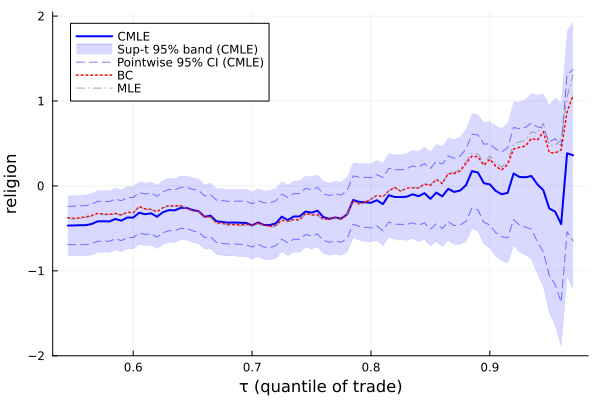}
    \caption{Religion}
    \label{fig:band2_religion}
\end{subfigure}
\caption{Distribution regression estimates of the effect of all covariates on bilateral trade. Solid blue: CMLE; dotted red: BC; dash-dotted 
gray: MLE. Shaded region and dashed lines show the 95\% simultaneous sup-$t$ 
confidence band and pointwise confidence intervals for the CMLE, respectively. 
Thresholds correspond to empirical quantiles in $[0.545, 0.970]$ at intervals 
of $0.005$.}
\label{fig:bands_main_097_appendix}
\end{figure}

\begin{figure}[htbp]
\centering
\captionsetup{skip=0pt}
\captionsetup[subfigure]{font=footnotesize, skip=0pt}
\begin{subfigure}[b]{0.48\textwidth}
    \centering
    \includegraphics[width=\textwidth]{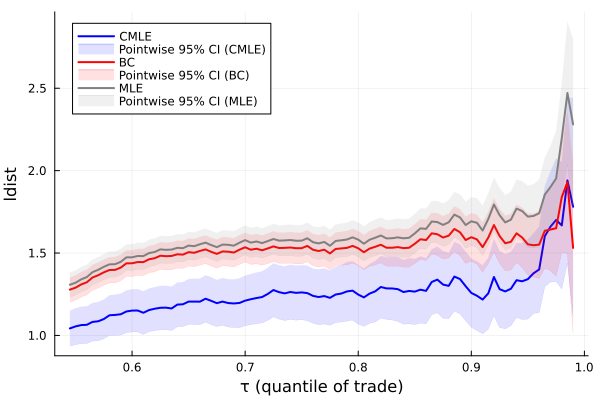}
    \caption{Log distance}
    \label{fig:band3_ldist}
\end{subfigure}
\hfill
\begin{subfigure}[b]{0.48\textwidth}
    \centering
    \includegraphics[width=\textwidth]{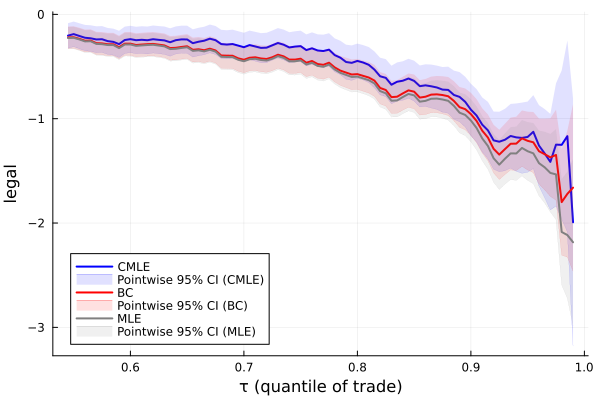}
    \caption{Legal system}
    \label{fig:band3_legal}
\end{subfigure}
\hfill
\begin{subfigure}[b]{0.48\textwidth}
    \centering
    \includegraphics[width=\textwidth]{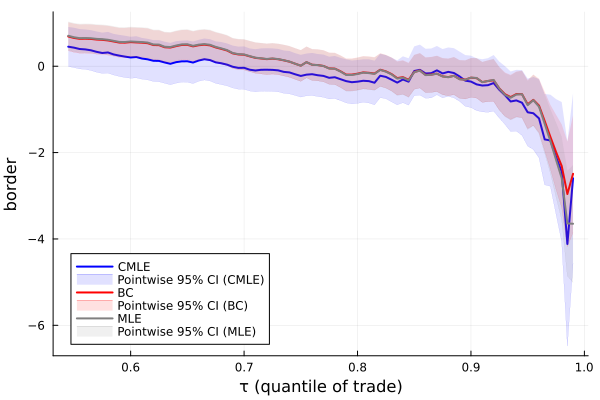}
    \caption{Border}
    \label{fig:band3_border}
\end{subfigure}
\hfill
\begin{subfigure}[b]{0.48\textwidth}
    \centering
    \includegraphics[width=\textwidth]{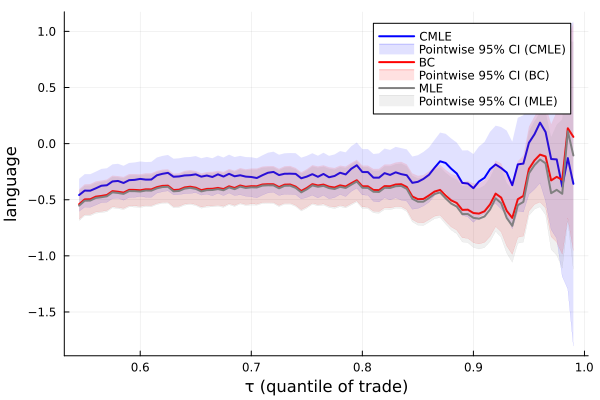}
    \caption{Language}
    \label{fig:band3_language}
\end{subfigure}
\hfill
\begin{subfigure}[b]{0.48\textwidth}
    \centering
    \includegraphics[width=\textwidth]{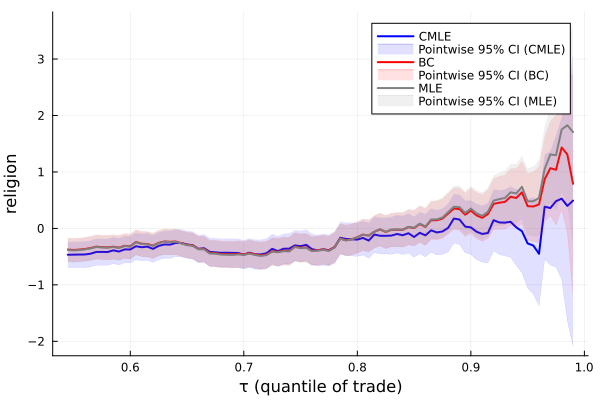}
    \caption{Religion}
    \label{fig:band3_religion}
\end{subfigure}
\caption{Pointwise comparison of CMLE, BC, and MLE estimates with 95\% 
pointwise confidence intervals for all covariates. Solid blue: CMLE; 
solid red: BC; solid gray: MLE. Shaded regions show pointwise 95\% 
confidence intervals for each estimator. Thresholds correspond to 
empirical quantiles in $[0.545, 0.990]$ at intervals of $0.005$.}
\label{fig:pw_comparison_099}
\end{figure}

Figure~\ref{fig:pw_comparison_099} displays pointwise 95\% 
confidence intervals for all three estimators. The CMLE 
confidence intervals are wider than those of the BC estimator 
throughout the distribution, reflecting the efficiency cost of 
conditioning out the fixed effects rather than estimating and 
correcting for them. This efficiency loss is the expected 
trade-off for eliminating the incidental parameter problem 
entirely. Note that the uncorrected MLE confidence intervals 
have similar width to those of the BC, since both are based on 
the same logit variance estimate. 
All three sets of confidence intervals widen in the upper 
tail. For the CMLE, this reflects the smaller number of 
informative quadruples at extreme thresholds. For the MLE 
and BC, which use all observations regardless of the 
threshold, the widening reflects 
the reduced information content at extreme thresholds, 
where predicted probabilities approach zero or one.

\begin{figure}[htbp]
\centering
\captionsetup{skip=0pt}
\captionsetup[subfigure]{font=footnotesize, skip=0pt}
\begin{subfigure}[b]{0.48\textwidth}
    \centering
    \includegraphics[width=\textwidth]{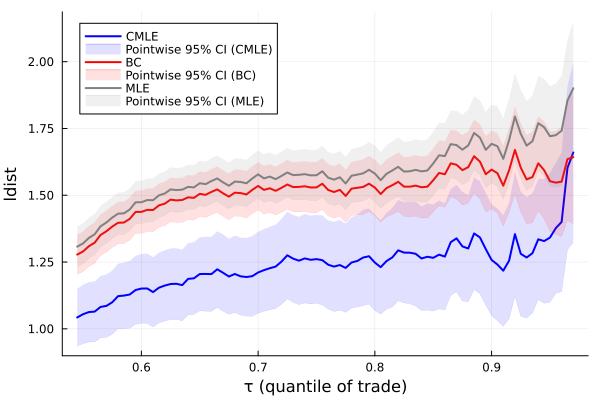}
    \caption{Log distance}
    \label{fig:band4_ldist}
\end{subfigure}
\hfill
\begin{subfigure}[b]{0.48\textwidth}
    \centering
    \includegraphics[width=\textwidth]{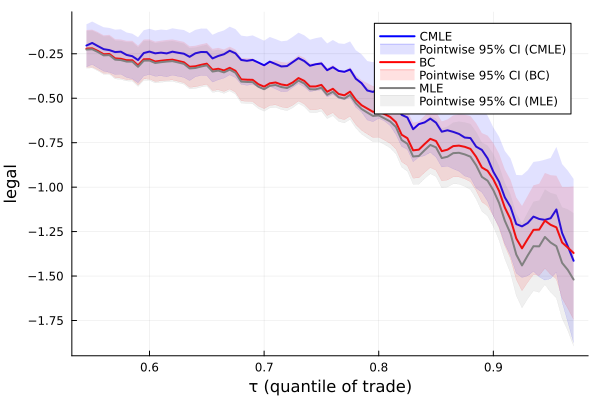}
    \caption{Legal system}
    \label{fig:band4_legal}
\end{subfigure}
\hfill
\begin{subfigure}[b]{0.48\textwidth}
    \centering
    \includegraphics[width=\textwidth]{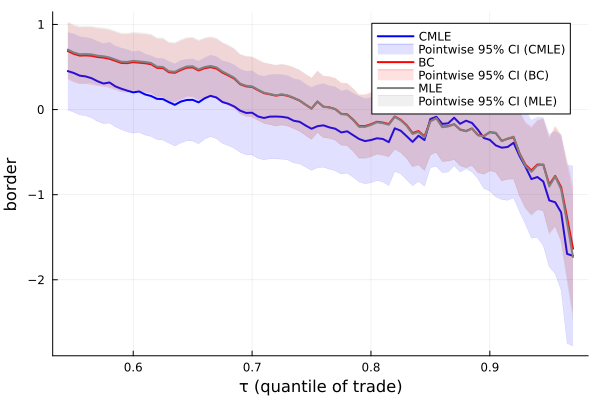}
    \caption{Border}
    \label{fig:band4_border}
\end{subfigure}
\hfill
\begin{subfigure}[b]{0.48\textwidth}
    \centering
    \includegraphics[width=\textwidth]{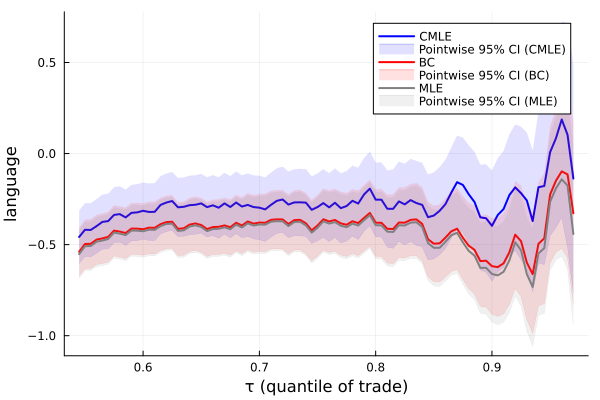}
    \caption{Language}
    \label{fig:band4_language}
\end{subfigure}
\hfill
\begin{subfigure}[b]{0.48\textwidth}
    \centering
    \includegraphics[width=\textwidth]{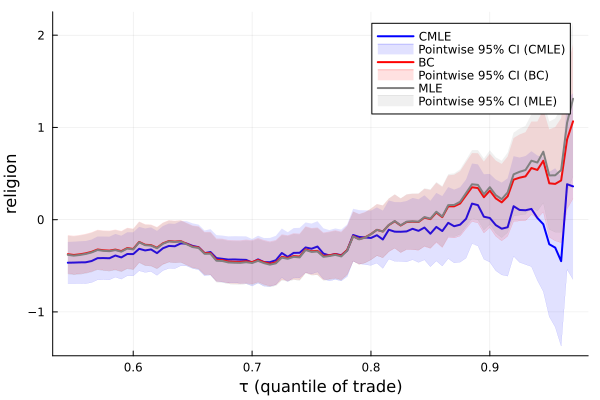}
    \caption{Religion}
    \label{fig:band4_religion}
\end{subfigure}
\caption{Pointwise comparison of CMLE, BC, and MLE estimates with 95\% 
pointwise confidence intervals for all covariates. Solid blue: CMLE; 
solid red: BC; solid gray: MLE. Shaded regions show pointwise 95\% 
confidence intervals for each estimator. Thresholds correspond to 
empirical quantiles in $[0.545, 0.970]$ at intervals of $0.005$.}
\label{fig:pw_comparison_097}
\end{figure}

\begin{figure}[htbp]
\centering
\captionsetup{skip=0pt}
\captionsetup[subfigure]{font=footnotesize, skip=0pt}
\begin{subfigure}[b]{0.48\textwidth}
    \centering
    \includegraphics[width=\textwidth]{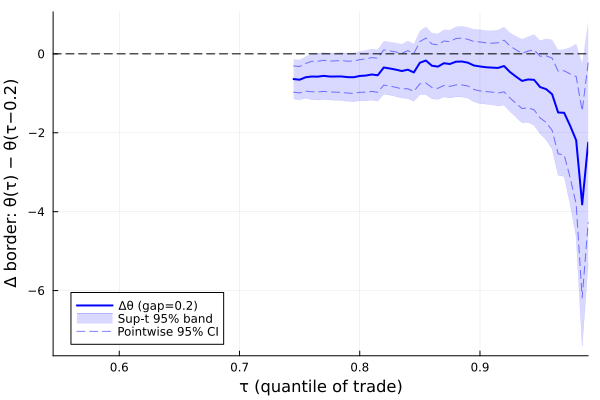}
    \caption{Border}
    \label{fig:diff_border}
\end{subfigure}
\hfill
\begin{subfigure}[b]{0.48\textwidth}
    \centering
    \includegraphics[width=\textwidth]{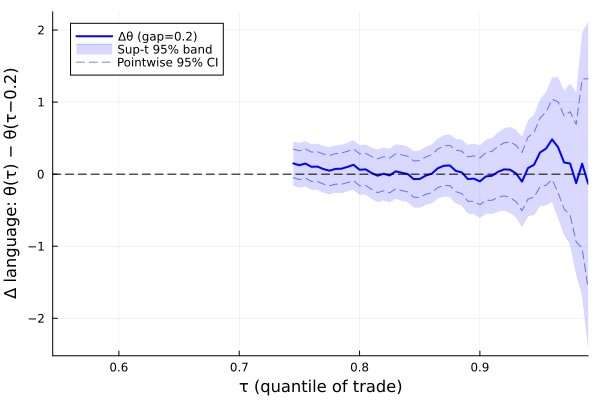}
    \caption{Language}
    \label{fig:diff_legal}
\end{subfigure}
\hfill
\begin{subfigure}[b]{0.48\textwidth}
    \centering
    \includegraphics[width=\textwidth]{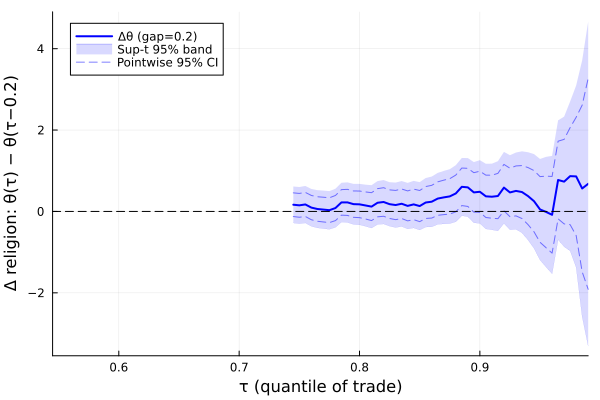}
    \caption{Religion}
    \label{fig:diff_religion}
\end{subfigure}
\caption{Differences $\theta_{n,d}(\tau) - \theta_{n,d}(\tau - 0.20)$ from the CMLE 
estimator for log distance and legal system. Shaded region and dashed lines show 
the 95\% simultaneous sup-$t$ confidence band and pointwise confidence intervals 
for the differences, respectively. The horizontal dashed line marks zero. Maximal quantile considered is 0.990.}
\label{fig:diffs_appendix}
\end{figure}

\begin{figure}[htbp]
\centering
\captionsetup{skip=0pt}
\captionsetup[subfigure]{font=footnotesize, skip=0pt}
\begin{subfigure}[b]{0.48\textwidth}
    \centering
    \includegraphics[width=\textwidth]{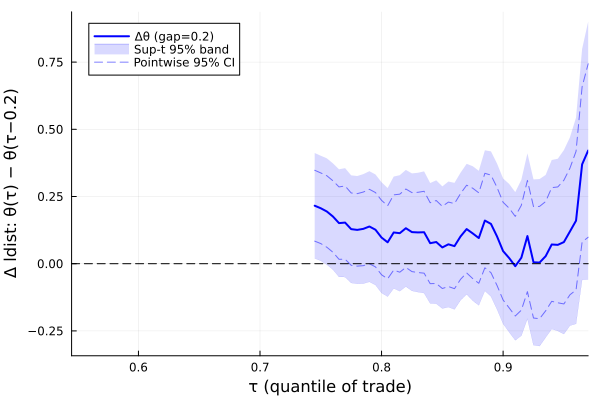}
    \caption{Log distance}
    \label{fig:diff2_ldist}
\end{subfigure}
\hfill
\begin{subfigure}[b]{0.48\textwidth}
    \centering
    \includegraphics[width=\textwidth]{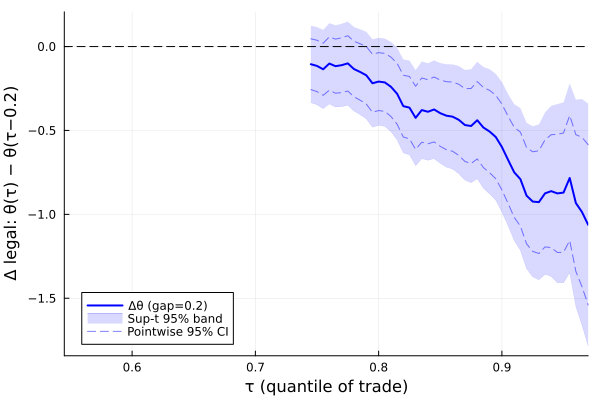}
    \caption{Legal system}
    \label{fig:diff2_legal}
\end{subfigure}
\hfill
\begin{subfigure}[b]{0.48\textwidth}
    \centering
    \includegraphics[width=\textwidth]{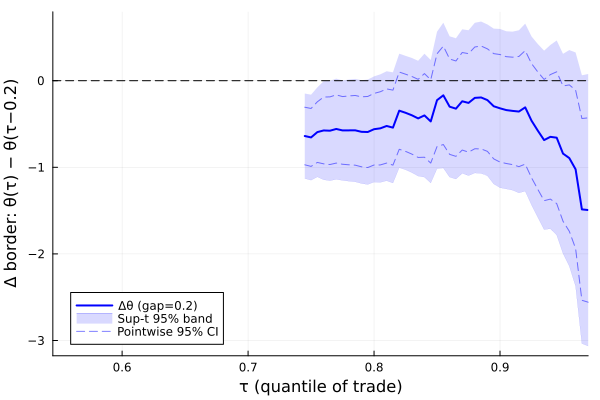}
    \caption{Border}
    \label{fig:diff2_border}
\end{subfigure}
\hfill
\begin{subfigure}[b]{0.48\textwidth}
    \centering
    \includegraphics[width=\textwidth]{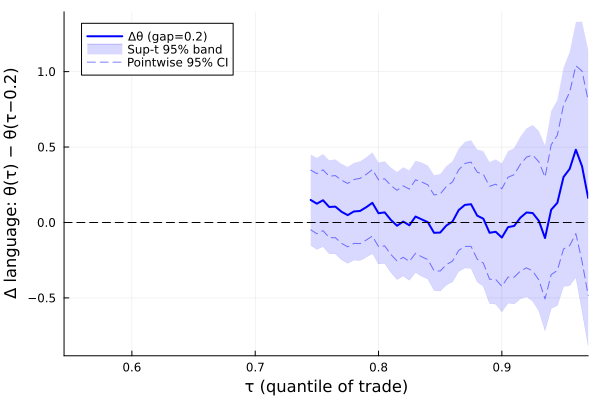}
    \caption{Language}
    \label{fig:diff2_language}
\end{subfigure}
\hfill
\begin{subfigure}[b]{0.48\textwidth}
    \centering
    \includegraphics[width=\textwidth]{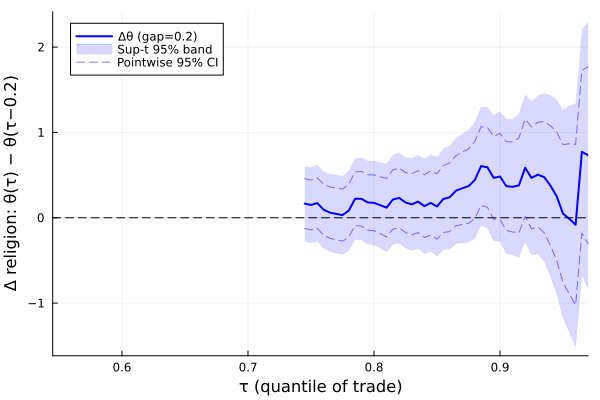}
    \caption{Religion}
    \label{fig:diff2_religion}
\end{subfigure}
\caption{Differences $\theta_{n,d}(\tau) - \theta_{n,d}(\tau - 0.20)$ from the CMLE 
estimator for log distance and legal system. Shaded region and dashed lines show 
the 95\% simultaneous sup-$t$ confidence band and pointwise confidence intervals 
for the differences, respectively. The horizontal dashed line marks zero. Maximal quantile considered is 0.970.}
\label{fig:diffs_appendix2}
\end{figure}


\end{document}